\newcommand{\mr}{\mathbb{R}}
\newcommand{\wh}{\widehat}
\newcommand{\op}{o_p}
\newcommand{\Op}{O_p}
\newcommand{\one}{\mathds{1}}
\newcommand{\inv}{^{-1}}
\newcommand{\half}{^{1/2}}
\newcommand{\onehalf}{\frac{1}{2}}
\newcommand{\sub}{\subseteq}
\newcommand{\rootn}{\sqrt{n}}
\newcommand{\negrootn}{n^{-1/2}}
\newcommand{\convp}{\overset{p}{\to}}
\newcommand{\eqdist}{\overset{d}{=}}
\newcommand{\convwprocess}{\Rightarrow}
\newcommand{\normal}{\mathcal{N}}
\newcommand{\mc}{\mathcal}
\newcommand{\indep}{\mathrel{\perp \!\!\! \perp}}
\newcommand{\est}{\wh{\theta}}
\newcommand{\estg}{\wh{\theta}_{\group}}
\newcommand{\estgone}[1]{\wh{\theta}_{\group_1(#1)}}
\newcommand{\estgtwo}[1]{\wh{\theta}_{\group_2(#1)}}
\newcommand{\estreg}{\est_{\text{ols}}}
\newcommand{\estadj}{\wh{\theta}_{adj}}
\newcommand{\catefn}{c}
\newcommand{\hk}{\sigma^2}
\newcommand{\hkte}{\sigma_{\tau}^2}
\newcommand{\hksd}{\sigma}
\newcommand{\hkest}{\wh{\sigma}^2}
\newcommand{\hkavg}{\sigma_{s}^2}
\newcommand{\hkavgest}{\wh{\sigma}_{s}^2}
\newcommand{\sdavg}{\sigma_s}
\newcommand{\sdavgest}{\wh{\sigma}_s}
\newcommand{\Dn}{D_{1:n}}
\newcommand{\Di}{D_i}
\newcommand{\Dj}{D_j}
\newcommand{\mn}{M_n}
\newcommand{\mjn}{M_{j,n}}
\newcommand{\balancefn}{b}
\newcommand{\propfn}{p}
\newcommand{\propfng}{p_{\group}}
\newcommand{\filtration}{\mathcal{F}}
\newcommand{\residuali}{\epsilon_i}
\newcommand{\residualj}{\epsilon_j}
\newcommand{\ykn}{Y_{k, n}}
\newcommand{\ikn}{I_{k, n}}
\newcommand{\sigkn}{\sigma^2_{k, n}}
\newcommand{\Sigkn}{\Sigma_{k, n}}
\newcommand{\Sign}{\Sigma_{n}}
\newcommand{\sn}{S_{n}}
\newcommand{\ceffn}{m}
\newcommand{\ceffnest}{\wh m}
\newcommand{\en}{E_n}
\DeclareMathOperator{\localdesigncond}{Loc}
\newcommand{\psisamp}{\psi_1}
\newcommand{\psiassign}{\psi_2}
\newcommand{\ylevel}{\bar Y}
\newcommand{\yleveli}{\bar Y_i}
\newcommand{\ylevelj}{\bar Y_j}
\newcommand{\hti}{H_i}
\newcommand{\te}{\tau}
\newcommand{\psii}{\psi_i}
\newcommand{\psiin}{\psi_{i,n}}
\newcommand{\psij}{\psi_j}
\newcommand{\psijn}{\psi_{j,n}}
\newcommand{\psin}{\psi_{1:n}}
\newcommand{\psiione}{\psi_{1, i}}
\newcommand{\psiitwo}{\psi_{2, i}}
\newcommand{\Ti}{T_i}
\newcommand{\Tj}{T_j}
\newcommand{\Tn}{T_{1:n}}
\newcommand{\propfnestn}{\wh \propfn_n}
\newcommand{\propfni}{\propfn_i}
\newcommand{\group}{g}
\newcommand{\groupls}{\group_{l,s}}
\newcommand{\groupset}{\mathcal{G}}
\newcommand{\groupsetn}{\mathcal{G}_n}
\newcommand{\groupsetnl}{\mathcal{G}_{nl}}
\newcommand{\remainderset}{\mathcal{R}}
\newcommand{\groupsetnu}{\mathcal{G}_n^{\nu}}
\newcommand{\nl}{n_l}
\newcommand{\filtrationcand}{\mathcal{F}_n}
\newcommand{\filtrationgeneric}{\mathcal{A}_n}
\newcommand{\filtrationalt}{\mathcal{A}}
\newcommand{\permn}{\pi_n}
\newcommand{\propbound}{\delta}
\newcommand{\hinbar}{\bar h_n(W_i)}
\newcommand{\Wn}{W_{1:n}}
\newcommand{\varest}{\wh{V}}
\newcommand{\levels}{L_n}
\newcommand{\dimpsi}{\dim(\psi)}
\newcommand{\balancefnest}{\wh{\balancefn}}
\newcommand{\nlevels}{L_n}
\newcommand{\denoml}{k_l}
\newcommand{\kboundn}{\overline{k}_n}
\newcommand{\snl}{S_{nl}}
\newcommand{\ml}{m_l}
\newcommand{\varlocal}{V}
\newcommand{\Siglimit}{\sigma^2}
\newcommand{\proprand}{\xi_n}
\newcommand{\opg}{o_{p, \filtrationgeneric}}
\newcommand{\Opg}{O_{p, \filtrationgeneric}}
\newcommand{\propselect}{q}
\newcommand{\propselectavg}{\bar q}
\newcommand{\propselectopt}{q^*}
\newcommand{\propselectest}{\wh q}
\newcommand{\propselectestn}{\wh q_n}
\newcommand{\propselecti}{q_i}
\newcommand{\propselectin}{q_{i,n}}
\newcommand{\propselectl}{q_l}
\newcommand{\nsampled}{n_T}
\newcommand{\nobs}{n_{\rm obs}}
\newcommand{\npilot}{n_{\rm pilot}}
\newcommand{\via}{v_{ia}}
\newcommand{\vib}{v_{ib}}
\newcommand{\vja}{v_{ja}}
\newcommand{\vjb}{v_{jb}}
\newcommand{\zin}{z_{in}}
\newcommand{\zjn}{z_{jn}}
\newcommand{\ugroup}{u_{\group}}
\newcommand{\groupmatching}{\nu}
\newcommand{\cost}{C}
\newcommand{\budget}{B}
\newcommand{\indfn}{\rho}
\newcommand{\remainderl}{r_l}
\newcommand{\opone}{\op(1)}
\newcommand{\al}{a_l}
\newcommand{\kl}{k_l}
\newcommand{\ui}{u_i}
\newcommand{\diff}{a}
\newcommand{\filtrationcandpsi}{\filtrationcand^{D}}
\newcommand{\filtrationcandt}{\filtrationcand^{T}}
\newcommand{\filtrationhd}{\mathcal{H}_n^{D}}
\newcommand{\filtrationht}{\mathcal{H}_n^{T}}
\newcommand{\ut}{u_t}
\newcommand{\ur}{u_r}
\newcommand{\Dt}{D_t}
\newcommand{\Dr}{D_r}
\DeclareMathOperator*{\argmin}{argmin}
\DeclareMathOperator{\ate}{ATE}
\DeclareMathOperator{\sate}{SATE}
\DeclareMathOperator{\bern}{Bernoulli}
\DeclareMathOperator{\cov}{Cov}
\DeclareMathOperator{\var}{Var}
\DeclareMathOperator{\simiid}{\overset{iid}{\sim}}
\DeclareMathOperator{\identity}{Id}
\DeclareMathOperator{\crdist}{CR}
\date{\today}
\title{Fine Stratification of Survey Experiments\footnote{I wish to thank Alberto Abadie, Anna Mikusheva, and Victor Chernozhukov for their support and guidance during this project.
This paper also benefited from conversations with Isaiah Andrews, Mert Demirer, and numerous seminar participants.}}
\author{Max Cytrynbaum\footnote{Yale Department of Economics. Correspondence: max.cytrynbaum@yale.edu}}
\begin{document}
\maketitle
\begin{abstract}
This paper studies a two-stage model of experimentation, where the researcher first samples representative experimental participants from an eligible pool, then assigns each sampled unit to treatment or control, using matched $k$-tuples randomization at both stages.
To implement such designs, we develop a fast new algorithm for matching units into $k$-tuples for any $k \ge 2$ and any dimension of covariates.
By surveying 200 recent experimental working papers, we estimate that our algorithm newly enables multivariate fine stratification with provable match quality guarantees for about 44\% of experiments in economics.
We show that finely stratified sampling and assignment both nonparametrically reduce the variance of treatment effect estimation, with the gains from stratified sampling increasing in the size of the eligible pool and how well covariates predict treatment effect heterogeneity. 
We develop new inference methods that fully exploit the efficiency gains from both design stages, allowing researchers to report smaller standard errors if they designed a representative experiment. 
An application to nine published experiments quantifies the efficiency gains.
\end{abstract}

\noindent{\emph{Keywords}: Matched Pairs, Blocking, Survey Sampling, Robust Standard Error, Treatment Effects.} \\ 

\noindent{\emph{JEL Codes}: C10, C14, C90}

\onehalfspacing
\newpage

\section{Introduction} \label{section:introduction}

A key objective in the design of experiments is to construct an experimental sample that is externally valid in the sense that it is representative of a broader population of interest.
Given such a sample, experimenters also want to randomize so that the treatment and control groups are finely balanced on covariates, improving internal validity.
One way to achieve both goals is through a survey experiment that uses stratified randomization for both the sampling of participants and assignment of treatments.

To illustrate, consider the OpenResearch Unconditional Income Study (ORUS), a large-scale experiment which distributed over \$40 million in unconditional cash transfers in the United States in an effort to study the effects of guaranteed income on a range of social and economic outcomes \citep{broockman2024income}.
OpenResearch recruited $14{,}573$ individuals who were eligible and willing to participate, but its budget only allowed for the final enrollment of $3{,}000$ participants in the experiment.
To ensure representativeness, the authors drew a stratified sample of size $3{,}000$ from the eligible pool.
Among the participants, they also assigned treatments by stratified randomization, matching units into triples on income, race, and state, along with several dozen other covariates $\psi$, then assigning one unit in each triple to the \$1{,}000 per month transfer.

Beyond the ORUS study, the value of stratification for both sampling participants and assigning treatments is recognized by practitioners.
The J-PAL guide to randomized experiments recommends stratifying the sample ``based on observed covariates that are expected to moderate the treatment effect,'' and observes that ``stratifying at the sampling stage \ldots reduces the variance of the sample relative to the underlying population,'' just as ``stratifying at the treatment assignment stage creates treatment and control groups that are more similar to each other'' \citep{jpalSampling}.

Similarly, \citet{muralidharan2017scale} note that ``the lowest-hanging fruit may be to make samples more representative of the populations about which we wish to learn'' and suggest ``researchers have devoted more effort to persuading their institutional partners to randomize (for internal validity) than to be representative (for external validity).''

Motivated by such considerations, we study a general family of stratified designs for survey experiments, in which participants are representatively sampled from an eligible pool, then assigned to a binary treatment.
We allow the sampling proportions $\propselect(\psi)$ to vary with covariates $\psi$, as in the ORUS study above, implementing these proportions using fine stratification.
Such covariate-dependent sampling rates are a classical feature of stratified survey sampling \citep{cochran1977} and may arise for a variety of reasons, including a desire to oversample subgroups of particular interest for heterogeneous treatment effects (as discussed in the J-PAL guide), heterogeneous costs of experimentation in different regions, or administrative and logistical constraints in constructing the sample.
Likewise, we allow treatment probabilities $\propfn \ne 1/2$, e.g.\ $\propfn=1/3$ as in the ORUS study.

To implement such designs, we develop a new matching algorithm, which produces high quality matched $k$-tuples for any $k \ge 2$ by matching on covariates $\psi$ in any dimension.
Our algorithm provably satisfies an asymptotic tight matching condition, which is needed to guarantee that fine stratification delivers nonparametric efficiency improvements at both the sampling and assignment stages. 

Prior work has provided algorithms with such guarantees only in the special cases of matched pairs $k = 2$ \citep{bai2022pairs} and univariate matched $k$-tuples produced by sorting \citep{bai2020pairs}. 
Our algorithm is the first to possess such match quality guarantees for tuples of any size $k \ge 2$ and $\dim(\psi) \ge 1$. 

Leaving theory aside, even as a purely practical matter, the literature has few recommendations on how to form matched $k$-tuples for general $k \ge 2$ and multivariate $\psi$ beyond greedy heuristics \citep{moore2012blocking} and mixed-integer schemes that do not scale above several hundred units \citep{brixius2025}.
This is perhaps not surprising, since solving for the globally optimal matched $k$-tuples in Euclidean distance is NP-hard if $k \ge 3$  and $\dim(\psi) > 1$ \citep{pyatkin2017npHardness}.
We navigate around this hardness result with an algorithm that scales to tens of thousands of units and also satisfies the appropriate tight matching condition.
A more basic version of the procedure with lower finite sample match quality scales to millions of units.
Thus, our procedure also closes a practical gap.

To assess the impact of such designs on experiments in economics, we surveyed 200 recent NBER working papers involving randomized experiments from 2024 to 2026.
We estimate that multivariate fine stratification with match quality guarantees is not feasible using existing methods for $44\%$ of these experiments, either because they involve multiple treatment arms or treatment proportions $\propfn \ne 1/2$, both of which require matched $k$-tuples with $k > 2$.\footnote{The $44\%$ additionally includes factorial designs with $k = 2^m$ ($m \ge 2$) and sample size $n \ge 10{,}000$, for which the iterated optimal pairing scheme advocated by \citet{bai2024factorial} is computationally intractable.}
Thus, for $44\%$ of recent working papers, provably high quality stratified treatment assignment with $\dim(\psi) > 1$ is newly enabled by our procedure.

This is before considering finely stratified sampling, which has propensity $\propselect \ne 1/2$ unless the eligible units are exactly twice the size of the experiment, thus generically requiring matched $k$-tuples designs with $k > 2$.
In our survey, we find that about $23\%$ of these working papers already perform such a sampling step, either by randomly sampling participants from a larger eligible pool ($15\%$) or using a third-party service such as Prolific or YouGov to recruit a representative sample ($8\%$).
Our results show that experimenters who construct a representative sample in this way can report smaller standard errors, providing an incentive for researchers to make their experiments representative. 

In Section~\ref{section:method}, we formally introduce a family of finely stratified designs that randomizes sampling and assignment variables within a matched partition $\groupsetn$. 
For matched groups $g \in \groupsetn$ and centroids $\bar \psi_\group = |\group|\inv \sum_{i \in \group} \psi_i$, we require the tight matching condition:
\begin{equation} \label{equation:matching-intro}
\frac{1}{n} \sum_{\group \in \groupsetn} \sum_{i \in \group} \bigl| \psi_i - \bar \psi_\group \bigr|_2^2 = \op(1).
\end{equation}
This generalizes similar match quality conditions that have appeared in \citet{bai2022pairs} and \citet{bai2020pairs}.
In the remaining sections, we make the following contributions:
\begin{enumerate}
\item In Section~\ref{section:algorithms}, we develop a new matching algorithm, which uses a fast spatial sorting procedure as a warm start to initialize the balanced $k$-means procedure of \citet{malinen2014balanced}.
We show that the resulting partition $\groupsetn$ satisfies the tight matching condition~\eqref{equation:matching-intro} under weak assumptions.

\item In Section~\ref{section:asymptotics}, we provide asymptotic theory for finely stratified survey experiments with general sampling proportions $\propselect(\psi)$ and assignment propensity $\propfn \ne 1/2$.
We show finely stratified sampling and assignment both provide nonparametric variance reductions, attaining the relevant semiparametric variance bound. 
Stratified sampling attenuates the variance due to treatment effect heterogeneity, with larger reductions as the size of the eligible pool grows.
We characterize the optimal stratification variables, showing that for sampling one should stratify on a small set of covariates $\psi$ most predictive of treatment effect heterogeneity, formally justifying the J-PAL recommendation above.

\item In Section~\ref{section:optimal_designs}, we apply these designs to formulate and solve a budget-constrained optimal sampling problem with heterogeneous costs, deriving the optimal sampling proportions $\propselectopt(\psi) \propto \cost(\psi)^{-1/2} \hksd(\psi)$ for costs $\cost(\psi)$ and an appropriate residual variance function $\hk(\psi)$.
Under a homoskedasticity assumption, this design can be implemented using the known costs.
We also propose versions that use available pilot or observational data to estimate the unknown $\hk(\psi)$ term, showing these asymptotically minimize variance over all budget-feasible proportions $\propselect(\psi)$.

\item In Section~\ref{section:inference}, we develop new inference methods for experiments with joint finely stratified sampling and assignment. 
We enable asymptotically non-conservative inference on the population average treatment effect, as well as asymptotically valid inference on the average treatment effect among the eligible units.  
\end{enumerate}

In Section \ref{section:empirical}, we present an empirical application to nine experiments recently published in top economics journals, which demonstrates the value of our methods.

\subsection{Related Literature}

For overviews of experimental design theory, see \cite{rosenberger2016book}, \cite{athey2017survey}, or \cite{bai2025primer}.
For reviews of the classical literature on survey sampling, see \cite{cochran1977} and \cite{lohr2021}. 

The idea that trial findings should be extended from enrolled participants to the broader population of trial-eligible individuals originates in the clinical literature \citep{rothwell2005}.
A subsequent literature in statistics develops estimators that correct for non-representative experimental participation and imbalanced assignment by ex-post adjustment, as in \cite{dahabreh2019generalizing} and \cite{li2023generalization}. 
See \citet{degtiar2023review} for a review of this area.
By contrast, our finely stratified sampling and assignment designs prevent such imbalances from arising at the randomization stage.

Contemporaneous with the first version of our paper, \citet{yang2021} propose a two-stage design using rerandomization for both sampling and assignment.
Under rerandomization, difference-of-means estimation is asymptotically slightly less efficient than ex-post linear covariate adjustment.
By contrast, we show that two-stage fine stratification is asymptotically equivalent to nonparametric covariate adjustment for the imbalances in both the sampling and assignment variables.

Recent work on stratified treatment assignment includes \cite{imai2009}, \cite{bugni2018inference}, \cite{fogarty2018}, \cite{wang2021}, \cite{bai2022pairs}, \cite{dechaisemartin21}, \cite{bai2020pairs}, \cite{tabord-meehan2020}, \cite{bai2024factorial}, and \cite{bai2026efficiency}.
For treatment assignment, our work is most closely related to \cite{bai2022pairs}, who study matched pairs designs, and \cite{bai2020pairs}, who studies finely stratified designs with constant propensity $\propfn = a/k$ and univariate stratification variables, $\dim(\psi)=1$.
Aside from stratification, other recent proposals for balanced treatment assignment include \cite{kasy2016}, \cite{kallus2017balance}, \cite{li2018rerandomization}, \cite{krieger2019}, and \cite{harshaw2021}.

From a computational perspective, finely stratified designs require the construction of high-quality matched groups.
Multivariate fine stratification has previously relied on greedy algorithms such as that of \citet{moore2012blocking}, implemented in the \texttt{blockTools} package. For large experiments, the threshold blocking method of \citet{higgins2016blocking} forms blocks of uneven sizes bounded below by some chosen $k$. 
These methods do not deliver the tight matching guarantees that drive our efficiency results.

The most closely related previous work providing such guarantees is \citet{bai2022pairs}, who bound the average squared Euclidean distance of the optimal non-bipartite matching \citep{derigs1988} using a space-filling path argument. 
\citet{bai2020pairs} forms univariate $k$-tuples with a match quality guarantee by sorting units along their $\psii$ values for $\dim(\psi)=1$. 
We generalize the space-filling construction and adapt it into a practical algorithm, deploying this as the warm start for the balanced $k$-means procedure of \cite{malinen2014balanced}. 
This results in a fast, practical algorithm that provably satisfies the appropriate tight matching condition for any $k \ge 2$ and $\dim(\psi) \ge 1$.

Our budget-constrained optimal sampling design extends the classical theory of optimal allocation in survey sampling \citep{cochran1977} to finely stratified sampling into an experiment.
Our results on design using a pilot study are related to previous work in \cite{hahn2012}, \cite{bai2020pairs}, \cite{tabord-meehan2020}, and \cite{kasy2021adaptive}.
Our inference results build on the method of collapsed-strata in \cite{hansen1953} and its modern ``pairs of pairs'' variants studied in \cite{abadie2008}, \cite{bai2022pairs}, and \cite{bai2026variance}.

\section{Setting and Designs} \label{section:method}

Consider running an experiment to estimate the effect of a binary treatment.
There are $n$ eligible units with potential outcomes $Y_i(d)$ for $d \in \{0,1\}$ and observed baseline covariates $\psi_i$ for $i = 1, \dots, n$.
Let $\Ti = 1$ if an eligible unit is sampled to participate in the experiment and $\Ti=0$ otherwise, so experiment size $\nsampled = \sum_i \Ti$.
In general, we may want to implement sampling proportions $\propselect(\psi) = P(T=1 | \psi) \in (0, 1]$.
Sampled units are assigned to treatment or control $\Di \in \{0,1\}$ with $\propfn(\psi) = P(D=1 | \psi)$. 
In practice, $\propfn$ and $\propselect$ will often be constant.
Then outcomes are $Y_i = \Ti[\Di Y_i(1) + (1-\Di) Y_i(0)]$. 

For example, \cite{shim2026fiscal} study the effect of fiscal news on household spending, sampling $\nsampled = 11{,}262$ respondents into their experiment from an eligible population of $n \approx 200{,}000$ Koreans in a survey panel, so that $\propselect = \nsampled / n \approx 1/18$.
Similarly, \cite{chioda2026making} evaluate an entrepreneurship training program, sampling $\nsampled = 4{,}402$ participants into the experiment from a pool of $n = 7{,}431$ Ugandan applicants so that $\propselect \approx 3/5$.

Denote $W_i = (\psi_i, Y_i(0), Y_i(1))$ and suppose $(W_i)_{i=1}^n \simiid P$, modeling the eligible units as an iid sample from a broader superpopulation.
One natural estimand in this case is the average treatment effect $\ate = E[Y(1)-Y(0)]$.
For example, in the ORUS basic income study in the introduction, the $\ate$ is the treatment effect among the broader population of low-income Americans from which the $n = 14{,}573$ eligible participants were recruited, using a combination of direct mailers and online advertisements.

Another possible estimand is the $\sate = n\inv \sum_{i=1}^n (Y_i(1)-Y_i(0))$ among the $n$ eligible units that the experimenter physically samples from.
In the ORUS study, the $\sate$ is the treatment effect among these $n = 14{,}573$ eligible units, from which they sampled the ultimate experimental population of size $\nsampled = 3{,}000$.
Note the $\sate$ is sometimes alternatively defined more narrowly as the treatment effect among the $\nsampled$ experimental participants, $\nsampled\inv \sum_{i=1}^n \Ti(Y_i(1)-Y_i(0))$.
This estimand has less policy relevance in general, though it is equivalent to our definition of the $\sate$ in the special case $\propselect = 1$.
We mostly focus on the $\ate$ in what follows, though for completeness we extend our main asymptotic and inference results to the $\sate$ as well.

In either case, our goal is to sample a representative subset of the eligible units into the experiment, then assign them to treatment and control in a way that finely balances the baseline covariates $\psi_i \in \mr^d$.
We implement both the sampling and treatment assignment steps using a form of matched $k$-tuples randomization.

\medskip

\textbf{Matched $k$-Tuples.}
We first describe matched $k$-tuples in the context of random sampling.  
Suppose we want to sample $\propselect = 1/5$ of the $n$ eligible units into the experiment. 
To do so, we first match them into homogeneous groups of size $k=5$ using the covariates $(\psii)_{i=1}^n$, then sample one out of every five units in each matched group into the experiment, uniformly at random.
Doing so balances covariates $\psii$ between the sampled and non-sampled units, $\Ti=1$ and $\Ti=0$. 
More generally, we introduce a family of matched $k$-tuples designs that allows for potentially varying sampling and assignment propensities $\propselect(\psi) = P(T = 1 | \psi)$ and $\propfn(\psi) = P(D = 1 | \psi)$:

\begin{defn}[Local Randomization] \label{defn:local_randomization}
Let $\propselect : \mr^d \to (0, 1]$ be a propensity with rational levels $\propselect(\psi) \in \{a_l/k_l : l = 1, \dots, L\}$.
We write $\Tn \sim \localdesigncond(\psi, \propselect(\psi))$ and say that the design $\Tn = (\Ti)_{i=1}^n$ is a \emph{locally randomized} implementation of $\propselect(\cdot)$ with respect to $\psi$ if there exists a partition of $\{1, \dots, n\}$ into matched groups $g \in \groupsetn$  such that:

\begin{enumerate}[label={\rm(\arabic*)}, itemindent=.5pt, itemsep=.4pt]
\item \emph{Tight matching.} 
The partition $\groupsetn$ is determined only by $\psin$ and independent randomness $\permn$ used to break ties in group formation.
For centroids $\bar \psi_\group \equiv |\group|\inv \sum_{i \in \group} \psi_i$, the partition has matching objective $F$ with 
\begin{equation} \label{equation:homogeneity}
F(\groupsetn) \equiv \frac{1}{n} \sum_{\group \in \groupset_n} \sum_{i \in \group} \bigl | \psi_i - \bar \psi_\group \bigr|_2^2 = \op(1).
\end{equation}
\item \emph{Complete randomization.}
For each $g \in \groupsetn$, units $i, j \in \group$ have equal sampling propensities $\propselect(\psii) = \propselect(\psij) = a_l/k_l$ for some $l=1, \dots, L$.
Conditional on $(\Wn, \permn)$, each $(\Ti)_{i \in \group}$ is drawn uniformly over all binary vectors in $\{0, 1\}^{k_l}$ with exactly $a_l$ out of $\kl$ units having $\Ti = 1$, independently between groups. 
\end{enumerate}
\end{defn}

\setlength{\tabcolsep}{1pt}
\begin{figure}[t]
\centering
\begin{tabular}{cc}
\includegraphics[width=0.5\textwidth]{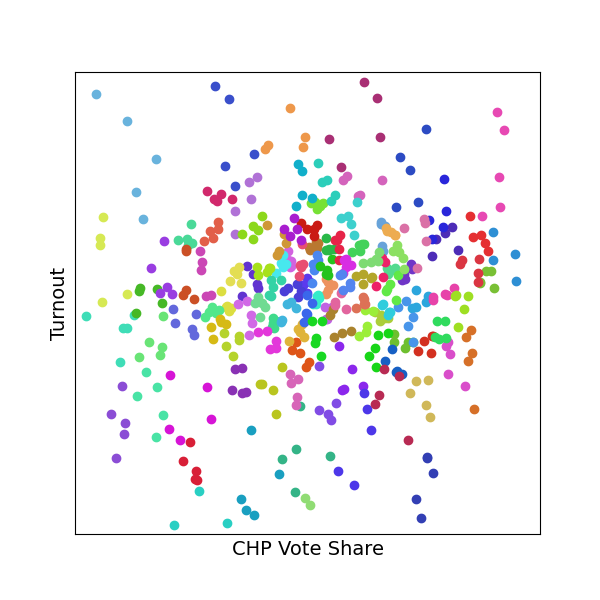} &
\includegraphics[width=0.5\textwidth]{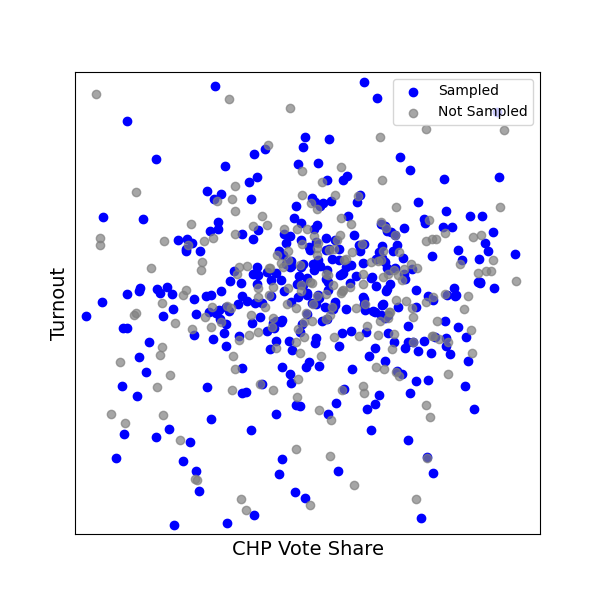} \\
\end{tabular}
\caption{Sampling groups and variables for $\Tn \sim \localdesigncond(\psi, \propselect)$ with $\propselect = 3/5$.}
\label{fig:sampling}
\end{figure}
\setlength{\tabcolsep}{6pt}

\setlength{\tabcolsep}{1pt}
\begin{figure}[t]
\centering
\begin{tabular}{cc}
\includegraphics[width=0.5\textwidth]{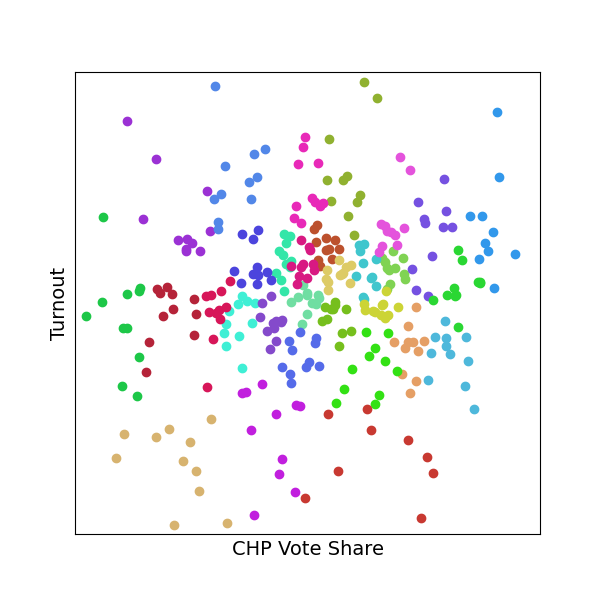} &
\includegraphics[width=0.5\textwidth]{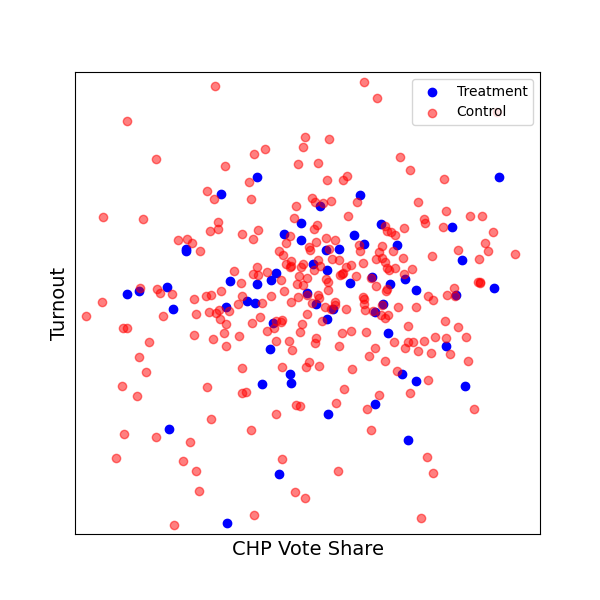} \\
\end{tabular}
\caption{Assignment groups and variables for $\Dn \sim \localdesigncond(\psi, \propfn)$ with $\propfn=2/11$.}
\label{fig:assignment}
\end{figure}
\setlength{\tabcolsep}{6pt}

Condition (2) is the usual definition of stratified randomization.
Note we may require one remainder group with $1 \le |g| < k_l$ per propensity level due to integer effects. 
For example, if $n=101$ and $\propselect = 1/5$, the remainder group would have size $|g| = 1$, and we can draw $\Ti \sim \bern(\propselect)$ for this unit.

\medskip

\textbf{Tight Matching Condition.}
The key requirement on the partition $\groupsetn$ is the tight matching condition $F(\groupsetn) = \op(1)$ in Equation \ref{equation:homogeneity}, which asks that the matched groups be tightly clustered in $\psi$-space.
We use it to guarantee strong control over the in-sample covariate imbalances that may arise during randomization, allowing us to show that finely stratified sampling and assignment both provide nonparametric variance reductions for treatment effect estimation.
This condition generalizes related match quality guarantees previously shown for optimal matched pairs in \citet{bai2022pairs} and univariate matched $k$-tuples in \citet{bai2020pairs}.

Constructing fast algorithms that provably satisfy this condition for any $k \ge 2$ and $\dim(\psi) \ge 1$ is a core contribution, enabling the finely stratified sampling and assignment designs we study in this paper.
However, the value of this algorithm extends beyond the binary setting.
For example, \citet{bai2024factorial} study finely stratified assignment for experiments with multiple treatment arms, which requires an asymptotically equivalent matching condition, but provide an algorithm only in the case $k = 2^m$.
Our matching algorithm thus makes their asymptotic results applicable to designs outside this special case, for example experiments with three treatment arms, requiring matched triples with $k = 3$.
We present the algorithm and its formal matching guarantee in Section \ref{section:algorithms} below.

\medskip

\textbf{Experiment Design.}
We consider a two-stage finely stratified design:  
\begin{enumerate}[label={(\arabic*)}, itemindent=.5pt, itemsep=.4pt] 
\item Sample eligible units $\Tn \sim \localdesigncond(\psi, \propselect(\psi))$. 
\item Assign treatments $\Dn \sim \localdesigncond(\psi, \propfn(\psi))$ to the sampled units $\{i: \Ti=1\}$.
\end{enumerate}

The most common use case of this design is to let the propensities be constant, sampling $\Tn \sim \localdesigncond(\psi, \propselect)$ for $\propselect = a/k$ of the eligible units, then assigning treatments $\Dn \sim \localdesigncond(\psi, \propfn)$ to the sampled units.
In our survey of experiments in recent NBER working papers, 73 are binary-treatment RCTs, of which about 27\% deliberately use a treatment proportion $\propfn \ne 1/2$, requiring our matched $k$-tuples algorithm for multivariate fine stratification not only at the sampling stage, but also for treatment assignment.

More generally, the experimenter may desire non-constant propensities $\propselect(\psi)$ or $\propfn(\psi)$ for various budgetary or logistical reasons, or purely to improve statistical power. 
For example, in Section \ref{section:optimal_designs}, we study stratified implementation of the variance minimizing sampling proportions $\propselectopt(\psi)$ under a fixed budget constraint. 

\begin{ex}[Matched Tuples] \label{ex:intro:matched-tuples}
We illustrate the basic sampling and assignment procedure in Figures \ref{fig:sampling} and \ref{fig:assignment}, using data from the electoral information campaign in \cite{baysan2022}. 
Each color represents a different group of units in the partition $\groupsetn$. 
In Figure \ref{fig:sampling} we sample $\propselect = 3/5$ of the units by forming groups of size $|\group| = 5$, then randomly sampling $\Ti=1$ for exactly $3$ out of $5$ units in each group.
By sampling in this way we ensure that units of each ``type'' in the space of covariates $\psi = $ (vote share, turnout) are represented within the smaller experiment.
\cite{baysan2022} assigned only $\propfn=2/11$ of the units to $D=1$ due to the high cost of the treatment. 
In Figure \ref{fig:assignment}, we implement this by matching sampled units $\{i: \Ti=1\}$ into groups of size $|\group| = 11$, randomly assigning $\Di = 1$ to exactly $2$ out of $11$ units from each group.
\end{ex}

\begin{remark}[Coarse Stratification] \label{rem:coarse-and-cr}
The literature has also considered coarsely stratified designs with fixed strata $X_i \in \{1, \dots, m\}$ and stratum sizes $n_x = |\{i : X_i = x\}| \to \infty$ \citep[e.g.][]{bugni2018inference}.
Such designs are nested in our framework by letting $\psi = X$ and matching units sharing the same discrete covariate value $X=x$ into groups at random.
In this case, $\groupsetn$ depends on both $\psin$ and an extra randomization device $\permn$, which we accommodate in our analysis.
Such groups trivially satisfy the tight matching condition in Equation \ref{equation:homogeneity}. 
Complete randomization is obtained by setting $\psi = 1$ and forming groups $|\group| = k$ at random. 
Thus, we provide a unified asymptotic theory and inference methods for finely stratified, coarsely stratified, and completely randomized designs at both the sampling and assignment stages. 
\end{remark}

\section{Algorithm and Match Quality Guarantee} \label{section:algorithms}

\begin{figure}[t]
\centering
\includegraphics[width=\textwidth]{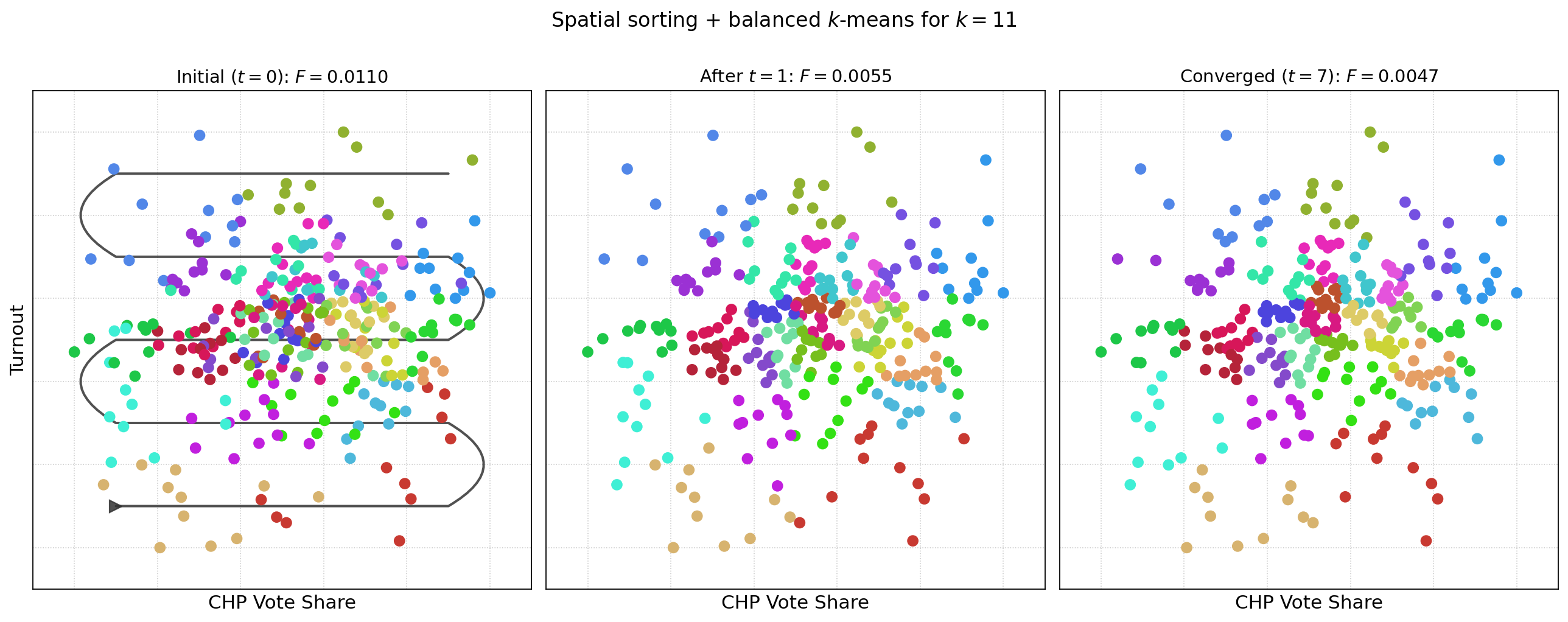}
\caption{Matching algorithm of Section \ref{section:algorithms} applied to the \cite{baysan2022} data with tuple size $k = 11$. 
Preprocessed data and spatial sorting algorithm (left), one iteration of balanced $k$-means (middle), convergence after $t=7$ iterations (right).}
\label{fig:block-path}
\end{figure}

In what follows, denote $[z] = \{1, \dots, z\}$ for any positive integer $z$.
Let $a \wedge b = \min(a, b)$ and $|S|$ denote the cardinality of $S$.
We begin by describing the algorithm for a single propensity level $\propselect = a/k$, where a set of units $S \sub [n]$ is to be matched into groups of a common size $k$.
For sampling, we apply the algorithm to the eligible units $S = [n]$, while we apply it to the participants $S = \{i: \Ti=1\}$ for assignment. 

Our algorithm constructs a matched partition $\groupsetn$ of the units $S \sub [n]$ as follows: 

\begin{enumerate}[label={\rm(\arabic*)}, itemindent=.5pt, itemsep=.4pt]
\item \emph{Preprocessing.}  Form a remainder group $r$ of size $|r| < k$ units so that $S' = S \setminus r$ has cardinality $|S'|$ exactly divisible by the tuple size $k$.
For the remaining units $i \in S'$, standardize covariates into the unit cube $\psii \in [0, 1]^{\dimpsi}$ by transforming $\psi_{i}^{(j)} \to (\psi_{i}^{(j)} - \min_{i' \in S'} \psi_{i'}^{(j)}) / (\max_{i' \in S'} \psi_{i'}^{(j)} - \min_{i' \in S'} \psi_{i'}^{(j)})$ for $j \in [\dimpsi]$.
\item \emph{Spatial sorting.} Compute an index function $\indfn : [0,1]^{\dimpsi} \to \mathbb N$ returning the position of $\psi \in [0, 1]^{\dimpsi}$ along a coarsened space-filling curve.
Sort units by their index values $(\indfn(\psii))_{i \in S'}$, then match consecutive units into groups $g \in \groupsetn$ of size $k$, breaking ties at random.
\item \emph{Balanced $k$-means polishing.} Refine the resulting partition $\groupsetn$ by running the balanced $k$-means algorithm of \cite{malinen2014balanced} for $T \ge 0$ iterations.
In practice, we run until convergence, which happens in finitely many iterations.
\end{enumerate}

Readers primarily interested in treatment effect estimation can see the match quality guarantee in Theorem \ref{thm:matching-text} and skip the technical construction below on a first reading.

For the preprocessing in step (1), note that if $\max_{i' \in S'} \psi_{i'}^{(j)} = \min_{i' \in S'} \psi_{i'}^{(j)}$ then covariate $j$ has no variation and can be excluded from the procedure without changing match quality.
Then without loss, $\max_{i' \in S'} \psi_{i'}^{(j)} - \min_{i' \in S'} \psi_{i'}^{(j)} > 0$ for each dimension $j \in [\dimpsi]$.
Also, our statistical guarantees are invariant to the choice of remainder group. 
In practice, we use the remainder to trim outliers, which improves the performance of the spatial sorting algorithm. 

\medskip

\textbf{Spatial Sorting.}
We partition the unit cube $[0,1]^{\dimpsi}$ into axis-aligned blocks of side length $1/m$ for grid size parameter $m$.
This creates a grid with cells of the form $C(z) = \prod_{j=1}^{\dimpsi} [z_j/m,\, (z_j+1)/m]$ for integers $z_j \in \{0, \dots, m-1\}$, see the gray lines in Figure \ref{fig:block-path} for an illustration.
For $S \sub [n]$, we require grid size $m \asymp (|S| / k)^{1/(\dimpsi+1)}$.
In practice, we set $m = \lceil (|S| / (k \cdot \dimpsi))^{1/(\dimpsi+1)} \rceil$, which approximately optimizes an upper bound on the matching objective in our theoretical analysis.

The index function $\indfn(\cdot)$ in step (2) enumerates these blocks along a space-filling path through $[0,1]^{\dimpsi}$, such that any two consecutive cells share a $\dimpsi-1$ dimensional face. 
The existence of such a path has also been used to study optimal matched pairs in \cite{bai2022pairs}, see Remark \ref{remark:spatial-sorting} below for a detailed comparison.
For data in dimension $v \ge 1$, denote the position of cell $C(z)$ along this path by $\indfn_v(z) \in \mathbb N$.
In the univariate case, we can set $\indfn_1(z_1) = z_1$. 
In dimension $v \ge 2$, the position of cell $C(z)$ can be computed recursively as in \cite{er1984nary}:
\begin{equation} \label{equation:gray-code-text}
\indfn_v(z_1, \dots, z_v) \equiv z_v \cdot m^{v-1} + \begin{cases} \indfn_{v-1}(z_1, \dots, z_{v-1}) & z_v \text{ even}, \\ m^{v-1} - 1 - \indfn_{v-1}(z_1, \dots, z_{v-1}) & z_v \text{ odd}. \end{cases}
\end{equation}
To understand the recursion, note $\indfn_{v-1}(\cdot)$ defines an order for traversing the $(v-1)$-dimensional subspace.
If the coordinate $z_v$ in the new dimension is even, we traverse the $v-1$ dimensional layer in forward order, while if $z_v$ is odd we go in reverse order, so that consecutive layers are traversed without jumps.
See Figure \ref{fig:block-path} for an illustration.

We extend $\indfn_v$ as a map defined on vectors of integers to a map on the unit cube $\indfn: [0, 1]^{\dimpsi} \to \mathbb{N}$ by setting $\indfn(\psi) \equiv \indfn_{v}(\lfloor m \psi \rfloor  \wedge (m-1))$ for $v = \dimpsi$, where the min and floor are applied componentwise.
We sort the units in $S'$ by their index values $\indfn(\psi_i)$, matching consecutive units into groups of size $k$ and breaking ties at random.
This procedure yields an initial partition $\groupsetn^{(0)}$ of $S'$ into groups with $|g|=k$.

\emph{Time Complexity.} Overall, spatial sorting in step (2) requires $O(n \dimpsi + n \log n)$ operations, which is very tractable even for massive experiments. 
For $\dim(\psi) \le 10$ this algorithm can match $n = $ 10 million units into $k=4$ tuples in under $20$ seconds.    

\medskip

\textbf{Balanced $k$-means polishing.}
Given the initial partition of $S'$ into groups $g \in \groupsetn^{(0)}$ with $|g| = k$, we use balanced $k$-means \citep{malinen2014balanced} to further decrease the matching objective defined by $F(\groupset) \equiv n\inv \sum_{g \in \groupset} \sum_{i \in g} |\psi_i - \bar{\psi}_g|_2^2$, subject to the equal-size constraint.
At iteration $t$, let $\bar{\psi}_g^{(t)} = k\inv \sum_{i \in g} \psi_i$ denote the centroid of $g \in \groupsetn^{(t)}$, and update the partition of the units in $S'$ into groups $|g| = k$ by
\begin{equation} \label{equation:balanced-kmeans-text}
\groupsetn^{(t+1)} = \argmin_{\groupset : |g| = k} \; \sum_{g \in \groupset} \sum_{i \in g} |\psi_i - \bar{\psi}_g^{(t)}|_2^2.
\end{equation}
The right-hand side of \eqref{equation:balanced-kmeans-text} is a balanced linear assignment problem that can be solved in polynomial time using the Hungarian algorithm \citep{malinen2014balanced}.
This objective is non-increasing, with $0 \le F(\groupsetn^{(t+1)}) \le F(\groupsetn^{(t)})$ for all $t \geq 0$ and strict decrease whenever the partition changes, until the algorithm converges to a fixed point $\groupsetn^{(t+1)} = \groupsetn^{(t)}$.
Since there are finitely many such partitions, this convergence happens in finitely many iterations.
The output of the single-level algorithm is the partition $\groupsetn = \groupsetn^{(T)} \cup \{r\}$, where $T \ge 0$ is the number of balanced $k$-means iterations.

\medskip

\textbf{Multiple Propensity Strata.}
In our general definition of local randomization, the propensity $\propselect(\psi)$ may take several rational values $\propselect(\psi) \in \{a_l / k_l : l \in [L]\}$.
In this case, we partition the units into propensity strata $S_l = \{i : \propselect(\psi_i) = a_l / k_l\}$ with size $\nl = |S_l'|$ for $S_l' = S_l \setminus r_l$ after removing the remainder.
We run the algorithm above separately within each $S_l'$, matching the units into groups of size $k_l$.
Each propensity stratum is handled exactly as above, with grid size $\ml = \lceil (\nl / (\denoml \cdot \dimpsi))^{1/(\dimpsi+1)} \rceil$ and balanced $k$-means iteration counts $T_l \geq 0$.
Writing $\groupsetnl$ for the partition of $S_l$ produced by the algorithm, the final output is the union $\groupsetn = \cup_{l=1}^{L} \groupsetnl$.
Our main result in this section shows $\groupsetn$ satisfies the tight matching condition in Equation \ref{equation:homogeneity}.

\begin{thm}[Algorithm Guarantee] \label{thm:matching-text}
Suppose $E[|\psi|_2^\alpha] < \infty$ for some $\alpha > \dimpsi + 1$. 
For each $l \in [L]$, let grid sizes $\ml \asymp (\nl/\denoml)^{1/(\dimpsi+1)}$ and any number of balanced $k$-means iterations $T_l \geq 0$. 
Then the partition $\groupsetn$ produced by the algorithm above satisfies
\[
n\inv \sum_{g \in \groupsetn} \sum_{i \in g} |\psi_i - \bar{\psi}_g|_2^2 = \op\!\left( n^{2/\alpha - 2/(\dimpsi+1)} \right) = \op(1).
\]
\end{thm}
See Theorem \ref{thm:algorithm-guarantee} for our most general result, in asymptotics allowing the maximum group size $\kboundn = \max_{l \in [\nlevels]} \denoml$ and number of propensity levels $\nlevels$ to grow with $n$.
This generalization supports our results in Section \ref{section:optimal_designs}, which consider a sampling design $\Tn \sim \localdesigncond(\psi, \propselectestn(\psi))$, where $\propselectestn(\psi)$ is a rational-valued estimate of the optimal sampling propensity $\propselectopt(\psi)$, discretized with increasing fineness, so that $\kboundn, \nlevels \to \infty$ as $n \to \infty$.

\begin{remark}[Sorting Construction] \label{remark:spatial-sorting}
\cite{bai2022pairs} were the first to use the existence of such space filling paths as an analytical device to bound the objective function value achieved by optimal matched pairs. 
In our notation, they proved $F(\groupsetn) = O(n^{-1/\dimpsi})$ for bounded $\psi$ and $k=2$.  
We show a slightly sharper result, which applies for general $k \ge 2$. 
In particular, Theorem \ref{thm:algorithm-guarantee} implies the improved rate $F(\groupsetn^*) \le F(\groupsetn) = O\!\left( n^{-2/(\dimpsi+1)} \right)$ for the optimal $k$-tuple matching $\groupsetn^*$, by a modified analysis. 
\end{remark}

\textbf{Application to Survey Experiments.}
Recall in our two-stage design above we sample $\Tn \sim \localdesigncond(\psi, \propselect(\psi))$ and assign $\Dn \sim \localdesigncond(\psi, \propfn(\psi))$.
By definition, such designs require partitions $\groupset_n^T$ of the eligible units $[n]$ and $\groupset_n^D$ of the sampled units $\{i : \Ti = 1\}$ that both satisfy the tight matching condition.
The following corollary shows that our algorithm provides such partitions.

\begin{cor}[Survey Experiments] \label{cor:tight-matching-survey}
Suppose $E[|\psi|_2^\alpha] < \infty$ for some $\alpha > \dimpsi + 1$.
Then the sampling and assignment partitions $\groupset_n^T$ and $\groupset_n^D$ produced by the algorithm have 
\[
F(\groupset_n^T) = \op(1), \qquad F(\groupset_n^D) = \op(1).
\]
\end{cor}

\section{Asymptotic Theory} \label{section:asymptotics}

This section contains our main asymptotic results, showing nonparametric efficiency gains from both finely stratified sampling and assignment.
Below, we denote $\en[a_i] = n\inv \sum_{i=1}^n a_i$ for any array $(a_i)_{i=1}^n$.
We assume the following throughout:
\begin{assumption} \label{assumption:asymptotics-regularity}
Moments $E[Y(d)^2] < \infty$ for $d = 0, 1$.
\end{assumption}

\textbf{Estimation.} Recall in our definition of local randomization (Definition \ref{defn:local_randomization}), we allowed sampling and assignment propensities $\propselect(\psi)$ and $\propfn(\psi)$ to have finitely many rational levels, such as $\propselect(\psi) = \al / \kl $ for $l \in [L]$.
This may be required for logistical reasons, or to maximize statistical power subject to a budget constraint, as we study in Section~\ref{section:optimal_designs} below.
To accommodate the most general designs, we state our main asymptotic results using the inverse propensity weighting (IPW) estimator
\begin{equation} \label{equation:estimator:varying}
\est = \en\left[\frac{\Ti}{\propselect(\psii)} \left(\frac{\Di Y_i}{\propfn(\psii)} - \frac{(1-\Di) Y_i}{1-\propfn(\psii)}\right) \right].
\end{equation}

In the special case with constant $\propfn(\psi)=\propfn$ and $\propselect(\psi)=\propselect$, we could use a standard regression estimator, like the coefficient on $\Di$ in the regression $Y_i \sim 1 + \Di$, estimated among the sampled units. 
Denoting this estimator by $\estreg$, we have $\estreg = \est + \Op(n\inv)$ under stratified sampling and assignment, so using OLS instead of IPW does not change our efficiency or inference results.
Motivated by this, we restrict our attention to IPW in what follows. 
Note that this equivalence between OLS and IPW is a consequence of stratification and is not true for iid designs. 

In what follows, let $\catefn(\psi) = E[Y(1) - Y(0) | \psi]$ denote the conditional average treatment effect (CATE) and $\hk_d(\psi) = \var(Y(d) | \psi)$ the heteroskedasticity function.
Recall that $\nsampled = \sum_i \Ti$ is the experiment size.
Next we state our main asymptotic result.

\begin{thm}[CLT] \label{thm:clt_fixed}
Assume sampling and assignment are locally randomized with $\Tn \sim \localdesigncond(\psi, \propselect(\psi))$ and $\Dn \sim \localdesigncond(\psi, \propfn(\psi))$. 
Then $\sqrt{\nsampled} (\est - \ate) \convwprocess \normal(0, \varlocal)$ 
\begin{equation} \label{equation:asymptotic-theorem}
\varlocal = E[\propselect(\psi)] \left (\var(\catefn(\psi)) + E\left[\frac{1}{\propselect(\psi)} \left (\frac{\hk_1(\psi)}{\propfn(\psi)} + \frac{\hk_0(\psi)}{1-\propfn(\psi)} \right) \right] \right).
\end{equation}
\end{thm}

We begin with the most common case of constant sampling and assignment propensities $\propselect(\psi) = \propselect$ and $\propfn(\psi) = \propfn$.
Then the variance in Theorem \ref{thm:clt_fixed} simplifies to
\begin{equation} \label{equation:variance:representative-sampling}
\varlocal = \propselect \var(\catefn(\psi)) + E\left[\frac{\hk_1(\psi)}{\propfn} + \frac{\hk_0(\psi)}{1-\propfn} \right].
\end{equation}

\textbf{Stratified Assignment.} 
If $\propselect = 1$ then $\nsampled = n$ and our framework recovers the conventional superpopulation sampling model. 
Setting $\propselect = 1$, the variance $V$ is precisely the \cite{hahn1998} semiparametric variance bound for estimating the $\ate$ in this model. 
For iid treatments $\Di \simiid \bern(\propfn)$, this variance can typically only be attained by nonparametrically estimating and weighting by the realized propensity $\wh \propfn(\psi)$  \citep{hirano2003}, nonparametric regression adjustment \citep{robins1994}, or a combination of the two \citep{chernozhukov2017dml}.

By contrast, here we achieve this variance bound using a simple IPW estimator plugging in the known propensity or OLS $Y_i \sim 1 + \Di$ if $\propfn(\psi) = \propfn$. 
This shows how fine stratification nonparametrically controls the covariate imbalances by design, without the need for ex-post adjustment.
The original \cite{hahn1998} ATE variance bound was shown for iid data, but recent work in \cite{armstrong2022} has extended this bound to general forms of covariate-adaptive randomization, including the finely stratified designs here. 

The original work of \cite{bai2022pairs} showed the variance $V$ is attained by $\estreg$ under matched pairs randomization if the pairs satisfy a tight matching condition, corresponding to the case $\propfn=1/2$ in our framework.
\cite{bai2020pairs} extended this to the case $\propfn = a/k$ with $\dim(\psi) = 1$.

\medskip

\textbf{Stratified Sampling + Assignment.}
Next, consider finely stratified sampling into the experiment with $\propselect < 1$.
The first term in Equation~\ref{equation:variance:representative-sampling} is then strictly smaller than the \cite{hahn1998} bound, with decreasing variance as $\propselect = \nsampled/n \to 0$.
This formalizes an observation of \citet{muralidharan2017scale} that ``external validity is after all a continuous and not a binary concept, and all else equal, a sample representative of 10 million people does more for external validity than one that is representative of 10,000.''

This continuous variance reduction is enabled by stratified sampling, which allows us to tightly couple the distribution of covariates $(\psii)_{i=1}^n$ among the $n$ eligible units to the covariate distribution in our smaller experiment of size $\nsampled$, reducing the variance due to treatment effect heterogeneity predictable by $\psi$ by a factor $\propselect = \nsampled / n$.

To see why, imagine an oracle setting where we exactly observe the treatment effect level $\catefn(\psii)$ for each sampled unit $\Ti = 1$, estimating the $\ate$ by the sampled average $\wh \theta = (1/\nsampled) \sum_i \Ti \catefn(\psii)$.
Our analysis shows that if $\Tn \sim \localdesigncond(\psi, \propselect)$, then $\est$ behaves like the infeasible average over all $n$ eligible units:
\begin{equation} \label{equation:sampling-coupling}
(1/\nsampled) \sum_{i=1}^n \Ti \catefn(\psii) = \en[\catefn(\psii)] + \op(n^{-1/2}).
\end{equation}
Since $\nsampled/n \convp \propselect$, this accounts for the reduction from $\var(\catefn(\psi))$ to $\propselect \var(\catefn(\psi))$ in the first term of Equation~\eqref{equation:variance:representative-sampling}.
In the general case with $\propselect(\psi)$ non-constant, the treatment effect heterogeneity term in Equation~\eqref{equation:asymptotic-theorem} is attenuated to $\propselectavg \var(\catefn(\psi))$ with $\propselectavg = E[\propselect(\psi)]$.

The second term in Equation~\eqref{equation:variance:representative-sampling} captures the variance from treatment effect heterogeneity orthogonal to $\psi$, as well as random imbalances in the potential outcomes during second-stage treatment assignment. Such fluctuations cannot be affected by stratified sampling on $\psi$ with constant $\propselect$. To reduce this term, we need to change the relative sampling proportions $\propselect(\psi)$, taking more samples of higher-variability units. This dependence on $\propselect(\psi)$ is reflected in the residual variance term in Equation~\eqref{equation:asymptotic-theorem}.

\medskip

\textbf{Adjustment for Sampling Imbalances.} One could instead try to correct for sampling imbalances between the eligible and experimental populations using ex-post adjustment.
\citet{dahabreh2019generalizing} propose such a method in a related setting with iid sampling and assignment $\Ti \simiid \bern(\propselect(\psii))$ and $\Di \simiid \bern(\propfn(\psii))$.
Their estimator adjusts for covariate imbalances during both sampling and assignment by estimating regressions $\ceffn_d(\psi) = E[Y(d)|\psi]$ and the realized propensities $\wh \propfn(\psi)$ and $\wh \propselect(\psi)$.
Subsequent work in \citet{li2023generalization} showed that this estimator attains the variance $V$ in Theorem \ref{thm:clt_fixed}, and also proved that $V$ is the semiparametric efficiency bound in this iid setting.

We develop a cross-fit version of the \citet{dahabreh2019generalizing} estimator, showing it attains the variance $V$ in Equation \ref{equation:asymptotic-theorem} without the need for re-estimation of the known propensities $\propfn(\psi)$ and $\propselect(\psi)$, though the regression functions $\ceffn_d(\psi)$ still need to be consistently estimated.
This result and related discussion can be found in Appendix~\ref{appendix:dml-adjustment}.

In contrast to these ex-post adjustment strategies, Theorem~\ref{thm:clt_fixed} shows that under a more careful design with finely stratified sampling $\Tn \sim \localdesigncond(\psi, \propselect(\psi))$ and assignment $\Dn \sim \localdesigncond(\psi, \propfn(\psi))$, we can attain the variance $\varlocal$ using a simple estimator $\est$, plugging in the known propensities without any ex-post modeling, or just running the linear regression $Y \sim 1 + D$ in the constant propensity case.
This shows how fine stratification nonparametrically controls covariate imbalances during both the sampling and assignment, obviating the need for ex-post adjustment. 

\begin{remark}[Comparison with Other Designs] \label{remark:comparison-other-designs}
One alternative to fine stratification is rerandomization, where one redraws random treatment assignments or sampling variables until a covariate balance criterion is satisfied.
\cite{li2018rerandomization} show that under rerandomized treatment assignment, $\est$ is asymptotically slightly less efficient than an interacted regression that adjusts linearly for $\psi$.
Similarly, \cite{yang2021} study a two-stage design that rerandomizes at both the sampling and assignment stages, showing it is asymptotically slightly less efficient than ex-post linear adjustment for covariate imbalances in $\psi$ at both stages.
By contrast, fine stratification using the algorithm in Section~\ref{section:algorithms} nonparametrically controls covariate imbalances at both stages.
\end{remark}

\begin{remark}[Smoothness Conditions] \label{remark:smoothness}
Previous work on efficiency of fine stratification has required Lipschitz continuity of the functions $E[Y(d)|\psi]$ and $\var(Y(d) | \psi)$, e.g.\ see \cite{bai2022pairs}.
We develop a novel technical analysis allowing these conditions to be removed, showing the universality of the efficiency gains from fine stratification under the mild moment condition $E[Y(d)^2] < \infty$.
\end{remark}

\subsection{Extensions} \label{subsection:asymptotics:extensions}
 
In this section, we extend our main result to allow for stratified sampling and assignment with different covariates $\psisamp$ and $\psiassign$, characterizing the optimal covariates to stratify on at each stage. 
We also extend our asymptotic theory to target estimation of the $\sate = \en[Y_i(1) - Y_i(0)]$ in the eligible population instead of the $\ate$.

\medskip

It may be useful to stratify on different sets of variables during the sampling and assignment stages.
For example, $\psisamp$ could contain publicly available administrative data, while $\psiassign$ contains further survey covariates collected after sampling.
The next result extends Theorem~\ref{thm:clt_fixed} to this setting.
We write the variance in a form that more clearly disambiguates the variance contributions from each sampling stage. 
To that end, let individual effect $\te = Y(1) - Y(0)$ and define \emph{outcome level} $\ylevel = (1-\propfn) Y(1) + \propfn Y(0)$.

\begin{thm}[CLT] \label{thm:clt_two_psi}
Let $\psisamp = f(\psiassign)$ for some function $f(\cdot)$.
If $\Tn \sim \localdesigncond(\psisamp, \propselect)$ and $\Dn \sim \localdesigncond(\psiassign, \propfn)$, then $\sqrt{\nsampled} (\est - \ate) \convwprocess \normal(0, \varlocal)$ with
\begin{equation} \label{equation:variance:two-psi}
\varlocal = \propselect \var(\te) + (1-\propselect) E[\var( \te | \psisamp)] + \frac{E[\var(\ylevel | \psiassign)]}{\propfn (1-\propfn)}.
\end{equation}
\end{thm}

Writing $\varlocal = V_1 + V_2 + V_3$, the term $V_1 = \propselect \var(\te)$ is the irreducible variance from iid sampling of the eligible units.
The term $V_2 = (1-\propselect) E[\var(\te | \psisamp)]$ is the variance from finely stratified sampling from the eligible pool, which is small if $\psisamp$ is highly predictive of treatment effect heterogeneity $\te = Y(1) - Y(0)$.
Finally, $V_3 \propto E[\var(\ylevel | \psiassign)]$ is the variance from random imbalances in the outcome levels $\ylevel$ between $D=1$ and $D=0$, which is small if $\psiassign$ is highly predictive of $\ylevel$.
In the case $\psisamp = \psiassign = \psi$, Equation \eqref{equation:variance:two-psi} can be rearranged to yield the variance in our earlier theorem \eqref{equation:asymptotic-theorem}.

The technical condition $\psisamp = f(\psiassign)$ is satisfied, for instance, if $\psiassign$ contains all of the variables originally included in $\psisamp$.
It is not immediately clear whether it can be removed with a finer analysis.

\medskip

\textbf{Finite Population Estimand.}
In some settings, the eligible population of size $n$ may constitute the entire population of interest.
In this case it is natural to instead target the sample average treatment effect $\sate = \en[Y_i(1) - Y_i(0)]$ in the eligible population.
Note this differs from alternative definitions of the $\sate$ as the average of treatment effects within the experimental population $\en[Y_i(1) - Y_i(0) | \Ti = 1]$, which has less external validity, though they coincide if $\propselect=1$.

\begin{cor}[$\sate$] \label{cor:clt4sate}
In the setting of Theorem~\ref{thm:clt_two_psi}, $\sqrt{\nsampled}(\est - \sate) \convwprocess \normal(0, V_{\sate})$,
\begin{equation} \label{equation:variance:clt4sate}
V_{\sate} = (1-\propselect) E[\var(\te | \psisamp)] + \frac{E[\var(\ylevel | \psiassign)]}{\propfn (1 - \propfn)}.
\end{equation}
\end{cor}
Targeting the $\sate$ simply removes the variance from the first stage sampling of $n$ eligible units from the superpopulation, resulting in $V_{\sate} = \varlocal - \propselect \var(\te)$.
See Appendix~\ref{appendix:sate-targeting} for more discussion about targeting the $\sate$ vs.\ the $\ate$ and the interpretation of the population measure $P$ in this context.  

For completeness, we also provide versions of Theorem~\ref{thm:clt_two_psi} and Corollary~\ref{cor:clt4sate} for general sampling and assignment propensities $\propselect(\psi)$ and $\propfn(\psi)$. 
See Theorem~\ref{thm:clt-two-psi-extended} and Corollary~\ref{cor:clt-sate-extended} in the appendix for the precise theorem statements.

\medskip

\textbf{Optimal Covariates.}
The results above reveal an important asymmetry between the optimal covariates to use for stratified sampling and assignment.
For sampling, we want covariates $\psisamp$ that are highly predictive of individual treatment effect heterogeneity $\te = Y(1)-Y(0)$.
By contrast, for random assignment we want covariates $\psiassign$ that are highly predictive of outcome levels $\ylevel$.
To see this even more clearly, observe that  
\begin{equation*}
E[\var(\te | \psisamp)] = \min_{f(\cdot)} E[(\te - f(\psisamp))^2] \quad \text{and} \quad  E[\var(\ylevel | \psiassign)] = \min_{f(\cdot)} E[(\ylevel - f(\psiassign))^2].
\end{equation*}
In either case, it is important to choose a small set of such highly predictive covariates, since the finite sample variance $n \var(\est)$ converges to the variance bound $V$ slowly in high dimensions, with $n \var(\est) = V + O(n^{-2/(\dim(\psi) + 1)})$ under conditions given Theorem \ref{thm:uniform-convergence} immediately below.
It is an open question to determine exactly how many covariates to use in a specific study given experimenter beliefs. 
One simple heuristic is to only finely stratify on baseline versions of the important endline outcomes.    
\medskip

\textbf{Uniform Convergence.}
Here we strengthen the asymptotic results above by presenting a finite sample bound quantifying how far the variance $n \var(\est)$ can deviate from the efficient variance $V$, uniformly over a class of outcome distributions.

Let $\mathcal{P} = \mathcal{P}(B, M, K, \alpha)$ denote a class of distributions of $W = (\psi, Y(0), Y(1))$ such that for $d \in \{0, 1\}$ the conditional mean $\psi \mapsto E[Y(d) | \psi]$ is Lipschitz with constant at most $B$, $|Y(d)| \leq M$ almost surely, and $E[|\psi|_2^\alpha] \leq K$.
Also require $\alpha > \dim(\psi) + 1$. 
For $P \in \mathcal{P}$, let $V(P)$ be the semiparametric variance bound \eqref{equation:variance:representative-sampling} under $P$.

We fix the design with $\propselect = 1$ and constant assignment propensity $\propfn$ for simplicity.

\begin{thm}[Uniform Convergence] \label{thm:uniform-convergence}
For $\Dn \sim \localdesigncond(\psi, \propfn)$ and $\mc P$ as above,  
\begin{equation*}
\sup_{P \in \mathcal{P}} \bigl| n \var_P(\est) - V(P) \bigr| \lesssim B^2 \, n^{2/\alpha - 2/(\dim(\psi)+1)} + M^2 / n.
\end{equation*}
\end{thm}

If we have a uniform bound $|\psi|_2 \le K$ with probability one for all $P \in \mc P$, the bound simplifies to $B^2 \, n^{- 2/(\dim(\psi)+1)} + M^2 / n$.  
This result shows the effect of the curse of dimensionality in matching discussed above, suggesting fine stratification on a small number of highly predictive covariates.

\medskip



\section{Optimal Sampling for Survey Experiments} \label{section:optimal_designs}

In this section we apply our main asymptotic results to study the optimal sampling problem for budget-constrained experimentation with heterogeneous costs.
After characterizing the oracle sampling design, we propose a feasible implementation that estimates unknown quantities using a available pilot or observational data, providing conditions under which this is asymptotically optimal.

\subsection{Budget-Constrained Optimal Sampling}

Our goal is to choose a sampling propensity $\propselect(\psi)$ that minimizes the asymptotic variance of ATE estimation in an experiment with heterogeneous costs and fixed budget constraint.
For simplicity, throughout this section we let assignment propensity $\propfn(\psi) = \propfn$ constant, e.g.\ with leading case $\propfn=1/2$, though our main results in the appendix allow for the case where $\propfn(\psi)$ is varying as well.  

To begin, recall in Section \ref{section:asymptotics} we presented asymptotics of the form $\sqrt{\nsampled} (\est - \ate) \convwprocess \normal(0, \varlocal)$, normalizing by the experiment size $\nsampled = \sum_i \Ti$. 
This normalization is typical in the literature, allowing for easy comparison with previous results, such as the variance bound in \cite{hahn1998}.
However, the experiment size $\nsampled$ varies with the sampling propensity $\propselect(\psi)$, making this normalization unsuitable for our current task of optimizing over $\propselect(\psi)$ given a fixed budget and eligible units $i \in [n]$.

Because of this, in what follows we instead normalize by the number of eligible units $n$.
Since $\nsampled / n \convp E[\propselect(\psi)]$, this removes the multiplicative factor $E[\propselect(\psi)]$ from our previous results. 
Then by Theorem \ref{thm:clt_fixed}, $\rootn (\est - \ate) \convwprocess \normal(0, V(\propselect))$ for
\begin{equation} \label{equation:variance-objective}
V(\propselect) = \var(\catefn(\psi)) + E\left[\frac{1}{\propselect(\psi)} \left (\frac{\hk_1(\psi)}{\propfn} + \frac{\hk_0(\psi)}{1-\propfn} \right) \right].
\end{equation}

Similarly, by Corollary~\ref{cor:clt-sate-extended} in the appendix $\rootn(\est - \sate) \convwprocess \normal(0, V_{\sate}(\propselect))$ for $\Tn \sim \localdesigncond(\psi, \propselect(\psi))$, where $V_{\sate}(\propselect) = V(\propselect) - \var(\te)$ for $\te = Y(1) - Y(0)$.
Since the two objectives differ by a constant that does not depend on $\propselect(\cdot)$, the optimal sampling propensity is the same whether we target the $\sate$ or the $\ate$.
We therefore restrict attention to the $\ate$ in what follows.

Consider minimizing Equation \ref{equation:variance-objective} over all sampling propensities $\propselect(\psi)$.
Clearly the unconstrained solution is $\propselect^*(\psi) = 1$, making the experiment as large as possible.
More generally, we can formalize a problem of sampling with heterogeneous costs subject to a fixed budget constraint.

\medskip

\textbf{Costs.} Define $\cost(\psi)$ to be the known cost of including a unit of type $\psi$ in the experiment.
One natural specification is $\cost(\psi) = \cost_s(\psi) + \propfn \cost_1(\psi) + (1-\propfn)\cost_0(\psi)$, where $\cost_s(\psi)$ is the sampling cost and $\cost_1(\psi)$ and $\cost_0(\psi)$ are the marginal costs of assigning treatment and control, respectively.
For example, in a development context $\cost_s(\psi)$ could be the cost of paying volunteers to collect outcome data in a village $\psii = \psi$.
We can interpret $\cost(\psi)$ as the ex-ante cost of sampling a unit with $\psii = \psi$ into the experiment, prior to realization of its random treatment assignment $\Di \in \{0,1\}$.

Define the ex-ante heteroskedasticity
\begin{equation} \label{equation:ex-ante-variance}
\hkavg(\psi) = \frac{\hk_1(\psi)}{\propfn} + \frac{\hk_0(\psi)}{1-\propfn},
\end{equation}
and let $\sdavg(\psi) = \hkavg(\psi)^{1/2}$ denote the corresponding ex-ante standard deviation.
We interpret $\hkavg(\psi)$ as the expected residual variance from sampling a unit with covariates $\psii = \psi$ into the experiment.
Then $V(\propselect) = \var(\catefn(\psi)) + E[\hkavg(\psi)/\propselect(\psi)]$, so for a given budget $\budget$ we would like to solve 
\begin{equation} \label{equation:optimal_sampling:problem} 
\min_{0 < \propselect(\cdot) \leq 1} E\left[\frac{\hkavg(\psi)}{\propselect(\psi)}\right] \quad \text{s.t.} \quad E[\cost(\psi) \propselect(\psi)] = \budget.
\end{equation}

The next theorem characterizes the interior solutions to this problem.
\begin{thm} \label{thm:optimal_sampling_population}
Assume $E[Y(d)^2] < \infty$ for $d = 0, 1$, $\inf_{\psi} \hk_d(\psi) \geq c > 0$, and $\cost(\psi) \in [C_l, C_u] \sub (0, \infty)$. Define the candidate solution
\begin{equation} \label{equation:optimal_sampling:solution}
\propselectopt(\psi) = \budget \cdot \frac{\sdavg(\psi) \cost(\psi)^{-1/2}}{E[\sdavg(\psi) \cost(\psi) \half]}.
\end{equation}
If $\sup_{\psi} \propselectopt(\psi) \leq 1$, then $\propselect^*$ is optimal in Equation \ref{equation:optimal_sampling:problem}.
\end{thm}

To build intuition for the form of the solution, consider the following special cases. 
\begin{enumerate}[label={(\alph*)}, itemindent=.5pt, itemsep=.4pt] 
    \item \emph{Homoskedasticity}. 
Suppose $\hk_d(\psi) = \hk_d$ constant for $d \in \{0,1\}$.
Then $\hkavg(\psi)$ is constant and $\propselectopt(\psi) = \budget \cdot \cost(\psi)^{-1/2} / E[\cost(\psi) \half]$. 
In particular, $\propselectopt(\psi) \propto \cost(\psi)^{-1/2}$, so that more costly units are sampled with lower probability.
    \item \emph{Homogeneous Costs}.
If $\cost(\psi) = \cost$, then the budget constraint is $E[\propselect(\psi)] \leq \budget / \cost$.
Denoting $\propselectavg = \budget / \cost$, the optimal solution has form $\propselect^*(\psi) = \propselectavg \sdavg(\psi) / E[\sdavg(\psi)]$.
This shows how we would like to oversample units with above-average residual standard deviation, with $\propselect^*(\psi) > \propselectavg$ when $\sdavg(\psi) > E[\sdavg(\psi)]$ and the reverse otherwise.
\end{enumerate}

Under the homoskedasticity assumption in (a), the optimal propensity $\propselectopt(\psi)$ only depends on the known costs $\cost(\psi)$ and distribution of $\psi$, so we do not require outcome data to estimate $\propselectopt(\psi)$.
However, heteroskedasticity is believed to be widely prevalent in economic data (e.g.\ \cite{romano2017}), so the propensity in (a) may leave efficiency gains on the table.
In the general case, we also require an estimate of the unknown residual variance $\hkavg(\psi)$.

Note also that Equation~\ref{equation:optimal_sampling:solution} can be viewed as an analog of classical optimal allocation theory in survey sampling \citep{cochran1977}, here applied to budget-constrained finely stratified sampling into an experiment.

\subsection{Feasible Optimal Sampling} \label{section:feasible-optimal-sampling}

If data from a pilot experiment $(\Ti, \Di, Y_i, \psii)_{i=1}^{\npilot}$ is available, one can use the strategy of \citet{fan1998} to estimate the individual heteroskedasticity functions $\hkest_d(\psi)$ for $d \in \{0, 1\}$ and form the composite $\hkavgest(\psi) = \hkest_1(\psi)/\propfn + \hkest_0(\psi)/(1-\propfn)$.

To do so, one first forms the conditional-mean signal $S_i(1) = Y_i \Di \Ti / (\propfni \propselecti)$, where $\propfni$ and $\propselecti$ are the pilot propensities, noting $E[S_i(1) | \psii] = E[Y_i(1) | \psii]$. Projecting $S_i(1)$ on $\psii$ via a nonparametric regression yields $\ceffnest_1(\psii)$. A second projection of the IPW-weighted squared residuals $(Y_i - \ceffnest_1(\psii))^2 \Di \Ti / (\propfni \propselecti)$ on $\psii$ then yields $\hkest_1(\psii)$. The analogous construction with $1 - \Di$ in place of $\Di$ and $1 - \propfni$ in place of $\propfni$ gives $\hkest_0(\psi)$.

The choice of nonparametric regression is flexible. We tested multiple methods, including kernel ridge regression and random forests, and found random forests perform best across the settings in our empirical application. 
See Appendix~\ref{appendix:pilot-variance-estimation} for details.

However, experimenters do not always have the resources to run a pilot. 
Those that do often use a very small sample size, making estimation of $\hkavgest(\psi)$ challenging.
Even without a pilot, observational data $(\Ti, Y_i, \psii)_{i=1}^{\nobs}$ for units under the control condition $Y_i = Y_i(0)$ for $\Ti=1$ may still be available.
If we are willing to assume $\hk_1(\psi) \approx \hk_0(\psi)$ are similar between treatment arms, then $\hkavg(\psi) \approx \hk_0(\psi)/(\propfn(1-\propfn))$, which can be estimated consistently as above using only data on the control outcome $(\Ti, Y_i, \psii)_{i=1}^{\nobs}$.

To see why this equal heteroskedasticity assumption is plausible, consider the survey in \citet{cai2024} of 10 recent RCTs published in the AER. 
They estimated the within-arm variance ratio $\hk_1 /\hk_0$ for each measured outcome, finding $\hk_1/\hk_0 \approx 1$ for the vast majority of them.
This provides strong empirical support for such an assumption.

Given such an estimator $\hkavgest(\psi)$ of $\hkavg(\psi)$, Theorem \ref{thm:optimal_sampling_population} suggests using fine stratification to implement a suitable discretization of the estimated optimal propensity
\begin{equation} \label{equation:pilot:propensity-estimator}
\wh \propselect(\psi) = \budget \cdot \frac{\sdavgest(\psi) \cost(\psi)^{-1/2}}{\en[\sdavgest(\psii) \cost(\psii) \half]}.
\end{equation}

Note that under homoskedasticity, the residual variance factors out of the expression above, and we can just set $\propselectest(\psi) = \budget \cdot \cost(\psi)^{-1/2} / \en[\cost(\psii)\half]$.

\medskip

\textbf{Discretization.} 
To implement the propensity estimate $\propselectest(\psi)$ using fine stratification, we construct a discretization $\wh \propselect_n(\psi)$ of $\propselectest(\psi)$ into the rational approximating set $\{a_l/k_l\}_{l=1}^{\nlevels}$ such that $\en[(\wh \propselect_n(\psii) - \wh \propselect(\psii))^2] = \op(1)$.

For example, $\wh \propselect_n(\psi)$ can be obtained by rounding $\wh \propselect(\psi)$ to the nearest $a/k_n$ for some sequence $k_n \to \infty$.
Larger $k_n$ gives worse match quality, but better implementation of the estimated propensity $\wh \propselect(\psi)$.
Thus, in finite samples this parameter trades off between the variance due to random covariate imbalances in $\psi$ versus optimization of the residual variance using $\propselect(\psi)$.
Below we provide conditions on the maximum tuple size $\kboundn = \max_{l \in [\nlevels]} k_l$ and number of propensity levels $\nlevels$ for this match-quality degradation to be asymptotically lower order.
In practice, it is often possible to choose a reasonable level of discretization fineness by inspecting the estimated $\wh \propselect(\psii)$ to see how many different propensity levels are needed.

\begin{assumption}[Feasible Optimal Sampling] \label{assumption:pilot-clt}
Suppose the following: 
\begin{enumerate}[label={(\roman*)}, itemindent=.5pt, itemsep=.4pt]
\item $\Tn \sim \localdesigncond(\psi, \propselectestn(\psi))$ and $\Dn \sim \localdesigncond(\psi, \propfn)$.
Variance estimates $\hkest_d(\psi)$ for $d \in \{0,1\}$ are computed from external data $\proprand$, independent of the experimental units and design variables.
\item Moments $E[Y(d)^4] < \infty$ for $d \in \{0,1\}$ and $E[|\psi|_2^{\alpha}] < \infty$ for some $\alpha > \dim(\psi) + 1$.
The heteroskedasticity is bounded below, $\inf_\psi \hk_d(\psi) > 0$. The propensities satisfy $\propselectestn(\psi), \propselectopt(\psi) \in (\propbound, 1]$. The costs satisfy $\cost(\psi) \in [C_l, C_u] \subset (0, \infty)$.
\item Heteroskedasticity estimates satisfy $E_\psi[(\hkest_d(\psi) - \hk_d(\psi))^2] = \op(1)$ for $d \in \{0,1\}$.
\item The discretization has $\en[(\propselectestn(\psii) - \propselectest(\psii))^2] = \op(1)$, with $\kboundn \nlevels = o(n^{1 - (\dim(\psi)+1)/\alpha})$ and $\kboundn = o(n^{1/4})$ for $\kboundn = \max_{l \in [\nlevels]} k_l$.
\end{enumerate}
\end{assumption}

The consistency requirement on the heteroskedasticity estimates $\hkest_d(\psi)$ in part (iii) can be satisfied, for example, by the estimator of \citet{fan1998}. 
See Appendix \ref{appendix:pilot-variance-estimation} for a more detailed discussion of our implementation.
Part (iv) requires a discretization $\propselectestn(\psi)$ of increasing fineness, subject to the upper bound $\kboundn \nlevels = o(n^{1 - (\dim(\psi)+1)/\alpha})$.
For example, Lemma \ref{lemma:discretization-rounding} shows that rounding $\propselectest(\psi)$ to the nearest $a/k_n$, for any $k_n \to \infty$ with $k_n = o(n^{(1-(\dim(\psi)+1)/\alpha)/2} \wedge n^{1/4})$, satisfies part (iv).
Finally, the reason for the rate $\kboundn \nlevels = o(n^{1 - (\dim(\psi)+1)/\alpha})$ in part (iv) is to ensure the sequence of local randomizations in part (i) exist.
See Lemma \ref{lemma:tight-matching-exists} for the formal result showing $F(\groupset_n^T) = \op(1)$ and $F(\groupset_n^D) = \op(1)$ under these conditions.

The conditions above ensure the variance due to random covariate imbalances is lower order, while also allowing asymptotic implementation of the optimal $\propselectopt(\psi)$.

\begin{thm}[Feasible Optimal Sampling] \label{thm:pilot-design}
Impose Assumption \ref{assumption:pilot-clt}.
Suppose that $\Tn \sim \localdesigncond(\psi, \wh \propselect_n(\psi))$ and $\Dn \sim \localdesigncond(\psi, \propfn(\psi))$.
Then $\rootn (\est - \ate) \convwprocess \normal(0, V^*)$
\[
V^* = \var(\catefn(\psi)) + \min_{\substack{0 < \propselect(\cdot) \leq 1 \\ E[\cost(\psi) \propselect(\psi)] = \budget}} E\left[\frac{\hkavg(\psi)}{\propselect(\psi)} \right].
\]
\end{thm}

In particular, this design minimizes the asymptotic variance $V(\propselect)$ of Equation~\ref{equation:variance-objective} over all budget-feasible sampling propensities $\propselect(\psi)$.

If the condition $\sup_\psi \propselectopt(\psi) \leq 1$ fails, we may obtain infeasible propensity estimates $\wh \propselect(\psij) > 1$.
To restore feasibility, we can iteratively set $\wh \propselect(\psij) = 1$ for such $j$ and recompute the optimal propensity for the remaining units.

To do so, define an index set $J = \emptyset$ and implement the following iterative rounding procedure.
(1) Find the largest $\wh \propselect(\psij) > 1$.
Set $\wh \propselect(\psij) = 1$ and add $j$ to $J$.
(2) Recompute the sampling propensity on the remaining units 
\[
\wh \propselect(\psii) = \frac{\budget - (1/n) \sum_{l \in J} \cost(\psi_l)}{1 - |J|/n} \cdot \frac{\sdavgest(\psii) \cost(\psii)^{-1/2}}{\en[\sdavgest(\psi_l) \cost(\psi_l) \half |l \not \in J]} \quad \quad \forall \, i \not \in J.
\]
If $\max_{i=1}^n \wh \propselect(\psii) \leq 1$, stop.
Otherwise, return to (1).
A quick calculation shows that this procedure satisfies the in-sample budget constraint $\en[\wh \propselect(\psii) \cost(\psii)] = \budget$ after each iteration and terminates with $\max_{i=1}^n \wh \propselect(\psii) \leq 1$.

\begin{remark}[Optimal Assignment Propensity] \label{remark:optimal-assignment-propensity}
In an earlier version of this paper, we noted that for any fixed sampling propensity $\propselect(\psi)$, the globally optimal assignment propensity is the conditional Neyman allocation $\propfn^*(\psi) = \hksd_1(\psi) / 
(\hksd_1(\psi) + \hksd_0(\psi))$, providing a feasible version $\Dn \sim \localdesigncond(\psi, \wh \propfn_n(\psi))$ of this 
optimal design.
However, recent work by \citet{cai2024} has documented that $\hk_1$ 
and $\hk_0$ tend to be similar in experimental settings in economics, so the baseline 
$\propfn = 1/2$ will often be approximately optimal under equal costs.
\end{remark}

\section{Inference Methods} \label{section:inference}
This section provides new methods for asymptotically exact inference on the $\ate$ and asymptotically conservative inference on the $\sate$ under finely stratified sampling and assignment.
We build on the pairs-of-pairs estimators originally introduced in \cite{abadie2008} and studied in \cite{bai2022pairs} and \cite{bai2026variance}, among others.

We focus on the constant assignment propensity case $\propfn = a/k$, so that each non-remainder assignment group $\group \in \groupset_n$ has $|\group| = k$.
Throughout this section, we identify the partition $\groupset_n$ with the non-remainder groups of full size $k$.
For each assignment group $\group \in \groupset_n$, define the group-level difference in means
\begin{equation} \label{equation:group-estimator}
\estg = \frac{1}{a} \sum_{i \in \group} \Di Y_i - \frac{1}{k - a} \sum_{i \in \group} (1 - \Di) Y_i.
\end{equation}
Then the IPW estimator \eqref{equation:estimator:varying} is $\est = |\groupset_n|\inv \sum_{\group \in \groupset_n} \estg$.
Our variance estimators combine a between-group sample variance $\wh S^2 = \frac{1}{|\groupset_n|} \sum_{\group \in \groupset_n} (\estg - \est)^2$ with a residual variance estimator $\wh P^2$ defined below, which captures variation conditional on $\psi$.
For the simplest case $\propselect(\psi) = \propselect$, the estimators for inference on the $\sate$ and $\ate$ are
\begin{equation} \label{equation:variance-estimator}
\varest_{\sate} = k \wh P^2, \qquad \varest = \propselect \wh S^2 + (k - \propselect) \wh P^2.
\end{equation}
We give two constructions of the residual estimator $\wh P^2$. Both have the same probability limit, but differ in their finite-sample bias.

\textbf{Within-group estimator.}
When $\min(a, k-a) \geq 2$, we can estimate the within-$\psi$ residual variation directly inside each group.
Let $\bar Y_{1, \group} = a\inv \sum_{i \in \group} \Di Y_i$ and $\bar Y_{0, \group} = (k-a)\inv \sum_{i \in \group} (1-\Di) Y_i$ be the treated and control means in group $\group$, so that $\estg = \bar Y_{1, \group} - \bar Y_{0, \group}$, and define within-arm sample variances $s^2_{1, \group} = \frac{1}{a-1} \sum_{i \in \group} \Di (Y_i - \bar Y_{1, \group})^2$ and $s^2_{0, \group} = \frac{1}{k-a-1} \sum_{i \in \group} (1-\Di) (Y_i - \bar Y_{0, \group})^2$.
The within-group estimator is
\begin{equation} \label{equation:P2-within}
\wh P^2_{N} = \frac{1}{|\groupset_n|} \sum_{\group \in \groupset_n} \biggl( \frac{s^2_{1, \group}}{a} + \frac{s^2_{0, \group}}{k-a} \biggr).
\end{equation}
The summand $s^2_{1, \group}/a + s^2_{0, \group}/(k-a)$ is the Neyman variance estimator for the difference in means within group $\group$, treating each matched group as a small completely randomized two-arm experiment \citep{neyman1990, imbens2015}.

\textbf{Pairs-of-pairs estimator.}
If $\min(a, k-a) = 1$, then at least one of the sample variances above is infeasible, necessitating the use of pairs-of-pairs style constructions \citep{abadie2008}. 
First, suppose that $|\groupset_n|$ is even.
Let $\groupmatching : \groupset_n \to \groupset_n$ be a bijective matching with $\groupmatching(\group) \neq \group$ and $\groupmatching^2 = \identity$ such that the centroids $\bar\psi_\group = |\group|\inv \sum_{i \in \group} \psii$ satisfy the pairwise tight matching condition $n\inv \sum_{\group \in \groupset_n} |\bar\psi_\group - \bar\psi_{\groupmatching(\group)}|_2^2 = \op(1)$.
In practice, $\groupmatching$ is obtained by matching the centroids into pairs using the algorithm of Section~\ref{section:algorithms}, which guarantees this condition (Lemma~\ref{lemma:matched-union-tight-matching}).
The pairs-of-pairs residual estimator is
\begin{equation} \label{equation:S2-P2}
\wh P^2_{\nu} = \frac{1}{2|\groupset_n|} \sum_{\group \in \groupset_n} \bigl(\estg - \est_{\groupmatching(\group)}\bigr)^2.
\end{equation}
It compares estimates between tightly matched groups and applies for any $\propfn = a/k$.
If $|\groupset_n|$ is odd, we match one of the groups twice, contributing an extra term to $\wh P^2_{\nu}$.

\emph{Bias Comparison.} Inference on the $\sate$ using $\varest_{\sate} = k \wh P^2_{\nu}$ is conservatively biased in finite samples. 
In a design-based framework without stratified sampling, \citet{bai2026variance} show that $E[\varest_{\sate}] = V_{\sate} + B_n$ for bias $B_n = \frac{k}{2|\groupset_n|} \sum_{\group \in \groupset_n} \bigl(\theta_\group - \theta_{\groupmatching(\group)}\bigr)^2 \geq 0$, where $\theta_\group = k\inv \sum_{i \in \group} \te_i$ is the average treatment effect in group $\group$.
The factor $k$ shows this bias can be severe for the large tuples required by imbalanced assignment propensities $\propfn$ far from $1/2$.
For example, this estimator performs very poorly for the \cite{baysan2022} example in our empirical application, with $\propfn=2/11$. 
By contrast, a calculation shows that the bias of $k \wh P^2_N$ is of constant order in $k$, see Lemma~\ref{lemma:within-group-unbiased} for details.

We therefore set $\wh P^2 = \wh P^2_N$ when $\min(a, k-a) \geq 2$ and $\wh P^2 = \wh P^2_{\nu}$ otherwise.
Both estimators are transparently non-negative for every realization: $\wh S^2$ and $\wh P^2_{\nu}$ are sums of squares, and $\wh P^2_N$ is a sum of sample variances.

\medskip

\textbf{Varying Sampling Propensity.}
When $\propselect(\psi)$ takes finitely many values $\propselectl = a_l / k_l$, $l = 1, \dots, L$, we apply the constructions above within each propensity stratum $\{i : \propselect(\psii) = \propselectl\}$. 
In this case, we assume sampling-subordinate matching of the assignment partition, so that each assignment group lies within a single sampling stratum.
Let sampling stratum size $\nl = |\{i : \propselect(\psii) = \propselectl\}|$ and let $\wh\theta_l$, $\wh S_l^2$, and $\wh P_l^2$ denote the within-stratum versions of $\est$, $\wh S^2$, and $\wh P^2$.
The variance estimators are
\begin{equation} \label{equation:variance-estimator-varying}
\varest_{\sate} = \frac{\nsampled}{n} \sum_{l=1}^{L} \frac{\nl}{n} \cdot \frac{k}{\propselectl} \wh P_l^2, \qquad \varest = \frac{\nsampled}{n} \sum_{l=1}^{L} \frac{\nl}{n} \biggl(\wh S_l^2 + \frac{k - \propselectl}{\propselectl} \wh P_l^2 + (\wh\theta_l - \est)^2 \biggr).
\end{equation}
The $\ate$ estimator adds a cross-stratum correction $(\wh\theta_l - \est)^2$ that recovers between-stratum variation in the conditional average treatment effect.

\begin{thm}[Inference] \label{thm:inference}
Assume the conditions of Theorem~\ref{thm:clt_fixed} with assignment groups matched within sampling strata.
Then
\[
\varest \convp \varlocal, \qquad \varest_{\sate} \convp V_{\sate} + E[\propselect(\psi)] \, E[\var(\te | \psi)].
\]
\end{thm}

In the constant-$\propselect$ case with $L = 1$, the cross-stratum correction vanishes and the estimators reduce to \eqref{equation:variance-estimator}, differing only by the converging factor $\nsampled/(\propselect n) \convp 1$.

By this theorem and the CLT in Section~\ref{section:asymptotics}, the interval $\wh C = [\est \pm \varest\half c_{1 - \alpha/2} / \sqrt{\nsampled}]$ with critical value $c_\alpha = \Phi\inv(\alpha)$ is asymptotically non-conservative for the $\ate$, with $P(\ate \in \wh C) = 1 - \alpha + o(1)$.

By contrast, inference on the $\sate$ is necessarily conservative, since the asymptotic variance $V_{\sate}$ in Corollary~\ref{cor:clt4sate} depends on $E[\var(\te | \psi)]$, which is not identified because individual treatment effects $\te_i$ are not observed.
Combined with Corollary~\ref{cor:clt4sate}, the result above shows that the interval $\wh C_{\sate} = [\est \pm \varest_{\sate}\half c_{1 - \alpha/2} / \sqrt{\nsampled}]$ is asymptotically valid for the $\sate$, with $P(\sate \in \wh C_{\sate}) \geq 1 - \alpha + o(1)$.

\begin{remark}[Prior Work] \label{remark:related-variance-estimators}
For finely stratified assignment without stratified sampling, the pairs-of-pairs construction of $\varest_{\sate}$ was originally introduced by \citet{abadie2008} for $\propfn=1/2$. This is a direct analog of the classical collapsed-strata estimator from the survey-sampling literature, e.g.\ \cite{hansen1953}.
\cite{bai2026variance} extend the analysis of this estimator to the case $\propfn = a/k$, providing finite-sample bias guarantees in a design-based framework.
For inference on the $\ate$, our estimator $\varest$ can be shown to coincide with that of \cite{bai2022pairs} for matched pairs, $\propfn = 1/2$.
\end{remark}

\section{Empirical Results} \label{section:empirical}

In this section, we quantify the finite sample performance of each of our designs on $N=9$ real DGPs from experimental papers covering a range of fields in applied economics.
\subsection{Designs and Empirical Papers} \label{section:empirical-setup}

Our theoretical results showed separate variance reductions from (a) finely stratified treatment assignment, (b) finely stratified sampling, and (c) stratified implementation of the estimated optimal sampling propensity $\propselectopt(\psi)$.
To quantify the efficiency gain from each of these tools, we simulate unadjusted $\ate$ estimation under the following designs:

\begin{enumerate}[label={}, itemindent=.5pt, itemsep=0pt]
    \item \textbf{CR}: Complete randomization $\Tn \sim \crdist(\propselectestn)$ and $\Dn \sim \crdist(\propfn)$, with $\propselectestn$ a discretization of the budget-exhausting sampling propensity $\propselectest = \budget / \en[\cost(\psii)]$ and $\propfn$ the fixed assignment propensity from the original paper. In particular, we let $\propselectestn = a/k$, using the minimal $k$ such that $\propselectestn \cdot \en[\cost(\psii)] \in [.95 \budget, 1.05 \budget]$. The costs $\cost(\psi)$ and discretization rule are discussed in more detail below.
    \item \textbf{CR, Loc}: As in CR but with stratified assignment $\Dn \sim \localdesigncond(\psi, \propfn)$.
    \item \textbf{Loc}: Stratified sampling and assignment $\Tn \sim \localdesigncond(\psi, \propselectestn)$ and $\Dn \sim \localdesigncond(\psi, \propfn)$.
    \item \textbf{Hom}: As in \textbf{Loc} but with sampling propensity $\wh\propselect_n(\psi)$ a discretization of $\wh\propselect(\psi) = \budget \cdot \cost(\psi)^{-1/2} / \en[\cost(\psii) \half]$, the optimal sampling propensity assuming homoskedasticity.
     This is feasible but possibly misspecified.
    \item \textbf{Pilot S/L}: As in \textbf{Loc}, but with $\Tn \sim \localdesigncond(\psi, \wh\propselect_n(\psi))$ where $\wh\propselect_n(\psi)$ is the discretization of a pilot-based estimate $\wh\propselect(\psi)$ formed from an experiment of size $\npilot = \nsampled/5$ (S) or $\npilot = 4\nsampled/5$ (L) with $\propfn=1/2$, $\propselect=1$, and matched-pair assignment.
We estimate the ex-ante variance $\hkavgest(\psi)$ from the pilot by random forest regression as in Appendix~\ref{appendix:pilot-variance-estimation}, then form $\wh\propselect(\psi)$ via the optimal sampling formula in Equation~\eqref{equation:pilot:propensity-estimator}.
    \item \textbf{Obs}: As in \textbf{Pilot}, but $\wh\propselect(\psi)$ is estimated from observational data $(Y_i(0), \psii)_{i=1}^{\nobs}$ of size $\nobs = 4\nsampled$ under the working assumption $\hk_1(\psi) = \hk_0(\psi)$, as discussed in Section~\ref{section:optimal_designs}.
We estimate $\hkest_0(\psi)$ from $(Y_i(0), \psii)_{i=1}^{\nobs}$ as in Appendix~\ref{appendix:pilot-variance-estimation} and use the plug-in $\hkavgest(\psi) = \hkest_0(\psi)/(\propfn(1-\propfn))$ for the ex-ante variance $\hkavg(\psi)$ in Equation~\eqref{equation:ex-ante-variance}, then form $\wh\propselect(\psi)$ via Equation~\eqref{equation:pilot:propensity-estimator}.
\end{enumerate}

We evaluate each of these designs on data from experimental papers published in the AER between May 2021 and November 2022.
We exclude papers for which data is unavailable or that do not fit into our framework for various reasons, e.g.\ having multiple interventions on the same unit with a time series structure.  
The included papers are \cite{abebe2021}, \cite{baysan2022},  \cite{casey2021}, \cite{dellavigna2022}, \cite{domurat2021}, \cite{hussam2022}, and \cite{lowe2021}. 
We also include \cite{banerjee2021}, as well as \cite{finkelstein2012}, for a total of $N=9$. 

For each paper, we use the observed data to impute a DGP to sample from.
In particular, for $d = 0,1$ we estimate the conditional mean $\ceffnest_d(\psi) = \wh E[Y(d) | \psi]$ and variance $\hkest_d(\psi) = \wh \var(Y(d) | \psi)$, and take the outcome law to be $Y(d) | \psi \sim \normal\bigl(\ceffnest_d(\psi), \hkest_d(\psi)\bigr)$. For the two papers with binary outcomes, \cite{casey2021} and \cite{domurat2021}, we instead take $Y(d) | \psi \sim \bern\bigl(\ceffnest_d(\psi)\bigr)$, where $\ceffnest_d(\psi) = \wh P(Y(d) = 1 | \psi)$ is a fitted classifier.
The regression model for imputation is selected separately for each paper by out-of-sample cross-validation between OLS, ridge, lasso with interactions, gradient-boosted trees, and random forest. 
See Appendix~\ref{appendix:imputation-details} for further details.
Let $N_0$ denote the size of the original experiment.

\medskip \textbf{Simulation Design.}
We do the following:
\begin{enumerate}[label={(\arabic*)}, itemindent=.5pt, itemsep=.4pt]
    \item We draw $4N_0$ eligible units by sampling with replacement from the $N_0$ original units $(\psii)_{i=1}^{N_0}$, then perturbing each unit's continuous coordinates by independent mean-zero noise, so that $\psii$ is continuously distributed and exact ties $\psi_i = \psi_j$ have probability zero (Appendix~\ref{appendix:imputation-details}). For each unit, we form outcomes $\tilde Y_i(d) = \ceffnest_d(\psii) + \hkest_d(\psii)\half \residuali^d$, $d = 0, 1$, with the residual $\residuali^d \sim \normal(0,1)$ drawn independently across units and $d \in \{0, 1\}$. The budget-exhausting constant propensity $\propselect = 1/4$ results in a representative experiment of size $\nsampled = N_0$.
    \item Randomize $\Tn$ and $\Dn$ according to one of the designs above.
    \item Reveal outcomes $\tilde Y_i = \Ti \Di \tilde Y_i(1) + \Ti (1-\Di) \tilde Y_i(0)$ and form the estimator $\est$. 
    Form confidence intervals $\wh C = [\est \pm \varest\half c_{1-\alpha/2}/\sqrt{\nsampled}]$ for the $\ate$ and $\wh C_{\sate} = [\est \pm \varest_{\sate}\half c_{1-\alpha/2}/\sqrt{\nsampled}]$ for the $\sate$, with $\alpha = 0.05$ and $\varest$, $\varest_{\sate}$ as in Section~\ref{section:inference}.
\end{enumerate}

Since the ATE of the estimated population is known, we can compute the standard deviation (SD), coverage probabilities, and percent reduction in confidence interval length for each design on each DGP.
We focus on the $\ate$ in this section for brevity. 
Empirical results for the $\sate$ are presented in Appendix~\ref{appendix:empirical-sate}.

\begin{table}[htbp]
\begin{adjustbox}{width=0.9\columnwidth,center}
  \centering
    \begin{tabular}{rrrrrrrrrrr}
\cmidrule{2-11}          & \multicolumn{1}{c}{Design, Paper} & \multicolumn{1}{c}{Abe.} & \multicolumn{1}{c}{Ban.} & \multicolumn{1}{c}{Bay.} & \multicolumn{1}{c}{Cas.} & \multicolumn{1}{c}{Del.} & \multicolumn{1}{c}{Dom.} & \multicolumn{1}{c}{Fin.} & \multicolumn{1}{c}{Hus.} & \multicolumn{1}{c}{Low.} \\
\cmidrule{2-11}    \multicolumn{1}{c}{\multirow{7}[2]{*}{$\%\Delta$SD}} & \multicolumn{1}{c|}{CR} & \multicolumn{1}{c}{0} & \multicolumn{1}{c}{0} & \multicolumn{1}{c}{0} & \multicolumn{1}{c}{0} & \multicolumn{1}{c}{0} & \multicolumn{1}{c}{0} & \multicolumn{1}{c}{0} & \multicolumn{1}{c}{0} & \multicolumn{1}{c}{0} \\
          & \multicolumn{1}{c|}{CR, Loc} & \multicolumn{1}{c}{-5} & \multicolumn{1}{c}{-43} & \multicolumn{1}{c}{-47} & \multicolumn{1}{c}{-1} & \multicolumn{1}{c}{-38} & \multicolumn{1}{c}{0} & \multicolumn{1}{c}{-8} & \multicolumn{1}{c}{-2} & \multicolumn{1}{c}{-10} \\
          & \multicolumn{1}{c|}{Loc} & \multicolumn{1}{c}{-6} & \multicolumn{1}{c}{-41} & \multicolumn{1}{c}{-48} & \multicolumn{1}{c}{1} & \multicolumn{1}{c}{-39} & \multicolumn{1}{c}{2} & \multicolumn{1}{c}{-6} & \multicolumn{1}{c}{-8} & \multicolumn{1}{c}{-7} \\
          & \multicolumn{1}{c|}{Hom.} & \multicolumn{1}{c}{-15} & \multicolumn{1}{c}{-42} & \multicolumn{1}{c}{-48} & \multicolumn{1}{c}{-7} & \multicolumn{1}{c}{-39} & \multicolumn{1}{c}{-3} & \multicolumn{1}{c}{-6} & \multicolumn{1}{c}{-13} & \multicolumn{1}{c}{-11} \\
          & \multicolumn{1}{c|}{Pilot S} & \multicolumn{1}{c}{-12} & \multicolumn{1}{c}{-49} & \multicolumn{1}{c}{-45} & \multicolumn{1}{c}{-7} & \multicolumn{1}{c}{-41} & \multicolumn{1}{c}{2} & \multicolumn{1}{c}{-19} & \multicolumn{1}{c}{-13} & \multicolumn{1}{c}{-13} \\
          & \multicolumn{1}{c|}{Pilot L} & \multicolumn{1}{c}{-16} & \multicolumn{1}{c}{-48} & \multicolumn{1}{c}{-46} & \multicolumn{1}{c}{-3} & \multicolumn{1}{c}{-40} & \multicolumn{1}{c}{-2} & \multicolumn{1}{c}{-21} & \multicolumn{1}{c}{-9} & \multicolumn{1}{c}{-16} \\
          & \multicolumn{1}{c|}{Obs} & \multicolumn{1}{c}{-14} & \multicolumn{1}{c}{-51} & \multicolumn{1}{c}{-47} & \multicolumn{1}{c}{-2} & \multicolumn{1}{c}{-41} & \multicolumn{1}{c}{-5} & \multicolumn{1}{c}{-22} & \multicolumn{1}{c}{-12} & \multicolumn{1}{c}{-15} \\
\cmidrule{2-11}    \multicolumn{1}{c}{\multirow{7}[2]{*}{$\%\Delta$CI}} & \multicolumn{1}{c|}{CR} & \multicolumn{1}{c}{0} & \multicolumn{1}{c}{0} & \multicolumn{1}{c}{0} & \multicolumn{1}{c}{0} & \multicolumn{1}{c}{0} & \multicolumn{1}{c}{0} & \multicolumn{1}{c}{0} & \multicolumn{1}{c}{0} & \multicolumn{1}{c}{0} \\
          & \multicolumn{1}{c|}{CR, Loc} & \multicolumn{1}{c}{-6} & \multicolumn{1}{c}{-43} & \multicolumn{1}{c}{-46} & \multicolumn{1}{c}{-1} & \multicolumn{1}{c}{-38} & \multicolumn{1}{c}{0} & \multicolumn{1}{c}{-8} & \multicolumn{1}{c}{-5} & \multicolumn{1}{c}{-10} \\
          & \multicolumn{1}{c|}{Loc} & \multicolumn{1}{c}{-6} & \multicolumn{1}{c}{-41} & \multicolumn{1}{c}{-46} & \multicolumn{1}{c}{0} & \multicolumn{1}{c}{-38} & \multicolumn{1}{c}{0} & \multicolumn{1}{c}{-8} & \multicolumn{1}{c}{-5} & \multicolumn{1}{c}{-9} \\
          & \multicolumn{1}{c|}{Hom.} & \multicolumn{1}{c}{-13} & \multicolumn{1}{c}{-41} & \multicolumn{1}{c}{-46} & \multicolumn{1}{c}{-6} & \multicolumn{1}{c}{-40} & \multicolumn{1}{c}{-3} & \multicolumn{1}{c}{-9} & \multicolumn{1}{c}{-9} & \multicolumn{1}{c}{-9} \\
          & \multicolumn{1}{c|}{Pilot S} & \multicolumn{1}{c}{-13} & \multicolumn{1}{c}{-49} & \multicolumn{1}{c}{-42} & \multicolumn{1}{c}{-7} & \multicolumn{1}{c}{-39} & \multicolumn{1}{c}{0} & \multicolumn{1}{c}{-21} & \multicolumn{1}{c}{-9} & \multicolumn{1}{c}{-16} \\
          & \multicolumn{1}{c|}{Pilot L} & \multicolumn{1}{c}{-13} & \multicolumn{1}{c}{-50} & \multicolumn{1}{c}{-44} & \multicolumn{1}{c}{-6} & \multicolumn{1}{c}{-40} & \multicolumn{1}{c}{-3} & \multicolumn{1}{c}{-21} & \multicolumn{1}{c}{-9} & \multicolumn{1}{c}{-18} \\
          & \multicolumn{1}{c|}{Obs} & \multicolumn{1}{c}{-14} & \multicolumn{1}{c}{-51} & \multicolumn{1}{c}{-44} & \multicolumn{1}{c}{-6} & \multicolumn{1}{c}{-41} & \multicolumn{1}{c}{-2} & \multicolumn{1}{c}{-21} & \multicolumn{1}{c}{-10} & \multicolumn{1}{c}{-17} \\
\cmidrule{2-11}    \multicolumn{1}{c}{\multirow{7}[2]{*}{Cover}} & \multicolumn{1}{c|}{CR} & \multicolumn{1}{c}{0.95} & \multicolumn{1}{c}{0.95} & \multicolumn{1}{c}{0.94} & \multicolumn{1}{c}{0.95} & \multicolumn{1}{c}{0.94} & \multicolumn{1}{c}{0.95} & \multicolumn{1}{c}{0.95} & \multicolumn{1}{c}{0.94} & \multicolumn{1}{c}{0.95} \\
          & \multicolumn{1}{c|}{CR, Loc} & \multicolumn{1}{c}{0.95} & \multicolumn{1}{c}{0.95} & \multicolumn{1}{c}{0.94} & \multicolumn{1}{c}{0.95} & \multicolumn{1}{c}{0.95} & \multicolumn{1}{c}{0.95} & \multicolumn{1}{c}{0.95} & \multicolumn{1}{c}{0.92} & \multicolumn{1}{c}{0.95} \\
          & \multicolumn{1}{c|}{Loc} & \multicolumn{1}{c}{0.96} & \multicolumn{1}{c}{0.95} & \multicolumn{1}{c}{0.94} & \multicolumn{1}{c}{0.94} & \multicolumn{1}{c}{0.95} & \multicolumn{1}{c}{0.95} & \multicolumn{1}{c}{0.95} & \multicolumn{1}{c}{0.94} & \multicolumn{1}{c}{0.95} \\
          & \multicolumn{1}{c|}{Hom.} & \multicolumn{1}{c}{0.95} & \multicolumn{1}{c}{0.95} & \multicolumn{1}{c}{0.94} & \multicolumn{1}{c}{0.94} & \multicolumn{1}{c}{0.95} & \multicolumn{1}{c}{0.95} & \multicolumn{1}{c}{0.95} & \multicolumn{1}{c}{0.95} & \multicolumn{1}{c}{0.95} \\
          & \multicolumn{1}{c|}{Pilot S} & \multicolumn{1}{c}{0.95} & \multicolumn{1}{c}{0.95} & \multicolumn{1}{c}{0.94} & \multicolumn{1}{c}{0.95} & \multicolumn{1}{c}{0.96} & \multicolumn{1}{c}{0.94} & \multicolumn{1}{c}{0.94} & \multicolumn{1}{c}{0.95} & \multicolumn{1}{c}{0.94} \\
          & \multicolumn{1}{c|}{Pilot L} & \multicolumn{1}{c}{0.96} & \multicolumn{1}{c}{0.94} & \multicolumn{1}{c}{0.94} & \multicolumn{1}{c}{0.94} & \multicolumn{1}{c}{0.95} & \multicolumn{1}{c}{0.95} & \multicolumn{1}{c}{0.95} & \multicolumn{1}{c}{0.93} & \multicolumn{1}{c}{0.94} \\
          & \multicolumn{1}{c|}{Obs} & \multicolumn{1}{c}{0.95} & \multicolumn{1}{c}{0.95} & \multicolumn{1}{c}{0.95} & \multicolumn{1}{c}{0.94} & \multicolumn{1}{c}{0.94} & \multicolumn{1}{c}{0.95} & \multicolumn{1}{c}{0.95} & \multicolumn{1}{c}{0.94} & \multicolumn{1}{c}{0.94} \\
\cmidrule{2-11}          & \multicolumn{1}{c|}{$\nsampled$} & \multicolumn{1}{c}{1451} & \multicolumn{1}{c}{903} & \multicolumn{1}{c}{550} & \multicolumn{1}{c}{91} & \multicolumn{1}{c}{446} & \multicolumn{1}{c}{1000} & \multicolumn{1}{c}{1903} & \multicolumn{1}{c}{116} & \multicolumn{1}{c}{770} \\
          & \multicolumn{1}{c|}{$\dim(\psi)$} & \multicolumn{1}{c}{8} & \multicolumn{1}{c}{7} & \multicolumn{1}{c}{3} & \multicolumn{1}{c}{3} & \multicolumn{1}{c}{3} & \multicolumn{1}{c}{4} & \multicolumn{1}{c}{7} & \multicolumn{1}{c}{4} & \multicolumn{1}{c}{3} \\
\cmidrule{2-11}          &       &       &       &       &       &       &       &       &       &  \\
    \end{tabular}%
    \end{adjustbox}
    \caption{Empirical Results (ATE).}
    \label{table:empirical}%
\end{table}%

Descriptions of each paper, including the treatment and outcome variables and our choice of stratification variables $\psi$ are provided in Appendix~\ref{appendix:paper-details}.
Experiment sizes $\nsampled$ are as in the original papers, ranging from $\nsampled=91$ for \cite{casey2021} to $\nsampled=1903$ for \cite{finkelstein2012}.
The one exception is \cite{domurat2021} ($N_0=87394$), for which we set $\nsampled=1000$ for Monte Carlo tractability.
Baseline treatment proportions $\propfn$ are set to those of the original published designs: $\propfn = 1/2$ in six of nine papers, $\propfn = 2/3$ for \cite{lowe2021}, $\propfn = 1/3$ for \cite{finkelstein2012}, and $\propfn = 2/11$ for \cite{baysan2022}.
The latter three papers with $\propfn \ne 1/2$ are newly enabled by our matched $k$-tuples algorithm in Section \ref{section:algorithms}.
We use sampling-subordinate matching, as in Section~\ref{section:inference}.

\emph{Costs and Discretization}.
The marginal cost $\cost(\psi)$ of including different units in the experiment is not reported in the papers in our sample.
To quantify variance reductions from representative sampling with $\propselectopt(\psi)$ in the type of DGPs that occur in applied economics research, we specify $\cost(\psi) = \one(|\psi|_2 \leq \kappa) + 5 \one(|\psi|_2 > \kappa)$ with $\kappa$ the median of $|\psii|_2$ over the eligible pool and $\budget = 0.75$, yielding the budget-exhausting constant propensity $\propselect = 1/4$ for the \textbf{CR}, \textbf{CR, Loc}, and \textbf{Loc} designs.
This propensity discretizes to matched $4$-tuples ($\propselectestn = a/k$ with $k = 4$), so the empirical application exercises the general matched $k$-tuples algorithm of Section~\ref{section:algorithms} with $k > 2$ throughout, not only in the three papers with $\propfn \ne 1/2$.

The varying-$\propselect$ designs (\textbf{Hom}, \textbf{Pilot S/L}, \textbf{Obs}) realize $\propselectestn(\psi) \in [0.1, 0.9]$, with empirical mean sampling rate $\en[\propselectestn(\psii)] \approx 0.33$ averaged across these designs and papers.
By spending more of the budget on cheaper units, these designs raise the average sampling rate above the constant $\propselectestn$ of \textbf{CR}, \textbf{CR, Loc}, and \textbf{Loc}, which has $\propselectestn = 1/4$ in our cost specification.
For each varying-$\propselect$ design (\textbf{Hom}, \textbf{Pilot S/L}, \textbf{Obs}), the discretized sampling propensity $\wh\propselect_n(\psi)$ is formed by choosing values in $\{a/10 : a = 1, \dots, 10\}$ to minimize $\en[(\wh\propselect_n - \wh\propselect)^2(\psii)]$ subject to $\nlevels = |\text{Image}(\wh\propselect_n)| \leq 2$ for $\nsampled < 500$, $\nlevels \leq 3$ for $500 \leq \nsampled < 1000$, and $\nlevels \leq 4$ for $\nsampled \geq 1000$.

\subsection{Results} \label{section:empirical-results}

Our main results are presented in Table~\ref{table:empirical}.
The largest single SD reduction comes from finely stratified treatment assignment (\textbf{CR} vs.~\textbf{CR, Loc}), with an average of $-17\%$ across our sample and reductions exceeding $-30\%$ in papers with highly predictive baseline covariates such as \cite{banerjee2021}, \cite{baysan2022}, and \cite{dellavigna2022}.
For these datasets, finely stratified sampling does not help on average (\textbf{CR, Loc} vs.~\textbf{Loc}).
This is expected from the theory, since stratified sampling reduces only the treatment effect heterogeneity component $\var(\catefn(\psi))$ of the variance, and here the nonparametric $R^2$ of $\psi$ for $Y(1) - Y(0)$ in our imputed DGP's averages only $4\%$ across the nine papers.
The gains from the varying-rate designs below instead arise from heteroskedasticity and cost heterogeneity, which $\psi$ does capture.

Averaging over the nine papers, the marginal reductions over \textbf{Loc} are $-3.6\%$ for \textbf{Hom}, $-5.7\%$ for \textbf{Pilot S}, $-6.2\%$ for \textbf{Pilot L}, and $-7.5\%$ for \textbf{Obs}.
The homoskedasticity optimal design \textbf{Hom} has mixed effects across papers, while the feasible \textbf{Pilot S/L} and \textbf{Obs} designs deliver consistently larger reductions, though pilot estimation noise can offset the gain on some papers (e.g.\ \textbf{Pilot S} on \cite{domurat2021}).
Small pilots ($\npilot = 0.2 \nsampled$) and large pilots ($\npilot = 0.8 \nsampled$) perform similarly on average, and the observational-data design \textbf{Obs}, which uses only $(Y_i(0), \psii)$ observations under the assumption $\hk_1(\psi) = \hk_0(\psi)$, is empirically comparable to \textbf{Pilot L} on most papers.
This provides further empirical support for this equal-heteroskedasticity assumption.

For inference, coverage is close to nominal across all 63 (paper, design) cells, with mean coverage $0.946$ and range $[0.92, 0.96]$.
The percent reduction in confidence interval length $\%\Delta$CI closely tracks the percent reduction in standard deviation $\%\Delta$SD across designs (within about one percentage point on average), so the confidence intervals are well calibrated and recover nearly all of the efficiency gains from finely stratified sampling and assignment.
Empirical results for the $\sate$, presented in Section~\ref{appendix:empirical-sate}, are similarly well calibrated and yield slightly shorter confidence intervals than their $\ate$ counterparts, as expected by Corollary~\ref{cor:clt4sate}.

\section{Recommendations for Practice} \label{section:recommendations-for-practice}

First, we discuss the choice of matching algorithm.
For basic matched pairs randomization, one can use the optimal matching of \citet{derigs1988}, whose match quality guarantee was established by \citet{bai2022pairs}.
For fine stratification with $\dim(\psi) = 1$, one can simply sort the units by their $\psii$ values, as in \citet{bai2020pairs}.
For all other cases, or for large matched pairs experiments with $n \ge 10{,}000$, we recommend the new matching algorithm developed in Section~\ref{section:algorithms}. This is the first algorithm to deliver provable match quality guarantees for matched $k$-tuples of any size $k \ge 2$ and $\dim(\psi) \ge 1$.
The spatial sorting step alone scales to massive experiments with millions of units, while the refined procedure adding balanced $k$-means remains tractable into the tens of thousands of units.
These guarantees hold whether the stratification variable $\psi$ is fully continuous, fully discrete, or a mix of the two, requiring only the weak moment condition of Theorem~\ref{thm:matching-text}.

Next we turn to the design itself.
Finely stratified assignment delivers the single largest marginal variance reduction in our empirical application of Section~\ref{section:empirical}, averaging $-16\%$ across the nine papers.
Our results show an additional nonparametric variance reduction from finely stratified sampling into the experiment, which is larger the more eligible units $n$ we have to choose from and the more predictive $\psi$ is of treatment effect heterogeneity.
This gives researchers an incentive to source a large pool of eligible units, then use fine stratification to sample representative experimental participants.

In particular, researchers should stratify on covariates expected to predict treatment effect heterogeneity, as has been advocated by the J-PAL guide on sampling for experiments \citep{jpalSampling}.
For treatment assignment, the optimal covariates are instead those most predictive of outcome levels, though baseline outcomes typically serve both roles well.
If these covariates are indeed predictive, our inference methods report smaller standard errors.

For both sampling and assignment, one should stratify on a small set of such covariates. 
Indeed, the finite sample variance approaches the efficient bound at the rate $n^{-2/(\dim(\psi)+1)}$ of Theorem~\ref{thm:uniform-convergence}, which degrades quickly as $\dim(\psi)$ grows.
Researchers should avoid finely stratifying on highly predictive covariates such as baseline outcomes together with many weakly predictive ones.

Finally, when costs are highly heterogeneous, using a varying sampling propensity $\propselectopt(\psi)$ can reduce variance even under the basic design assuming homoskedasticity, which does not require pilot data.
We found similar performance across small pilots, large pilots, and the observational-data design, each reducing standard deviation by roughly $6\%$ over constant-propensity fine stratification. 
Since a large pilot is rarely available in practice, the observational-data design is an attractive default when control-arm data on $(Y(0), \psi)$ can be obtained at low cost. 
However, there is a finite sample penalty from reduced match quality, so when both costs and the residual variance are roughly homogeneous, it may be preferable to use a constant sampling propensity $\propselect$.

In all cases, our inference methods performed well, giving close to nominal coverage for the $\ate$ that reflects the efficiency gains from both finely stratified sampling and assignment, as well as valid coverage for the $\sate$.

\typeout{}
\bibliography{design_references.bib}

@article{muralidharan2017scale,
	author = {Muralidharan, Karthik and Niehaus, Paul},
	title = {Experimentation at Scale},
	journal = {Journal of Economic Perspectives},
	volume = {31},
	number = {4},
	pages = {103--124},
	year = {2017},
	doi = {10.1257/jep.31.4.103}}

@misc{jpalSampling,
	author = {Kopper, Sarah and Riis-Vestergaard, Michala},
	title = {Sampling},
	howpublished = {Abdul Latif Jameel Poverty Action Lab (J-PAL) Research Resources},
	year = {2023},
	note = {\url{https://www.povertyactionlab.org/resource/sampling}, accessed 2026}}

@unpublished{broockman2024income,
	author = {Broockman, David E. and Rhodes, Elizabeth and Bartik, Alexander W. and Dotson, Karina and Miller, Sarah and Krause, Patrick K. and Vivalt, Eva},
	title = {The Causal Effects of Income on Political Attitudes and Behavior: A Randomized Field Experiment},
	note = {NBER Working Paper No. 33214},
	year = {2024}}

@article{moore2012blocking,
	author = {Moore, Ryan T.},
	title = {Multivariate Continuous Blocking to Improve Political Science Experiments},
	journal = {Political Analysis},
	volume = {20},
	number = {4},
	pages = {460--479},
	year = {2012},
	doi = {10.1093/pan/mps025}}

@inproceedings{brixius2025,
	author = {Brixius, Nathan},
	title = {Optimal Matched Block Design for Multi-Arm Experiments},
	booktitle = {Learning and Intelligent Optimization -- 19th International Conference, LION 19, Prague, Czech Republic, June 15--19, 2025, Proceedings, Part II},
	publisher = {Springer},
	volume = {15745},
	pages = {48--63},
	year = {2025},
	doi = {10.1007/978-3-032-09192-5_4}}

@article{er1984nary,
	author = {Er, M. C.},
	title = {On Generating the {N}-ary Reflected {G}ray Codes},
	journal = {IEEE Transactions on Computers},
	volume = {C-33},
	number = {8},
	pages = {739--741},
	year = {1984}}

@inproceedings{malinen2014balanced,
	author = {Malinen, Mikko I. and Fr{\"a}nti, Pasi},
	title = {Balanced {K}-Means for Clustering},
	booktitle = {Structural, Syntactic, and Statistical Pattern Recognition (S+SSPR)},
	series = {Lecture Notes in Computer Science},
	volume = {8621},
	pages = {32--41},
	publisher = {Springer},
	year = {2014}}

@article{pyatkin2017npHardness,
	author = {Pyatkin, Artem and Aloise, Daniel and Mladenovi{\'c}, Nenad},
	title = {{NP}-Hardness of balanced minimum sum-of-squares clustering},
	journal = {Pattern Recognition Letters},
	volume = {97},
	pages = {44--45},
	year = {2017},
	doi = {10.1016/j.patrec.2017.05.033}}

@article{bai2025primer,
	author = {Bai, Yuehao and Shaikh, Azeem M. and Tabord-Meehan, Max},
	title = {A Primer on the Analysis of Randomized Experiments and a Survey of some Recent Advances},
	journal = {Journal of Political Economy: Microeconomics},
	year = {2025},
	note = {Forthcoming; arXiv:2405.03910}}

@article{bai2024factorial,
	author = {Bai, Yuehao and Liu, Jizhou and Tabord-Meehan, Max},
	title = {Inference for Matched Tuples and Fully Blocked Factorial Designs},
	journal = {Quantitative Economics},
	volume = {15},
	number = {2},
	pages = {279--330},
	year = {2024},
	doi = {10.3982/QE2354}}

@unpublished{bai2026efficiency,
	author = {Bai, Yuehao and Liu, Jizhou and Shaikh, Azeem M. and Tabord-Meehan, Max},
	title = {On the Efficiency of Highly Stratified Experiments},
	year = {2026},
	note = {arXiv:2307.15181}}

@unpublished{bai2026variance,
	author = {Bai, Yuehao and Huang, Xun and Romano, Joseph P. and Shaikh, Azeem M. and Tabord-Meehan, Max},
	title = {A New Design-Based Variance Estimator for Finely Stratified Experiments},
	year = {2026},
	note = {arXiv:2503.10851}}

@article{bai2022pairs,
	author = {Bai, Yuehao and Romano, Joseph P. and Shaikh, Azeem M.},
	title = {Inference in Experiments with Matched Pairs},
	journal = {Journal of the American Statistical Association},
	volume = {117},
	number = {540},
	pages = {1726--1737},
	year = {2022},
	doi = {10.1080/01621459.2021.1883437}}

@article{li2023generalization,
	author = {Li, Fan and Hong, Hwanhee and Stuart, Elizabeth A.},
	title = {A Note on Semiparametric Efficient Generalization of Causal Effects from Randomized Trials to Target Populations},
	journal = {Communications in Statistics -- Theory and Methods},
	volume = {52},
	number = {16},
	pages = {5767--5798},
	year = {2023},
	doi = {10.1080/03610926.2021.2020291}}

@article{degtiar2023review,
	author = {Degtiar, Irina and Rose, Sherri},
	title = {A Review of Generalizability and Transportability},
	journal = {Annual Review of Statistics and Its Application},
	volume = {10},
	pages = {501--524},
	year = {2023},
	doi = {10.1146/annurev-statistics-042522-103837}}

@article{rothwell2005,
	author = {Rothwell, Peter M.},
	title = {External Validity of Randomised Controlled Trials: ``To Whom Do the Results of This Trial Apply?''},
	journal = {The Lancet},
	volume = {365},
	number = {9453},
	pages = {82--93},
	year = {2005},
	doi = {10.1016/S0140-6736(04)17670-8}}

@article{dahabreh2019generalizing,
	author = {Dahabreh, Issa J. and Robertson, Sarah E. and Tchetgen Tchetgen, Eric J. and Stuart, Elizabeth A. and Hern{\'a}n, Miguel A.},
	title = {Generalizing Causal Inferences from Individuals in Randomized Trials to All Trial-Eligible Individuals},
	journal = {Biometrics},
	volume = {75},
	number = {2},
	pages = {685--694},
	year = {2019},
	doi = {10.1111/biom.13009}}

@article{lowe2021,
	author = {Lowe, Matt},
	title = {Types of Contact: A Field Experiment on Collaborative and Adversarial Caste Integration},
	journal = {American Economic Review},
	volume = {111},
	number = {6},
	pages = {1807--1844},
	year = {2021},
	doi = {10.1257/aer.20191780}}

@article{hussam2022,
	author = {Hussam, Reshmaan and Kelley, Erin M. and Lane, Gregory and Zahra, Fatima},
	title = {The Psychosocial Value of Employment: Evidence from a Refugee Camp},
	journal = {American Economic Review},
	volume = {112},
	number = {11},
	pages = {3694-3724},
	year = {2022},
	doi = {10.1257/aer.20211616}}

@article{domurat2021,
	author = {Domurat, Richard and Menashe, Isaac and Yin, Wesley},
	title = {The Role of Behavioral Frictions in Health Insurance Marketplace Enrollment and Risk: Evidence from a Field Experiment},
	journal = {American Economic Review},
	volume = {111},
	number = {5},
	pages = {1549--1574},
	year = {2021},
	doi = {10.1257/aer.20190823}}

@article{dellavigna2022,
	author = {DellaVigna, Stefano and List, John A. and Malmendier, Ulrike and Rao, Gautam},
	title = {Estimating Social Preferences and Gift Exchange at Work},
	journal = {American Economic Review},
	volume = {112},
	number = {3},
	pages = {1038--1074},
	year = {2022},
	doi = {10.1257/aer.20190920}}

@article{banerjee2021,
	author = {Banerjee, Abhijit and Chandrasekhar, Arun G. and Dalpath, Suresh and Duflo, Esther and Floretta, John and Jackson, Matthew O. and Kannan, Harini and Loza, Francine N. and Sankar, Anirudh and Schrimpf, Anna and Shrestha, Maheshwor},
	title = {Selecting the Most Effective Nudge: Evidence from a Large-Scale Experiment on Immunization},
	journal = {Econometrica},
	volume = {93},
	number = {4},
	pages = {1183--1223},
	year = {2025},
	doi = {10.3982/ECTA19739}}

@article{casey2021,
	author = {Casey, Katherine and Bakarr Kamara, Abou and Meriggi, Niccol{\'o} F.},
	title = {An Experiment in Candidate Selection},
	journal = {American Economic Review},
	volume = {111},
	number = {5},
	pages = {1575--1612},
	year = {2021},
	doi = {10.1257/aer.20200125}}

@article{baysan2022,
	author = {Baysan, Ceren},
	title = {Persistent Polarizing Effects of Persuasion: Experimental Evidence from Turkey},
	journal = {American Economic Review},
	volume = {112},
	number = {11},
	pages = {3528--3546},
	year = {2022},
	doi = {10.1257/aer.20201892}}

@article{abebe2021,
	author = {Abebe, Girum and Caria, A. Stefano and Ortiz-Ospina, Esteban},
	title = {The Selection of Talent: Experimental and Structural Evidence from Ethiopia},
	journal = {American Economic Review},
	volume = {111},
	number = {6},
	pages = {1757-1806},
	year = {2021},
	doi = {10.1257/aer.20190586}}

@article{imai2009,
	author = {Imai, Kosuke and King, Gary and Nall, Clayton},
	title = {The Essential Role of Pair Matching in Cluster-Randomized Experiments, with Application to the Mexican Universal Health Insurance Evaluation},
	journal = {Statistical Science},
	volume = {24},
	number = {1},
	pages = {29--53},
	year = {2009},
	doi = {10.1214/08-STS274}}

@article{krieger2019,
	author = {Krieger, A. M. and Azriel, D. and Kapelner, A.},
	title = {Nearly Random Designs with Greatly Improved Balance},
	journal = {Biometrika},
	volume = {106},
	number = {3},
	pages = {695--701},
	year = {2019},
	doi = {10.1093/biomet/asz026}}

@book{lohr2021,
	author = {Lohr, Sharon L.},
	date-added = {2023-06-06 17:33:25 -0400},
	date-modified = {2023-06-06 17:33:52 -0400},
	publisher = {Routledge},
	title = {Sampling Design and Analysis},
	year = {2021}}

@book{hansen1953,
	author = {Hansen, M. H. and Hurwitz, W. N. and Madow, W. G.},
	date-added = {2023-06-06 16:21:27 -0400},
	date-modified = {2023-06-06 16:22:41 -0400},
	publisher = {Wiley},
	title = {Sample Survey Methods and Theory},
	year = {1953}}

@book{billingsley1995,
	author = {Billingsley, Patrick},
	title = {Probability and Measure},
	publisher = {Wiley},
	year = {1995}}

@book{klenke2020,
	author = {Klenke, Achim},
	edition = {3},
	publisher = {Springer},
	series = {Universitext},
	title = {Probability Theory: A Comprehensive Course},
	year = {2020}}

@book{folland,
	author = {Gerald B. Folland},
	title = {Real Analysis: Modern Techniques and Their Applications},
	publisher = {Wiley},
	year = {1999},
	note = {2nd edition}}

@article{derigs1988,
	author = {Derigs, Ulrich},
	title = {Solving non-bipartite matching problems via shortest path techniques},
	journal = {Annals of Operations Research},
	volume = {13},
	pages = {225--261},
	year = {1988},
	doi = {10.1007/BF02288324}}

@article{fan1998,
	author = {Fan, Jianqing and Yao, Qiwei},
	title = {Efficient estimation of conditional variance functions in stochastic regression},
	journal = {Biometrika},
	volume = {85},
	number = {3},
	pages = {645--660},
	year = {1998},
	doi = {10.1093/biomet/85.3.645}}

@unpublished{armstrong2022,
	author = {Armstrong, Timothy B.},
	date-added = {2023-02-19 20:05:37 -0500},
	date-modified = {2025-05-13 00:00:00 -0500},
	eprint = {2205.02726},
	eprinttype = {arXiv},
	title = {Asymptotic Efficiency Bounds for a Class of Experimental Designs},
	year = {2025}}

@unpublished{imbens2015,
	author = {Imbens, Guido W. and Rubin, Donald B.},
	date-added = {2023-02-04 21:40:06 -0500},
	date-modified = {2023-02-04 21:42:05 -0500},
	title = {Causal Inference for Statistics, Social, and Biomedical Sciences: An Introduction},
	year = {2015}}

@article{yang2021,
	author = {Yang, Zihao and Qu, Tianyi and Li, Xinran},
	title = {Rejective Sampling, Rerandomization, and Regression Adjustment in Survey Experiments},
	journal = {Journal of the American Statistical Association},
	volume = {118},
	number = {542},
	pages = {1207--1221},
	year = {2023},
	doi = {10.1080/01621459.2021.1984926}}

@article{hahn1998,
	author = {Hahn, Jinyong},
	title = {On the Role of the Propensity Score in Efficient Semiparametric Estimation of Average Treatment Effects},
	journal = {Econometrica},
	volume = {66},
	number = {2},
	pages = {315-332},
	year = {1998},
	doi = {10.2307/2998560}}

@article{dechaisemartin21,
	author = {de Chaisemartin, Cl{\'e}ment and Ramirez-Cuellar, Jaime},
	title = {At What Level Should One Cluster Standard Errors in Paired and Small-Strata Experiments?},
	journal = {American Economic Journal: Applied Economics},
	volume = {16},
	number = {1},
	pages = {193--212},
	year = {2024},
	doi = {10.1257/app.20210252}}

@article{robins1994,
	author = {Robins, James M. and Rotnitzky, Andrea and Zhao, Lue Ping},
	title = {Estimation of Regression Coefficients When Some Regressors Are Not Always Observed},
	journal = {Journal of the American Statistical Association},
	volume = {89},
	number = {427},
	pages = {846--866},
	year = {1994},
	doi = {10.1080/01621459.1994.10476818}}

@article{finkelstein2012,
	author = {Finkelstein, Amy and Taubman, Sarah and Wright, Bill and Bernstein, Mira and Gruber, Jonathan and Newhouse, Joseph P. and Allen, Heidi and Baicker, Katherine},
	title = {The Oregon Health Insurance Experiment: Evidence from the First Year},
	journal = {The Quarterly Journal of Economics},
	volume = {127},
	number = {3},
	pages = {1057--1106},
	year = {2012},
	doi = {10.1093/qje/qjs020}}

@article{wang2021,
	author = {Wang, Xin and Wang, Tingyu and Liu, Hanzhong},
	title = {Rerandomization in Stratified Randomized Experiments},
	journal = {Journal of the American Statistical Association},
	volume = {118},
	number = {542},
	pages = {1295--1304},
	year = {2023},
	doi = {10.1080/01621459.2021.1990767}}

@article{kasy2016,
	author = {Kasy, Maximilian},
	title = {Why Experimenters Might Not Always Want to Randomize, and What They Could Do Instead},
	journal = {Political Analysis},
	volume = {24},
	number = {3},
	pages = {324--338},
	year = {2016},
	doi = {10.1093/pan/mpw012}}

@article{chernozhukov2017dml,
	author = {Chernozhukov, Victor and Chetverikov, Denis and Demirer, Mert and Duflo, Esther and Hansen, Christian and Newey, Whitney},
	title = {Double/Debiased/Neyman Machine Learning of Treatment Effects},
	journal = {American Economic Review},
	volume = {107},
	number = {5},
	pages = {261--265},
	year = {2017},
	doi = {10.1257/aer.p20171038}}

@article{hirano2003,
	author = {Hirano, Keisuke and Imbens, Guido W. and Ridder, Geert},
	title = {Efficient Estimation of Average Treatment Effects Using the Estimated Propensity Score},
	journal = {Econometrica},
	volume = {71},
	number = {4},
	pages = {1161-1189},
	year = {2003},
	doi = {10.1111/1468-0262.00442}}

@article{higgins2016blocking,
	author = {Higgins, Michael J. and S\"avje, Fredrik and Sekhon, Jasjeet S.},
	title = {Improving Massive Experiments with Threshold Blocking},
	journal = {Proceedings of the National Academy of Sciences},
	volume = {113},
	number = {27},
	pages = {7369-7376},
	year = {2016},
	doi = {10.1073/pnas.1510504113}}

@article{harshaw2021,
	author = {Harshaw, Christopher and S\"{a}vje, Fredrik and Spielman, Daniel A. and Zhang, Peng},
	title = {Balancing Covariates in Randomized Experiments with the Gram--Schmidt Walk Design},
	journal = {Journal of the American Statistical Association},
	volume = {119},
	number = {548},
	pages = {2934-2946},
	year = {2024},
	doi = {10.1080/01621459.2023.2285474}}

@article{abadie2008,
	author = {Abadie, Alberto and Imbens, Guido W.},
	title = {Estimation of the Conditional Variance in Paired Experiments},
	journal = {Annales d'Economie et de Statistique},
	number = {91-92},
	pages = {175-187},
	year = {2008}}

@article{kallus2017balance,
	author = {Kallus, Nathan},
	title = {Optimal a priori balance in the design of controlled experiments},
	journal = {Journal of the Royal Statistical Society: Series B (Statistical Methodology)},
	volume = {80},
	number = {1},
	pages = {85--112},
	year = {2018},
	doi = {10.1111/rssb.12240}}

@book{rosenberger2016book,
	author = {Rosenberger, William F. and Lachin, John M.},
	title = {Randomization in Clinical Trials: Theory and Practice},
	publisher = {John Wiley \& Sons},
	year = {2016},
	note = {2nd edition}}

@article{kasy2021adaptive,
	author = {Kasy, Maximilian and Sautmann, Anja},
	title = {Adaptive treatment assignment in experiments for policy choice},
	journal = {Econometrica},
	volume = {89},
	number = {1},
	pages = {113--132},
	year = {2021},
	doi = {10.3982/ECTA17527}}

@book{cochran1977,
	author = {Cochran, William G.},
	date-added = {2021-09-26 15:06:08 -0400},
	date-modified = {2021-10-04 16:17:39 -0400},
	edition = {3},
	publisher = {John Wiley and Sons},
	title = {Sampling Techniques},
	year = {1977}}

@article{li2018rerandomization,
	author = {Li, Xinran and Ding, Peng and Rubin, Donald B.},
	title = {Asymptotic theory of rerandomization in treatment--control experiments},
	journal = {Proceedings of the National Academy of Sciences},
	volume = {115},
	number = {37},
	pages = {9157--9162},
	year = {2018},
	doi = {10.1073/pnas.1808191115}}

@article{fogarty2018,
	author = {Fogarty, Colin B.},
	title = {On mitigating the analytical limitations of finely stratified experiments},
	journal = {Journal of the Royal Statistical Society Series B: Statistical Methodology},
	volume = {80},
	number = {5},
	pages = {1035--1056},
	year = {2018},
	doi = {10.1111/rssb.12290}}

@article{bugni2018inference,
	author = {Bugni, Federico A. and Canay, Ivan A. and Shaikh, Azeem M.},
	title = {Inference under Covariate-Adaptive Randomization},
	journal = {Journal of the American Statistical Association},
	volume = {113},
	number = {524},
	pages = {1784--1796},
	year = {2018},
	doi = {10.1080/01621459.2017.1375934}}

@article{bai2020pairs,
	author = {Bai, Yuehao},
	title = {Optimality of Matched-Pair Designs in Randomized Controlled Trials},
	journal = {American Economic Review},
	volume = {112},
	number = {12},
	pages = {3911-3940},
	year = {2022},
	doi = {10.1257/aer.20201856}}

@article{hahn2012,
	author = {Hahn, Jinyong and Hirano, Keisuke and Karlan, Dean},
	title = {Adaptive Experimental Design Using the Propensity Score},
	journal = {Journal of Business and Economic Statistics},
	volume = {29},
	number = {1},
	pages = {96-108},
	year = {2011},
	doi = {10.1198/jbes.2009.08161}}

@article{tabord-meehan2020,
	author = {Tabord-Meehan, Max},
	title = {Stratification Trees for Adaptive Randomisation in Randomised Controlled Trials},
	journal = {The Review of Economic Studies},
	volume = {90},
	number = {5},
	pages = {2646--2673},
	year = {2023},
	doi = {10.1093/restud/rdac083}}

@incollection{athey2017survey,
	author = {Athey, Susan and Imbens, Guido W.},
	title = {The Econometrics of Randomized Experiments},
	booktitle = {Handbook of Economic Field Experiments},
	publisher = {Elsevier (North-Holland)},
	volume = {1},
	pages = {73-140},
	year = {2017},
	doi = {10.1016/bs.hefe.2016.10.003}}

@article{romano2017,
author = {Romano, Joseph P. and Wolf, Michael},
journal = {Journal of Econometrics},
number = {1},
pages = {1-19},
title = {Resurrecting weighted least squares},
volume = {197},
year = {2017},
doi = {10.1016/j.jeconom.2016.10.003}}

@article{cai2024,
author = {Cai, Yong and Rafi, Ahnaf},
journal = {Journal of Econometrics},
number = {1},
pages = {105793},
title = {On the Performance of the {N}eyman Allocation with Small Pilots},
volume = {242},
year = {2024},
doi = {10.1016/j.jeconom.2024.105793}}

@article{neyman1990,
	author = {Neyman, Jerzy},
	title = {On the Application of Probability Theory to Agricultural Experiments. {E}ssay on Principles. {S}ection 9},
	journal = {Statistical Science},
	volume = {5},
	number = {4},
	pages = {465--472},
	year = {1990},
	note = {Translated and edited by D. M. Dabrowska and T. P. Speed from the {P}olish original (1923)}}

@unpublished{shim2026fiscal,
	author = {Shim, Myungkyu and Kim, Kwang Hwan and Lee, Myunghwan Andrew and Choi, Sangyup and Bae, Siye and Coibion, Olivier and Gorodnichenko, Yuriy},
	title = {The Effects of Fiscal News on Household Expectations and Spending: New Causal Evidence},
	note = {NBER Working Paper No. 35009},
	year = {2026}}

@unpublished{chioda2026making,
	author = {Chioda, Laura and Gertler, Paul and Contreras-Loya, David and Carney, Dana R.},
	title = {Making Entrepreneurs: Long Term Returns to Training Youth in Business Skills},
	note = {NBER Working Paper No. 34637},
	year = {2026}}

@article{kennedy2023drLearner,
	author = {Kennedy, Edward H.},
	title = {Towards optimal doubly robust estimation of heterogeneous causal effects},
	journal = {Electronic Journal of Statistics},
	volume = {17},
	number = {2},
	pages = {3008--3049},
	year = {2023},
	doi = {10.1214/23-EJS2157}}

\appendix
\renewcommand{\thesection}{A}

\section{Proof of Matching Algorithm Guarantee} \label{proofs:matching}

In this section, we work with triangular arrays $(\psiin)_{i=1}^n \sub \mr^{\dimpsi}$ and propensities $(\propselectin)_{i=1}^n \sub \mathbb{Q}$ taking levels in $\{\al/\kl\}_{l=1}^{\nlevels}$ with $\kboundn = \max_l \kl$.
For any subset $S_n \sub [n]$, the propensity stratum $\snl = \{i \in S_n : \propselectin = \al/\kl\}$ and $\nl = |\snl \setminus \remainderl|$ counts the non-remainder units, where $\remainderset_n = \{\remainderl : l \in [\nlevels]\}$ collects the remainder groups from preprocessing step (1) of the algorithm. Typical choices are $S_n = [n]$ at the sampling stage and $S_n = \{i : \Ti = 1\}$ at the assignment stage. Define the squared maximal coordinate range $\mn^2 = \max_{j \leq \dimpsi}(\max_i \psi_{i,n}^{(j)} - \min_i \psi_{i,n}^{(j)})^2$.

This section proves the guarantees claimed for the algorithm in Section \ref{section:algorithms}.
Denote objective function $F(\groupset) = n\inv \sum_{\group \in \groupset} \sum_{i \in \group} |\psiin - \bar\psi_\group|_2^2$ for any collection of disjoint groups $\groupset \sub 2^{[n]}$, where centroid $\bar\psi_\group = |\group|\inv \sum_{i \in \group} \psiin$.
Then $F(\groupset \cup \groupset') = F(\groupset) + F(\groupset')$ for disjoint $\groupset, \groupset'$.
The proofs often use the objective function's equivalent representation as a sum of within-group pairwise differences
\begin{equation} \label{equation:F-pair-form}
F(\groupset) = \frac{1}{2n} \sum_{\group \in \groupset} \frac{1}{|\group|} \sum_{i,j \in \group} |\psiin - \psijn|_2^2.
\end{equation}
This follows from the identity $\sum_{i,j \in \group} |\psiin - \psijn|_2^2 = 2|\group| \sum_{i \in \group} |\psiin - \bar\psi_\group|_2^2$.

\begin{assumption}[Matching] \label{ass:matching}
Require the following:
\begin{enumerate}[label={(\alph*)}, itemindent=.5pt, itemsep=.4pt, topsep=.4pt]
\item $\max_{j \leq \dimpsi}(\max_{i \in \snl'} \psi_{i,n}^{(j)} - \min_{i \in \snl'} \psi_{i,n}^{(j)}) > 0$ for each $j \in \dimpsi$.
\item $\kboundn \nlevels / n = o(1)$.
\end{enumerate}
\end{assumption}

As in Section \ref{section:algorithms}, Part (a) is without loss since if any covariate has zero range, it can be safely excluded from step (2) of the algorithm without changing the matching objective.
Part (b) ensures that the remainder $\remainderl$ removed from each $\snl$ to make stratum size divisible by $\denoml$ is negligible. 

\begin{thm}[Spatial Sorting] \label{thm:matching-per-stratum}
Require \ref{ass:matching}.
The spatial sorting algorithm from Section \ref{section:algorithms} with grid sizes $m_l \asymp (\nl / \denoml)^{1/(d+1)}$ returns $\groupset_n^{(0)} = \cup_{l=1}^{\nlevels} \groupsetnl^{(0)} \cup \remainderset_n$ with 
\begin{equation} \label{equation:homogeneity-rate-per-stratum}
F(\groupset_n^{(0)}) \leq \mn^2 \cdot O\left((n / \kboundn \nlevels)^{-2/(d+1)}\right).
\end{equation}
\end{thm}

\begin{proof}
We begin by reducing the problem to the case where all per-stratum points lie in $[0,1]^d$ and the remainder groups have been removed.
Additivity of $F$ and group disjointness implies $F(\groupset_n^{(0)}) = \sum_{l=1}^{\nlevels} F(\groupsetnl^{(0)}) + F(\remainderset_n)$.
By Lemma \ref{lem:remainder-negligible}, the remainder satisfies $F(\remainderset_n) \leq n\inv \mn^2 d \kboundn \nlevels = O(\mn^2 (n/\kboundn \nlevels)^{-2/(d+1)})$. 

Then consider $\sum_{l=1}^{\nlevels} F(\groupsetnl^{(0)})$.
For each $l \in [\nlevels]$ and $i \in \snl'$, define $\phi_{i,n} \in [0,1]^d$ to be the rescaled $\psiin$ defined in the preprocessing step in part (1) of Section \ref{section:algorithms}.
By Lemma \ref{lem:unit-cube-reduction} applied to $\mathcal{S} = \snl'$, we have $F(\groupsetnl^{(0)}, \psi) \leq \mn^2 \cdot F(\groupsetnl^{(0)}, \phi)$.
Summing over $l$, we obtain $\sum_{l=1}^{\nlevels} F(\groupsetnl^{(0)}, \psi) \leq \mn^2 \sum_{l=1}^{\nlevels} F(\groupsetnl^{(0)}, \phi)$, so it suffices to show $\sum_{l=1}^{\nlevels} F(\groupsetnl^{(0)}, \phi) = O((n / \kboundn \nlevels)^{-2/(d+1)})$.
Then we relabel $\phi$ as $\psi$ and assume $\{\psiin : i \in \snl'\} \sub [0,1]^d$ for each $l$ for the remainder of the proof.

Fix $l \in [\levels]$ and grid size parameter $m_l \geq 1$.
Let $\indfn(\psi) = \indfn_{d}(\lfloor \ml \psi \rfloor \wedge (\ml-1))$ denote the sorting index and $B(0), \dots, B(m_l^d - 1)$ the block partition of $[0,1]^d$ from Lemma \ref{lem:snake-index}.
For each unit $i \in \snl'$ set $t(i) = \indfn(\psiin) \in \{0, \dots, m_l^d - 1\}$.
Then $\psiin \in B(t(i))$ holds since by definition $B(t) = \indfn^{-1}(t)$.
The partition $\groupsetnl^{(0)} = \{\groupls : s = 1, \dots, K_l\}$ is formed by sorting the non-remainder units on $t(i)$, breaking ties at random, so the groups satisfy the following \emph{monotonicity} property: for group indices $1 \leq s < s' \leq K_l$, we have 
\begin{equation} \label{equation:ordering_property_per_stratum}
i \in \groupls, \; j \in \group_{l, s'}, \; \; \implies \;\; t(i) \leq t(j).
\end{equation}

We will bound the terms in $F(\groupsetnl^{(0)})$ by a case analysis, depending on whether each pair $i, j \in g$ for $g \in \groupsetnl^{(0)}$ lies in the same block $t(i) = t(j)$, or different blocks $t(i) \ne t(j)$.  

\emph{Case 1.} Suppose $t(i) = t(j)$.
Then both points lie in a common block of side length $1/m_l$, so $|\psiin - \psijn|_2 \leq \sqrt{d}/m_l$ by item (2) of Lemma \ref{lem:snake-index}.
Then summing over $\groupsetnl^{(0)}$,
\begin{equation} \label{equation:same-block-bound}
\begin{aligned}
&\sum_{\group \in \groupsetnl^{(0)}} \frac{1}{|\group|} \sum_{i \ne j \in \group} \one\{t(i) = t(j)\} |\psiin - \psijn|_2^2 \le \frac{d}{m_l^2} \sum_{\group \in \groupsetnl^{(0)}} \frac{1}{|\group|} \sum_{i \ne j \in \group} \one\{t(i) = t(j)\}  \\
&\le \frac{d}{m_l^2} \sum_{\group \in \groupsetnl^{(0)}} \frac{|\group|(|\group|-1)}{|\group|}
= \frac{d}{m_l^2} \sum_{\group \in \groupsetnl^{(0)}} (|\group|-1) \leq \frac{d \nl}{m_l^2}.
\end{aligned}
\end{equation}
The final inequality holds since $\group \in \groupsetnl^{(0)}$ are disjoint subsets of $\snl'$ with $|\snl'| = \nl$.

\emph{Case 2.} Suppose $t(i) \neq t(j)$. 
Without loss, suppose $t(i) < t(j)$. 
By item (1) of Lemma \ref{lem:snake-index}, for each index $t$ with $t(i) < t < t(j)$ we can pick an arbitrary element $y_t \in B(t)$. 
Also set $y_{t(i)} = \psiin$ and $y_{t(j)} = \psijn$. 
Item (3) of Lemma \ref{lem:snake-index} gives the bound $|y_{t+1} - y_t|_2 \leq 2\sqrt{d} / m_l$ for any such $t$.
Then by telescoping, triangle inequality, and since $|\psiin - \psijn|_2 \leq \sqrt{d}$ on $[0,1]^d$, we have
\begin{equation} \label{equation:pair-snake-bound}
|\psiin - \psijn|_2^2 \leq \sqrt{d} \cdot |\psiin - \psijn|_2 \leq \sqrt{d} \sum_{t=t(i)}^{t(j)-1} |y_{t+1} - y_t|_2 \leq \frac{2 d}{m_l} \, |t(i) - t(j)|.
\end{equation}
Within each group $\groupls$, enumerate the units as $\groupls = \{u_{s,1}, \dots, u_{s,\denoml}\}$.
Re-indexing the inner sum by this enumeration and applying \eqref{equation:pair-snake-bound} termwise, we obtain 
\begin{align*}
& \sum_{i, j \in \groupls} \one\{t(i) \neq t(j)\} |\psiin - \psijn|_2^2 = \sum_{a, b = 1}^{\denoml} \one\{t(u_{s,a}) \neq t(u_{s,b})\} |\psi_{u_{s,a},n} - \psi_{u_{s,b},n}|_2^2 \\
&\leq \frac{2d}{m_l} \sum_{a, b = 1}^{\denoml} \one\{t(u_{s,a}) \neq t(u_{s,b})\} |t(u_{s,a}) - t(u_{s,b})| \leq \frac{2d}{m_l} \sum_{a, b = 1}^{\denoml} |t(u_{s,a}) - t(u_{s,b})|.
\end{align*}
For any $1 \leq s < K_l$ and $a,b \in \{1, \dots, \denoml\}$, monotonicity \eqref{equation:ordering_property_per_stratum} gives $t(u_{s,a}) \leq t(u_{s+1,b})$.
Then the $K_l$ intervals $I^s_{ab}$ of the form $[t(u_{s,a}), t(u_{s,b})]$ or $[t(u_{s,b}), t(u_{s,a})]$ are disjoint up to endpoints with $\cup_{s=1}^{K_l} I^s_{ab} \sub [0, m_l^d - 1]$.
With $\mc L(\cdot)$ Lebesgue measure, this implies
\begin{align*}
\sum_{s=1}^{K_l} |t(u_{s,a}) - t(u_{s,b})| = \sum_{s=1}^{K_l} \mc L(I^s_{ab}) = \mc L\left(\cup_{s=1}^{K_l} I^s_{ab}\right) \leq \mc L([0, m_l^d - 1]) \le m_l^d.
\end{align*}
Then aggregating using the bound above, we have 
\begin{align*} 
\sum_{s=1}^{K_l} \frac{1}{\denoml} \sum_{i,j \in \groupls} \one\{t(i) \neq t(j)\} |\psiin - \psijn|_2^2 &\leq \sum_{s=1}^{K_l} \frac{1}{\denoml} \frac{2d}{m_l} \sum_{a,b=1}^{\denoml} |t(u_{s,a}) - t(u_{s,b})| \\
\leq \frac{1}{\denoml} \denoml(\denoml-1) \frac{2d}{m_l} m_l^d &= 2 d (\denoml - 1) m_l^{d-1} \leq 2 d \denoml m_l^{d-1}.
\end{align*}
Combining the cases above and applying the pair-form representation \eqref{equation:F-pair-form}, we obtain $F(\groupsetnl^{(0)}) \leq (2n)\inv[d \nl / m_l^2 + 2 d \denoml m_l^{d-1}]$.
Setting $m_l \asymp (\nl / \denoml)^{1/(d+1)}$ implies that $F(\groupsetnl^{(0)}) \leq n\inv C_d \cdot \denoml^{2/(d+1)} \nl^{(d-1)/(d+1)}$ for some constant $C_d$.
Summing over $l = 1, \dots, \nlevels$ and bounding $\denoml \leq \kboundn$, we obtain 
\begin{align*}
\sum_{l=1}^{\nlevels} F(\groupsetnl^{(0)}) &\leq n\inv C_d \kboundn^{2/(d+1)} \sum_{l=1}^{\nlevels} \nl^{(d-1)/(d+1)} = n\inv C_d \kboundn^{2/(d+1)} \nlevels \bigg( \frac{1}{\nlevels} \sum_{l=1}^{\nlevels} \nl^{(d-1)/(d+1)} \bigg).
\end{align*}
Applying Jensen to $x \mapsto x^{(d-1)/(d+1)}$ and using $\sum_l \nl \leq n$, this is bounded above by 
\[
n\inv C_d \kboundn^{2/(d+1)} \nlevels \bigg( \frac{1}{\nlevels} \sum_{l=1}^{\nlevels} \nl \bigg)^{(d-1)/(d+1)} \leq C_d \, (\kboundn \nlevels)^{2/(d+1)} n^{-2/(d+1)}.
\]
Combined with the reductions established at the start of the proof, this gives the bound $F(\groupset_n^{(0)}) \leq \mn^2 (C_d + d)(n / \kboundn \nlevels)^{-2/(d+1)}$ for $n$ sufficiently large, completing the proof.
\end{proof}

\begin{thm}[Algorithm Guarantee] \label{thm:algorithm-guarantee}
Require Assumption \ref{ass:matching}.
The algorithm in steps (1) through (3) in Section \ref{section:algorithms} with grid sizes $m_l \asymp (\nl / \denoml)^{1/(d+1)}$ and per-stratum balanced-$k$-means iteration counts $T_l \ge 0$ returns a partition $\groupset_n = \cup_{l=1}^{\nlevels} \groupsetnl^{(T_l)} \cup \remainderset_n$ with
\begin{equation} \label{equation:algorithm-guarantee-rate}
F(\groupset_n) \leq \mn^2 \cdot O((n/\kboundn \nlevels)^{-2/(d+1)}).
\end{equation}
\begin{enumerate}[label={(\arabic*)}, itemindent=.5pt, itemsep=.4pt, topsep=.4pt]
\item If $|\psi_i|_2 \le c < \infty$ for $i = 1, \dots, n$, then $F(\groupset_n) = O\big((n / \kboundn \nlevels)^{-2/(d+1)}\big)$.
\item If $(\psi_i)_{i=1}^n$ are iid and $E[|\psi|_2^\alpha] < \infty$ for some $\alpha > 0$, then
\[
F(\groupset_n) = \op\bigl(n^{2/\alpha} (n / \kboundn \nlevels)^{-2/(d+1)}\bigr).
\]
\end{enumerate}
\end{thm}

\begin{proof}
Fix a level $l \in \{1, \dots, \nlevels\}$.
Recall there are $K_l = \nl / \denoml$ non-remainder groups in propensity stratum $\snl'$, each of size $\denoml$.
We show $F(\groupsetnl^{(t+1)}) \leq F(\groupsetnl^{(t)})$ for every $t \geq 0$.
The balanced assignment step \eqref{equation:balanced-kmeans-text} chooses $\groupsetnl^{(t+1)}$ to minimize $\sum_s \sum_{i \in \group_s} |\psiin - \bar\psi_{\group_s}^{(t)}|_2^2$ over equal-size partitions. Since $\groupsetnl^{(t)}$ is feasible,
\begin{equation} \label{equation:kmeans-assignment-step}
\sum_{s=1}^{K_l} \sum_{i \in \group_s^{(t+1)}} |\psiin - \bar\psi_{\group_s}^{(t)}|_2^2 \leq \sum_{s=1}^{K_l} \sum_{i \in \group_s^{(t)}} |\psiin - \bar\psi_{\group_s}^{(t)}|_2^2 = n F(\groupsetnl^{(t)}).
\end{equation}
For each $s$, expanding the square and using $\sum_{i \in \group_s^{(t+1)}}(\psiin - \bar\psi_{\group_s}^{(t+1)}) = 0$ gives
\[
\sum_{i \in \group_s^{(t+1)}} |\psiin - \bar\psi_{\group_s}^{(t)}|_2^2 = \sum_{i \in \group_s^{(t+1)}} |\psiin - \bar\psi_{\group_s}^{(t+1)}|_2^2 + \denoml |\bar\psi_{\group_s}^{(t+1)} - \bar\psi_{\group_s}^{(t)}|_2^2 \geq \sum_{i \in \group_s^{(t+1)}} |\psiin - \bar\psi_{\group_s}^{(t+1)}|_2^2.
\]
Summing over $s$ and combining with \eqref{equation:kmeans-assignment-step} yields the inequality $n F(\groupsetnl^{(t+1)}) \leq n F(\groupsetnl^{(t)})$, hence $F(\groupsetnl^{(t+1)}) \leq F(\groupsetnl^{(t)})$.
Iterating, we have $F(\groupsetnl^{(T_l)}) \leq F(\groupsetnl^{(0)})$ for any $T_l \geq 0$.
Summing over $l \in [\levels]$ and adding the remainder contribution we have
\[
F(\groupset_n) = \sum_{l=1}^{\nlevels} F(\groupsetnl^{(T_l)}) + F(\remainderset_n) \leq \sum_{l=1}^{\nlevels} F(\groupsetnl^{(0)}) + F(\remainderset_n) = F(\groupset_n^{(0)}).
\]
Theorem \ref{thm:matching-per-stratum} bounds $F(\groupset_n^{(0)})$ by the right side of \eqref{equation:algorithm-guarantee-rate}, completing the proof of \eqref{equation:algorithm-guarantee-rate}.
For the bounded case, $|\psi_i|_2 \leq c$ for all $i$ gives $\mn \leq 2c$, so $\mn^2 = O(1)$, and substituting into \eqref{equation:algorithm-guarantee-rate} yields item (1).
For the moment case, Lemma \ref{lem:coordinate-range-rate} applied to the iid sample gives $\mn^2 = \op(n^{2/\alpha})$, and substituting into \eqref{equation:algorithm-guarantee-rate} yields item (2).
\end{proof}

\begin{cor}[Expected Matching Bound] \label{cor:expected-matching}
Let $(\psi_i)_{i=1}^n$ be iid with $E[|\psi|_2^\alpha] \leq K$ for some $\alpha > 2$, and let $\groupsetn$ be the partition produced by the algorithm of Section~\ref{section:algorithms}.
Then for a constant $C(\alpha, d)$ depending only on $\alpha$ and $d$ 
\[
E[F(\groupsetn)] \leq C(\alpha, d) \, K^{2/\alpha} \cdot n^{2/\alpha} \cdot (n/\kboundn \nlevels)^{-2/(d+1)}.
\]
\end{cor}

\begin{proof}
By \eqref{equation:algorithm-guarantee-rate}, $F(\groupsetn) \leq C_d \, \mn^2 \cdot (n/\kboundn\nlevels)^{-2/(d+1)}$, with $\mn = \max_{j \leq d}(\max_i \psi_{i,n}^{(j)} - \min_i \psi_{i,n}^{(j)})$ and $C_d$ depending only on $d$. The proof of Lemma~\ref{lem:coordinate-range-rate} gives the pathwise bound $\mn \leq 2 \max_i |\psi_i|_2$, so $\mn^2 \leq 4 \max_i |\psi_i|_2^2$. It remains to bound $E[\max_i |\psi_i|_2^2]$.

Set $Y_i = |\psi_i|_2$, so $E[Y^\alpha] \leq K$. By the layer-cake formula and a union bound,
\[
E[\max_i Y_i^2] = \int_0^\infty P(\max_i Y_i^2 > t)\, dt \leq \int_0^\infty \min\bigl(1, n P(Y > \sqrt t)\bigr)\, dt.
\]
By Markov, $P(Y > \sqrt t) \leq K t^{-\alpha/2}$.
Splitting the integral at $t_0 = (nK)^{2/\alpha}$, where $n K t_0^{-\alpha/2} = 1$, the first piece contributes at most $t_0 = (nK)^{2/\alpha}$ and the second is
\[
\int_{t_0}^\infty n K t^{-\alpha/2}\, dt = \frac{2 n K}{\alpha - 2} \, t_0^{1 - \alpha/2} = \frac{2}{\alpha - 2} \, (nK)^{2/\alpha}.
\]
The second equality used $t_0^{1-\alpha/2} = (nK)^{2/\alpha - 1}$ and $\alpha > 2$. Summing, $E[\max_i Y_i^2] \leq \frac{\alpha}{\alpha-2}(nK)^{2/\alpha}$, so $E[\mn^2] \leq \frac{4\alpha}{\alpha-2}(nK)^{2/\alpha}$. Substituting into the pathwise bound gives the claim with $C(\alpha, d) = 4 C_d \, \alpha/(\alpha - 2)$.
\end{proof}

\begin{lem}[Block Sequence] \label{lem:snake-index}
Fix $d, m \geq 1$. Let $\indfn_d : \{0, \dots, m-1\}^d \to \{0, \dots, m^d - 1\}$ and $\indfn : [0, 1]^d \to \{0, \dots, m^d - 1\}$ be the index functions defined in Section \ref{section:algorithms} in dimension $d \ge 1$.
Define blocks $B(t) = \indfn^{-1}(t)$ for $t = 0, \dots, m^d - 1$.
The following hold:
\begin{enumerate}[label={(\arabic*)}, itemindent=.5pt, itemsep=.4pt, topsep=.4pt]
\item $\indfn_d$ is a bijection and $\{B(t) : t = 0, \dots, m^d - 1\}$ is a partition of $[0,1]^d$.
\item For $t = 0, \dots, m^d - 1$, if $x, y \in B(t)$ then $|x - y|_2 \leq \sqrt{d}/m$.
\item For $t = 0, \dots, m^d - 2$, if $x \in B(t)$, $y \in B(t+1)$ then $|x - y|_2 \leq 2\sqrt{d}/m$.
\end{enumerate}
\end{lem}

\begin{proof}
First consider (1). 
We show bijectivity of $\indfn_d$ by induction. 
For $d = 1$, $\indfn_1(z_1) = z_1$ is the identity, so this holds trivially. 
For $d \geq 2$, note that the second summand in \eqref{equation:gray-code-text} lies in $\{0, \dots, m^{d-1} - 1\}$. 
Then we can recover $z_d = \lfloor \indfn_d(z) / m^{d-1} \rfloor$. 
Moreover, $(z_1, \dots, z_{d-1})$ can be recovered from the parity of $z_d$ and using inductive bijectivity of $\indfn_{d-1}$.
Then we have constructed an inverse for $\indfn_d$, so it is a bijection.
For any $\psi \in [0,1]^d$, $\lfloor m \psi \rfloor \in \{0, \dots, m\}^d$, so $\lfloor m \psi \rfloor \wedge (m-1) \in \{0, \dots, m-1\}^d$, hence $\indfn(\psi) = \indfn_d(\lfloor m \psi \rfloor \wedge (m-1))$ is a well-defined function from $[0, 1]^d$ to $\{0, \dots, m^d - 1\}$.
Then the level sets $B(t) = \indfn\inv(t)$ of any function are pairwise disjoint with union equal to its domain.
Moreover, each $B(t)$ is non-empty since $\indfn(m\inv \indfn_d\inv(t)) = \indfn_d(\indfn_d\inv(t)) = t$ so $m\inv \indfn_d\inv(t) \in B(t)$. 
Then $\{B(t) : t = 0, \dots, m^d - 1\}$ partitions $[0, 1]^d$.

For item (2), define $z(t) = \indfn_d\inv(t)$. 
If $\psi \in B(t) = \indfn\inv(t)$, then $\indfn_d(\lfloor m \psi \rfloor \wedge (m-1)) = t$, so $z(t) = \indfn_d\inv(t) = \lfloor m \psi \rfloor \wedge (m-1)$.  
Then componentwise, by definition of the floor function, $z(t) \leq m \psi \leq z(t) + 1$.
Then for any $x, y \in B(t)$, we have $|x_j - y_j| \leq 1/m$ for each $j$, so $|x - y|_2^2 \leq d/m^2$, establishing (2).

For item (3), we claim $|z(t+1) - z(t)|_1 = 1$ for $t = 0, \dots, m^d - 2$. 
Assuming the claim, let $x \in B(t)$, $y \in B(t+1)$. 
By the key inequality in the proof of (2), $z(t) \leq m x \leq z(t) + 1$ and $z(t+1) \leq m y \leq z(t+1) + 1$ componentwise. 
Subtracting inequalities, for each coordinate $j$ we obtain $z(t)_j - z(t+1)_j - 1 \leq m(x_j - y_j) \leq z(t)_j - z(t+1)_j + 1$, so $m(x_j - y_j)$ lies in an interval of length $2$ centered at $z(t)_j - z(t+1)_j$.
The maximum absolute value attained on such an interval is $|z(t)_j - z(t+1)_j| + 1$, so $|m(x_j - y_j)| \leq |z(t)_j - z(t+1)_j| + 1$.
By the claim, exactly one coordinate of $z(t+1) - z(t)$ has absolute value $1$ and the rest are $0$, so squaring and summing over $j$ gives $m^2 |x - y|_2^2 \leq \sum_j (|z(t)_j - z(t+1)_j| + 1)^2 = (1 + 1)^2 + (d-1) \cdot 1 = d + 3 \leq 4 d$. 
Hence $|x - y|_2 \leq 2\sqrt{d}/m$, establishing (3).

To finish the proof, we show the key claim $|z(t+1) - z(t)|_1 = 1$ for $t = 0, \dots, m^d - 2$.
Recall we defined $z(t) = \indfn_d\inv(t)$.
For $d = 1$, $\indfn_1$ is the identity and the adjacency property $|z(t+1) - z(t)|_1 = 1$ is immediate.
For $d \geq 2$, assume the adjacency property holds for $\indfn_{d-1}$.
Suppose $\indfn_d(z') = \indfn_d(z) + 1$.
Denote $\bar z = (z_1, \dots, z_{d-1})$ and similarly for $\bar z'$.
Write $\indfn_d(z) = z_d m^{d-1} + r(\bar z, z_d)$, defining $r(\bar z, z_d) \in \{0, \dots, m^{d-1} - 1\}$ to be the second summand in \eqref{equation:gray-code-text}.
We consider two cases.

\emph{Case 1:} Suppose $z_d' = z_d$.
Then $r(\bar z', z_d) - r(\bar z, z_d) = 1$. 
If $z_d$ is even, inspecting the formula this implies $\indfn_{d-1}(\bar z') - \indfn_{d-1}(\bar z) = +1$.
If $z_d$ is odd, then $\indfn_{d-1}(\bar z') - \indfn_{d-1}(\bar z) = -1$. 
The inductive hypothesis applied to $(\bar z, \bar z')$ or $(\bar z', \bar z)$ yields that $\bar z, \bar z'$ differ in exactly one coordinate by exactly $1$.
Together with $z_d = z_d'$ this gives the claim.

\emph{Case 2:} Suppose $z_d' \neq z_d$.
Since $\indfn_d(z') - \indfn_d(z) = (z_d' - z_d) m^{d-1} + (r(\bar z', z_d') - r(\bar z, z_d)) = 1$ and $|r(\bar z', z_d') - r(\bar z, z_d)| \leq m^{d-1} - 1$, the only possibility is $z_d' = z_d + 1$, $r(\bar z, z_d) = m^{d-1} - 1$, $r(\bar z', z_d') = 0$.
If $z_d$ is even then $\indfn_{d-1}(\bar z) = m^{d-1} - 1$ and $z_d' = z_d + 1$ is odd so $\indfn_{d-1}(\bar z') = m^{d-1} - 1$.
Since inductively $\indfn_{d-1}$ is a bijection, we must have $\bar z = \bar z'$. 
Similarly, if $z_d$ is odd then $\indfn_{d-1}(\bar z) = 0$ and $z_d'$ is even so $\indfn_{d-1}(\bar z') = 0$ so that $\bar z = \bar z'$ by inductive bijectivity.
Thus $z, z'$ differ only in the $d$-th coordinate, by exactly $1$.
Then by induction, $\indfn_d$ satisfies the adjacency property $|z(t+1) - z(t)|_1 = 1$.
\end{proof}

\begin{remark}[Convergence] \label{rem:algorithm-convergence}
Since each iteration strictly decreases $F$ whenever the partition changes and the set of equal-size partitions of $\snl'$ into $K$ groups is finite, the per-stratum sequence $(\groupsetnl^{(t)})_{t \geq 0}$ stabilizes in finitely many iterations at a partition $\groupsetnl^{(\infty)}$.
Since this holds for each $l = 1, \dots, \nlevels$, the full partition $\groupset_n^{(t)} = \cup_{l=1}^{\nlevels} \groupsetnl^{(t)} \cup \remainderset_n$ likewise converges in finitely many steps to $\groupset_n^{(\infty)} = \cup_{l=1}^{\nlevels} \groupsetnl^{(\infty)} \cup \remainderset_n$.
\end{remark}

\begin{lem}[Remainder] \label{lem:remainder-negligible}
The remainder groups $\remainderset_n = \{\remainderl : l = 1, \dots, \nlevels\}$, each of size $|\remainderl| < \denoml \leq \kboundn$, satisfy the deterministic bound $F(\mc R_n) \leq n\inv \mn^2 d \kboundn \nlevels$.
If in addition $\kboundn \nlevels / n = o(1)$, then
\[
F(\mc R_n) = O\!\left(\mn^2 (n / \kboundn \nlevels)^{-2/(d+1)}\right).
\]
\end{lem}

\begin{proof}
For each $l$, $i \in \remainderl$, and $j \in [d]$, writing $\psi_{i,n}^{(j)} - \bar\psi_{\remainderl}^{(j)} = |\remainderl|^{-1}\sum_{i' \in \remainderl}(\psi_{i,n}^{(j)} - \psi_{i',n}^{(j)})$ and applying the triangle inequality gives $|\psi_{i,n}^{(j)} - \bar\psi_{\remainderl}^{(j)}| \leq \max_{i' \in \remainderl} \psi_{i',n}^{(j)} - \min_{i' \in \remainderl} \psi_{i',n}^{(j)} \leq \mn$.
Squaring and summing over $j$ gives $|\psiin - \bar\psi_{\remainderl}|_2^2 \leq \sum_{j=1}^d (\max_{i' \in \remainderl} \psi_{i',n}^{(j)} - \min_{i' \in \remainderl} \psi_{i',n}^{(j)})^2 \leq d \mn^2$, and summing over $i \in \remainderl$ gives $\sum_{i \in \remainderl} |\psiin - \bar\psi_{\remainderl}|_2^2 \leq d \mn^2 |\remainderl| \leq d \mn^2 \kboundn$.
Summing across $l$ gives the deterministic bound $F(\mc R_n) \leq n\inv \mn^2 d \kboundn \nlevels$.
For $d \geq 1$ and $x \geq 1$, $x\inv \leq x^{-2/(d+1)}$, so under the condition $\kboundn \nlevels / n = o(1)$ we have $n\inv \kboundn \nlevels = (n / \kboundn \nlevels)\inv \leq (n / \kboundn \nlevels)^{-2/(d+1)}$ for all $n$ large enough, giving the second conclusion. 
\end{proof}

\begin{lem}[Unit Cube Reduction] \label{lem:unit-cube-reduction}
Let $(\psiin)_{i=1}^n \sub \mr^{d}$ be a triangular array, $\mn^2 = \max_{j \leq d}(\max_i \psi_{i,n}^{(j)} - \min_i \psi_{i,n}^{(j)})^2$, and fix any subset $\mathcal{S} \sub [n]$.
For $j = 1, \dots, d$ and $i \in \mathcal{S}$, define rescaling within $\mathcal{S}$ by $\phi_{i,n}^{(j)} = (\psi_{i,n}^{(j)} - \min_{i' \in \mathcal{S}} \psi_{i',n}^{(j)})/(\max_{i' \in \mathcal{S}} \psi_{i',n}^{(j)} - \min_{i' \in \mathcal{S}} \psi_{i',n}^{(j)})$ so that $\phi_{i,n} \in [0,1]^d$.
Then for any partition $\groupset$ of $\mathcal{S}$, $F(\groupset, \psi) \leq \mn^2 \cdot F(\groupset, \phi)$.
\end{lem}

\begin{proof}
Let $m_{n,j}(\mathcal{S}) = \max_{i \in \mathcal{S}} \psi_{i,n}^{(j)} - \min_{i \in \mathcal{S}} \psi_{i,n}^{(j)}$, so the affine relation $\psi_{i,n}^{(j)} = m_{n,j}(\mathcal{S}) \phi_{i,n}^{(j)} + \min_{i' \in \mathcal{S}} \psi_{i',n}^{(j)}$ holds for $i \in \mathcal{S}$. 
Averaging within any $\group \sub \mathcal{S}$ yields $\bar \psi_g^{(j)} = m_{n,j}(\mathcal{S}) \bar \phi_g^{(j)} + \min_{i' \in \mathcal{S}} \psi_{i',n}^{(j)}$, so $\psi_{i,n}^{(j)} - \bar \psi_g^{(j)} = m_{n,j}(\mathcal{S})(\phi_{i,n}^{(j)} - \bar \phi_g^{(j)})$.
Since $m_{n,j}(\mathcal{S}) \leq \mn$ for every $j$, $|\psiin - \bar \psi_g|_2^2 = \sum_{j=1}^{d} m_{n,j}(\mathcal{S})^2 (\phi_{i,n}^{(j)} - \bar \phi_g^{(j)})^2 \leq \mn^2 \cdot |\phi_{i,n} - \bar \phi_g|_2^2$.
Applying $n\inv \sum_{g \in \groupset} \sum_{i \in \group}$ to both sides gives $F(\groupset, \psi) \leq \mn^2 \cdot F(\groupset, \phi)$.
\end{proof}

\begin{lem}[Coordinate Range] \label{lem:coordinate-range-rate}
Let $\psi_{i,n} \sim P$ iid $E_P[|\psi_{i,n}|_2^\alpha] < \infty$ for some $\alpha > 0$.
Define $\mjn = \max_i \psi_{i,n}^{(j)} - \min_i \psi_{i,n}^{(j)}$ and set $\mn^2 = \max_{j \leq d} \mjn^2$.
Then $\mn^2 = \op(n^{2/\alpha})$.
\end{lem}

\begin{proof}
Note that for each $j$, we have $\mjn \leq 2 \max_i |\psi_{i,n}^{(j)}|$, so 
\[
\max_j \mjn \leq 2 \max_i \max_j |\psi_{i,n}^{(j)}| = 2 \max_i |\psiin|_\infty \leq 2 \max_i |\psiin|_2. 
\]
Applying Lemma \ref{lemma:maximal-inequality} with $X_i = |\psi_i|_2$ and $E[|\psi|_2^\alpha] < \infty$ yields $\max_i |\psiin|_2 = \op(n^{1/\alpha})$, hence $\mn^2 = \max_j \mjn^2 = \op(n^{2/\alpha})$. 
\end{proof}

\clearpage

\renewcommand{\thesection}{B}
\pagenumbering{arabic}\renewcommand{\thepage}{\arabic{page}}

\section{Additional Results and Implementation Details}

\subsection{DML Adjustment} \label{appendix:dml-adjustment}

In this section we develop a doubly-robust estimator that attains the asymptotic variance $\varlocal$ of Theorem~\ref{thm:clt_fixed} through ex-post adjustment for covariate imbalances at both stages.
Since the estimator $\est$ attains the same variance under fine stratification without any need for ex-post adjustment, this highlights how finely stratified designs nonparametrically control covariate imbalances at both design stages.

To state the result, consider regression estimators $\ceffnest_d(\psi)$ for the conditional means $\ceffn_d(\psi) = E[Y(d) |\psi]$, $d \in \{0,1\}$.
We construct the estimator using \emph{cross-fitting}, as in \cite{chernozhukov2017dml}.
Fix a number of folds $K \geq 2$, and let $[n] = I_1 \cup \dots \cup I_K$ be a partition into folds of size $|I_k| \asymp n$, drawn at random independently of the data.
For each fold $k$, let $\ceffnest_d^{(-k)}$ denote a regression estimator for $\ceffn_d$ computed using only the units outside fold $k$, and write $k(i)$ for the fold containing unit $i$.
The cross-fit doubly-augmented IPW (2-AIPW) estimator is
\begin{align} \label{equation:cross-fit-aipw}
\estadj &= \en\bigl[\ceffnest_{1}^{(-k(i))}(\psii) - \ceffnest_{0}^{(-k(i))}(\psii)\bigr] \\
&+ \en\biggl[\frac{\Ti \Di \bigl(Y_i - \ceffnest_{1}^{(-k(i))}(\psii)\bigr)}{\propselect(\psii) \propfn(\psii)} - \frac{\Ti (1- \Di) \bigl(Y_i - \ceffnest_{0}^{(-k(i))}(\psii)\bigr)}{\propselect(\psii)(1-\propfn(\psii))}\biggr]. \nonumber
\end{align}
The estimator $\estadj$ adjusts for covariate imbalances due to both sampling and assignment, and if $\propselect = 1$ it reduces to the familiar cross-fit AIPW estimator for the $\ate$.

\begin{assumption}[Double Adjustment] \label{assumption:double-adjustment}
Moments $E[Y(d)^2] < \infty$ hold for $d \in \{0,1\}$, and the propensities $\propselect(\psi) \in (\propbound, 1]$ and $\propfn(\psi) \in (\propbound, 1 - \propbound)$ for some $\propbound > 0$.
Sampling and assignment are conditionally independent $\Ti \indep \Di | \psii$.
Suppose also that regressions $|\ceffnest_d^{(-k)} - \ceffn_d|_{2, \psi} = \op(1)$ for $d \in \{0,1\}$ and each fold $k$.
\end{assumption}

\begin{thm}[Regression Equivalence] \label{prop:double-adjustment}
Suppose Assumption \ref{assumption:double-adjustment} holds. 
Let the design be $\Ti \simiid \bern(\propselect(\psii))$ and $\Di \simiid \bern(\propfn(\psii))$. 
Then $\sqrt{\nsampled}(\estadj - \ate) \convwprocess \normal(0, V)$, with the variance $V$ the same as under fine stratification in Theorem \ref{thm:clt_fixed}.
\end{thm}

\subsection{Heteroskedasticity Function Estimation} \label{appendix:pilot-variance-estimation}

The theory in Section \ref{section:optimal_designs} requires heteroskedasticity function estimates $\hkest_d(\psi)$ for $d \in \{0, 1\}$, as well as the ex-ante variance estimate $\hkavgest(\psi)$ from Equation~\eqref{equation:ex-ante-variance} used in the optimal sampling formula in Equation~\eqref{equation:pilot:propensity-estimator}.
Our simulations and empirical application use a modification of the method in \cite{fan1998}.
In a regression model $Y = \ceffn(\psi) + \hk(\psi) \epsilon$, they propose to (1) use local linear regression to estimate $\ceffn(\psi)$ and (2) use local linear regression to project estimated residuals $(Y - \ceffnest(\psi))^2$ on $\psi$.

In our setting, we can form signal $S_i(1) = Y_i \Di \Ti / (\propfni \propselecti)$, noting that $E[S_i(1) | \psii] = E[Y_i(1) | \psii] = \ceffn_1(\psii)$.
We then (1) project $S_i(1)$ on $\psii$ to estimate $\ceffn_1(\psii)$.
Next, (2) we project the IPW-weighted squared residuals $(Y_i - \ceffnest_1(\psii))^2 \Di \Ti / (\propfni \propselecti)$ on $\psii$ to estimate $\hkest_1(\psii)$, and similarly for $d=0$.

For the ex-ante variance $\hkavgest(\psi)$ in Equation~\eqref{equation:ex-ante-variance} used in the optimal sampling formula in Equation~\eqref{equation:pilot:propensity-estimator}, two routes are available: (i) the plug-in $\hkavgest(\psi) = \hkest_1(\psi)/\propfn + \hkest_0(\psi)/(1-\propfn)$ from the per-arm estimates above, or (ii) a single regression on $\psii$ of the combined IPW-weighted residual signal
\[
S_i = (Y_i - \ceffnest_1(\psii))^2 \Di \Ti / (\propfni^2 \propselecti) + (Y_i - \ceffnest_0(\psii))^2 (1-\Di) \Ti / ((1-\propfni)^2 \propselecti)
\]
This has the same conditional expectation as $\hkavg(\psii)$ in Equation~\eqref{equation:ex-ante-variance} and avoids combining per-arm fits with different smoothing. 
This is analogous to the DR-learner of \citealp{kennedy2023drLearner} in CATE estimation, where a single influence-function signal is regressed on covariates rather than separately combining per-arm fits.
Our empirical implementation uses route (ii), which we found to be more numerically stable.

We tested linear ridge regression, RBF-kernel ridge regression, and random forests for each regression step, with hyperparameters chosen by cross-validation in all cases.
Kernel ridge estimated $\hk_d(\psi)$ the most precisely in dimensions $\dim(\psi)=1,2$, while forests were superior in higher dimensions.
Our simulation and empirical results are presented using random forest regression.

\subsection{Imputation of Potential Outcomes and Simulation DGP} \label{appendix:imputation-details}

For each paper and arm $d \in \{0, 1\}$, we fit $\ceffnest_d(\psi) = \wh E[Y | \psi, D = d]$ on the observed-arm units and estimate $\hkest_d(\psi)$ from cross-fitted residuals: split arm-$d$ units into $K = 5$ folds, fit $\ceffnest_d^{(-k)}$ on $K - 1$ folds, form the held-out residuals $r_i = Y_i - \ceffnest_d^{(-k)}(\psii)$ for $i$ in the held-out fold, then regress $r_i^2$ on $\psii$ to obtain $\hkest_d(\psi)$.
The fitted $\ceffnest_d$ and $\hkest_d$ define the population we resample from. In each Monte Carlo replication, for each sampled unit $i$ we draw $\tilde Y_i(d) = \ceffnest_d(\psii) + \hkest_d(\psii)\half \residuali^d$ with $\residuali^d \sim \normal(0,1)$ independently across replications, units, and arms. 
This gives $\var(\tilde Y(d) | \psi) = \hkest_d(\psi)$ in the simulated population.
For the two binary-outcome papers we instead draw $\tilde Y_i(d) \sim \bern(\ceffnest_d(\psii))$. 

To sample from the imputed DGP, we draw a unit with replacement from $(\psii)_{i=1}^{N_0}$ and add independent mean-zero Gaussian noise to its continuous coordinates, with per-coordinate standard deviation equal to the average nearest-neighbor distance of that coordinate in the panel. Discrete coordinates are left unperturbed. Equivalently, $\psi$ is drawn from the empirical measure convolved with a product kernel, so $\psi$ is continuously distributed and the probability that two sampled units share the same covariate value is zero. This avoids the artificial perfect matches that arise when resampling with replacement from a finite panel of continuous covariates.

The regression family $\mc R$ used for both $\ceffnest_d$ and $\hkest_d$ is selected per paper by $K$-fold cross-validated $R^2$ on the observed arms, averaged across arms; we pick $\mc R$ from OLS, ridge, lasso with degree-2 interactions, gradient-boosted trees, and random forest.
Hyperparameters are chosen by inner cross-validation where applicable.
The covariates are the same $\psi$ used for stratification in the design step.
Table~\ref{table:empirical} reports $\dim(\psi)$, Appendix~\ref{appendix:paper-details} lists $\psi$ paper by paper.

Some previous work, such as \citet{bai2020pairs} and earlier versions of this paper, instead uses a matching-based imputation $\wh Y_i(d) = Y_{j(i)}(d)$ with $j(i) = \argmin_{j: \Dj = d} |\psii - \psij|_2$. This can artificially inflate efficiency since nearest-neighbor pairs of units end up with identical imputed potential outcomes, making the within-stratum variance exactly zero with positive probability over the sampling distribution. We thank an anonymous referee for pointing this out.

\renewcommand{\thesection}{C}

\section{Proofs}

\subsection{Notation} \label{proofs:notation}

We denote $[n] = \{1, \dots, n\}$ and use $a \wedge b$ for the minimum and $\one(\cdot)$ for the indicator.
We write $\en[a_i] = n\inv \sum_{i=1}^n a_i$.
For $v \in \mr^{\dimpsi}$ with $\dim(\psi) = \dimpsi$, $|v|_2$ denotes the Euclidean norm and $|v|_\infty$ the sup norm.
For $f : \mr^{\dimpsi} \to \mr$, $|f|_{\text{lip}}$ is the Lipschitz seminorm and $\|f\|_{\psi,2}^2 = E[f(\psi)^2]$ the $L_2(\psi)$ norm. We write $X_n \convp X$ for convergence in probability, $X_n \convwprocess X$ for weak convergence, and use the usual stochastic orders $\Op(\cdot)$ and $\op(\cdot)$. Notation $a_n \asymp b_n$ means $a_n / b_n$ and $b_n / a_n$ are both bounded and $a_n \lesssim b_n$ means $a_n \leq C b_n$ for some constant $C$. Independence is denoted $\indep$ and $E[\cdot | \mc F]$ is conditional expectation given a $\sigma$-algebra $\mc F$.

\subsection{Asymptotics} \label{proofs:asymptotics}

This subsection develops a self-contained proof of the extension results in Section~\ref{subsection:asymptotics:extensions}, from which the main theorem (Theorem~\ref{thm:clt_fixed}) of Section~\ref{section:asymptotics} is derived.
We allow different sampling and assignment stratification variables $\psisamp$ and $\psiassign$ and sampling and assignment propensities $\propselect(\psisamp)$ and $\propfn(\psiassign)$ taking finitely many rational levels.
This subsumes the results presented in Theorem~\ref{thm:clt_two_psi} and Corollary~\ref{cor:clt4sate}, which assume both constant.

First, we collect some notation.

\begin{enumerate}[label={\rm(\arabic*)}, noitemsep, topsep=0pt]
\item \emph{Filtrations.} At the assignment stage, let $\filtrationcandpsi = \sigma(\psi_{2,1:n}, \permn, \Tn)$ denote the information used to construct the assignment partition and $\filtrationhd = \sigma(\Wn, \permn, \Tn)$ the corresponding filtration carrying the full data $W_i = (Y_i(0), Y_i(1), \psi_{2,i})$ (recall $\psi_{1,i} = f(\psi_{2,i})$ by Assumption~\ref{assumption:feasible-clts}(i)). The analogs for the sampling design are $\filtrationcandt = \sigma(\psi_{1,1:n}, \permn)$ and $\filtrationht = \sigma(\Wn, \permn)$. Note the key relations $\filtrationcandt \sub \filtrationht$ and $\filtrationcandpsi \sub \filtrationhd$. In the technical lemma section, we sometimes use a generic $\sigma$-algebra $\filtrationgeneric$, which we set to be either $\filtrationgeneric = \filtrationcandpsi$ or $\filtrationgeneric = \filtrationhd$ depending on the required context.
\item \emph{Randomness.} Recall $\permn$ is independent randomness used to break ties during matching. In what follows, $\proprand$ denotes an independent auxiliary data set, such as a pilot or previous observational data. We use $\eta$ for the complete randomizations in the design itself, see the next definition for details. When necessary, we denote $\eta^T$ for the sampling complete randomizations and $\eta^D$ for the assignment randomizations, obeying the key relations $\eta^D \indep \filtrationhd$ and $\eta^T \indep \filtrationht$.
\end{enumerate}

We formalize the construction of the local randomization design from Definition~\ref{defn:local_randomization}.
We state results for the assignment design $\Dn \sim \localdesigncond(\psiassign, \propfn(\psiassign))$.
Our formalization also applies to the sampling design after the substitution $\Dn \to \Tn$, $\Tn \to 1$, $\propfn \to \propselect$, $\psiassign \to \psisamp$, under which the unit index set $I = \{i: \Ti=1\}$ becomes $I = [n]$.

\begin{defn}[Design Formalization] \label{defn:design-construction}
Let $\Dn \sim \localdesigncond(\psiassign, \propfn(\psiassign))$ with associated $\filtrationcandpsi$-measurable partition $\groupset_n$ of index set $I = \{i: \Ti=1\} \sub [n]$.
Recall $\propfn(\psiassign) \in \{\al/\kl : l \in [L]\}$, with $\kl$ the interior group size at level $l$.
Define a jointly independent array representing the design randomizations $(\eta_{l,j})_{l \in [L], j \in [n+1]}$ with $\eta \indep (\Wn, \permn, \Tn)$.
Let $\eta_{l,j} \sim \crdist(\al/\kl)$ for each $l \in [L]$ and $j \in [n]$ and $\eta_{l, n+1}$ be a $\kl$ vector with iid Bernoulli$(\al/\kl)$ components.
In particular, $n+1$ is a loose, fixed upper bound on the number of groups required in stratum $l \in [L]$. 
We define a function $s(\cdot)$ that specifies which randomization $\eta_{s(\group)}$ is assigned to group $\group$.
In particular, let $s: \groupsetn \to [L] \times [n+1]$ be an injection sending each level-$l$ interior group to some $(l, j)$ with $j \in [n]$ and the level-$l$ remainder group to $(l, n+1)$.
Note this function will not be surjective.

We can take $s(\cdot)$ to be $\filtrationcandpsi$-measurable since only the information in $\filtrationcandpsi$ is used to construct the interior (full size $\kl$) and remainder groups $g \in \groupsetn$ and evaluate the group propensities $\propfn(\psiitwo)$ for $i \in \group$.
For interior groups, we simply set $(\Di)_{i \in \group} = \eta_{s(\group)}$.
More generally, let $(\Di)_{i \in \group}$ be the first $|g|$ components of $\eta_{s(g)}$, accommodating remainder groups.
Formally, if $f(g, x) = x_{1:|g|}$ is truncation, we can write for $\group \in \groupsetn$ 
\begin{equation} \label{equation:design-construction}
(\Di)_{i \in \group} = f(\group, \eta_{s(\group)}).
\end{equation}
\end{defn}

\textbf{Strategy.}
We decompose $\est - \ate$ into three terms by projection onto an increasing chain of conditioning $\sigma$-algebras: an assignment term, a sampling term, and a superpopulation term.
We establish a CLT for each, then combine them using a characteristic function argument.
Define $\te_i = Y_i(1) - Y_i(0)$, $\ylevel_i = (1-\propfn(\psiitwo)) Y_i(1) + \propfn(\psiitwo) Y_i(0)$, and $\hti = (\Di - \propfn(\psiitwo))/[\propfn(\psiitwo)(1-\propfn(\psiitwo))]$.
Note the algebraic identity $\hti Y_i = \te_i + \hti \yleveli$, which holds pointwise.
Then the estimator can be expanded as
\begin{align*}
\est - \ate &= \en\bigl[\Ti \hti Y_i / \propselect(\psiione)\bigr] - \ate = \en\bigl[\Ti (\te_i + \hti \yleveli) / \propselect(\psiione)\bigr] - \ate \\
     &= \en\bigl[\te_i - \ate \bigr] + \en\bigl[\te_i (\Ti - \propselect(\psiione))/\propselect(\psiione)\bigr] + \en\bigl[\Ti \hti \yleveli / \propselect(\psiione)\bigr] \\
     &\equiv A_n + B_n + C_n.
\end{align*}
Our approach is to prove a marginal CLT for $\rootn A_n$ and conditional CLTs for $\rootn B_n | \filtrationht$ and $\rootn C_n | \filtrationhd$ given $\filtrationht = \sigma(\Wn, \permn)$ and $\filtrationhd = \sigma(\Wn, \permn, \Tn)$. 

\begin{assumption} \label{assumption:feasible-clts}
The following hold:
\begin{enumerate}[label={(\roman*)}, itemindent=.5pt, itemsep=.4pt]
\item $\Tn \sim \localdesigncond(\psisamp, \propselectestn(\psisamp))$ and $\Dn \sim \localdesigncond(\psiassign, \propfnestn(\psiassign))$, where $\propselectestn$ is $\sigma(\proprand, \psi_{1,1:n})$-measurable and $\propfnestn$ is $\sigma(\proprand, \psi_{2,1:n})$-measurable, with $\proprand \indep (\Wn, \permn, \eta^D, \eta^T)$ for the assignment and sampling randomness $\eta^D, \eta^T$ of Definition~\ref{defn:design-construction}.
Also, $\psisamp = f(\psiassign)$ for some measurable function $f(\cdot)$.
\item Propensities $\propselectestn(\psisamp) \in (\propbound, 1]$ and $\propfnestn(\psiassign) \in (\propbound, 1-\propbound)$, and each takes at most $\nlevels$ rational levels with denominators at most $\kboundn$, for deterministic sequences $\kboundn, \nlevels$ with $\kboundn \nlevels = o(n)$.
\item Convergence $\en[(\propselectestn(\psiione) - \propselect(\psiione))^2] \convp 0$ and $\en[(\propfnestn(\psiitwo) - \propfn(\psiitwo))^2] \convp 0$ for fixed $\propselect(\psisamp) \in (\propbound, 1]$ and $\propfn(\psiassign) \in (\propbound, 1-\propbound)$.
\item Moments $E[Y(d)^2] < \infty$ for $d \in \{0,1\}$.
\end{enumerate}
\end{assumption}

Assumption \ref{assumption:feasible-clts}(i) requires $\Tn$ and $\Dn$ to be locally randomized designs.
By Definition \ref{defn:local_randomization}, this requires the existence of group partitions $\groupset_n^T$ and $\groupset_n^D$ that satisfy the tight matching conditions $F(\groupset_n^T) = \op(1)$ and $F(\groupset_n^D) = \op(1)$.

We can constructively show the existence of such partitions using the algorithms developed in Section \ref{section:algorithms}.
In particular, at the sampling stage we apply this algorithm to the set of units $S_n = [n]$ with covariates $(\psi_{1,i})_{i=1}^n \sub \mr^{d_1}$ and propensity $\propselectestn(\cdot)$, producing $\groupset_n^T$ with per-stratum grid sizes $m_l \asymp (n_l/k_l)^{1/(d_1+1)}$.

At the assignment stage, we apply the algorithm to $S_n = \{i : T_i = 1\}$ with covariates $(\psi_{2,i})_{i \in S_n} \sub \mr^{d_2}$ and propensity $\propfnestn(\cdot)$, producing $\groupset_n^D$ with per-stratum grid sizes $m_l \asymp (n_l/k_l)^{1/(d_2+1)}$.
Here $d_1 = \dim(\psisamp)$ and $d_2 = \dim(\psiassign)$.
The next lemma gives conditions sufficient for tight matching under our algorithm.

\begin{lem}[Tight Matching] \label{lemma:tight-matching-exists}
Suppose $E[|\psij|_2^{\alpha_j}] < \infty$ for some $\alpha_j > \dim(\psij) + 1$, $j = 1, 2$.
Suppose also $\kboundn \nlevels = o(n^{1 - (\dim(\psij)+1)/\alpha_j})$ for $j = 1, 2$.
Then
\begin{equation} \label{equation:tight-matching-algorithmic}
F(\groupset_n^T) = \op(1), \quad \; F(\groupset_n^D) = \op(1).
\end{equation}
\end{lem}

\begin{proof}
First consider the sampling partition. Apply Theorem \ref{thm:algorithm-guarantee} item (2) to $S_n = [n]$ with $\psi = \psisamp$, $d = d_1$, $\alpha = \alpha_1$.
Assumption \ref{ass:matching}(a) holds WLOG and Assumption \ref{ass:matching}(b) $\kboundn \nlevels = o(n)$ is implied by the rate hypothesis.
Item (2) of the theorem gives $F(\groupset_n^T) = \op(n^{2/\alpha_1}(n/\kboundn \nlevels)^{-2/(d_1+1)})$.
This is $\opone$ by the rate hypothesis, hence $F(\groupset_n^T) = \op(1)$.

The same argument applies to the assignment partition with $S_n = \{i : T_i = 1\}$, $\psi = \psiassign$, $d = d_2$, $\alpha = \alpha_2$, giving $F(\groupset_n^D) = \op(n^{2/\alpha_2}(n/\kboundn \nlevels)^{-2/(d_2+1)}) = \op(1)$.
\end{proof}

The next lemma is the key result showing that local randomization achieves nonparametric control over the imbalances predictable by $\psi$.
In particular, it annihilates the imbalances in any square integrable function $b(\psi)$ of the covariates used for stratification.

\begin{lem}[Nonparametric Balance] \label{lemma:balance}
Suppose Assumption~\ref{assumption:feasible-clts} holds.
For any function $E[b(\psi_2)^2] < \infty$, then $\en[\Ti (\Di - \propfnestn(\psiitwo)) b(\psiitwo)] = \op(\negrootn)$.
If $E[b(\psi_1)^2] < \infty$, then $\en[(\Ti - \propselectestn(\psiione)) b(\psiione)] = \op(\negrootn)$.
\end{lem}
\begin{proof}
We begin with the first claim.
By Definition~\ref{defn:local_randomization}, the assignment partition satisfies $F(\groupsetn) = n\inv \sum_{\group} \sum_{i \in \group} |\psi_{2,i} - \bar\psi_{2,\group}|_2^2 = \op(1)$, where $\bar\psi_{2,\group} = |\group|\inv \sum_{i \in \group} \psi_{2,i}$.
By Lemma~\ref{lemma:op-rate} there is a deterministic $\mu_n \to \infty$ with $\mu_n F(\groupsetn) = \op(1)$.
Since $\kboundn \nlevels = o(n)$ by Assumption~\ref{assumption:feasible-clts}, the deterministic sequence $n\inv \kboundn \nlevels \to 0$, so Lemma~\ref{lemma:op-rate} also provides a deterministic $\mu_n' \to \infty$ with $\mu_n' n\inv \kboundn \nlevels \to 0$.
Set $c_n = (\mu_n \wedge \mu_n')^{1/2}$, so $c_n \to \infty$, $c_n^2 F(\groupsetn) = \op(1)$, and $c_n^2 n\inv \kboundn \nlevels = o(1)$.

By hypothesis $b(\psi_2) \in L_2(\psi_2)$, and we approximate it by Lipschitz functions.
Define $\mc L_n = \{b'(\psi_2) \in L_2(\psi_2): |b'|_{lip} \vee |b'|_{\infty} \leq c_n\}$ and let $b_n \in \mc L_n$ satisfy $|b_n - b|_{2, \psi_2}^2 \leq 2\inf_{b' \in \mc L_n} |b' - b|_{2, \psi_2}^2$.
We claim $|b_n - b|_{2, \psi_2}^2 \to 0$.
Let $\epsilon > 0$.
By Lemma~\ref{lemma:lipschitz-approximation} there is a function $h$ with $|h|_{lip} \vee |h|_{\infty} < \infty$ and $|h - b|_{2, \psi_2}^2 < \epsilon$.
Since $c_n \to \infty$, $h \in \mc L_n$ for all $n$ large enough, so $|b_n - b|_{2, \psi_2}^2 \leq 2|h - b|_{2, \psi_2}^2 < 2\epsilon$ for such $n$.
Since $\epsilon$ was arbitrary, this shows the claim.
Now expand
\begin{align*}
\en[\Ti (\Di - \propfnestn(\psiitwo)) b(\psiitwo)] &= \en[\Ti (\Di - \propfnestn(\psiitwo)) (b - b_n)(\psiitwo)] \\
&+ \en[\Ti (\Di - \propfnestn(\psiitwo)) b_n(\psiitwo)] \equiv A_n + B_n.
\end{align*}
We invoke Lemmas~\ref{lemma:design_properties}, \ref{lemma:stratified_wlln}, and \ref{lemma:variance_identity} with the generic $\filtrationgeneric$ taken to be $\filtrationcandpsi$.
The lemma hypotheses $\filtrationcandpsi \sub \filtrationgeneric$ and $\filtrationgeneric \indep \eta$ hold trivially and by Definition~\ref{defn:design-construction}.
By Assumption~\ref{assumption:feasible-clts}(i), $\propfnestn(\cdot)$ is $\sigma(\proprand, \psi_{2,1:n})$-measurable with $\proprand \indep (\Wn, \permn, \eta^T, \eta^D)$.
We subsume $\proprand$ in $\permn$, so that $\propfnestn$ is $\sigma(\psi_{2,1:n}, \permn) \subseteq \filtrationcandpsi$-measurable.
Hence $\propfnestn(\psiitwo)_{1:n}$, $\groupsetn$, $b(\psiitwo)_{1:n}$, and $b_n(\psiitwo)_{1:n}$ are $\filtrationcandpsi$-measurable.

Then by Lemma~\ref{lemma:design_properties}, $E[A_n | \filtrationcandpsi] = 0$.
By the variance bound in the proof of Lemma~\ref{lemma:stratified_wlln}, also $\var(\rootn A_n | \filtrationcandpsi) \leq 2 \en[(b - b_n)^2(\psiitwo)]$.
Since $E[\en[(b - b_n)^2(\psiitwo)]] = |b - b_n|_{2, \psi_2}^2 = o(1)$ by work above, conditional Markov (Lemma~\ref{lemma:conditional_markov}) gives $\var(\rootn A_n | \filtrationcandpsi) = \op(1)$, and then $A_n = \op(\negrootn)$, again by conditional Markov.

Next consider $B_n$.
By Lemma~\ref{lemma:design_properties}, $E[B_n | \filtrationcandpsi] = 0$.
By Lemma~\ref{lemma:variance_identity},
\begin{align*}
&\var(\rootn B_n | \filtrationcandpsi) \leq n\inv \sum_{\group} |\group|\inv \sum_{i,j \in \group} (b_n(\psi_{2,i}) - b_n(\psi_{2,j}))^2 + n\inv \kboundn \nlevels \max_{i=1}^n b_n(\psi_{2,i})^2 \\
&\leq c_n^2 \, n\inv \sum_{\group} |\group|\inv \sum_{i,j \in \group} |\psi_{2,i} - \psi_{2,j}|_2^2 + n\inv \kboundn \nlevels c_n^2 = 2 c_n^2 F(\groupsetn) + n\inv \kboundn \nlevels c_n^2.
\end{align*}
The second inequality uses $|b_n|_{lip} \vee |b_n|_{\infty} \leq c_n$, and the equality uses $\sum_{i,j \in \group} |\psi_{2,i} - \psi_{2,j}|_2^2 = 2|\group| \sum_{i \in \group} |\psi_{2,i} - \bar\psi_{2,\group}|_2^2$.
The first term is $\op(1)$ since $c_n^2 F(\groupsetn) = \op(1)$, and the second is $o(1)$ since $c_n^2 n\inv \kboundn \nlevels = o(1)$.
Then $\var(\rootn B_n | \filtrationcandpsi) = \op(1)$, so $B_n = \op(\negrootn)$, again by conditional Markov.
The second claim follows by the substitutions $\Dn \to \Tn$, $\Tn \to 1$, $\propfnestn \to \propselectestn$, $\psi_2 \to \psi_1$.
\end{proof}

We now develop the core CLT machinery of the paper.
The design propensities are allowed to be estimated from auxiliary data and the matching parameters $\kboundn, \nlevels$ are allowed to grow.
This form is general enough to accommodate the main result in Theorem~\ref{thm:clt_fixed}, the extension results in Section~\ref{subsection:asymptotics:extensions}, and the optimal sampling CLT in Section~\ref{section:feasible-optimal-sampling}.

\begin{thm}[CLT] \label{thm:assignment-clt-feasible}
Let Assumption~\ref{assumption:feasible-clts} hold, and suppose one of the following: (i) $\kboundn = O(1)$ and $E[\diff(W)^2] < \infty$, or (ii) $E[|\diff(W)|^{2+\gamma}] < \infty$ and $\kboundn = o(n^{\gamma/(2(\gamma+2))})$ for some $\gamma > 0$.
Let $\filtrationhd = \sigma(\Wn, \Tn, \permn)$ and $\filtrationht = \sigma(\Wn, \permn)$.
Then
\begin{align*}
\rootn \en[\Ti(\Di - \propfnestn(\psi_{2,i})) \diff(W_i)] | \filtrationhd &\convwprocess \normal(0, V_a), \; \, V_a = E[\propselect(\psisamp) \propfn(\psiassign)(1-\propfn(\psiassign)) \var(\diff | \psiassign)], \\
\rootn \en[(\Ti - \propselectestn(\psi_{1,i})) \diff(W_i)] | \filtrationht &\convwprocess \normal(0, V_s), \; \, V_s = E[\propselect(\psisamp)(1-\propselect(\psisamp)) \var(\diff | \psisamp)].
\end{align*}
\end{thm}

\begin{proof}
Since the auxiliary randomness $\proprand$ is independent of the data $\Wn$ and of all design randomness $(\permn, \eta^D, \eta^T)$ in Definition \ref{defn:design-construction}, adjoining $\proprand$ to $\permn$ preserves every independence relation among $(\Wn, \permn, \eta^D, \eta^T, \Tn)$. 
In particular, we have $\Wn \indep \permn$, $\eta^D \indep (\Wn, \permn, \Tn)$, and $\eta^T \indep (\Wn, \permn)$.
We adopt this convention throughout, so $\proprand$ is a component of $\permn$, the propensities $\propselectestn, \propfnestn$ are $\sigma(\psi_{2,1:n}, \permn) \subseteq \filtrationcandpsi$-measurable (using $\psi_{1,1:n} = f(\psi_{2,1:n})$ from Assumption~\ref{assumption:feasible-clts}(i)), and the $\sigma$-algebras $\filtrationhd = \sigma(\Wn, \Tn, \permn)$, $\filtrationcandpsi = \sigma(\psi_{2,1:n}, \permn, \Tn)$, and $\filtrationht = \sigma(\Wn, \permn)$ carry $\proprand$.

Note $E[\diff(W)^2] < \infty$ under either hypothesis, in case (ii) by Jensen's inequality.
Let $\ui = \diff(W_i) - E[\diff(W) | \psiitwo]$, so $E[\ui | \psiitwo] = 0$ and $E[\ui^2 | \psiitwo] = \var(\diff | \psiitwo)$. 
Under hypothesis (ii) also $E[|\ui|^{2+\gamma}] < \infty$: by the $c_p$ inequality $|\ui|^{2+\gamma} \leq 2^{1+\gamma}(|\diff(W_i)|^{2+\gamma} + |E[\diff(W) | \psiitwo]|^{2+\gamma})$, and $E[|E[\diff(W) | \psiitwo]|^{2+\gamma}] \leq E[|\diff(W)|^{2+\gamma}]$ by conditional Jensen and the tower law, giving $E[|\ui|^{2+\gamma}] \leq 2^{2+\gamma} E[|\diff(W)|^{2+\gamma}] < \infty$.
By Lemma~\ref{lemma:balance} applied to the design $\Dn \sim \localdesigncond(\psiassign, \propfnestn(\psiassign))$, $\rootn \en[\Ti(\Di - \propfnestn(\psi_{2,i})) E[\diff | \psiitwo]] = \op(1)$, since $E[E[\diff | \psiitwo]^2] \leq E[\diff^2] < \infty$ by conditional Jensen and tower law.
Hence we have $\rootn \en[\Ti(\Di - \propfnestn(\psi_{2,i})) \diff(W_i)] = X_n + \op(1)$ for $X_n \equiv \rootn \en[\Ti(\Di - \propfnestn(\psi_{2,i})) \ui]$.

Order the assignment groups $\group(j) \in \groupset_n$, $j = 1, \dots, M_n$.
Since $\propfnestn(\psi_2)$ takes at most $\nlevels$ rational levels $\wh\propfn_{n,l} = a_{n,l}/k_{n,l}$ for $l \in [\nlevels]$ with denominators $k_{n,l} \leq \kboundn$, interior groups at level $l$ have size $k_{n,l}$, with at most one remainder group per level of size $\leq k_{n,l} - 1$.
Define the basic terms $z_{j,n} = \negrootn \sum_{i \in \group(j)} (\Di - \propfnestn(\psi_{2,i})) \ui$, noting $\Ti = 1$ for $i \in \group(j)$, so $X_n = \sum_{j=1}^{M_n} z_{j,n}$.
Our plan is to show $X_n | \filtrationhd \convwprocess \normal(0, V_a)$ by justifying the conditions of Proposition~\ref{prop:conditional-clt}, in three steps: (1) the basic terms $(z_{j,n})_{j=1}^{M_n}$ are jointly independent conditional on $\filtrationhd$, (2) the variance process $\Sigma_n \equiv \sum_{j=1}^{M_n} E[z_{j,n}^2 | \filtrationhd]$ has convergence $\Sigma_n \convp V_a$, and (3) the conditional Lindeberg condition $\sum_{j=1}^{M_n} E[z_{j,n}^2 \one(|z_{j,n}| > \epsilon) | \filtrationhd] = \op(1)$ holds for each $\epsilon > 0$.
We begin with (1).

\medskip

\noindent (1) Note $\filtrationcandpsi = \sigma(\psi_{2, 1:n}, \permn, \Tn) \sub \filtrationhd$ and $\filtrationhd \indep \eta^D$ for assignment randomness $\eta^D$ of Definition~\ref{defn:design-construction}.
Observe $\propfnestn$ is $\filtrationcandpsi$-measurable since $\proprand$ is a component of $\permn$.
Each $z_{j,n}$ has the form $\phi((\Di)_{i \in \group(j)}, (\propfnestn(\psi_{2,i}), \ui)_{1:n}, \group(j))$ for the deterministic function $\phi(d, (p, u), g) = \negrootn \sum_{i \in g}(d_i - p_i) u_i$, with $(\propfnestn(\psi_{2,i}), \ui)_{1:n}$ being $\filtrationhd$-measurable.
Then by Lemma~\ref{lemma:group_aggregate_independence}, the $(z_{j,n})_{j=1}^{M_n}$ are jointly independent conditional on $\filtrationhd$.

\medskip

\noindent (2) By Lemma~\ref{lemma:design_properties}, $E[\Di | \filtrationhd] = \propfnestn(\psi_{2,i})$ for $i \in \{\Ti=1\}$, so $E[z_{j,n} | \filtrationhd] = 0$.
By $\filtrationhd$-measurability of $\ur, \ut$ and $E[\Di | \filtrationhd] = \propfnestn(\psi_{2,i})$, we have
\begin{equation*}
E[z_{j,n}^2 | \filtrationhd] = n\inv \sum_{r,t \in \group(j)} \ur \ut \cov(\Dr, \Dt | \filtrationhd).
\end{equation*}
By Lemma~\ref{lemma:design_properties}, $\var(\Di | \filtrationhd) = \propfnestn(\psi_{2,i})(1-\propfnestn(\psi_{2,i}))$ for $i \in \{\Ti=1\}$.
For $r \neq t$ in an interior $\group(j)$ at level $l$, the covariance $\cov(\Dr, \Dt | \filtrationhd) = -\wh\propfn_{n,l}(1-\wh\propfn_{n,l})/(k_{n,l}-1)$ with $\wh\propfn_{n,l} = \propfnestn(\psi_{2,i})$ constant within the group.
For the remainder group at level $l$, $(\Di)_{i \in \group}$ is conditionally iid Bernoulli$(\wh\propfn_{n,l})$ by Definition~\ref{defn:design-construction}, so $\cov(\Dr, \Dt | \filtrationhd) = 0$.
Let $l(j)$ denote the assignment propensity level of group $\group(j)$.
Summing over $j$, we obtain
\begin{equation*}
\Sigma_n = \en[\Ti \propfnestn(\psi_{2,i})(1-\propfnestn(\psi_{2,i})) \ui^2] - n\inv \sum_{j: \text{interior}} \frac{\wh\propfn_{n,l(j)}(1-\wh\propfn_{n,l(j)})}{k_{n,l(j)}-1} \sum_{r \neq t \in \group(j)} \ur \ut \equiv T_{n2} + T_{n1}.
\end{equation*}
The diagonal term $T_{n2}$ aggregates over all groups using $\sum_j \wh\propfn_{n,l(j)}(1-\wh\propfn_{n,l(j)}) \sum_{i \in \group(j)} \ui^2 = \sum_i \Ti \propfnestn(\psi_{2,i})(1-\propfnestn(\psi_{2,i})) \ui^2$, since $\groupset_n$ partitions $\{i : \Ti = 1\}$ and $\propfnestn(\psiitwo)$ is constant within each group.
The off-diagonal $T_{n1}$ vanishes on the remainder groups by the iid Bernoulli structure noted above.
We will show $T_{n2} \convp V_a$.
To do so, write $\rho_n(\psiitwo) = \propfnestn(\psiitwo)(1-\propfnestn(\psiitwo))$ and $\rho(\psiassign) = \propfn(\psiassign)(1-\propfn(\psiassign))$, and split $T_{n2} = \en[\Ti (\rho_n - \rho)(\psiitwo) \ui^2] + \en[\Ti \rho(\psiitwo) \ui^2] \equiv T_{n2}^{(1)} + T_{n2}^{(2)}$.

To analyze these terms, we repeatedly use the following fact. 
If $(b_{i,n})_{i=1}^n$ are uniformly bounded, $\sup_{i,n} |b_{i,n}| \leq M$, with $\en[b_{i,n}^2] = \op(1)$, and $(c_i)_{i=1}^n$ are iid with $E[|c_1|] < \infty$, then $\en[|b_{i,n} c_i|] = \op(1)$.
Indeed, for any $K > 0$, $\en[|b_{i,n} c_i|] \leq K \en[|b_{i,n}|] + M \en[|c_i| \one(|c_i| > K)]$.
The first term is $\op(1)$ since $\en[|b_{i,n}|] \leq \en[b_{i,n}^2]^{1/2}$ by Cauchy--Schwarz, the second converges in probability to $M E[|c_1| \one(|c_1| > K)]$ by the iid weak law, and this limit is arbitrarily small for $K$ large.

For $T_{n2}^{(1)}$, the identity $|x(1-x) - y(1-y)| = |x-y| |1-x-y| \leq |x-y|$ for $x, y \in [0, 1]$ gives $|\rho_n - \rho| \leq |\propfnestn - \propfn|$ pointwise, hence $\en[(\rho_n - \rho)^2(\psiitwo)] \leq \en[(\propfnestn - \propfn)^2(\psiitwo)] = \op(1)$ by Assumption~\ref{assumption:feasible-clts}.
So $|T_{n2}^{(1)}| = |\en[\Ti (\rho_n - \rho)(\psiitwo) \ui^2]| \leq \en[|\rho_n - \rho|(\psiitwo) \ui^2]$, using $\Ti \leq 1$, which is $\op(1)$ by the fact above with $b_{i,n} = (\rho_n - \rho)(\psiitwo)$, bounded by $1/4$, and $c_i = \ui^2$, iid with $E[\ui^2] \leq \var(\diff) < \infty$.

For $T_{n2}^{(2)}$, let $h(W) = \rho(\psiassign) u^2$, which has $E[|\rho(\psiassign) u^2|] \leq E[u^2] < \infty$.
We claim that $\en[\Ti h(W_i)] = E[\propselect(\psisamp) h(W)] + \op(1)$.
To see this, we can decompose $\en[\Ti h_i] = \en[(\Ti - \propselectestn(\psiione)) h_i] + \en[(\propselectestn - \propselect)(\psiione) h_i] + \en[\propselect(\psiione) h_i]$.
The first term is $\op(1)$ by Lemma~\ref{lemma:stratified_wlln} part (2) applied to the sampling design $\Tn \sim \localdesigncond(\psisamp, \propselectestn(\psisamp))$, with the constant sequence $h_n = h$.
The second term is bounded in absolute value by $\en[|\propselectestn - \propselect|(\psiione) |h_i|]$, which is $\op(1)$ by the fact above with $b_{i,n} = (\propselectestn - \propselect)(\psiione)$, bounded by $1$, and $c_i = h_i$, iid with $E[|h|] < \infty$.
The third term equals $E[\propselect(\psisamp) h(W)] + \op(1)$ by the iid WLLN, using that $|\propselect h| \leq |h|$.
Then since $h = \rho(\psiassign) u^2$, we have $T_{n2}^{(2)} = \en[\Ti \rho(\psiitwo) \ui^2] = E[\propselect(\psisamp) \propfn(\psiassign)(1-\propfn(\psiassign)) u^2] + \op(1)$.
This implies $T_{n2} \convp V_a$, since by tower law and $\psiassign$-measurability of $\propselect(\psisamp)$ and since $E[u^2 | \psiassign] = \var(a | \psiassign)$, 
\begin{equation*}
E[\propselect(\psisamp) \propfn(\psiassign)(1-\propfn(\psiassign)) u^2] = E[\propselect(\psisamp) \propfn(\psiassign)(1-\propfn(\psiassign)) \var(\diff | \psiassign)] = V_a.
\end{equation*}

Next, we apply Lemma~\ref{lemma:lln} to show $T_{n1} = \op(1)$, with conditioning $\sigma$-algebra $\filtrationcandpsi$. 
Note that $\groupsetn$ is $\filtrationcandpsi$ measurable, recalling that $\filtrationcandpsi = \sigma(\psi_{2,1:n}, \permn, \Tn)$ for the assignment design (Definition \ref{defn:design-construction}).
Define $v_j = \frac{\wh\propfn_{n,l(j)}(1-\wh\propfn_{n,l(j)})}{k_{n,l(j)} - 1} \sum_{r \neq t \in \group(j)} \ur \ut$ for interior $\group(j)$ and $v_j = 0$ for the remainder, so $T_{n1} = -n\inv \sum_{j=1}^{M_n} v_j$.

Lemma~\ref{lemma:lln} requires joint conditional independence of $(v_j)_{j=1}^{M_n}$ given $\filtrationcandpsi$.
To prove this, we apply Lemma~\ref{lemma:partitions} with $h_i = \psi_{2,i}$ and $\kappa = (\permn, \Tn)$, so that in the notation of the lemma $\filtrationcandpsi = \sigma(\psi_{2,1:n}, \permn, \Tn) = \sigma(h_{1:n}, \kappa)$.
The conditional independence required is $\kappa \indep \Wn | h_{1:n}$, equivalent to $(\permn, \Tn) \indep \Wn | \psi_{2,1:n}$.
To see that this holds, chain two reductions.
First, $\Wn \indep \permn$ unconditionally gives $\Wn \indep \permn | \psi_{2,1:n}$ via $(A, B) \indep C \Rightarrow A \indep C | B$ (with $A = \Wn$, $B = \psi_{2,1:n} \in \sigma(\Wn)$, $C = \permn$).
Second, $\Tn \indep \Wn | (\psi_{2,1:n}, \permn)$ was established in the argument above.
Composing, $\Wn | (\psi_{2,1:n}, \permn, \Tn) \eqdist \Wn | (\psi_{2,1:n}, \permn) \eqdist \Wn | \psi_{2,1:n}$, which is the required $(\permn, \Tn) \indep \Wn | \psi_{2,1:n}$.
Hence Lemma~\ref{lemma:partitions} gives $(v_j)_{j=1}^{M_n}$ jointly conditionally independent given $\filtrationcandpsi$.
Also, note the partition $\groupsetn$ is $\filtrationcandpsi$-measurable by construction.

Lemma~\ref{lemma:lln} also requires the conditional mean-zero property $E[v_j | \filtrationcandpsi] = 0$ for each $j = 1, \dots, M_n$.
The prefactor $\wh\propfn_{n,l(j)}(1-\wh\propfn_{n,l(j)})/(k_{n,l(j)}-1)$ is $\filtrationcandpsi$-measurable and bounded by $1/4$ since $\wh\propfn_{n,l}(1-\wh\propfn_{n,l}) \leq 1/4$ and $k_{n,l} - 1 \geq 1$.
We claim that $E[\ur \ut | \filtrationcandpsi] = 0$ for $r \neq t$, reducing the conditioning as follows:
\begin{align*}
E[\ur \ut | \filtrationcandpsi] &= E[\ur \ut | \psi_{2,1:n}, \permn, \Tn] = E[\ur \ut | \psi_{2,1:n}, \permn] = E[\ur \ut | \psi_{2,1:n}] \\
&= E[\ur \ut | \psi_{2,r}, \psi_{2,t}].
\end{align*}
The first equality is by definition of $\filtrationcandpsi$.
For the second equality, by Definition~\ref{defn:design-construction} $\Tn$ is a function of $(\psi_{1,1:n}, \permn, \eta^T)$ for sampling randomness $\eta^T \indep (\Wn, \permn)$.
Since $\psi_{1,1:n} = f(\psi_{2,1:n})$ is $\sigma(\psi_{2,1:n})$-measurable, $\Tn \indep \Wn | (\psi_{2,1:n}, \permn)$.
In particular, $(\ur, \ut) \indep \Tn | (\psi_{2,1:n}, \permn)$.
Recall the basic fact that $(A, B) \indep C \implies A \indep C | B$.
The third equality follows from this fact noting that $\Wn \indep \permn$, hence $(\ur, \ut) \indep \permn | \psi_{2,1:n}$.
The fourth equality uses iid sampling: $(\ur, \ut, \psi_{2,r}, \psi_{2,t}) \indep \psi_{2,-(r,t)}$, hence $(\ur, \ut) \indep \psi_{2,-(r,t)} | (\psi_{2,r}, \psi_{2,t})$, again using the basic fact.
Then by tower law, we can further reduce
\[
E[\ur \ut | \psi_{2,r}, \psi_{2,t}] = E[\ur E[\ut | \ur, \psi_{2,r}, \psi_{2,t}] | \psi_{2,r}, \psi_{2,t}] = E[\ur E[\ut | \psi_{2,t}] | \psi_{2,r}, \psi_{2,t}] = 0.
\]
The second equality uses $\ut \indep (\ur, \psi_{2,r}) | \psi_{2,t}$ from iid sampling and the basic fact above.
The third uses $E[\ut | \psi_{2,t}] = 0$ by definition.
Then we have shown $E[v_j | \filtrationcandpsi] = 0$.

Finally, Lemma~\ref{lemma:lln} requires $n\inv \sum_{j=1}^{M_n} E[|v_j| \one(|v_j| > c_n) | \filtrationcandpsi] = \op(1)$ for some sequence $c_n$ with $c_n = \omega(1)$ and $c_n = o(n^{1/2})$.
Under assumption (i), set $c_n = n^{1/4}$.
Under assumption (ii), set $c_n = a_n \kboundn^{1 + 2/\gamma}$, where $a_n \to \infty$ is a deterministic sequence chosen such that $c_n = o(n^{1/2})$. 
Such an $a_n$ exists by Lemma~\ref{lemma:op-rate} applied to the deterministic null sequence $\kboundn^{1 + 2/\gamma} n^{-1/2}$, which is $o(1)$ since the case~(ii) hypothesis $\kboundn = o(n^{\gamma/(2(\gamma+2))})$ is equivalent to $\kboundn^{1 + 2/\gamma} = o(n^{1/2})$.
Then under either assumption, we have a sequence $c_n = \omega(1)$ and $c_n = o(n^{1/2})$.

On interior group $\group(j)$ at level $l$, $\sum_{r \neq t \in \group(j)} \ur \ut = \bigl(\sum_{r \in \group(j)} \ur\bigr)^2 - \sum_{r \in \group(j)} \ur^2$ and Cauchy-Schwarz give $|v_j| \leq \wh\propfn_{n,l}(1-\wh\propfn_{n,l}) \sum_{r \in \group(j)} \ur^2 \leq (1/4) \sum_{r \in \group(j)} \ur^2$, using the bound $\max_p p(1-p) \leq 1/4$.
Recall interior groups have size $|\group(j)| \leq \kboundn$.
Note also the fact that for any $(a_k)_{k=1}^m$ with $a_k \geq 0$ that $\sum_k a_k \one(\sum_k a_k > c) \leq m \sum_k a_k \one(a_k > c/m)$.
Then 
\begin{align*}
E\bigl[|v_j| \one(|v_j| > c_n) \bigm| \filtrationcandpsi\bigr]
  &\leq (1/4) E\bigl[\sum_{r \in \group(j)} \ur^2 \one\bigl(\sum_{r \in \group(j)} \ur^2 > 4 c_n\bigr) \bigm| \filtrationcandpsi\bigr] \\
  &\leq (\kboundn/4) \sum_{r \in \group(j)} E\bigl[\ur^2 \one\bigl(\ur^2 > 4 c_n/\kboundn\bigr) \bigm| \filtrationcandpsi\bigr].
\end{align*}
Summing over $j$, since interior groups partition a subset of $\{i : \Ti = 1\}$ this is
\begin{equation*}
\frac{1}{n}\sum_{j} E\bigl[|v_j| \one(|v_j| > c_n) \bigm| \filtrationcandpsi\bigr] \leq \frac{\kboundn}{4 n} \sum_{i=1}^n \Ti E\bigl[\ui^2 \one(\ui^2 > 4 c_n/\kboundn) \bigm| \filtrationcandpsi\bigr].
\end{equation*}
Taking expectation and using $\Ti \leq 1$ together with iid sampling and tower law, the right side has expectation bounded by $(\kboundn/4) E\bigl[\ui^2 \one(\ui^2 > 4 c_n/\kboundn)\bigr]$.
In case (i), $\kboundn \leq C$ for a constant $C$, so this is at most $(C/4) E\bigl[\ui^2 \one(\ui^2 > 4 c_n/C)\bigr] \to 0$ by dominated convergence, since $4 c_n/C \to \infty$ and $E[\ui^2] \leq E[\diff^2] < \infty$.
In case (ii), since $u^2 \one(u^2 > t) \leq |u|^{2+\gamma} t^{-\gamma/2}$ for $t > 0$, the bound is at most $\tfrac{1}{4^{1+\gamma/2}} E[|\ui|^{2+\gamma}] \kboundn^{1+\gamma/2} c_n^{-\gamma/2} = \tfrac{1}{4^{1+\gamma/2}} E[|\ui|^{2+\gamma}] a_n^{-\gamma/2} \to 0$, using $c_n = a_n \kboundn^{1+2/\gamma}$.
Then by Markov inequality, $n\inv \sum_j E[|v_j| \one(|v_j| > c_n) | \filtrationcandpsi] = \op(1)$.
Lemma~\ref{lemma:lln} with this $c_n$, which satisfies $c_n = \omega(1)$ and $c_n = o(n^{1/2})$, gives $T_{n1} = \op(1)$.
Then we have shown the claim $\Sigma_n = T_{n1} + T_{n2} \convp V_a$.

\noindent (3) 
Finally, we verify the conditional Lindeberg condition $\sum_{j=1}^{M_n} E[z_{j,n}^2 \one(|z_{j,n}| > \epsilon) | \filtrationhd] = \op(1)$ holds for each $\epsilon > 0$.
Note that the bound $|\Di - \propfnestn(\psi_{2,i})| \leq 1$ and Cauchy-Schwarz imply $z_{j,n}^2 \leq \kboundn n\inv \sum_{r \in \group(j)} \ur^2$, using $|\group(j)| \leq \kboundn$.
Hence
\begin{align*}
z_{j,n}^2 \one(z_{j,n}^2 > \epsilon^2)
  &\leq \kboundn n\inv \sum_{r \in \group(j)} \ur^2 \one\bigl(\sum_{r \in \group(j)} \ur^2 > n\epsilon^2/\kboundn\bigr) \leq \kboundn^2 n\inv \sum_{r \in \group(j)} \ur^2 \one(\ur^2 > n\epsilon^2/\kboundn^2).
\end{align*}
The second inequality applies the indicator function fact above to $\sum_{r \in \group(j)} \ur^2$ as a sum of $m \leq \kboundn$ non-negative terms.
Summing over $j$, since $\groupsetn$ partitions $\{i : \Ti = 1\}$ we have
\begin{equation*}
\sum_{j=1}^{M_n} E\bigl[z_{j,n}^2 \one(z_{j,n}^2 > \epsilon^2) \bigm| \filtrationhd\bigr] \leq \frac{\kboundn^2}{n} \sum_{i=1}^n \Ti E\bigl[\ui^2 \one(\ui^2 > n\epsilon^2/\kboundn^2) \bigm| \filtrationhd\bigr].
\end{equation*}
Taking expectation of both sides above and using $\Ti \leq 1$ together with iid sampling and tower law, the right side has expectation bounded by $\kboundn^2 E\bigl[\ui^2 \one(\ui^2 > n\epsilon^2/\kboundn^2)\bigr]$.

Under assumption (i), $\kboundn \leq C$ for a constant $C$, so this is at most $C^2 E\bigl[\ui^2 \one(\ui^2 > n\epsilon^2/C^2)\bigr] \to 0$ by dominated convergence, since $n\epsilon^2/C^2 \to \infty$ and $E[\ui^2] \leq E[\diff^2] < \infty$.
Under assumption (ii), since $u^2 \one(u^2 > t) \leq |u|^{2+\gamma} t^{-\gamma/2}$ for $t > 0$, this is at most $\epsilon^{-\gamma} E[|\ui|^{2+\gamma}] \kboundn^{2+\gamma} n^{-\gamma/2} \to 0$, since $\kboundn^{2+\gamma} = o(n^{\gamma/2})$ by raising the assumed Assumption~(ii) rate $\kboundn = o(n^{\gamma/(2(\gamma+2))})$ to the power $2+\gamma$.
Then by Markov inequality, we have $\sum_{j=1}^{M_n} E[z_{j,n}^2 \one(|z_{j,n}| > \epsilon) | \filtrationhd] = \op(1)$, showing the conditional Lindeberg condition.

Steps (1)-(3) verify, with conditioning $\sigma$-algebra $\filtrationhd$, the conditional independence, variance, and Lindeberg hypotheses of Proposition~\ref{prop:conditional-clt} for the array $(z_{j,n})_{j=1}^{M_n}$, extended by $z_{j,n} = 0$ for $M_n < j \leq n$.
The proposition gives $E[e^{i t X_n} | \filtrationhd] = e^{-t^2 V_a/2} + \op(1)$, with $V_a$ a constant, hence trivially $\filtrationhd$-measurable; since $\proprand$ is a component of $\permn \sub \filtrationhd$, this convergence is conditional on the auxiliary randomness $\proprand$.
Combined with $\rootn \en[\Ti(\Di - \propfnestn(\psi_{2,i})) \diff(W_i)] = X_n + \op(1)$, this gives the assignment CLT.

For the sampling CLT, apply the substitution rule of Section~\ref{proofs:asymptotics}: $\Dn \to \Tn$, $\Tn \to 1$, $\propfnestn \to \propselectestn$, $\propfn \to \propselect$, $\psiassign \to \psisamp$.
Under this substitution $\filtrationhd = \sigma(\Wn, \Tn, \permn)$ becomes $\filtrationht = \sigma(\Wn, \permn)$, and every step of the argument above carries through with sampling-design analogs; in particular, under $\Tn \to 1$ the term $T_{n2}$ becomes the average $\en[\propselectestn(\psi_{1,i})(1-\propselectestn(\psi_{1,i})) \ui^2]$, treated by the same split into $\propselectestn - \propselect$, handled by the truncation fact above, and the iid weak law.
This yields $\rootn \en[(\Ti - \propselectestn(\psi_{1,i})) \diff(W_i)] | \filtrationht \convwprocess \normal(0, V_s)$ with $V_s$ obtained from $V_a$ under the same substitution.
\end{proof}

We now use Theorem~\ref{thm:assignment-clt-feasible} together with the asymptotic independence Lemma~\ref{lemma:charfn} to prove generalizations of Theorem~\ref{thm:clt_two_psi} and Corollary~\ref{cor:clt4sate} that allow both $\propselect(\psisamp)$ and $\propfn(\psiassign)$ to take finitely many rational levels.

\begin{thm}[CLT, ATE] \label{thm:clt-two-psi-extended}
Suppose $\Tn \sim \localdesigncond(\psisamp, \propselect(\psisamp))$ and $\Dn \sim \localdesigncond(\psiassign, \propfn(\psiassign))$, with $\psisamp = f(\psiassign)$ for some measurable $f$.
Then $\sqrt{\nsampled}(\est - \ate) \convwprocess \normal(0, V_{\ate})$,
\begin{equation} \label{equation:variance:two-psi-extended}
V_{\ate} = E[\propselect(\psisamp)] \biggl(\var(\te) + E\bigl[\tfrac{1-\propselect(\psisamp)}{\propselect(\psisamp)} \var(\te | \psisamp)\bigr] + E\bigl[\tfrac{\var(\ylevel | \psiassign)}{\propselect(\psisamp) \propfn(\psiassign)(1-\propfn(\psiassign))}\bigr]\biggr).
\end{equation}
\end{thm}

\begin{proof}
This design is a special case of Assumption~\ref{assumption:feasible-clts} obtained by taking $\propselectestn \equiv \propselect$ and $\propfnestn \equiv \propfn$, so its condition (iii) holds exactly and $\kboundn, \nlevels = O(1)$.
Condition (ii) holds with $\propbound > 0$ chosen so the finitely many rational levels of $\propselect, \propfn$ lie in $(\propbound, 1-\propbound)$.
Case (i) of Theorem~\ref{thm:assignment-clt-feasible} then applies, which requires only $E[\diff(W)^2] < \infty$ of each integrand below.

Above, we noted the decomposition $\est - \ate = A_n + B_n + C_n$ for
\begin{align*}
A_n = \en[\te_i - \ate], \quad B_n = \en[\te_i (\Ti - \propselect(\psiione))/\propselect(\psiione)], \quad C_n = \en[\Ti \hti \yleveli/\propselect(\psiione)].
\end{align*}
Where $\hti = (\Di - \propfn(\psiitwo))/[\propfn(\psiitwo)(1-\propfn(\psiitwo))]$ and $\yleveli = (1-\propfn(\psiitwo))Y_i(1) + \propfn(\psiitwo) Y_i(0)$.
Lemma~\ref{lemma:charfn} requires an increasing chain of $\sigma$-algebras $\filtration_{n,0} \sub \filtration_{n,1} \sub \filtration_{n,2}$.
In this context, we set $\filtration_{n,0} = \sigma(\emptyset)$, $\filtration_{n,1} = \filtrationht$, and $\filtration_{n,2} = \filtrationhd$.
We show conditional CLTs for $\rootn A_n, \rootn B_n, \rootn C_n$ at the three levels of the chain, then combine via the lemma.

First consider $\rootn A_n$.
By Young's inequality, $\var(\te) \leq 2 E[Y(1)^2 + Y(0)^2] < \infty$, so the iid CLT gives $\rootn A_n \convwprocess \normal(0, V_A)$ with $V_A = \var(\te)$.
Since $\filtration_{n,0}$ is trivial, this is equivalently $\rootn A_n | \filtration_{n,0} \convwprocess \normal(0, V_A)$.

Next consider $\rootn B_n$.
Apply the sampling CLT of Theorem~\ref{thm:assignment-clt-feasible} with $\diff(W) = \te/\propselect(\psisamp)$, noting $E[\diff(W)^2] \leq \propbound^{-2} E[\te^2] \leq 2 \propbound^{-2} E[Y(1)^2 + Y(0)^2] < \infty$ by the propensity bound and Young's inequality.
Then by that theorem $\rootn B_n | \filtration_{n,1} \convwprocess \normal(0, V_B)$ with
\begin{align*}
V_B = E\bigl[\propselect(\psisamp)(1-\propselect(\psisamp)) \var(\te/\propselect(\psisamp) | \psisamp)\bigr] = E\bigl[\tfrac{1-\propselect(\psisamp)}{\propselect(\psisamp)} \var(\te | \psisamp)\bigr].
\end{align*}

Finally consider $\rootn C_n$.
Apply the assignment part of Theorem~\ref{thm:assignment-clt-feasible} with $\diff(W) = \yleveli/[\propselect(\psiione) \propfn(\psiitwo)(1-\propfn(\psiitwo))]$.
Note $\propselect(\psiione)$ and $\propfn(\psiitwo)$ are both $\psiitwo$-measurable, the latter directly and the former since $\psisamp = f(\psiassign)$.
By the propensity bounds, $|\diff(W)| \leq [\propbound^3]\inv |\yleveli|$, so $E[\diff(W)^2] \leq \propbound^{-6} E[\yleveli^2] \lesssim E[Y(1)^2 + Y(0)^2] < \infty$.
Then the result guarantees $\rootn C_n | \filtration_{n,2} \convwprocess \normal(0, V_C)$ with
\begin{align*}
V_C = E\bigl[\propselect(\psisamp) \propfn(\psiassign)(1-\propfn(\psiassign)) \var(\diff | \psiassign)\bigr] = E\bigl[\tfrac{\var(\ylevel | \psiassign)}{\propselect(\psisamp) \propfn(\psiassign)(1-\propfn(\psiassign))}\bigr].
\end{align*}

By construction $\rootn A_n \in \sigma(\Wn) \sub \filtration_{n,1}$ and $\rootn B_n \in \sigma(\Wn, \Tn) \sub \filtration_{n,2}$, and $\rootn C_n$ is measurable in $\sigma(\Wn, \permn, \Tn, \Dn) \supset \filtration_{n,2}$.
Lemma~\ref{lemma:charfn} applied with $X_{n,k} = \rootn (A_n, B_n, C_n)$ at $k=1,2,3$ gives that for each $t \in \mr$
\begin{equation*}
E\bigl[e^{i t \rootn(A_n + B_n + C_n)}\bigr] = e^{-t^2(V_A + V_B + V_C)/2} + \op(1).
\end{equation*}

Finally, $\nsampled/n = \en[\Ti] \convp E[\propselect(\psisamp)]$ by Lemma~\ref{lemma:stratified_wlln} part (3) applied to the sampling design with $h = 1$, so $(\nsampled/n)\half \convp E[\propselect(\psisamp)]\half$ by continuous mapping.
By Slutsky, $\sqrt{\nsampled}(\est - \ate) = (\nsampled/n)\half \rootn(\est - \ate) \convwprocess \normal(0, E[\propselect(\psisamp)] (V_A + V_B + V_C)) = \normal(0, V_{\ate})$.
\end{proof}

\begin{cor}[CLT, SATE] \label{cor:clt-sate-extended}
In the design of Theorem~\ref{thm:clt-two-psi-extended}, $\sqrt{\nsampled}(\est - \sate) \, | \, \filtrationht \convwprocess \normal(0, V_{\sate})$, with
\begin{equation} \label{equation:variance:sate-extended}
V_{\sate} = E[\propselect(\psisamp)] \biggl(E\bigl[\tfrac{1-\propselect(\psisamp)}{\propselect(\psisamp)} \var(\te | \psisamp)\bigr] + E\bigl[\tfrac{\var(\ylevel | \psiassign)}{\propselect(\psisamp) \propfn(\psiassign)(1-\propfn(\psiassign))}\bigr]\biggr).
\end{equation}
In particular, $\sqrt{\nsampled}(\est - \sate) \convwprocess \normal(0, V_{\sate})$ marginally.
\end{cor}

\begin{proof}
Since $\sate = \en[\te_i] = A_n + \ate$, $\est - \sate = (\est - \ate) - A_n = B_n + C_n$.
Using the chain $\filtration_{n,0} \sub \filtration_{n,1} \sub \filtration_{n,2}$ from the proof of Theorem~\ref{thm:clt-two-psi-extended}, that proof establishes the conditional CLTs $\rootn B_n | \filtration_{n,1} \convwprocess \normal(0, V_B)$ and $\rootn C_n | \filtration_{n,2} \convwprocess \normal(0, V_C)$.
Lemma~\ref{lemma:charfn} applied with $X_{n,k} = \rootn(B_n, C_n)$ at $k=1,2$ and base $\filtrationht$ gives
\begin{equation*}
\rootn(B_n + C_n) \bigm| \filtrationht \convwprocess \normal(0, V_B + V_C).
\end{equation*}

By Lemma~\ref{lemma:stratified_wlln} part (3) applied to the sampling design with $h = 1$, $\nsampled/n \convp E[\propselect(\psisamp)]$, so $(\nsampled/n)^{1/2} \convp E[\propselect(\psisamp)]^{1/2}$ by continuous mapping.
Then by Slutsky's theorem $\sqrt{\nsampled}(\est - \sate) = (\nsampled/n)^{1/2} \cdot \rootn(B_n + C_n) \bigm| \filtrationht \convwprocess \normal(0, V_{\sate})$,
with $V_{\sate} = E[\propselect(\psisamp)](V_B + V_C)$, establishing the conditional CLT.
The marginal statement follows by tower and bounded convergence: $E[e^{it\sqrt{\nsampled}(\est-\sate)}] = E\bigl[E[e^{it\sqrt{\nsampled}(\est-\sate)} | \filtrationht]\bigr] \to e^{-t^2 V_{\sate}/2}$.
\end{proof}

The next lemma shows that when $\psisamp = \psiassign = \psi$, the variance $V_{\ate}$ of \eqref{equation:variance:two-psi-extended} reduces by algebraic manipulation to the variance $\varlocal$ of Theorem~\ref{thm:clt_fixed}.

\begin{lem}[Variance equivalence] \label{lemma:variance-equivalence}
Suppose $\psisamp = \psiassign = \psi$. Then
\begin{equation*}
V_{\ate} = E[\propselect(\psi)] \biggl(\var(\catefn(\psi)) + E\biggl[\frac{1}{\propselect(\psi)} \biggl(\frac{\hk_1(\psi)}{\propfn(\psi)} + \frac{\hk_0(\psi)}{1-\propfn(\psi)}\biggr)\biggr]\biggr).
\end{equation*}
\end{lem}

\begin{proof}
Write $\propfn = \propfn(\psi)$, $\propselect = \propselect(\psi)$, $\hk_d = \hk_d(\psi)$ for brevity.
We first show the pointwise identity
\begin{equation} \label{equation:ylevel-variance-identity}
\frac{\var(\ylevel | \psi)}{\propfn(1-\propfn)} = \frac{\hk_1}{\propfn} + \frac{\hk_0}{1-\propfn} - \var(\te | \psi).
\end{equation}
Expanding $\ylevel = (1-\propfn) Y(1) + \propfn Y(0)$ and $\te = Y(1) - Y(0)$ gives
\begin{align*}
\var(\ylevel | \psi) &= (1-\propfn)^2 \hk_1 + \propfn^2 \hk_0 + 2\propfn(1-\propfn) \cov(Y(1), Y(0) | \psi), \\
\var(\te | \psi) &= \hk_1 + \hk_0 - 2 \cov(Y(1), Y(0) | \psi).
\end{align*}
Summing, we have $\var(\ylevel | \psi) + \propfn(1-\propfn) \var(\te | \psi) = (1-\propfn) \hk_1 + \propfn \hk_0$.
This follows since the $\cov$ terms cancel and $(1-\propfn)^2 + \propfn(1-\propfn) = 1-\propfn$, $\propfn^2 + \propfn(1-\propfn) = \propfn$.
Dividing by $\propfn(1-\propfn)$ gives \eqref{equation:ylevel-variance-identity}.
Setting $\psisamp = \psiassign = \psi$ in \eqref{equation:variance:two-psi-extended} and applying \eqref{equation:ylevel-variance-identity} to the 3rd term,
\begin{equation*}
E\biggl[\frac{\var(\ylevel | \psi)}{\propselect \propfn(1-\propfn)}\biggr] = E\biggl[\frac{1}{\propselect} \biggl(\frac{\hk_1}{\propfn} + \frac{\hk_0}{1-\propfn}\biggr)\biggr] - E\biggl[\frac{\var(\te | \psi)}{\propselect}\biggr].
\end{equation*}
Combining the $\var(\te | \psi)$ contribution with the middle term of \eqref{equation:variance:two-psi-extended},
\begin{equation*}
E\biggl[\frac{1-\propselect}{\propselect} \var(\te | \psi)\biggr] - E\biggl[\frac{\var(\te | \psi)}{\propselect}\biggr] = -E[\var(\te | \psi)].
\end{equation*}
By the law of total variance, $\var(\te) - E[\var(\te | \psi)] = \var(E[\te | \psi]) = \var(\catefn(\psi))$.
Summing the three terms inside the parentheses of \eqref{equation:variance:two-psi-extended},
\begin{equation*}
\var(\te) - E[\var(\te | \psi)] + E\biggl[\frac{1}{\propselect} \biggl(\frac{\hk_1}{\propfn} + \frac{\hk_0}{1-\propfn}\biggr)\biggr] = \var(\catefn(\psi)) + E\biggl[\frac{1}{\propselect} \biggl(\frac{\hk_1}{\propfn} + \frac{\hk_0}{1-\propfn}\biggr)\biggr].
\end{equation*}
Multiplying by $E[\propselect]$ gives the claim.
\end{proof}

\begin{proof}[Proof of Theorem \ref{thm:clt_fixed}]
We invoke Theorem \ref{thm:clt-two-psi-extended} with $\psisamp = \psiassign = \psi$, so the constraint $\psisamp = f(\psiassign)$ holds with $f$ the identity.
Theorem \ref{thm:clt-two-psi-extended} then gives $\sqrt{\nsampled}(\est - \ate) \convwprocess \normal(0, V_{\ate})$ with $V_{\ate}$ as in \eqref{equation:variance:two-psi-extended}.
By Lemma \ref{lemma:variance-equivalence}, $V_{\ate} = \varlocal$ of \eqref{equation:asymptotic-theorem}, completing the proof.
\end{proof}

\begin{proof}[Proof of Theorem \ref{prop:double-adjustment}]
Let $Z_i = (\psii, Y_i(0), Y_i(1), \Ti, \Di)$, which are iid under the assumed design, and let $\permn \indep Z_{1:n}$ be the randomness generating the fold partition $[n] = I_1 \cup \dots \cup I_K$.
By Lemma \ref{lemma:partitions} with iid variables $Z_{1:n}$, trivial function $h \equiv 1$, random element $\kappa = \permn$, and the $\sigma(\permn)$-measurable partition $\{I_1, \dots, I_K\}$, the fold sub-samples $(Z_i)_{i \in I_1}, \dots, (Z_i)_{i \in I_K}$ are jointly conditionally independent given $\sigma(\permn)$.
In particular, for each fold $k$, $(Z_i)_{i \in I_k} \indep \mathcal{A}_k | \permn$ with $\mathcal{A}_k = \sigma\bigl((Z_j)_{j \notin I_k}\bigr)$.

Fix a fold $k$ and apply Lemma \ref{lemma:aipw-linearization} with $S = I_k$ and $\mathcal{A} = \mathcal{A}_k$.
The fold estimator $\ceffnest_d^{(-k)}$ is $\sigma(\mathcal{A}_k, \permn)$-measurable by construction and consistent by hypothesis, so the hypotheses of the lemma hold.
Writing $R_i^{(k)}$ for the AIPW summand of Lemma \ref{lemma:aipw-linearization} evaluated at $\ceffnest_d^{(-k)}$, and $\phi_i$ for the corresponding influence function, $n\inv \sum_{i \in I_k} R_i^{(k)} = n\inv \sum_{i \in I_k} \phi_i + \op(\negrootn)$.
Since $\estadj = \en[R_i^{(k(i))}] = \sum_{k=1}^K n\inv \sum_{i \in I_k} R_i^{(k)}$ and $K$ is fixed, summing over the $K$ folds gives $\estadj = \en[\phi_i] + \op(\negrootn)$.

Conditional on $\psii$, the residuals $\residuali^d$ are mean zero and independent of $(\Ti, \Di)$ by Assumption \ref{assumption:double-adjustment}.
Hence $\phi_i$ are iid with $E[\phi_i] = \ate$. 
The covariance between the two IPW terms in $\var(\phi_i)$ vanishes by $\Di(1-\Di) = 0$, and the residuals' conditional mean zero kills the covariance with $\catefn(\psii)$, giving
\begin{align*}
\var(\phi_i) = \var(\catefn(\psi)) + E\Bigl[\tfrac{1}{\propselect(\psi)}\Bigl(\tfrac{\hk_1(\psi)}{\propfn(\psi)} + \tfrac{\hk_0(\psi)}{1-\propfn(\psi)}\Bigr)\Bigr].
\end{align*}
By the Lindeberg--L\'evy CLT, $\rootn(\estadj - \ate) = \rootn \en[\phi_i - \ate] + \op(1) \convwprocess \normal(0, \var(\phi_i))$.
Finally $\nsampled / n = \en[\Ti] \convp E[\propselect(\psi)]$, so $\sqrt{\nsampled}(\estadj - \ate) = (\nsampled/n)\half \rootn(\estadj - \ate) \convwprocess \normal(0, V)$ with $V = E[\propselect(\psi)]\var(\phi_i)$, the variance of Equation \ref{equation:asymptotic-theorem}.
\end{proof}

\subsection{Optimal Sampling Design} \label{proofs:optimal-design}

\begin{proof}[Proof of Theorem \ref{thm:optimal_sampling_population}]
Define $V(\propselect) = E[\hkavg(\psi)/\propselect(\psi)]$.     
Define the sets
\[
\mc Q' = \{\propselect \in \mr^{\dimpsi}: |\propselect|_{\infty} < \infty, \propselect > 0, E[\cost(\psi)\propselect(\psi)] \leq \budget,  V(\propselect) < \infty\} \quad \mc Q = \mc Q' \cap \{0 < \propselect \leq 1\}.
\]
Also recall the candidate optimal solution $\propselectopt(\psi) = \budget \cdot \sdavg(\psi) \cost(\psi)^{-1/2} / E[\sdavg(\psi) \cost(\psi) \half]$.
Let $t \in [0, 1]$ and $q_1, q_2 \in \mc Q'$. 
By convexity of $y \to 1/y$ on $(0, \infty)$, for each $\psi \in \psi$ we have 
\[
\frac{\hkavg(\psi)}{t q_1(\psi) + (1-t)q_2(\psi)} \leq t \frac{\hkavg(\psi)}{q_1(\psi)} + (1-t) \frac{\hkavg(\psi)}{q_2(\psi)}
\]
Taking expectations of both sides gives $V(t q_1 + (1-t)q_2) \leq t V(q_1) + (1-t)V(q_2)$, so $V$ is convex on $\{\propselect > 0\}$ and $\mc Q'$ is convex.
We claim that $\propselectopt \in \mc Q'$. 
First, $\propselectopt \in (0, 1]$ by assumption.
Since $\sup_{\psi} \propselectopt \leq 1$ we have $E[\sdavg(\psi) \cost(\psi) \half] > 0$.
By Holder $E[\sdavg(\psi) \cost(\psi) \half] \leq E[\hkavg(\psi)] \half E[\cost(\psi)] \half < \infty$.
Then we have $V(\propselectopt) = \budget \inv E[\sdavg(\psi) \cost(\psi) \half]^2 < \infty$.
Also clearly $E[\cost(\psi)\propselectopt(\psi)] = \budget$.
This shows the claim.
Next, suppose that $\propselectopt + t \Delta \in \mc Q'$ for some $t \in [0,1]$ and $|\Delta|_{\infty} < M$.  
Since $\propselectopt \in \mc Q'$, by convexity of $\mc Q'$, $\propselectopt + t' \Delta \in \mc Q'$ for all $0 \leq t' \leq t$.
Then $dV_{\propselectopt}[\Delta] \equiv \limsup_{t \to 0^+} t \inv (V(\propselectopt + t\Delta) - V(\propselectopt))$ is well-defined.
We claim that this limit exists.
Under our assumptions $\propselectopt(\psi) > \propbound > 0$ pointwise, so that for any $\psi$ and all $t \leq \propbound^2 / 2M < \infty$
\begin{align*}
\left | t \inv \left (\frac{\hkavg(\psi)}{\propselectopt(\psi) + t \Delta(\psi)} - \frac{\hkavg(\psi)}{\propselectopt(\psi)} \right) \right| = \frac{|\hkavg(\psi)\Delta(\psi)|}{((\propselectopt)^2 + \propselectopt t \Delta)(\psi)} \leq \frac{M |\hkavg(\psi)|}{\propbound^2 - tM} \leq \frac{M |\hkavg(\psi)|}{2 \propbound^2}
\end{align*}
Since $\|\hkavg \|_1 < \infty$, dominated convergence implies 
\begin{align*}
dV_{\propselectopt}[\Delta] &= \lim_{t \to 0^+} t \inv (V(\propselectopt + t\Delta) - V(\propselectopt)) = E\left[\lim_{t \to 0^+}\frac{-\hkavg(\psi)\Delta(\psi)}{((\propselectopt)^2 + \propselectopt t \Delta)(\psi)} \right] \\
&= -E\left[\frac{\hkavg(\psi)\Delta(\psi)}{(\propselectopt)^2(\psi)} \right] = -(1/\budget)^2 E\left[\cost(\psi) \Delta(\psi) \right] E\left[\sdavg(\psi) \cost(\psi) \half \right]^2 = 0
\end{align*}

The last line since $\propselectopt + t \Delta \in \mc Q'$ implies $\budget = E[\propselectopt(\psi)\cost(\psi) + t \Delta(\psi) \cost(\psi)] = \budget + t E[\Delta(\psi) \cost(\psi)]$, so that $E[\Delta(\psi) \cost(\psi)] = 0$.
Let $q \in \mc Q'$, so that $\Delta = q-\propselectopt$ has $|\Delta|_{\infty} < \infty$.
Then by convexity $V(q) - V(\propselectopt) \geq dV_{\propselectopt}[q - \propselectopt] = 0$, showing that $\propselectopt = \argmin_{q \in \mc Q'} V(q)$.
Since $\propselectopt \in \mc Q$ by assumption, it is optimal over $\mc Q \subseteq \mc Q'$ as well. 
\end{proof}

\begin{proof}[Proof of Theorem~\ref{thm:pilot-design}]
The estimator is $\est = \en[\Ti \hti Y_i/\propselectestn(\psii)]$, with Horvitz-Thompson weight $\hti = (\Di - \propfn(\psii))/[\propfn(\psii)(1-\propfn(\psii))]$, outcome level $\yleveli = (1-\propfn(\psii)) Y_i(1) + \propfn(\psii) Y_i(0)$, and individual effect $\te_i = Y_i(1) - Y_i(0)$.
The implemented sampling propensity is the discretized estimate $\propselectestn(\cdot)$, and the assignment propensity $\propfn$ is fixed.

Our plan is to apply the CLT of Theorem~\ref{thm:assignment-clt-feasible} with $\gamma = 2$.
To do so, we first verify the key Assumption~\ref{assumption:feasible-clts} with $\psisamp = \psiassign = \psi$, sampling propensity $\propselectestn$, and assignment propensity $\propfnestn \equiv \propfn$.

Part (i) of the assumption holds with $f$ the identity, $\propfnestn \equiv \propfn$ fixed, and $\proprand$ the pilot randomness of Assumption~\ref{assumption:pilot-clt}(i).
To see this, note by Assumption~\ref{assumption:pilot-clt}(i), the variance estimates $\hkest_d$ are functions of external pilot data $\proprand$ independent of the experimental data and design variables, including the auxiliary design randomness $(\eta^D, \eta^T)$ of Definition~\ref{defn:design-construction}, so $\proprand \indep (\Wn, \permn, \eta^D, \eta^T)$.
From the construction in Equation~\eqref{equation:pilot:propensity-estimator}, $\propselectestn(\cdot)$ is a function of $\proprand$ and the study covariates $\psi_{1:n}$, hence $\sigma(\proprand, \psi_{1:n})$-measurable as required by the assumption.
We subsume $\proprand$ in $\permn$, so that $\propselectestn$ is $\sigma(\psi_{1:n}, \permn)$-measurable.

Part (ii) of the assumption holds since $\propselectestn$ takes at most $\nlevels$ rational levels with denominators at most $\kboundn$ and $\propfn$ is a fixed rational level, the bounds $\propselectestn(\psi) \in (\propbound, 1]$ and $\propfn \in (\propbound, 1-\propbound)$ are Assumption~\ref{assumption:pilot-clt}(ii), and $\kboundn \nlevels = o(n)$ follows from Assumption~\ref{assumption:pilot-clt}(iv).
Part (iii) holds since $\en[(\propselectestn(\psii) - \propselectopt(\psii))^2] \convp 0$ by Lemma~\ref{lemma:propensity-convergence}, using the discretization rate of Assumption~\ref{assumption:pilot-clt}(iv) and the heteroskedasticity consistency of Assumption~\ref{assumption:pilot-clt}(iii), and $\en[(\propfnestn(\psii) - \propfn(\psii))^2] = 0$ identically, so the fixed limit propensities are $\propselect = \propselectopt$ and $\propfn$.
Part (iv) holds since $E[Y(d)^4] < \infty$ gives $E[Y(d)^2] < \infty$, and $E[|\psi|^{\alpha}] < \infty$ with $\alpha > \dim(\psi)+1$ is Assumption~\ref{assumption:pilot-clt}(ii).
The moment bound $E[|\diff(W)|^4] < \infty$ is verified for each integrand below, and the rate $\kboundn = o(n^{\gamma/(2(\gamma+2))}) = o(n^{1/4})$ is Assumption~\ref{assumption:pilot-clt}(iv).

Using the pointwise identity $\hti Y_i = \te_i + \hti \yleveli$, expand the estimator as in the proof of Theorem~\ref{thm:clt-two-psi-extended}, with the implemented propensity $\propselectestn$ in place of $\propselect$:
\begin{align*}
\est - \ate &= \en[\te_i - \ate] + \en[\te_i (\Ti - \propselectestn(\psii))/\propselectestn(\psii)] + \en[\Ti \hti \yleveli/\propselectestn(\psii)] \\
&\equiv A_n + B_n + C_n.
\end{align*}
We apply Lemma~\ref{lemma:charfn} to deduce convergence in distribution of $\rootn(A_n + B_n + C_n)$ from conditional CLTs for $\rootn A_n$, $\rootn B_n$, and $\rootn C_n$ over the lemma's required increasing chain.
In this context, we set $\filtration_{n,0} = \sigma(\emptyset)$, $\filtration_{n,1} = \filtrationht$, $\filtration_{n,2} = \filtrationhd$, and $\filtration_{n,3} = \sigma(\Wn, \permn, \Tn, \Dn)$.

First consider $\rootn A_n$.
By Young's inequality $\var(\te) \leq 2 E[Y(1)^2 + Y(0)^2] < \infty$, so the iid CLT gives $\rootn A_n \convwprocess \normal(0, V_A)$ with $V_A = \var(\te)$.
Since $\filtration_{n,0}$ is trivial, this is equivalently $\rootn A_n | \filtration_{n,0} \convwprocess \normal(0, V_A)$.

Next consider $\rootn B_n$.
Write $\delta_i = 1/\propselectestn(\psii) - 1/\propselectopt(\psii)$ and split $B_n = B_n' + B_n''$, where $B_n' = \en[(\Ti - \propselectestn(\psii)) \te_i/\propselectopt(\psii)]$ and $B_n'' = \en[(\Ti - \propselectestn(\psii)) \te_i \delta_i]$.
For $B_n'$, apply Theorem~\ref{thm:assignment-clt-feasible} with the fixed integrand $\diff(W) = \te/\propselectopt(\psi)$, which satisfies $E[|\diff(W)|^4] \leq \propbound^{-4} E[\te^4] \lesssim E[Y(1)^4 + Y(0)^4] < \infty$.
The theorem gives $\rootn B_n' | \filtration_{n,1} \convwprocess \normal(0, V_B)$ with
\begin{align*}
V_B = E[\propselectopt(\psi)(1-\propselectopt(\psi)) \var(\te/\propselectopt(\psi) | \psi)] = E\bigl[\tfrac{1-\propselectopt(\psi)}{\propselectopt(\psi)} \var(\te | \psi)\bigr].
\end{align*}
For $B_n''$, since $\propselectestn(\psii), \propselectopt(\psii) \geq \propbound$ by Assumption~\ref{assumption:pilot-clt}, we have $|\delta_i| \leq \propbound^{-2} |\propselectestn(\psii) - \propselectopt(\psii)|$, and $|\propselectestn - \propselectopt| \leq 1$.
Then using Lemma~\ref{lemma:propensity-convergence}, we have 
\begin{equation} \label{equation:pilot-delta-bound}
\en[\delta_i^4] \leq \propbound^{-8} \en[(\propselectestn(\psii) - \propselectopt(\psii))^4] \leq \propbound^{-8} \en[(\propselectestn(\psii) - \propselectopt(\psii))^2] = \op(1).
\end{equation}
The term $\te_i \delta_i$ in $B_n'' = \en[(\Ti - \propselectestn(\psii)) \te_i \delta_i]$ is $\filtration_{n,1}$-measurable, and $E[\Ti | \filtration_{n,1}] = \propselectestn(\psii)$ by Lemma~\ref{lemma:design_properties} applied to the sampling design, so $E[\rootn B_n'' | \filtration_{n,1}] = 0$.
By the variance bound in the proof of Lemma~\ref{lemma:stratified_wlln} applied to the sampling design,
\begin{align*}
\var(\rootn B_n'' | \filtration_{n,1}) \leq 2 \en[\te_i^2 \delta_i^2] \leq 2 \en[\te_i^4]^{1/2} \en[\delta_i^4]^{1/2} = \Op(1) \op(1) = \op(1).
\end{align*}
The second inequality is Cauchy-Schwarz, then $\en[\te_i^4] \convp E[\te^4] < \infty$ from the WLLN and \eqref{equation:pilot-delta-bound},
Chebyshev (Lemma~\ref{lemma:conditional_markov}) gives $\rootn B_n'' = \op(1)$, so $\rootn B_n = \rootn B_n' + \op(1)$.

Finally consider $\rootn C_n$.
Split $C_n = C_n' + C_n''$, where
\begin{align*}
C_n' &= \en[\Ti (\Di - \propfn(\psii)) \yleveli/(\propselectopt(\psii) \propfn(\psii)(1-\propfn(\psii)))], \\
C_n'' &= \en[\Ti (\Di - \propfn(\psii)) \yleveli \delta_i/(\propfn(\psii)(1-\propfn(\psii)))].
\end{align*}
For $C_n'$, apply the assignment part of Theorem~\ref{thm:assignment-clt-feasible} with the fixed integrand $\diff(W) = \ylevel/(\propselectopt(\psi) \propfn(\psi)(1-\propfn(\psi)))$, which satisfies $|\diff(W)| \leq \propbound^{-1}[\propbound(1-\propbound)]^{-1} |\ylevel|$, hence $E[|\diff(W)|^4] \lesssim E[\ylevel^4] \lesssim E[Y(1)^4 + Y(0)^4] < \infty$.
The theorem gives $\rootn C_n' | \filtration_{n,2} \convwprocess \normal(0, V_C)$ with
\begin{align*}
V_C = E[\propselectopt(\psi) \propfn(\psi)(1-\propfn(\psi)) \var(\diff | \psi)] = E\Bigl[\tfrac{\var(\ylevel | \psi)}{\propselectopt(\psi) \propfn(\psi)(1-\propfn(\psi))}\Bigr].
\end{align*}
For $C_n''$, the integrand $\yleveli \delta_i/(\propfn(\psii)(1-\propfn(\psii)))$ is $\filtration_{n,2}$-measurable, and $E[\Di | \filtration_{n,2}] = \propfn(\psii)$ by Lemma~\ref{lemma:design_properties}, so $E[\rootn C_n'' | \filtration_{n,2}] = 0$.
By the variance bound in the proof of Lemma~\ref{lemma:stratified_wlln}, then $\Ti \leq 1$ with $\propfn(\psii)(1-\propfn(\psii)) \geq \propbound(1-\propbound)$, then Cauchy--Schwarz with $\en[\yleveli^4] = \Op(1)$ from the iid weak law together with \eqref{equation:pilot-delta-bound},
\begin{align*}
\var(\rootn C_n'' | \filtration_{n,2}) &\leq 2 \en\bigl[\Ti \yleveli^2 \delta_i^2/(\propfn(\psii)(1-\propfn(\psii)))^2\bigr] \\
&\leq 2 [\propbound(1-\propbound)]^{-2} \en[\yleveli^4]^{1/2} \en[\delta_i^4]^{1/2} = \op(1).
\end{align*}
Conditional Chebyshev gives $\rootn C_n'' = \op(1)$, so $\rootn C_n = \rootn C_n' + \op(1)$.

By construction $\rootn A_n \in \sigma(\Wn) \sub \filtration_{n,1}$, $\rootn B_n' \in \sigma(\Wn, \permn, \Tn) = \filtration_{n,2}$, and $\rootn C_n' \in \sigma(\Wn, \permn, \Tn, \Dn) = \filtration_{n,3}$.
Lemma~\ref{lemma:charfn} applied with $X_{n,k} = \rootn(A_n, B_n', C_n')$ at $k = 1, 2, 3$ gives, for each $t \in \mr$,
\begin{align*}
E\bigl[e^{i t \rootn(A_n + B_n' + C_n')}\bigr] = e^{-t^2(V_A + V_B + V_C)/2} + \op(1).
\end{align*}
Hence $\rootn(A_n + B_n' + C_n') \convwprocess \normal(0, V_A + V_B + V_C)$, and since $\rootn(\est - \ate) = \rootn(A_n + B_n' + C_n') + \op(1)$ by the work above, Slutsky gives $\rootn(\est - \ate) \convwprocess \normal(0, V_A + V_B + V_C)$.

The variances $V_A, V_B, V_C$ are those of Theorem~\ref{thm:clt-two-psi-extended} evaluated at $\psisamp = \psiassign = \psi$ and sampling propensity $\propselectopt$.
The proof of Lemma~\ref{lemma:variance-equivalence} establishes a purely algebraic identity for these three terms, giving $V_A + V_B + V_C = \var(\catefn(\psi)) + E[\hkavg(\psi)/\propselectopt(\psi)]$ with $\hkavg(\psi) = \hk_1(\psi)/\propfn(\psi) + \hk_0(\psi)/(1-\propfn(\psi))$.
By Theorem~\ref{thm:optimal_sampling_population} and the interior solution condition of Assumption~\ref{assumption:pilot-clt}(ii), $\propselectopt$ minimizes $E[\hkavg(\psi)/\propselect(\psi)]$ over $0 < \propselect \leq 1$ with $E[\cost(\psi)\propselect(\psi)] = \budget$.
Therefore $\rootn(\est - \ate) \convwprocess \normal(0, V^*)$ with
\begin{align*}
V^* = \var(\catefn(\psi)) + \min_{\substack{0 < \propselect \leq 1 \\ E[\cost(\psi)\propselect(\psi)] = \budget}} E[\hkavg(\psi)/\propselect(\psi)].
\end{align*}
\end{proof}

\subsection{Inference} \label{proofs:inference}

Lemmas \ref{lemma:new-inference-consistency-conditioning} and \ref{lemma:sate-inference-consistency} below establish the $\ate$ and $\sate$ parts of Theorem \ref{thm:inference} for the pairs-of-pairs estimator. The constant-$\propselect$ case is the special case $L = 1$, since \eqref{equation:variance-estimator} and \eqref{equation:variance-estimator-varying} with $L=1$ differ by the factor $\nsampled/(\propselect n) \convp 1$.
We focus on proving consistency for the harder case with $\wh P^2 = \wh P^2_{\nu}$. 
Consistency for $\wh P^2 = \wh P^2_{N}$ can be shown similarly. 

Below, it is convenient to work with partition $\groupsetnu = \{\group \cup \groupmatching(\group) : \group \in \groupset_n\}$ into unions of paired groups.
We write $\sum_{\group \in \groupset_n}(\estg - \est_{\groupmatching(\group)})^2 = 2\sum_{u \in \groupsetnu}(\estgone{u} - \estgtwo{u})^2$, where $\estgone{u}$ and $\estgtwo{u}$ are the different groups forming union $u \in \groupsetnu$. 
The variance estimator of \eqref{equation:S2-P2} is then $\wh P^2 = |\groupset_n|\inv \sum_{u \in \groupsetnu}(\estgone{u} - \estgtwo{u})^2$, and similarly $\wh P_l^2 = |\groupset_{nl}|\inv \sum_{u \in \mathcal{G}_{nl}^\nu}(\estgone{u} - \estgtwo{u})^2$ within each stratum.
Under sampling-subordinate matching every $\group \in \groupset_n$ and every $u \in \groupsetnu$ lies entirely in a single propensity stratum $\snl = \{i : \propselect(\psii) = \propselectl\}$, so group propensity $\propselect_\group$ is equal to $\propselectl$ whenever $\group \in \groupset_{nl} := \{\group \in \groupset_n : \group \sub \snl\}$ or $u \in \mathcal{G}_{nl}^{\nu} := \{u \in \groupsetnu : u \sub \snl\}$.

For each $l \in [L]$, let $\nl = |\snl|$, and define the within-stratum IPW estimator and the group-level diff-in-means
\[
\wh\theta_l = \frac{1}{\nl} \sum_{i \in \snl} \frac{\Ti(\Di - \propfn) Y_i}{\propselectl \propfn(1-\propfn)}, \qquad
\estg = \frac{1}{a} \sum_{i \in \group} \Di Y_i - \frac{1}{k-a} \sum_{i \in \group} (1-\Di) Y_i,
\]
so that $\est = \sum_{l=1}^{L}(\nl/n)\wh\theta_l$ is the global IPW estimator from \eqref{equation:estimator:varying}.
Throughout the proofs below, we use the within-group quadratic and across-group cross-product aggregates
\begin{equation} \label{equation:v1-v2}
\wh v_1 = \frac{1}{n}\sum_{\group \in \groupset_n} \frac{k^{2}}{\propselect_{\group}^{2}} \estg^{2}, \qquad
\wh v_2 = \frac{2}{n}\sum_{u \in \groupsetnu} \frac{k(k - \propselect_u)}{\propselect_u^{2}} \estgone{u}\estgtwo{u}.
\end{equation}

Two conventions simplify the bookkeeping.
First, we identify $\groupset_n$ with the algorithm's interior groups of full size $k$, excluding the potential remainder group of size $< k$ in each of the $L$ propensity strata.
The IPW estimator $\est$ in \eqref{equation:estimator:varying} sums over all sampled units, so the identity $\est = |\groupset_n|\inv \sum_{\group \in \groupset_n} \estg$ from Section~\ref{section:inference} hold up to a remainder group discrepancy of order $L(k-1)/n = o(1)$ that does not affect any consistency statement.
Second, we assume $|\groupset_{nl}|$ is even in the proofs.
As noted in the text, in practice the case of an odd number of groups is handled by matching one group twice in $\mathcal{G}_{nl}^{\nu}$, contributing a single extra cross-product term to $\wh P_l^2$ of order $1/|\groupset_{nl}|$ that is similarly negligible but tedious to keep track of.

\begin{lem}[Aggregated form] \label{lemma:aggregated-form-2}
The variance estimator $\varest$ of \eqref{equation:variance-estimator-varying} satisfies
\begin{equation} \label{equation:new-variance-aggregated}
\varest = \frac{\nsampled}{n}\bigl(\wh v_1 - \wh v_2 - \est^{2}\bigr).
\end{equation}
\end{lem}

\begin{proof}[Proof of Lemma \ref{lemma:aggregated-form-2}]
Write $\wh V_l = \wh S_l^2 + ((k - \propselectl)/\propselectl) \wh P_l^2$ for the within-stratum terms, so $\varest = (\nsampled/n) \sum_l (\nl/n) [\wh V_l + (\wh\theta_l - \est)^2]$.
By the within-stratum definitions of $\wh S_l^2$ and $\wh P_l^2$ as analogues of $\wh S^2$ and $\wh P^2$ from \eqref{equation:variance-estimator},
\begin{equation} \label{equation:Vl-stratum}
\wh V_l = \frac{1}{|\groupset_{nl}|} \sum_{\group \in \groupset_{nl}} \bigl(\estg - \wh\theta_l\bigr)^{2} + \frac{k - \propselectl}{\propselectl |\groupset_{nl}|} \sum_{u \in \mathcal{G}_{nl}^{\nu}} \bigl(\estgone{u} - \estgtwo{u}\bigr)^{2}.
\end{equation}
First we record the identity $\wh\theta_l = |\groupset_{nl}|\inv \sum_{\group \in \groupset_{nl}} \estg$.
To see this, note
\begin{align*}
\frac{1}{|\groupset_{nl}|} \sum_{\group \in \groupset_{nl}} \estg
&= \frac{1}{|\groupset_{nl}|} \sum_{\group \in \groupset_{nl}} \biggl(\frac{1}{a}\sum_{i \in \group}\Di Y_i - \frac{1}{k-a}\sum_{i \in \group}(1-\Di) Y_i\biggr) \\
&= \frac{1}{|\groupset_{nl}|} \biggl(\frac{1}{a}\sum_{i \in \snl}\Ti \Di Y_i - \frac{1}{k-a}\sum_{i \in \snl}\Ti (1-\Di) Y_i\biggr) \\
&= \frac{1}{k|\groupset_{nl}|\propfn(1-\propfn)} \sum_{i \in \snl}\Ti (\Di - \propfn) Y_i = \frac{1}{\nl \propselectl \propfn(1-\propfn)} \sum_{i \in \snl}\Ti (\Di - \propfn) Y_i = \wh\theta_l.
\end{align*}
The second equality uses $\sum_{\group \in \groupset_{nl}}\sum_{i \in \group} = \sum_{i \in \snl}\Ti$, since $\groupset_{nl}$ partitions $\{i \in \snl : \Ti = 1\}$.
The third combines fractions using $a = \propfn k$ and the identity $(1-\propfn)\Di - \propfn(1-\Di) = \Di - \propfn$.
The fourth uses $|\groupset_{nl}| = \nl \propselectl / k$, and the final equality is by definition of $\wh\theta_l$.

Next, we expand squares in \eqref{equation:Vl-stratum}.
Using the identity above, the first sum satisfies $\sum_{\group \in \groupset_{nl}}(\estg - \wh\theta_l)^2 = \sum_{\group \in \groupset_{nl}}\estg^2 - |\groupset_{nl}|\wh\theta_l^2$.
For the second sum, each $\group \in \groupset_{nl}$ appears exactly once across $\mathcal{G}_{nl}^\nu$ as either $\group_1(u)$ or $\group_2(u)$, so $\sum_{u \in \mathcal{G}_{nl}^\nu}(\estgone{u} - \estgtwo{u})^2 = \sum_{\group \in \groupset_{nl}}\estg^2 - 2\sum_{u \in \mathcal{G}_{nl}^\nu}\estgone{u}\estgtwo{u}$.
Substituting into \eqref{equation:Vl-stratum}, we have
\begin{align*}
\wh V_l &= \frac{1}{|\groupset_{nl}|}\bigl(\sum_{\group \in \groupset_{nl}}\estg^2 - |\groupset_{nl}|\wh\theta_l^2\bigr) + \frac{k - \propselectl}{\propselectl |\groupset_{nl}|}\bigl(\sum_{\group \in \groupset_{nl}}\estg^2 - 2\sum_{u \in \mathcal{G}_{nl}^\nu}\estgone{u}\estgtwo{u}\bigr) \\
&= \bigl(\frac{1}{|\groupset_{nl}|} + \frac{k - \propselectl}{\propselectl |\groupset_{nl}|}\bigr)\sum_{\group \in \groupset_{nl}}\estg^2 - \frac{2(k - \propselectl)}{\propselectl |\groupset_{nl}|}\sum_{u \in \mathcal{G}_{nl}^\nu}\estgone{u}\estgtwo{u} - \wh\theta_l^2 \\
&= \frac{k^2}{\nl \propselectl^2}\sum_{\group \in \groupset_{nl}}\estg^2 - \frac{2k(k - \propselectl)}{\nl \propselectl^2}\sum_{u \in \mathcal{G}_{nl}^\nu}\estgone{u}\estgtwo{u} - \wh\theta_l^2.
\end{align*}
Next we aggregate across strata. 
Using the work above, $\sum_{l=1}^{L} \frac{\nl}{n}\wh V_l$ is
\begin{align*}
&= \frac{1}{n}\sum_{l=1}^{L}\sum_{\group \in \groupset_{nl}}\frac{k^2}{\propselectl^2}\estg^2 - \frac{2}{n}\sum_{l=1}^{L}\sum_{u \in \mathcal{G}_{nl}^\nu}\frac{k(k - \propselectl)}{\propselectl^2}\estgone{u}\estgtwo{u} - \sum_{l=1}^{L}\frac{\nl}{n}\wh\theta_l^2 \\
&= \frac{1}{n}\sum_{\group \in \groupset_n}\frac{k^2}{\propselect_\group^2}\estg^2 - \frac{2}{n}\sum_{u \in \groupsetnu}\frac{k(k - \propselect_u)}{\propselect_u^2}\estgone{u}\estgtwo{u} - \sum_{l=1}^{L}\frac{\nl}{n}\wh\theta_l^2 = \wh v_1 - \wh v_2 - \sum_{l=1}^{L}\frac{\nl}{n}\wh\theta_l^2.
\end{align*}
Finally, we handle the cross-stratum correction.
This expands as $\sum_l (\nl/n)(\wh\theta_l - \est)^2 = \sum_l (\nl/n)\wh\theta_l^2 - \est^2$ using $\est = \sum_l (\nl/n)\wh\theta_l$ and $\sum_l \nl = n$.
Adding this to the identity above, the $\sum_l(\nl/n)\wh\theta_l^2$ terms cancel, leaving $\sum_l (\nl/n)[\wh V_l + (\wh\theta_l - \est)^2] = \wh v_1 - \wh v_2 - \est^2$.
Multiplying by $\nsampled/n$ gives \eqref{equation:new-variance-aggregated}.
\end{proof}

\begin{lem}[Variance Estimator Consistency] \label{lemma:new-inference-consistency-conditioning}
In the setting above, with $\psisamp = \psiassign = \psi$, design $\Tn \sim \localdesigncond(\psi, \propselect(\psi))$ and $\Dn \sim \localdesigncond(\psi, \propfn(\psi))$, and sampling-subordinate matching,
\begin{equation} \label{equation:varest-limit-conditioning}
\varest \convp \varlocal.
\end{equation}
\end{lem}

\begin{proof}[Proof of Lemma \ref{lemma:new-inference-consistency-conditioning}]

Let $\filtrationhd = \sigma(\Wn, \Tn, \permn)$.
Since $\psin \in \sigma(\Wn)$, the design filtration $\filtrationcandpsi = \sigma(\psin, \permn, \Tn)$ of Section~\ref{proofs:asymptotics} satisfies $\filtrationcandpsi \sub \filtrationhd$. 
Moreover, $\filtrationhd \indep \eta$ for the assignment randomness $\eta$ of Definition~\ref{defn:design-construction}, since $\eta \indep (\Wn, \permn, \Tn)$ by construction.
By Definition~\ref{defn:local_randomization} and the construction of $\groupsetnu$ above, the partitions $\groupset_n$ and $\groupsetnu$ are determined by $\psin$, $\permn$, and the sampled set $\{i : \Ti = 1\}$, hence are $\filtrationcandpsi$-measurable and thus $\filtrationhd$-measurable.
By Lemma~\ref{lemma:aggregated-form-2}, $\varest = (\nsampled/n)(\wh v_1 - \wh v_2 - \est^{2})$, where
\begin{equation*}
\wh v_1 = \frac{1}{n}\sum_{\group \in \groupset_n} \frac{k^{2}}{\propselect_{\group}^{2}} \estg^{2}, \qquad
\wh v_2 = \frac{2}{n}\sum_{u \in \groupsetnu} \frac{k(k - \propselect_u)}{\propselect_u^{2}} \estgone{u}\estgtwo{u}.
\end{equation*}

First consider $\wh v_1$.
Define $\hti = (\Di - \propfn)/(\propfn(1-\propfn))$.
Note the identity $\hti Y_i = \tau_i + \hti \yleveli$, where $\yleveli = (1-\propfn)Y_i(1) + \propfn Y_i(0)$.
Then $\estg = k^{-1}\sum_{i \in \group} \hti Y_i = \theta_\group + k^{-1}\sum_{i \in \group}\hti\yleveli$, where $\theta_\group = k^{-1}\sum_{i \in \group}\tau_i$.
Since $\groupset_n$, $W_{1:n}$ are $\filtrationhd$-measurable and $E[\hti | \filtrationhd] = 0$, it follows that $E[\estg | \filtrationhd] = \theta_\group$.
Then $E[\estg^2 | \filtrationhd] = \theta_\group^2 + \var(\estg | \filtrationhd)$.
By Lemma~\ref{lemma:design_properties}, $\var(\hti | \filtrationhd) = [\propfn(1-\propfn)]\inv$ and $\cov(\hti, H_j | \filtrationhd) = -[(k-1)\propfn(1-\propfn)]\inv$ for $i \neq j \in \group$, so that  
\begin{align*}
\var(\estg | \filtrationhd)
&= k^{-2}\var\bigl(\sum_{i \in \group} \hti \yleveli \bigm| \filtrationhd\bigr) = k^{-2}\sum_{i, j \in \group}\bar Y_i \bar Y_j \cov(\hti, H_j | \filtrationhd) \\
&= [k^2 \propfn(1-\propfn)]\inv \sum_{i \in \group}\bar Y_i^{2} - [k^2(k-1)\propfn(1-\propfn)]\inv \sum_{i \neq j \in \group}\bar Y_i \bar Y_j.
\end{align*}
Substituting $E[\estg^2 | \filtrationhd] = \theta_\group^2 + \var(\estg | \filtrationhd)$ into $E[\wh v_1 | \filtrationhd]$ gives
\begin{align*}
E[\wh v_1 | \filtrationhd] &= n\inv \sum_{\group \in \groupset_n}\frac{k^2}{\propselect_\group^2}\theta_\group^2 + n\inv \sum_{\group \in \groupset_n}[\propselect_\group^2 \propfn(1-\propfn)]\inv \sum_{i \in \group}\bar Y_i^2 \\
&- n\inv \sum_{\group \in \groupset_n}[\propselect_\group^2 (k-1) \propfn(1-\propfn)]\inv \sum_{i \neq j \in \group}\bar Y_i \bar Y_j \equiv A_n + B_n + C_n.
\end{align*}
First consider $B_n$.
We have $\propselect_\group = \propselect(\psii)$ for $i \in \group$ by the assumption of sampling-subordinate matching.
Since $\cup_{\group \in \groupset_n} \{i \in \group\} = \{i : \Ti = 1\}$,
\[
B_n = [\propfn(1-\propfn)]\inv n\inv \sum_{\group \in \groupset_n}\sum_{i \in \group}\frac{\bar Y_i^2}{\propselect(\psii)^2} = [\propfn(1-\propfn)]\inv \en\bigl[\Ti \bar Y_i^2/\propselect(\psii)^2\bigr].
\]
By Lemma~\ref{lemma:stratified_wlln} part (3) applied to the sampling design with $h = \bar Y^2/\propselect(\psi)^2$, $\en[\Ti \bar Y_i^2/\propselect(\psii)^2] \convp E[\bar Y^2/\propselect(\psi)]$, so $B_n \convp [\propfn(1-\propfn)]\inv E[\bar Y^2/\propselect(\psi)]$.

Next consider $C_n$. Using $\propselect_\group = \propselect(\psii) = \propselect(\psij)$ for $i, j \in \group$, we have
\[
C_n = -[(k-1)\propfn(1-\propfn)]\inv n\inv \sum_{\group \in \groupset_n}\sum_{\substack{i, j \in \group \\ i \neq j}}\frac{\bar Y_i \bar Y_j}{\propselect(\psii)\propselect(\psij)}.
\]
By Lemma~\ref{lemma:bilinear-form} at $K = k$ on $\groupset_n$ with $A_i = B_i = \bar Y_i/\propselect(\psii)$, the inner double sum converges in probability to $(k-1)E[E[\bar Y | \psi]^2/\propselect(\psi)]$, so $C_n \convp -[\propfn(1-\propfn)]\inv E[E[\bar Y | \psi]^2/\propselect(\psi)]$.

Finally, consider $A_n$. Expanding $\theta_\group^2 = k^{-2}\sum_{i, j \in \group}\tau_i\tau_j$ in $A_n$, the factor of $k^2$ cancels, and splitting diagonal from off-diagonal,
\[
A_n = n\inv \sum_{\group \in \groupset_n}\frac{1}{\propselect_\group^2}\sum_{i \in \group}\tau_i^2 + n\inv \sum_{\group \in \groupset_n}\frac{1}{\propselect_\group^2}\sum_{\substack{i, j \in \group \\ i \neq j}}\tau_i\tau_j \equiv A_n^D + A_n^O.
\]
For the diagonal piece, using $\propselect_\group = \propselect(\psii)$ for $i \in \group$, $\cup_{\group \in \groupset_n}\group = \{i : \Ti = 1\}$, and Lemma~\ref{lemma:stratified_wlln} part (3) applied to the sampling design,
\[
A_n^D = n\inv \sum_{\group \in \groupset_n}\sum_{i \in \group}\frac{\tau_i^2}{\propselect(\psii)^2} = \en[\Ti \tau_i^2/\propselect(\psii)^2] \convp E[\tau^2/\propselect(\psi)].
\]
For the off-diagonal piece, using $\propselect_\group = \propselect(\psii) = \propselect(\psij)$ for $i, j \in \group$,
\[
A_n^O = n\inv \sum_{\group \in \groupset_n}\sum_{\substack{i, j \in \group \\ i \neq j}}\frac{\tau_i\tau_j}{\propselect(\psii)\propselect(\psij)}.
\]
By Lemma~\ref{lemma:bilinear-form} at $K = k$ on $\groupset_n$ with $A_i = B_i = \tau_i/\propselect(\psii)$, $A_n^O \convp (k-1)E[\catefn(\psi)^2/\propselect(\psi)]$, where $\catefn(\psi) = E[\tau | \psi]$.
Using $E[\tau^2 | \psi] = \hkte(\psi) + \catefn(\psi)^2$ for $\hkte(\psi) = \var(\te | \psi)$, we have the limit $A_n \convp E[\hkte(\psi)/\propselect(\psi)] + k E[\catefn(\psi)^2/\propselect(\psi)]$.
Note also $E[\bar Y^2/\propselect(\psi)] = E[E[\bar Y^2 | \psi]/\propselect(\psi)]$ and $E[\bar Y^2 | \psi] = E[\bar Y | \psi]^2 + \var(\bar Y | \psi)$, so the other terms
\[
B_n + C_n \convp [\propfn(1-\propfn)]\inv \bigl(E[\bar Y^2/\propselect(\psi)] - E[E[\bar Y | \psi]^2/\propselect(\psi)]\bigr) = [\propfn(1-\propfn)]\inv E[\var(\bar Y | \psi)/\propselect(\psi)].
\]
Combining with the limit of $A_n$, we have shown that
\[
E[\wh v_1 | \filtrationhd] \convp E[\hkte(\psi)/\propselect(\psi)] + k E[\catefn(\psi)^2/\propselect(\psi)] + \frac{E[\var(\bar Y | \psi)/\propselect(\psi)]}{\propfn(1-\propfn)}.
\]

Next consider $E[\wh v_2 | \filtrationhd]$.
By Lemma~\ref{lemma:group_aggregate_independence} applied with $u_\group = \estg$, which is a function of $(\Di)_{i \in \group}$ and the $\filtrationhd$-measurable potential outcomes $(Y_i(0), Y_i(1))_{i \in \group}$, the group-level estimates $(\estg)_{\group \in \groupset_n}$ are jointly independent conditional on $\filtrationhd$; in particular $\estgone{u} \indep \estgtwo{u} | \filtrationhd$ since $\group_1(u) \neq \group_2(u)$.
Combined with $E[\estg | \filtrationhd] = \theta_\group$, this gives $E[\estgone{u}\estgtwo{u} | \filtrationhd] = \theta_{\group_1(u)}\theta_{\group_2(u)}$.
Since $\groupsetnu$ is $\filtrationhd$-measurable, taking conditional expectation through the sum yields
\begin{align*}
E[\wh v_2 | \filtrationhd] &= \frac{2}{n}\sum_{u \in \groupsetnu}\frac{k(k - \propselect_u)}{\propselect_u^2}\theta_{\group_1(u)}\theta_{\group_2(u)} = \frac{2}{n}\sum_{u \in \groupsetnu}\frac{k - \propselect_u}{k \propselect_u^2}\sum_{\substack{i \in \group_1(u) \\ j \in \group_2(u)}}\tau_i\tau_j.
\end{align*}
Define $A_i = \sqrt{(k - \propselect(\psii))/k} \cdot \tau_i/\propselect(\psii)$, which is $\sigma(W_i)$ measurable and has $E[A_i^2] \leq E[\tau_i^2]/\propbound^2 < \infty$ using $0 \leq (k - \propselect)/k \leq 1$ and propensity bound $\propselect(\psi) \geq \propbound > 0$.
By sampling-subordinate matching, $\propselect_u = \propselect(\psii) = \propselect(\psij)$ for $i, j \in u$, so $A_i A_j = (k - \propselect_u)/(k \propselect_u^2) \tau_i \tau_j$ for all $i, j \in u$.
Using $u = \group_1(u) \cup \group_2(u)$, we have the decomposition
\[
\sum_{i \neq j \in u} A_i A_j = \sum_{i \neq j \in \group_1(u)} A_i A_j + \sum_{i \neq j \in \group_2(u)} A_i A_j + 2\sum_{\substack{i \in \group_1(u) \\ j \in \group_2(u)}} A_i A_j.
\]
Solving for the cross-group piece and summing over the disjoint unions $u \in \groupsetnu$,
\begin{align*}
E[\wh v_2 | \filtrationhd] &= \frac{2}{n}\sum_{u \in \groupsetnu}\sum_{\substack{i \in \group_1(u) \\ j \in \group_2(u)}} A_i A_j = \frac{1}{n}\sum_{u \in \groupsetnu}\biggl[\sum_{i \neq j \in u} A_i A_j - \sum_{i \neq j \in \group_1(u)} A_i A_j - \sum_{i \neq j \in \group_2(u)} A_i A_j\biggr].
\end{align*}
Note that by construction of $\groupsetnu$, each group $\group \in \groupset_n$ appears exactly once as $\group_1(u)$ or $\group_2(u)$ for a unique union $u \in \groupsetnu$, so summing over $u$ gives recovers within-group sums over $g \in \groupsetn$, $\sum_{u \in \groupsetnu} \big [\sum_{i \neq j \in \group_1(u)} A_i A_j + \sum_{i \neq j \in \group_2(u)} A_i A_j \big ] = \sum_{\group \in \groupset_n}\sum_{i \neq j \in \group} A_i A_j$.
Substituting,
\begin{equation} \label{equation:v2-bilinear}
E[\wh v_2 | \filtrationhd] = \frac{1}{n}\sum_{u \in \groupsetnu}\sum_{i \neq j \in u} A_i A_j - \frac{1}{n}\sum_{\group \in \groupset_n}\sum_{i \neq j \in \group} A_i A_j.
\end{equation}
By Lemma~\ref{lemma:matched-union-tight-matching}, $\groupsetnu$ inherits tight matching from $\groupset_n$, and by sampling-subordinate matching, $\propselect(\psii)$ is constant over $i \in u$ and $i \in \group$.
Applying Lemma~\ref{lemma:bilinear-form} with $B_i = A_i$ at $K = 2k$ on $\groupsetnu$ and at $K = k$ on $\groupset_n$, we obtain
\begin{align*}
\frac{1}{n}\sum_{u \in \groupsetnu}\sum_{i \neq j \in u} A_i A_j &\convp (2k-1) E\bigl[\propselect(\psii) E[A_i | \psii]^2\bigr], \\
\frac{1}{n}\sum_{\group \in \groupset_n}\sum_{i \neq j \in \group} A_i A_j &\convp (k-1) E\bigl[\propselect(\psii) E[A_i | \psii]^2\bigr].
\end{align*}
Then $E[\wh v_2 | \filtrationhd] \convp k E[\propselect(\psii) E[A_i | \psii]^2] = E[(k - \propselect(\psii)) \catefn(\psii)^2/\propselect(\psii)] = k E[\catefn(\psii)^2/\propselect(\psii)] - E[\catefn(\psii)^2]$.
Finally, by Theorem~\ref{thm:clt_fixed} $\est \convp \ate$, so $\est^2 \convp \ate^2$ by continuous mapping.

Putting things together, recall that $\varest = (\nsampled/n)(\wh v_1 - \wh v_2 - \est^2)$.
By Lemma~\ref{lemma:coupling-v1-v2}, we have $\wh v_1 - E[\wh v_1 | \filtrationhd] = \op(1)$ and $\wh v_2 - E[\wh v_2 | \filtrationhd] = \op(1)$.
Moreover, from the limits above
\[
E[\wh v_1 - \wh v_2 | \filtrationhd] - \est^2 \convp E[\hkte(\psi)/\propselect(\psi)] + \frac{E[\var(\bar Y | \psi)/\propselect(\psi)]}{\propfn(1-\propfn)} + E[\catefn(\psi)^2] - \ate^2.
\]
We have $E[\catefn(\psi)^2] - \ate^2 = \var(\catefn(\psi))$ and $\hkte(\psi) = \hk_1(\psi) + \hk_0(\psi) - 2 \cov(Y(1), Y(0) | \psi)$ and $\var(\bar Y | \psi) = (1-\propfn)^2 \hk_1(\psi) + \propfn^2 \hk_0(\psi) + 2\propfn(1-\propfn) \cov(Y(1), Y(0) | \psi)$.
Then we have
\[
\hkte(\psi) + \frac{\var(\bar Y | \psi)}{\propfn(1-\propfn)} = \frac{\hk_1(\psi)}{\propfn} + \frac{\hk_0(\psi)}{1-\propfn}.
\]
Dividing through by $\propselect(\psi)$, taking expectations, and combining with the $\var(\catefn(\psi))$ term gives
\begin{equation} \label{equation:v1-v2-est2-limit}
\wh v_1 - \wh v_2 - \est^2 \convp \var(\catefn(\psi)) + E\biggl[\frac{\hk_1(\psi)}{\propfn \propselect(\psi)} + \frac{\hk_0(\psi)}{(1-\propfn) \propselect(\psi)}\biggr] = \varlocal/E[\propselect(\psi)],
\end{equation}
with the second equality by \eqref{equation:asymptotic-theorem}.
Finally, $\nsampled/n = \en[\Ti] \convp E[\propselect(\psi)]$ by Lemma~\ref{lemma:stratified_wlln} part (3) applied to the sampling design with $h = 1$, so by continuous mapping and Slutsky,
\[
\varest = (\nsampled/n)(\wh v_1 - \wh v_2 - \est^2) \convp E[\propselect(\psi)] \cdot \varlocal/E[\propselect(\psi)] = \varlocal.
\]
This finishes the proof.
\end{proof}

The next lemma is stated for a generic partition $\mathcal{P}_n$, which we will set to either $\groupset_n$ or $\groupsetnu$ depending on the context in which the lemma is applied.

\begin{lem}[Bilinear Form] \label{lemma:bilinear-form}
Let $\mathcal{P}_n$ be a $\filtrationcandpsi$-measurable partition of $\{i : \Ti = 1\}$ into groups of size $K$ with $n\inv \sum_{\group \in \mathcal{P}_n}\sum_{i \in \group}|\psii - \bar\psi_\group|_2^{2} = \op(1)$, for the standing filtration $\filtrationcandpsi = \sigma(\psin, \permn, \Tn)$ of Section~\ref{proofs:asymptotics}.
Let $A_i, B_i \in \sigma(W_i)$ with $E[A^{2} + B^{2}] < \infty$.
Then
\begin{equation} \label{equation:data-bilinear}
\frac{1}{n}\sum_{\group \in \mathcal{P}_n}\sum_{\substack{i, j \in \group \\ i \ne j}} A_i B_j \convp (K-1) E\bigl[\propselect(\psii) \cdot E[A_i | \psii] E[B_i | \psii]\bigr].
\end{equation}
\end{lem}

\begin{proof}[Proof of Lemma \ref{lemma:bilinear-form}]
Let $A_i = v_a(\psii) + \residuali^a$ with $E[\residuali^a | \psii] = 0$, similarly $B_i = v_b(\psii) + \residuali^b$.
\begin{align*}
\frac{1}{n}\sum_{\group \in \mathcal{P}_n}\sum_{\substack{i, j \in \group \\ i \ne j}} A_i B_j = \frac{1}{n}\sum_{\group \in \mathcal{P}_n}\sum_{\substack{i, j \in \group \\ i \ne j}}(v_a(\psii) + \residuali^a)(v_b(\psij) + \residualj^b) \equiv A_n + B_n + C_n + R_n,
\end{align*}
where $A_n$ collects the $v_a(\psii) v_b(\psij)$ terms, $R_n$ collects the $\residuali^a\residualj^b$ terms, and $B_n, C_n$ collect the cross terms $v_a(\psii) \residualj^b$ and $\residuali^a v_b(\psij)$ respectively.
First consider $A_n$.
Denote $\via = v_a(\psii)$, $\vja = v_a(\psij)$, and $\vib, \vjb$ for $b$.
Re-indexing $\sum_{i \neq j \in \group}\via\vjb = \sum_{i < j \in \group}(\via\vjb + \vja\vib)$. 
By the identity $a_i b_j + b_i a_j = -(a_i - a_j)(b_i - b_j) + a_i b_i + a_j b_j$, this is
\begin{align*}
A_n &= \frac{1}{n}\sum_{\group \in \mathcal{P}_n}\sum_{i < j \in \group}(\via\vjb + \vja\vib) \\
&= \frac{1}{n}\sum_{\group \in \mathcal{P}_n}\sum_{i < j \in \group}(\via\vib + \vja\vjb) - \frac{1}{n}\sum_{\group \in \mathcal{P}_n}\sum_{i < j \in \group}(\via - \vja)(\vib - \vjb) \equiv A_n^1 + A_n^2.
\end{align*}

First consider $A_n^2$.
Note the identity $\sum_{i, j \in \group}|\psii - \psij|_2^2 = 2K\sum_{i \in \group}|\psii - \bar\psi_\group|_2^2$, so the tight matching hypothesis implies $X_n \equiv n\inv \sum_{\group \in \mathcal{P}_n}\sum_{i, j \in \group}|\psii - \psij|_2^{2} = \op(1)$.
By Lemma \ref{lemma:op-rate}, we can choose $\mu_n \to \infty$ with $\mu_n X_n \convp 0$. 
Then for $\lambda_n = \mu_n^{1/2}$, we have $\lambda_n \to \infty$ and $\lambda_n^2 X_n \convp 0$.
By Lemma \ref{lemma:lipschitz-approximation:rate}, there exist sequences $(z_n^a)_{n \ge 1}$ and $(z_n^b)_{n \ge 1}$ with $|z_n^a|_{lip} \vee |z_n^b|_{lip} \leq \lambda_n$, $|z_n^a - v_a|_{2, \psi} = o(1)$, and $|z_n^b - v_b|_{2, \psi} = o(1)$.
Denote $\zin^a = z_n^a(\psii)$, $\zin^b = z_n^b(\psii)$.
We can expand relative to the Lipschitz approximations using $\via - \vja = (\via - \zin^a) + (\zin^a - \zjn^a) + (\zjn^a - \vja)$ and similarly for $b$.
Then $|xy| \leq x^2 + y^2$ and $(p_1 + p_2 + p_3)^2 \leq 3(p_1^2 + p_2^2 + p_3^2)$ give
\[
|(\via - \vja)(\vib - \vjb)| \lesssim \sum_{c \in \{a, b\}}\bigl(|v_{ic} - \zin^c|^2 + |\zin^c - \zjn^c|^2 + |\zjn^c - v_{jc}|^2\bigr).
\]
Using this to bound $|A_n^2|$ termwise, we have
\begin{align*}
|A_n^2| &\leq n\inv \sum_{\group} \sum_{i < j \in \group} \sum_{c \in \{a, b\}} \bigl(|v_{ic} - \zin^c|^2 + |\zin^c - \zjn^c|^2 + |\zjn^c - v_{jc}|^2\bigr) \\
&\lesssim \sum_{c \in \{a, b\}} \bigl[ K \en[\Ti |v_{ic} - \zin^c|^2] + \lambda_n^2 X_n \bigr] = \op(1).
\end{align*}
The first inequality is the pointwise bound above.
The second equality follows by counting $\sum_{i < j \in \group} |v_{ic} - \zin^c|^2 \leq (K-1) \sum_{i \in \group} |v_{ic} - \zin^c|^2$ and similarly for the third term.
Also, we use the Lipschitz bound $|\zin^c - \zjn^c|^2 \leq \lambda_n^2 |\psii - \psij|_2^2$, and $\sum_\group \sum_{i \in \group} f_i = \sum_i \Ti f_i$ since $\mathcal{P}_n$ partitions $\{\Ti = 1\}$.
For the final equality, $\en[\Ti |v_{ic} - \zin^c|^2] \leq \en[|v_{ic} - \zin^c|^2] = \op(1)$ by Markov, since $E[|v_{ic} - \zin^c|^2] = |z_n^c - v_c|_{2,\psi}^2 = o(1)$.
For the second term, $\lambda_n^2 X_n \convp 0$ by our choice of $\lambda_n$ above.
Then we have shown that $A_n^2 = \op(1)$.

Next consider $A_n^1$. 
Since $\sum_{i < j \in \group}(\via\vib + \vja\vjb) = (K - 1)\sum_{i \in \group}\via\vib$ for $|\group| = K$,
\begin{align*}
A_n^1 \;=\; \frac{K - 1}{n}\sum_{\group \in \mathcal{P}_n}\sum_{i \in \group}\via\vib \;=\; (K - 1)\,\en[\Ti\,\via\,\vib] \;\convp\; (K - 1)\,E[\propselecti\,\via\,\vib].
\end{align*}
The middle equality since $\mathcal{P}_n$ partitions $\{i : \Ti = 1\} \subseteq [n]$.
The convergence follows from Lemma~\ref{lemma:stratified_wlln} part (3) applied to the sampling design with $h = v_a v_b$, since $E[|\propselecti\via\vib|] \leq E[\via^2]^{1/2}E[\vib^2]^{1/2} \lesssim E[A_i^2]^{1/2}E[B_i^2]^{1/2} < \infty$.
Combining, $A_n \convp (K - 1)\,E[\propselect(\psii)\cdot v_a(\psii)\,v_b(\psii)] = (K - 1)\,E[\propselect(\psii)\cdot E[A_i | \psii]\,E[B_i | \psii]]$.

Next claim that $R_n = \op(1)$.
We verify the conditions of Lemma \ref{lemma:lln} with the filtration $\filtrationcandpsi$ for the terms
\[
u_\group \equiv \sum_{\substack{i, j \in \group \\ i \neq j}}\residuali^a\,\residualj^b.
\]
We apply Lemma~\ref{lemma:partitions} with $h_i = \psii$ and $\kappa = (\permn, \Tn)$, so that $\sigma(h_{1:n}, \kappa) = \sigma(\psin, \permn, \Tn) = \filtrationcandpsi$.
The required hypothesis $\kappa \indep \Wn | h_{1:n}$ is the statement $(\permn, \Tn) \indep \Wn | \psin$, which we verify by chaining two reductions.
First, $\Wn \indep \permn$ unconditionally gives $\Wn \indep \permn | \psin$ via $(A, B) \indep C \Rightarrow A \indep C | B$ with $A = \Wn$, $B = \psin \in \sigma(\Wn)$, $C = \permn$.
Second, by Definition~\ref{defn:design-construction} applied to the sampling design, $\Tn$ is a function of $(\psin, \permn, \eta^T)$ for sampling randomness $\eta^T \indep (\Wn, \permn)$, so $\Tn \indep \Wn | (\psin, \permn)$.
Composing, $\Wn | (\psin, \permn, \Tn) \eqdist \Wn | (\psin, \permn) \eqdist \Wn | \psin$, which is the required $(\permn, \Tn) \indep \Wn | \psin$.
Since $\mathcal{P}_n$ is $\filtrationcandpsi$-measurable by hypothesis, Lemma~\ref{lemma:partitions} gives $(u_\group)_{\group \in \mathcal{P}_n}$ jointly conditionally independent given $\filtrationcandpsi$.
Note the key fact that $(A, B) \indep C \Rightarrow A \indep C | B$.
We first reduce the conditioning:
\begin{equation} \label{equation:bilinear-conditioning-reduction}
E[\residuali^a \residualj^b | \psin, \permn, \Tn] = E[\residuali^a \residualj^b | \psin, \permn] = E[\residuali^a \residualj^b | \psin] = E[\residuali^a \residualj^b | \psii, \psij].
\end{equation}
For the first equality, $\Tn \indep \Wn |(\psin, \permn)$ as established above, hence $(\residuali^a, \residualj^b) \indep \Tn |(\psin, \permn)$ by the conditional independence fact above, since $\residuali^a, \residualj^b \in \sigma(\Wn)$.
For the second equality, by design $\Wn \indep \permn$, hence $(\residuali^a, \residualj^b) \indep \permn |\psin$ by the conditional independence fact above.
The third equality drops $\psi_{-(i,j)}$: by iid sampling, $(\residuali^a, \residualj^b, \psii, \psij) \indep \psi_{-(i,j)}$, hence $(\residuali^a, \residualj^b) \indep \psi_{-(i,j)} |(\psii, \psij)$ by the fact above.
Then note
\begin{align*}
E[\residuali^a \residualj^b | \psii, \psij] = E\bigl[\residualj^b E[\residuali^a | \residualj^b, \psii, \psij] \bigm| \psii, \psij\bigr] = E\bigl[\residualj^b E[\residuali^a | \psii] \bigm| \psii, \psij\bigr] = 0.
\end{align*}
The first equality is by tower law. 
The second uses $\residuali^a \indep (\residualj^b, \psij) |\psii$, which holds by the fact above and because iid sampling gives $(\residuali^a, \psii) \indep (\residualj^b, \psij)$.
The third is $E[\residuali^a | \psii] = 0$ by definition of the residual.
Hence $E[u_\group | \filtrationcandpsi] = 0$ for each $\group \in \mathcal{P}_n$.

Next, to invoke Lemma \ref{lemma:lln}, we need to show $\frac{1}{n}\sum_{\group} E[|u_\group| \one(|u_\group| > c_n) | \filtrationcandpsi] = \op(1)$ for suitable $c_n \to \infty$.
Note for any $a, b \geq 0$,  $ab \one(ab > c) \leq a^2 \one(a^2 > c) + b^2 \one(b^2 > c)$ and for any $a_k \geq 0$, $k = 1, \dots, m$, we have
\begin{equation} \label{equation:indicator-facts}
\textstyle\sum_k a_k \one(\sum_k a_k > c) \leq m\sum_k a_k \one(a_k > c/m).
\end{equation}
By triangle and Young's inequalities,
\[
|u_\group| \leq \sum_{i \neq j \in \group} |\residuali^a| |\residualj^b| \leq \frac{1}{2}\sum_{i \neq j \in \group} ((\residuali^a)^2 + (\residualj^b)^2) = \frac{K-1}{2} \sum_{i \in \group, c \in \{a, b\}} (\residuali^c)^2.
\]
Then for any $c_n \to \infty$,
\begin{align*}
E[|u_\group| \one(|u_\group| > c_n) | \filtrationcandpsi]
  &\leq \tfrac{K-1}{2} E\bigl[\sum_{i \in \group, c} (\residuali^c)^2 \one\bigl(\tfrac{K-1}{2}\sum_{i \in \group, c} (\residuali^c)^2 > c_n\bigr) \bigm| \filtrationcandpsi\bigr] \\
  &\leq K(K-1) \sum_{i \in \group, c \in \{a, b\}} E\bigl[(\residuali^c)^2 \one\bigl((\residuali^c)^2 > c_n [K(K-1)]\inv\bigr) \bigm| \filtrationcandpsi\bigr] \\
  &= K(K-1) \sum_{i \in \group, c \in \{a, b\}} E\bigl[(\residuali^c)^2 \one\bigl((\residuali^c)^2 > c_n [K(K-1)]\inv\bigr) \bigm| \psii\bigr].
\end{align*}
The second inequality applies \eqref{equation:indicator-facts} to $\sum_{i \in \group, c}(\residuali^c)^2$ as a sum of $m=2K$ non-negative terms and threshold $c = 2c_n/(K-1)$, giving per-term threshold $c/m = c_n [K(K-1)]\inv$.
The third reduces conditioning to $\psii$ by the same argument as in the conditional-mean calculation.
Then we can bound $\frac{1}{n}\sum_\group E[|u_\group| \one(|u_\group| > c_n) | \filtrationcandpsi]$ above by
\begin{align*}
&\frac{K(K-1)}{n} \sum_{c \in \{a, b\}} \sum_\group \sum_{i \in \group} E[(\residuali^c)^2 \one((\residuali^c)^2 > c_n [K(K-1)]\inv) | \psii] \\
  &= \frac{K(K-1)}{n} \sum_{c \in \{a, b\}} \sum_{i=1}^n \Ti E[(\residuali^c)^2 \one((\residuali^c)^2 > c_n [K(K-1)]\inv) | \psii] \\
  &\leq K(K-1) \sum_{c \in \{a, b\}} \en\bigl[E[(\residuali^c)^2 \one((\residuali^c)^2 > c_n [K(K-1)]\inv) | \psii]\bigr].
\end{align*}
The first expression follows from the bound on $E[|u_\group| \one(|u_\group| > c_n) | \filtrationcandpsi]$ developed above.
The equality uses that $\mathcal{P}_n$ partitions $\{\Ti = 1\}$, so $\sum_\group \sum_{i \in \group} f_i = \sum_{i=1}^n \Ti f_i$.
The second inequality bounds $\Ti \leq 1$ and rewrites the sum as $\en$.
Taking expectations, by iid sampling and tower law we obtain the bound 
\[
E\frac{1}{n}\sum_\group E[|u_\group| \one(|u_\group| > c_n) | \filtrationcandpsi] \le K(K-1) \sum_{c \in \{a, b\}} E[(\residuali^c)^2 \one((\residuali^c)^2 > c_n [K(K-1)]\inv)].
\]
The RHS converges to $0$ as $c_n \to \infty$ by dominated convergence since $E[(\residuali^c)^2] \leq \var(C_i) < \infty$ for $C \in \{A, B\}$.
Then by Markov $\frac{1}{n}\sum_\group E[|u_\group| \one(|u_\group| > c_n) | \filtrationcandpsi] = \op(1)$.
Choosing $c_n = n^{1/4}$ so that $c_n = \omega(1)$ and $c_n = o(n^{1/2})$, Lemma \ref{lemma:lln} gives $R_n = \op(1)$.

The terms $B_n, C_n = \op(1)$ by an analogous argument.
For example, for $B_n$ let $u_\group^B = \sum_{i \neq j \in \group}\via\,\residualj^b$.
The conditional mean $E[\via\,\residualj^b | \filtrationcandpsi] = 0$ since $\via \in \filtrationcandpsi$ and $E[\residualj^b | \filtrationcandpsi] = E[\residualj^b | \psij] = 0$ by a similar conditioning reduction argument as above.
The Lindberg condition is verified by the same indicator-function argument with $(\residuali^a)^2$ replaced by $\via^2$, using $E[\via^2] = E[E[A_i | \psii]^2] \leq E[A_i^2] < \infty$ by Jensen's inequality.

Combining our work above, we have $A_n + B_n + C_n + R_n \convp (K - 1)\,E[\propselect(\psii)\cdot E[A_i | \psii]\,E[B_i | \psii]]$, completing the proof.
\end{proof}

\begin{proof}[Proof of Theorem \ref{thm:inference}]
The $\ate$ part $\varest \convp \varlocal$ is Lemma \ref{lemma:new-inference-consistency-conditioning}.
The $\sate$ part $\varest_{\sate} \convp V_{\sate} + E[\propselect(\psi)] E[\var(\te | \psi)]$ is Lemma \ref{lemma:sate-inference-consistency}, with $\hkte(\psi) = \var(\te | \psi)$.
In the constant-$\propselect$ case $L = 1$, the cross-stratum correction in \eqref{equation:variance-estimator-varying} vanishes ($\wh\theta_l = \est$) and the resulting expression is $(\nsampled/(\propselect n))$ times \eqref{equation:variance-estimator}, with $\nsampled/(\propselect n) \convp 1$ by Lemma \ref{lemma:stratified_wlln} part (3) and Slutsky, so the limits agree.
\end{proof}

\clearpage

\renewcommand{\thesection}{D}
\pagenumbering{arabic}\renewcommand{\thepage}{\arabic{page}}

\begin{center}
{\Large Secondary Online Appendix to ``Optimal Stratification of Survey Experiments''}
\vskip 24pt
{\large Max Cytrynbaum}
\end{center}

This secondary online appendix is not intended for publication. 

\setcounter{subsection}{0}
\setcounter{thm}{0}
\setcounter{equation}{0}

\subsection{Lemmas} \label{proofs:lemmas}

\begin{lem}[Conditional Convergence] \label{lemma:conditional_markov}
Let $(\filtrationgeneric)_{n \geq 1}$ and $(A_n)_{n\geq1}$ a sequence of $\sigma$-algebras and RV's.
Define conditional convergence
\begin{align*}
&A_n = \opg(1) \iff P(|A_n| > \epsilon | \filtrationgeneric) = \op(1) \quad \forall \epsilon > 0 \\
&A_n = \Opg(1) \iff P(|A_n| > s_n | \filtrationgeneric) = \op(1) \quad \forall s_n \to \infty
\end{align*}
Then the following results hold 
\begin{enumerate}[label={(\roman*)}, itemindent=.5pt, itemsep=.4pt] 
\item $A_n = \op(1) \iff A_n = \opg(1)$ and $A_n = \Op(1) \iff A_n = \Opg(1)$
\item $E[|A_n| | \filtrationgeneric] = \op(1) / \Op(1) \implies A_n = \op(1)/\Op(1)$ 
\item $\var(A_n| \filtrationgeneric) = \op(c_n^2) / \Op(c_n^2) \implies A_n - E[A_n | \filtrationgeneric] = \op(c_n)/\Op(c_n)$. 
\item If $(A_n)_{n\geq1}$ has $A_n \leq \Bar{A} < \infty$ $\filtrationgeneric$-a.s. $\forall n$ and $A_n = \op(1)$ $\implies$ $E[|A_n| | \filtrationgeneric] = \op(1)$ 
\end{enumerate}
\end{lem}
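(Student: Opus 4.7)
The four parts all reduce to a single duality: for a uniformly bounded non-negative sequence $X_n \in [0, M]$, one has $X_n = \op(1) \iff E[X_n] \to 0$, the forward direction by bounded convergence and the reverse by Markov. The plan is to apply this duality to various conditional probabilities and conditional moments, then combine with the conditional Markov and Chebyshev inequalities.

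For part (i), I would apply the duality to $X_n = P(|A_n| > \epsilon \mid \filtrationg) \in [0,1]$. The forward direction: if $A_n = \op(1)$, then by tower law $E[X_n] = P(|A_n| > \epsilon) \to 0$, and since $X_n \in [0,1]$ is non-negative, Markov gives $X_n = \op(1)$, i.e.\ $A_n = \opg(1)$. The reverse direction: bounded convergence on $X_n = \op(1)$, $X_n \leq 1$ gives $P(|A_n| > \epsilon) = E[X_n] \to 0$. The $\Op$ equivalence is identical with $s_n \to \infty$ in place of the fixed $\epsilon$; one uses that $\op$ of a bounded sequence preserves under taking expectations.

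For parts (ii) and (iii), I would invoke the conditional Markov and conditional Chebyshev inequalities:
\[
P(|A_n| > \epsilon \mid \filtrationg) \leq \frac{E[|A_n| \mid \filtrationg]}{\epsilon}, \qquad P(|A_n - E[A_n \mid \filtrationg]| > \epsilon c_n \mid \filtrationg) \leq \frac{\var(A_n \mid \filtrationg)}{\epsilon^2 c_n^2}.
\]
Under the $\op$ hypotheses in (ii) and (iii), the right-hand sides are $\op(1)$ for every fixed $\epsilon > 0$, so the conditional probabilities are $\opg(1)$, whence by part (i) the conclusions follow. For the $\Op$ versions, I would take $s_n \to \infty$ (resp.\ replace $\epsilon$ by $s_n$) and note that an $\Op(1)$ quantity divided by $s_n$ is still $\op(1)$, again giving $\Opg(1) = \Op(1)$ by (i).

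For part (iv), set $X_n = E[|A_n| \mid \filtrationg]$. Under the bound $|A_n| \leq \bar A$ a.s., we have $X_n \in [0, \bar A]$. By unconditional bounded convergence applied to $A_n = \op(1)$ with dominating constant $\bar A$, $E[|A_n|] \to 0$. Since $E[X_n] = E[|A_n|] \to 0$ and $X_n \geq 0$, Markov gives $X_n = \op(1)$, which is the claim. The main (very minor) obstacle is being careful that in (iii) we only get $\op$-to-$\op$ and $\Op$-to-$\Op$ implications, rather than mixing rates, and in (i) that the reverse direction for $\Op$ truly requires the $\opg(1)$-type definition for every diverging $s_n$ rather than just a single one — but this is baked into the definition given.
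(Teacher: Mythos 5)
Your proposal is correct and for parts (i)--(iii) follows essentially the same route as the paper: part (i) rests on the tower identity $P(|A_n|>\epsilon) = E[P(|A_n|>\epsilon \mid \filtrationg)]$ combined with Markov in one direction and boundedness (uniform integrability) of the conditional probability in the other, and (ii)--(iii) are then conditional Markov/Chebyshev fed through (i). The only place you genuinely diverge is (iv): the paper truncates conditionally, writing $E[|A_n| \mid \filtrationg] \leq \epsilon + \bar A\, P(|A_n|>\epsilon \mid \filtrationg)$ and invoking (i) to make the second term $\op(1)$ for each fixed $\epsilon$, whereas you pass to the unconditional statement $E[|A_n|]\to 0$ by bounded convergence and then recover the conditional claim via the tower property and Markov. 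Both are valid when $\bar A$ is a deterministic constant, which is the intended reading; note only that the paper's conditional truncation argument would survive even if $\bar A$ were merely an a.s.-finite (possibly non-integrable) $\filtrationg$-measurable bound, while your unconditional dominated-convergence step would not, so the paper's version is marginally more robust. Your closing caveats about not mixing rates in (iii) and about the $\Op$ direction in (i) requiring the definition to hold for every diverging sequence $s_n$ are both accurate and consistent with how the lemma is used elsewhere in the paper.
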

\begin{proof}
(i) Consider that for any $\epsilon > 0$
\begin{align*}
P(|A_n| > \epsilon) = E[\one(|A_n| > \epsilon)] = E[E[\one(|A_n| > \epsilon) | \filtrationgeneric]] = E[P(|A_n| > \epsilon | \filtrationgeneric)]  
\end{align*}
If $A_n = \op(1)$, then $E[P(|A_n| > \epsilon | \filtrationgeneric)] = o(1)$, so $P(|A_n| > \epsilon | \filtrationgeneric) = \op(1)$ by Markov inequality.
Conversely, if $P(|A_n| > \epsilon | \filtrationgeneric) = \op(1)$, then $E[P(|A_n| > \epsilon | \filtrationgeneric)] = o(1)$ since $(P(|A_n| > \epsilon | \filtrationgeneric))_{n \geq 1}$ is uniformly bounded, hence UI.  
Then $P(|A_n| > \epsilon) = o(1)$. 
The second equivalence follows directly from the first.
(ii) follows from (i) and conditional Markov inequality.
(iii) is an application of (ii).
For (iv), note that for any $\epsilon > 0$
\begin{align*}
E[|A_n| | \filtrationgeneric] \leq \epsilon + E[|A_n| \one(|A_n| > \epsilon) | \filtrationgeneric] \leq \epsilon + \Bar{A} P(|A_n| > \epsilon | \filtrationgeneric) = \epsilon + \op(1)
\end{align*}
The equality is by (i) and our assumption. 
Since $\epsilon > 0$ was arbitrary $E[|A_n| | \filtrationgeneric] = \op(1)$.
\end{proof}

\begin{defn}[Conditional Weak Convergence] \label{def:conditional_weak_convergence}
For random variables $A_n, A \in \mr^d$ and $\sigma$-algebras $(\filtration_n)_n$, $\filtration$, let $\mu_n$ and $\mu$ denote the laws of $A_n$ given $\filtration_n$ and of $A$ given $\filtration$, which exist as regular conditional distributions (e.g.\ \cite{klenke2020}, Theorem~8.29).
Let $d_{BL}(\mu, \nu) = \sup_f \bigl| \int f\, d\mu - \int f\, d\nu \bigr|$ denote the bounded Lipschitz metric on $\mathcal{P}(\mr^d)$, where the supremum is over functions $f : \mr^d \to \mr$ with $\|f\|_\infty + |f|_{lip} \leq 1$.
Define 
\begin{align*}
A_n | \filtration_n \convwprocess A | \filtration \iff d_{BL}(\mu_n, \mu) = \op(1).
\end{align*}
\end{defn}

\begin{thm}[Conditional L\'evy Continuity] \label{thm:conditional_levy_continuity}
For $A_n, A \in \mr^d$, convergence holds $A_n | \filtration_n \convwprocess A | \filtration$ if and only if $E[e^{it'A_n} | \filtration_n] = E[e^{it'A} | \filtration] + \op(1)$ for each $t \in \mr^d$.
\end{thm}

\begin{proof}
We show the result for $d=1$. 
The general case is similar.
The laws $\mu_n$ and $\mu$ exist as regular conditional distributions (\cite{klenke2020}, Theorem~8.29).
In particular, $\omega \mapsto \mu_n^\omega(B)$ is $\filtration_n$-measurable for each Borel $B \sub \mr$, and likewise for $\mu^\omega$.
Equivalently, $\omega \mapsto \mu_n^\omega$ and $\omega \mapsto \mu^\omega$ are measurable into $\mathcal{P}(\mr)$ equipped with the $\sigma$-algebra generated by the evaluation maps $\mu \mapsto \mu(B)$, $B \in \mathcal{B}(\mr)$.
This evaluation $\sigma$-algebra coincides with the Borel $\sigma$-algebra of the weak topology on $\mathcal{P}(\mr)$ (\cite{klenke2020}, Exercise~24.1.2), which is metrized by $d_{BL}$.
Since $d_{BL}$ is continuous on $\mathcal{P}(\mr) \times \mathcal{P}(\mr)$, composing the joint map $\omega \mapsto (\mu_n^\omega, \mu^\omega)$ with $d_{BL}$ yields a real-valued random variable $\omega \mapsto d_{BL}(\mu_n^\omega, \mu^\omega)$.

Suppose first that $d_{BL}(\mu_n, \mu) \convp 0$.
Let $\varphi_n(t) = E[e^{itA_n} | \filtration_n]$ and $\varphi(t) = E[e^{itA} | \filtration]$ be the characteristic functions of $\mu_n$ and $\mu$.
Fix $t \in \mr$.
The functions $x \mapsto \cos(tx)$ and $x \mapsto \sin(tx)$ are bounded by $1$ and Lipschitz with constant $|t|$, so each lies in the $d_{BL}$ unit ball after scaling by $(1+|t|)\inv$.
By definition of $d_{BL}$ as a supremum over the bounded-Lipschitz unit ball, $|\varphi_n(t) - \varphi(t)| \leq 2(1 + |t|) \, d_{BL}(\mu_n, \mu) = \op(1)$.
Since $t$ was arbitrary, $\varphi_n(t) - \varphi(t) = \op(1)$ for every $t \in \mr$, which is Definition~\ref{def:conditional_weak_convergence}.

Suppose now that $\varphi_n(t) - \varphi(t) = \op(1)$ for every $t \in \mr$.
By Urysohn's subsequence principle (e.g.\ \cite{klenke2020}, Corollary~6.13), random variables $Y_n \convp Y$ if and only if every subsequence $(Y_{n_k})$ contains a further subsequence $(Y_{n_{k_j}})$ with $Y_{n_{k_j}} \to Y$ almost surely.
Applying this principle to $Y_n = d_{BL}(\mu_n, \mu)$ and $Y = 0$, it suffices to show that any subsequence $(n_k)$ admits a further subsequence $(n_{k_j})$ along which $d_{BL}(\mu_{n_{k_j}}, \mu) \to 0$ almost surely.

For each fixed $t \in \mr$, $|\varphi_n(t) - \varphi(t)| \leq 2$ and $\varphi_n(t) - \varphi(t) = \op(1)$, so $E|\varphi_n(t) - \varphi(t)| = o(1)$ by bounded convergence.
For any $K > 0$, Tonelli and dominated convergence give
\begin{equation*}
E \int_0^1 |\varphi_n(t/K) - \varphi(t/K)|\, dt = \int_0^1 E|\varphi_n(t/K) - \varphi(t/K)|\, dt = o(1).
\end{equation*}
Markov's inequality then implies for each rational $K > 0$,
\begin{equation} \label{equation:cwc-integral-convergence}
\int_0^1 |\varphi_n(t/K) - \varphi(t/K)|\, dt = \op(1).
\end{equation}

Fix the subsequence $(n_k)$.
We have countably many $\op(1)$ statements along this subsequence: $\varphi_n(t) - \varphi(t) \convp 0$ for each $t \in \mathbb{Q}$, and \eqref{equation:cwc-integral-convergence} for each rational $K > 0$.
Applying Urysohn's principle again to convert each in-probability convergence to almost-sure convergence on a sub-subsequence, then using a standard diagonal argument across this countable family, we obtain a further subsequence $(n_{k_j})$ and a probability-one event $\Omega_0$ on which simultaneously $\varphi_{n_{k_j}}(t, \omega) \to \varphi(t, \omega)$ for all $t \in \mathbb{Q}$ and
\begin{align*}
\int_0^1 |\varphi_{n_{k_j}}(t/K, \omega) - \varphi(t/K, \omega)|\, dt &\to 0 \quad \text{for all rational } K > 0.
\end{align*}

Fix $\omega \in \Omega_0$.
The tightness estimate inside the proof of L\'evy's continuity theorem (\cite{klenke2020}, proof of Theorem~15.23) gives, with $\alpha = 1 - \sin(1) > 0$, for each $K > 0$ and $\mathrm{Re}(z)$ denoting the real component of a complex number $z$,
\begin{equation*}
\mu_{n_{k_j}}^\omega\bigl([-K, K]^c\bigr) \leq \alpha\inv \int_0^1 \bigl(1 - \mathrm{Re}\,\varphi_{n_{k_j}}(t/K, \omega)\bigr)\, dt.
\end{equation*}
Using the pointwise bound $1 - \mathrm{Re}\,\varphi_{n_{k_j}}(t/K, \omega) \leq 1 - \mathrm{Re}\,\varphi(t/K, \omega) + |\varphi_{n_{k_j}}(t/K, \omega) - \varphi(t/K, \omega)|$ in the integral, the second integral term vanishes as $j \to \infty$ for each rational $K > 0$ on $\Omega_0$ by the second simultaneous fact above, leaving
\begin{equation*}
\limsup_j \mu_{n_{k_j}}^\omega\bigl([-K, K]^c\bigr) \leq \alpha\inv \int_0^1 \bigl(1 - \mathrm{Re}\,\varphi(t/K, \omega)\bigr)\, dt \equiv T(K, \omega).
\end{equation*}
The function $\varphi(\cdot, \omega)$ is the characteristic function of the probability measure $\mu^\omega$, hence continuous at $0$ with $\varphi(0, \omega) = 1$, so $T(K, \omega) \to 0$ as $K \to \infty$ by dominated convergence.
For any $\epsilon > 0$ we can therefore choose rational $K > 0$ with $T(K, \omega) < \epsilon$, giving $\limsup_j \mu_{n_{k_j}}^\omega([-K, K]^c) \leq \epsilon$.
Hence the sequence $(\mu_{n_{k_j}}^\omega)_j$ is tight for each $\omega \in \Omega_0$.

For each $t \in \mathbb{Q}$, $\int e^{itx}\, d\mu_{n_{k_j}}^\omega(x) = \varphi_{n_{k_j}}(t, \omega) \to \varphi(t, \omega) = \int e^{itx}\, d\mu^\omega(x)$ on $\Omega_0$ by the first simultaneous fact above.
The complex exponentials $\{x \mapsto e^{itx} : t \in \mathbb{Q}\}$ form a separating family for finite measures on $\mr$, since two such measures whose integrals against $e^{itx}$ agree for every $t \in \mathbb{Q}$ have characteristic functions agreeing on $\mathbb{Q}$, hence everywhere on $\mr$ by continuity of characteristic functions, hence are equal by uniqueness.
By tightness of $(\mu_{n_{k_j}}^\omega)_j$ and \cite{klenke2020} Theorem~13.34, $\mu_{n_{k_j}}^\omega \convwprocess \mu^\omega$ as $j \to \infty$ for each $\omega \in \Omega_0$.
Since $d_{BL}$ metrizes weak convergence on $\mathcal{P}(\mr)$, this gives $d_{BL}(\mu_{n_{k_j}}^\omega, \mu^\omega) \to 0$ for each $\omega \in \Omega_0$, i.e.\ almost surely.
This finishes the proof.
\end{proof}

We require a slight modification of the martingale difference CLT in \cite{billingsley1995}, allowing the weak limit to be a mixture of normals.

\begin{prop}[MDS-CLT] \label{prop:mds_clt}
Consider probability spaces $(\Omega_n, \mathcal{A}_n, P_n)$ each equipped with filtration $(\filtration_{k,n})_{k \geq 0}$.
Suppose $(\ykn)_{k=1}^{n}$ is adapted to $(\filtration_{k,n})_{k \geq 0}$ and has $E[\ykn | \filtration_{k-1,n}] = 0$ for all $k \geq 1$ with $n \to \infty$. 
Make the following definitions 
\begin{align*}
S_{k, n} &= \sum_{j=1}^k \ykn \quad \quad \sigkn = E[Y_{k, n}^2 | \filtration_{k-1,n}] \quad \quad \Sigkn = \sum_{j=1}^k \sigkn 
\end{align*}
Denote $\sn \equiv S_{n, n}$ and $\Sign \equiv \Sigma_{n, n}$.
Suppose that $\sigkn \in \filtration_{0,n}$ for all $k, n$ and $\Sign = \Siglimit + \op(1)$ with $\Siglimit \in \filtration_{0,n}$.
Also, suppose for each $\epsilon > 0$
\begin{equation} \label{equation:conditional_lindberg}
L_n^{\epsilon} = \sum_{k=1}^{n} E[\ykn^2 \one(|\ykn| \geq \epsilon) | \filtration_{0, n}] = \op(1)
\end{equation}
Then $E[e^{it \sn} | \filtration_{0,n}] = e^{-\frac{1}{2}t^2 \Siglimit} + \op(1)$.
\end{prop}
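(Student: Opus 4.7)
\textbf{Proof plan for Proposition \ref{prop:mds_clt}.}
The plan is to use a multiplicative telescoping with exponential tilts, mimicking the standard Billingsley proof but exploiting the crucial assumption that the conditional variances $\sigma_{k,n}$ are $\filtration_{0,n}$-measurable, which makes $\Sigma_n$ and every partial sum $\Sigma_{k,n}$ $\filtration_{0,n}$-measurable as well. Fix $t \in \mathbb R$ and define the interpolating process
\[
\alpha_{k,n} = e^{it S_{k,n}}\, e^{-\frac{1}{2}t^2(\Sigma_n - \Sigma_{k,n})}, \qquad k = 0, 1, \dots, n,
\]
so that $\alpha_{0,n} = e^{-\frac{1}{2}t^2 \Sigma_n}$ and $\alpha_{n,n} = e^{it S_n}$. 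The key identity is the telescoping decomposition
\[
e^{it S_n} - e^{-\frac{1}{2}t^2 \Sigma_n} = \sum_{k=1}^n e^{it S_{k-1,n}}\, e^{-\frac{1}{2}t^2(\Sigma_n - \Sigma_{k,n})}\bigl[e^{it Y_{k,n}} - e^{-\frac{1}{2}t^2 \sigma_{k,n}}\bigr].
\]
Since $\Sigma_n - \Sigma_{k,n} \in \filtration_{0,n} \subseteq \filtration_{k-1,n}$, its exponential factor pulls out of any conditional expectation given $\filtration_{k-1,n}$, as does the bounded unit-modulus factor $e^{it S_{k-1,n}}$. Taking $E[\cdot | \filtration_{k-1,n}]$ first and then $E[\cdot | \filtration_{0,n}]$, and using $|e^{it S_{k-1,n}}| = 1$ and $|e^{-\frac{1}{2}t^2(\Sigma_n - \Sigma_{k,n})}| \leq 1$, the triangle inequality gives
\[
\bigl|E[e^{it S_n} | \filtration_{0,n}] - e^{-\frac{1}{2}t^2 \Sigma_n}\bigr| \leq \sum_{k=1}^n E\bigl[\bigl|E[e^{it Y_{k,n}} | \filtration_{k-1,n}] - e^{-\frac{1}{2}t^2 \sigma_{k,n}}\bigr|\, \big|\, \filtration_{0,n}\bigr].
\]

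Next, the per-term error is controlled by Taylor expansion. Using the standard bound $|e^{ix} - 1 - ix + \tfrac{1}{2}x^2| \leq \min(x^2, |x|^3/6)$ together with $E[Y_{k,n} | \filtration_{k-1,n}] = 0$ and $E[Y_{k,n}^2 | \filtration_{k-1,n}] = \sigma_{k,n}$,
\[
\bigl|E[e^{it Y_{k,n}} | \filtration_{k-1,n}] - (1 - \tfrac{1}{2}t^2 \sigma_{k,n})\bigr| \leq \tfrac{|t|^3 \epsilon}{6}\sigma_{k,n} + t^2 E[Y_{k,n}^2 \one(|Y_{k,n}| > \epsilon) | \filtration_{k-1,n}],
\]
while the elementary inequality $|1 - x - e^{-x}| \leq x^2/2$ for $x \geq 0$ gives $|e^{-\frac{1}{2}t^2 \sigma_{k,n}} - (1 - \tfrac{1}{2}t^2 \sigma_{k,n})| \leq \tfrac{t^4}{8}\sigma_{k,n}^2$. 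Summing over $k$, applying tower to pass from $\filtration_{k-1,n}$ to $\filtration_{0,n}$, and using that $\Sigma_n \in \filtration_{0,n}$,
\[
\bigl|E[e^{it S_n} | \filtration_{0,n}] - e^{-\frac{1}{2}t^2 \Sigma_n}\bigr| \leq \tfrac{|t|^3 \epsilon}{6}\Sigma_n + t^2 L_n^\epsilon + \tfrac{t^4}{8}\bigl(\max_{1 \leq k \leq n}\sigma_{k,n}\bigr) \Sigma_n.
\]

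To close the argument, I invoke the assumed conditional Lindeberg condition $L_n^\epsilon = \op(1)$ and the hypothesis $\Sigma_n = \Sigma + \op(1)$. The remaining piece is $\max_k \sigma_{k,n} = \op(1)$, which follows from the familiar derivation $\sigma_{k,n} \leq \epsilon^2 + E[Y_{k,n}^2 \one(|Y_{k,n}| > \epsilon) | \filtration_{k-1,n}]$; taking maxima, then $E[\cdot | \filtration_{0,n}]$, and using $\sigma_{k,n} \in \filtration_{0,n}$, yields $\max_k \sigma_{k,n} \leq \epsilon^2 + L_n^\epsilon = \epsilon^2 + \op(1)$, and $\epsilon$ is arbitrary. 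Letting $\epsilon \to 0$ after $n \to \infty$ gives $E[e^{it S_n} | \filtration_{0,n}] - e^{-\frac{1}{2}t^2 \Sigma_n} = \op(1)$, and by continuity of $x \mapsto e^{-t^2 x/2}$ on $[0, \infty)$ combined with $\Sigma_n \convp \Sigma$, we conclude $E[e^{it S_n} | \filtration_{0,n}] = e^{-\frac{1}{2}t^2 \Sigma} + \op(1)$, as claimed.

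The main obstacle I anticipate is bookkeeping around the two levels of conditioning: the Lindeberg hypothesis is stated at the level of $\filtration_{0,n}$, but Taylor expansion of each increment naturally produces $\filtration_{k-1,n}$-conditional residuals. The resolution is simply tower plus nonnegativity of the integrands, which allows the two Lindeberg formulations to coincide under $E[\cdot | \filtration_{0,n}]$. The measurability assumption $\sigma_{k,n} \in \filtration_{0,n}$ is essential throughout: it lets the exponential tilt factor $e^{-\frac{1}{2}t^2(\Sigma_n - \Sigma_{k,n})}$ commute past every conditional expectation, and it makes $e^{-\frac{1}{2}t^2 \Sigma_n}$ itself a valid candidate limit rather than something that must be absorbed into a further expectation.
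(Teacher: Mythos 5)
Your proposal is correct and follows essentially the same route as the paper's proof: a telescoping characteristic-function argument in the style of Billingsley, with the $\filtration_{0,n}$-measurability of $\sigma_{k,n}$ used to pull the Gaussian tilt factors through the conditional expectations, and the same device $\max_k \sigma_{k,n} \leq \epsilon^2 + L_n^{\epsilon}$ to control the quadratic remainder. The only differences are cosmetic: you interpolate with $e^{-\frac{1}{2}t^2(\Sigma_n - \Sigma_{k,n})}$, whose modulus is at most one, whereas the paper multiplies through by $e^{\frac{1}{2}t^2\Sigma_n}$ (bounding it by $\Op(1)$ via $\Sigma_n \convp \Sigma$) and handles the replacement of $\Sigma_n$ by $\Sigma$ as a separate first term; both are valid.
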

\begin{proof}
We modify the argument in Theorem 35.12 of \cite{billingsley1995}.
\begin{align*}
E \left[e^{it \sn} - e^{-\onehalf t^2 \Siglimit} | \filtration_{0, n} \right] &= E[e^{it \sn}(1-e^{\onehalf t^2 \Sign}e^{- \onehalf t^2 \Siglimit})| \filtration_{0, n}] \\
&+ E[e^{-\onehalf t^2 \Siglimit}(e^{\onehalf t^2 \Sign}e^{it \sn} - 1)| \filtration_{0, n}] 
\end{align*}
For the first term, by conditional Jensen inequality 
\begin{align*}
|E[e^{it \sn}(1-e^{\onehalf t^2 \Sign}e^{- \onehalf t^2 \Siglimit})| \filtration_{0, n}]| &\leq E[|(1-e^{\onehalf t^2 \Sign}e^{- \onehalf t^2 \Siglimit})|| \filtration_{0, n}] \\
&= |(1-e^{\onehalf t^2 \Sign}e^{- \onehalf t^2 \Siglimit})| = \op(1) 
\end{align*}
The first equality since $\Sign, \sigma^2 \in \filtration_{0, n}$.
Since $\Sign = \Siglimit + \op(1)$, the second equality follows by continuous mapping. 
The second term has 
\begin{align*}
&|E[e^{-\onehalf t^2 \Siglimit}(e^{\onehalf t^2 \Sign}e^{it \sn} - 1)| \filtration_{0, n}]| = e^{-\onehalf t^2 \Siglimit} |E[(e^{\onehalf t^2 \Sign}e^{it \sn} - 1)| \filtration_{0, n}]| \\
&= e^{-\onehalf t^2 \Siglimit}\left | \sum_{k=1}^{n} E[e^{it S_{k-1, n}} e^{\onehalf t^2 \Sigkn}(e^{it\ykn} - e^{-\onehalf t^2 \sigkn}) | \filtration_{0, n}] \right | \\
\leq & \, e^{-\onehalf t^2 \Siglimit} e^{\onehalf t^2 \Sign} \sum_{k=1}^{n}  E[|e^{it S_{k-1, n}}E[e^{it\ykn} - e^{-\onehalf t^2 \sigkn}|\filtration_{k-1, n}]| | \filtration_{0, n}] \\
= & \, e^{-\onehalf t^2 \Siglimit} e^{\onehalf t^2 \Sign} \sum_{k=1}^{n}  E[|E[e^{it\ykn} - e^{-\onehalf t^2 \sigkn}|\filtration_{k-1, n}]| | \filtration_{0, n}] \equiv \op(1) Z_n  
\end{align*}
The first equality since $\sigma^2 \in \filtration_{0,n}$.
The second equality by telescoping.
The first inequality by triangle inequality and since $\Sigkn \in \filtration_{0, n}$, $\Sigkn \leq \Sign$, and $S_{k-1,n} \in \filtration_{k-1, n}$ for $1 \leq k \leq n$.
The final equality by continuous mapping since $\Sign \convp \sigma^2$. 
We want to show that $Z_n = \Op(1)$.
Fix $\epsilon > 0$ and let $\ikn = \one(|\ykn| > \epsilon)$. 
Note the facts $|e^{ix} - (1 + ix - (1/2)x^2)| \leq (1/6)|x|^3 \wedge |x|^2$ and $|e^z -  (1+z)| \leq |z|^2 e^{|z|}$ for real $x$, complex $z$.
By the MDS property and $E[\ykn^2 | \filtration_{k-1,n}] = \sigkn$, combined with these facts 
\begin{align*}
&|E[e^{it\ykn} - e^{-\onehalf t^2 \sigkn}|\filtration_{k-1, n}]| \leq E[|t \ykn|^3 \wedge |t \ykn|^2 + (1/4)t^4 \sigma_{k,n}^4 e^{\onehalf t^2 \sigkn} |\filtration_{k-1, n}] \\
&\leq (t^2 + t^4 + |t|^3 + e^{\onehalf t^2 \Sign})E[(\epsilon |\ykn|^2 + |\ykn|^2 \ikn + \sigma_{k,n}^4) |\filtration_{k-1, n}]\\
&\equiv A_{n,t} (\epsilon \sigkn + E[|\ykn|^2\ikn |\filtration_{k-1, n}] + E[\sigma_{k,n}^4 |\filtration_{k-1, n}])
\end{align*}
Then we have 
\begin{align*}
Z_n &\leq A_{n,t} \sum_{k=1}^n E[\epsilon \sigkn + E[|\ykn|^2\ikn |\filtration_{k-1, n}] + E[\sigma_{k,n}^4 |\filtration_{k-1, n}] | \filtration_{0,n}] \\
& = A_{n,t}(\epsilon \Sign + L_n^{\epsilon}) + A_{n,t} \sum_{k=1}^n E[\sigma_{k,n}^4 |\filtration_{0, n}] \leq A_{n,t}(\epsilon \Sign + L_n^{\epsilon} + \Sign (\epsilon^2 + L_n^{\epsilon}))  
\end{align*}

To see the final inequality, note that $\sigma_{k,n}^4 \leq \sigma_{k,n}^2 \max_{k=1}^n \sigkn$ and 
We have $\sigkn = E[\ykn^2 | \filtration_{k-1,n}] \leq \epsilon^2 + E[\ykn^2 \ikn | \filtration_{k-1,n}] \leq \epsilon^2 + \sum_{j=1}^n E[Y_{j,n}^2 I_{j,n} | \filtration_{j-1,n}]$.
Taking $\max_{k=1}^n$ on both sides gives $\max_{k=1}^n \sigkn \leq \epsilon^2 + \sum_{j=1}^n E[Y_{j,n}^2 I_{j,n} | \filtration_{j-1,n}]$.
Then $\sum_{k=1}^n E[\sigma_{k,n}^4 |\filtration_{0, n}] \leq \sum_{k=1}^n E[\sigma_{k,n}^2 (\epsilon^2 + \sum_{j=1}^n E[Y_{j,n}^2 I_{j,n} | \filtration_{j-1,n}]) |\filtration_{0, n}] = \Sign (\epsilon^2 + L_n^{\epsilon})$.
Note that since $\Sign \convp \sigma^2$, we have $A_{n,t}, \Sign = \Op(1)$ and $L_n^{\epsilon} = \op(1)$ by assumption.
Since $\epsilon$ was arbitrary, this shows $Z_n = \op(1)$.
\end{proof}

The next proposition specializes Proposition~\ref{prop:mds_clt} to triangular arrays that are jointly independent conditional on a single $\sigma$-algebra, which may itself vary with $n$.
This is the form we use in Section~\ref{proofs:asymptotics}.

\begin{prop}[Conditional CLT] \label{prop:conditional-clt}
For each $n$, let $\filtration_{0,n}$ be a $\sigma$-algebra and $(\ykn)_{k=1}^n$ random variables that are jointly independent conditional on $\filtration_{0,n}$, with $E[\ykn | \filtration_{0,n}] = 0$ and $E[\ykn^2 | \filtration_{0,n}] < \infty$.
Let $\sn = \sum_{k=1}^n \ykn$ and $\Sign = \sum_{k=1}^n E[\ykn^2 | \filtration_{0,n}]$, and suppose $\Sign = \Siglimit + \op(1)$ for some $\Siglimit \in \filtration_{0,n}$.
Suppose the conditional Lindeberg condition holds: for each $\epsilon > 0$, $\sum_{k=1}^n E[\ykn^2 \one(|\ykn| \geq \epsilon) | \filtration_{0,n}] = \op(1)$.
Then $E[e^{it \sn} | \filtration_{0,n}] = e^{-\frac{1}{2} t^2 \Siglimit} + \op(1)$ for each $t \in \mr$.
\end{prop}
\begin{proof}
Define the increasing filtration $\filtration_{k,n} = \sigma(\filtration_{0,n}, Y_{1,n}, \dots, \ykn)$ for $k \geq 1$, to which $(\ykn)_{k=1}^n$ is adapted.
By joint conditional independence, $\ykn \indep (Y_{1,n}, \dots, Y_{k-1,n}) | \filtration_{0,n}$.
Hence $E[\ykn | \filtration_{k-1,n}] = E[\ykn | \filtration_{0,n}] = 0$ and $\sigkn = E[\ykn^2 | \filtration_{k-1,n}] = E[\ykn^2 | \filtration_{0,n}] \in \filtration_{0,n}$.
So $(\ykn, \filtration_{k,n})$ is a martingale difference sequence with $\filtration_{0,n}$-measurable conditional variances, and $\Sign = \sum_{k=1}^n \sigkn$.
The variance and Lindeberg hypotheses of Proposition~\ref{prop:mds_clt} hold by assumption.
The proposition then gives $E[e^{it \sn} | \filtration_{0,n}] = e^{-\frac{1}{2} t^2 \Siglimit} + \op(1)$.
\end{proof}

\begin{lem}[Lipschitz Approximation] \label{lemma:lipschitz-approximation}
Let $Z \in \mr^d$ be a random variable.
Define $\mathcal L = \{g(Z) \in L_2(Z): |g|_{lip} \vee |g|_{\infty} < \infty\}$.
\begin{enumerate}[label={(\roman*)}, ref={\ref*{lemma:lipschitz-approximation}.(\roman*)}, itemindent=.5pt, itemsep=.4pt]
\item $\mathcal L$ is dense in $L_2(Z)$. \label{lemma:lipschitz-approximation:density}
\item For any $f \in L_2(Z)$ and any sequence $\lambda_n \to \infty$, there exists a sequence $(z_n)_{n \geq 1}$ in $\mathcal L$ with $|z_n|_{lip} \leq \lambda_n$ and $|z_n - f|_{2, Z} \to 0$. \label{lemma:lipschitz-approximation:rate}
\end{enumerate}
\end{lem}

\begin{proof}
For (i), let $\one(Z \in A)$ $P$-measurable for non-empty $A$.
Define $f_n(z) = (1+n d(z,A))\inv$ with $d(z, A) = \inf_{y \in A} |z-y|_2$.
The function $z \to d(z,A)$ is $1$-Lipschitz (e.g.\ reverse triangle inequality).
Then $|f_n(z) - f_n(y)| \leq n|z-y|_2$ for any $z,y \in \mr^d$ and $|f_n|_{\infty} \leq 1$ so $f_n(Z) \in \mc L$.
Observe that $f_n(z) \to \one(z \in A)$ pointwise as $n \to \infty$.
Then by dominated convergence $\int (f_n(z) - \one(z \in A))^2 dP(z) \to 0$, since the integrand converges pointwise and is dominated by $2 \in L_1(Z)$.
Then bounded Lipschitz functions are dense in the set of indicator functions of measurable sets.
Next, consider a simple function $\sum_k a_k \one(Z \in A_k)$ with $|a_k| < \infty$ for all $k$.
If $g_{nk}(z) = (1+n d(z,A_k))\inv$ the same argument shows that $\sum_k a_k g_{nk}(Z) - \sum_k a_k \one(Z \in A_k) \to 0$ in $L_2(Z)$.
The left hand sum is bounded Lipschitz, showing that the Lipschitz functions are dense in the set of simple functions in $L_2(Z)$.
The bounded simple functions are dense in $L_2(Z)$ (e.g.\ \cite{folland}), so bounded Lipschitz functions are dense in $L_2(Z)$ by transitivity.

For (ii), by (i) there exists a sequence $(\tilde z_k)_{k \geq 1}$ in $\mathcal L$ with $M_k := |\tilde z_k|_{lip} < \infty$ and $|\tilde z_k - f|_{2, Z} \to 0$.
Define $k(n) = \max\{k \geq 1 : M_k \leq \lambda_n\}$, taking $k(n) = 0$ and $\tilde z_0 \equiv 0$ if no such $k$ exists.
Since $\lambda_n \to \infty$, for any fixed $K \geq 1$ we have $\lambda_n \geq M_K$ eventually, hence $k(n) \geq K$ eventually and so $k(n) \to \infty$.
Setting $z_n := \tilde z_{k(n)}$ gives $|z_n|_{lip} \leq \lambda_n$ by construction and $|z_n - f|_{2, Z} \to 0$ since $k(n) \to \infty$.
\end{proof}

\begin{lem}[Asymptotic Independence] \label{lemma:charfn}
Consider a probability space $(\Omega, \filtrationalt, P)$ with $\sigma$-algebras $\filtration_{n,k} \sub \filtration_{n,k+1} \sub \filtrationalt$ for all $n \geq 1$ and $0 \leq k \leq m-1$.
Let $X_{n,k}$ be $\filtration_{n,k}$-measurable random variables for all $n$ and $1 \leq k \leq m$. 
Suppose that $X_{n,k} | \filtration_{n,k-1} \convwprocess L_k | \filtration_{n,0}$, as in Definition \ref{def:conditional_weak_convergence}.
Then $(X_{n,1}, \dots, X_{n,m}) | \filtration_{n,0} \convwprocess (L_1, \dots, L_m) | \filtration_{n,0}$, with jointly independent limit.
\end{lem}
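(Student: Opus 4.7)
The plan is to prove joint conditional convergence of characteristic functions by a telescoping/tower argument, using the assumed marginal conditional convergence one coordinate at a time. By Definition \ref{def:conditional_weak_convergence}, it suffices to show that for any $t = (t_1,\ldots,t_m) \in \mr^m$,
\[
E\!\left[ e^{i \sum_{k=1}^m t_k X_{n,k}} \bigg|\, \filtration \right] = \prod_{k=1}^m \phi_k(t_k) + \op(1), \qquad \phi_k(t_k) := E[e^{i t_k L_k} | \filtration],
\]
since the right-hand side is the conditional characteristic function of a random vector with $\filtration$-conditionally independent coordinates having the prescribed marginals.

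To prove the display, write $Z_{n,k} = e^{it_k X_{n,k}}$ and note $|Z_{n,k}| = 1$ and $Z_{n,k}$ is $\filtration_{n,k}$-measurable. I would peel off the last factor using the tower law: since $Z_{n,1}\cdots Z_{n,m-1}$ is $\filtration_{n,m-1}$-measurable,
\[
E\!\left[\prod_{k=1}^m Z_{n,k} \bigg|\, \filtration \right] = E\!\left[ \left(\prod_{k=1}^{m-1} Z_{n,k}\right) E[Z_{n,m}|\filtration_{n,m-1}] \bigg|\, \filtration \right].
\]
By hypothesis $E[Z_{n,m}|\filtration_{n,m-1}] - \phi_m(t_m) = \op(1)$, and this difference is bounded by $2$ in absolute value. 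Replacing $E[Z_{n,m}|\filtration_{n,m-1}]$ by $\phi_m(t_m)$ introduces an error whose conditional absolute value is at most $E\!\left[ |E[Z_{n,m}|\filtration_{n,m-1}] - \phi_m(t_m)| \,\big|\, \filtration\right]$, which is $\op(1)$ by Lemma \ref{lemma:conditional_markov}(iv) applied to the bounded null sequence. Since $\phi_m(t_m)$ is $\filtration$-measurable, it can be pulled out of the outer conditional expectation, yielding
\[
E\!\left[\prod_{k=1}^m Z_{n,k} \bigg|\, \filtration \right] = \phi_m(t_m)\, E\!\left[\prod_{k=1}^{m-1} Z_{n,k} \bigg|\, \filtration \right] + \op(1).
\]

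Iterating this telescoping identity for $k = m{-}1, m{-}2, \ldots, 1$ (each step uses the same bounded-sequence argument to pass from $E[Z_{n,k}|\filtration_{n,k-1}]$ to $\phi_k(t_k)$, and uses that $\phi_{k+1}(t_{k+1})\cdots\phi_m(t_m)$ is bounded by $1$ so the accumulated error remains $\op(1)$) produces exactly $\prod_{k=1}^m \phi_k(t_k) + \op(1)$, which is the desired display. The main subtlety is the error-propagation in the induction: one must check that each partial product $|\phi_{k+1}(t_{k+1}) \cdots \phi_m(t_m) \cdot Z_{n,1} \cdots Z_{n,k-1}| \leq 1$, so that a bounded multiplicative factor preserves the $\op(1)$ rate at every step. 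No integrability beyond boundedness of complex exponentials is needed, and measurability of $\phi_k$ with respect to $\filtration \subseteq \filtration_{n,k-1}$ (which follows from the limit being expressed conditional on $\filtration$) is what allows each $\phi_k$ to be factored out cleanly.
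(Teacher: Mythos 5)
Your proof is correct and follows essentially the same route as the paper's: peel off the last coordinate via the tower law, replace $E[e^{it_{k}X_{n,k}}\mid\filtration_{n,k-1}]$ by its $\filtration$-measurable limit using boundedness of the $\op(1)$ difference (the paper invokes uniform integrability where you invoke Lemma \ref{lemma:conditional_markov}(iv), which is the same mechanism), and iterate, controlling the accumulated error by the unit bound on the complex exponentials. The paper phrases the iteration as an induction from $k=1$ upward while you telescope from $k=m$ downward, but the steps are identical.
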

\begin{proof}
By Levy continuity, it suffices to show $E[e^{it'(X_{1,n}, \dots X_{m,n})} | \filtration_{n,0}] \to \prod_{k=1}^{m} E[e^{it_k L_k} | \filtration_{n,0}]$ for all  $t \in \mr^m$.
We work by induction on $k$.
By assumption, $E[e^{itX_{n,1}}|\filtration_{n,0}] \convp E[e^{itL_1} | \filtration_{n,0}]$ for all $t \in \mr$.
Assume by induction that the conclusion holds for $1 \leq k \leq k' \leq m$.
Then
\begin{align*}
&E[e^{i \sum_{k=1}^{k'+1} t_k X_{n,k}} | \filtration_{n,0}] = E[e^{i \sum_{k=1}^{k'} t_k X_{n,k}} E[e^{it_{k'+1} X_{n, k'+1}} | \filtration_{n,k'}]| \filtration_{n,0}] \\
&= E[e^{i \sum_{k=1}^{k'} t_k X_{n,k}} (E[e^{it_{k'+1} X_{n, k'+1}} | \filtration_{n,k'}] - E[e^{i t_{k'+1} L_{k'+1}} | \filtration_{n,0}]) | \filtration_{n,0}] \\
&+ E[e^{i t_{k'+1} L_{k'+1}}| \filtration_{n,0}] E[e^{i \sum_{k=1}^{k'} t_k X_{n,k}}| \filtration_{n,0}] = \prod_{k=1}^{k' + 1} E[e^{i t_k L_k}| \filtration_{n,0}] + \op(1)
\end{align*}
The first equality is by tower law and our measurability and increasing $\sigma$-algebra assumption.
For the final equality, note 
$|e^{i \sum_{k=1}^{k'} t_k X_{n,k}} (E[e^{it_{k'+1} X_{n, k'+1}} | \filtration_{n,k'}] - E[e^{i t_{k'+1} L_{k'+1}} | \filtration_{n,0}])| \leq |E[e^{it_{k'+1} X_{n, k'+1}} | \filtration_{n,k'}] - E[e^{i t_{k'+1} L_{k'+1}} | \filtration_{n,0}]| \convp 0$.
Then the first term in the sum above is $\op(1)$ since the integrand converges in probability and is bounded, hence UI.
The final equality also uses our inductive hypothesis.
This finishes the proof.
\end{proof}

\begin{lem}[LLN] \label{lemma:lln}
Consider $A_n = n \inv \sum_{\group \in \groupset_n} \ugroup$, with $\groupset_n$ a collection of disjoint subsets of $[n]$. 
Let $(\filtrationgeneric)_{n \geq 1}$ be $\sigma$-algebras such that $\groupset_n$ is $\filtrationgeneric$-measurable, $E[u_{\group} | \filtrationgeneric] = 0$, and for all $\group \not = \group' \in \groupset_n$ $\ugroup \indep u_{\group'} | \filtrationgeneric$. 
If $n \inv \sum_{\group \in \groupset_n} E[|u_{\group}| \one(|\ugroup| > c_n) | \filtrationgeneric] \convp 0$ for $c_n = \omega(1)$, $c_n = o(n \half)$, then $A_n \convp 0$.
\end{lem}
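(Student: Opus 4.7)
\textbf{Proof proposal for Lemma \ref{lemma:lln}.} The plan is to apply the classical truncation argument for a weak law of large numbers, but conditionally on $\filtrationcand$, leveraging Lemma \ref{lemma:conditional_markov} to pass from conditional to unconditional convergence in probability. The two assumptions on $c_n$ are calibrated exactly for the two halves of the argument: $c_n \to \infty$ controls the tail truncation error, while $c_n = o(n^{1/2})$ controls the variance of the truncated part.

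First, I would split $u_{\group} = \bar u_{\group} + \tilde u_{\group}$ where $\bar u_{\group} = u_{\group} \one(|u_{\group}| \leq c_n)$ and $\tilde u_{\group} = u_{\group} \one(|u_{\group}| > c_n)$. Note that $\bar u_{\group}$ is $\sigma(\filtrationcand, u_{\group})$-measurable, so conditional independence across groups still holds for both pieces given $\filtrationcand$. For the tail part, the conditional expectation of $n^{-1} \sum_{\group} |\tilde u_{\group}|$ given $\filtrationcand$ equals $n^{-1} \sum_{\group} E[|u_{\group}| \one(|u_{\group}| > c_n) | \filtrationcand]$, which is $\op(1)$ by hypothesis; thus by Lemma \ref{lemma:conditional_markov}(ii), $n^{-1} \sum_{\group} \tilde u_{\group} = \op(1)$.

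For the bounded part, since $E[u_{\group} | \filtrationcand] = 0$, the mean correction satisfies $|E[\bar u_{\group} | \filtrationcand]| = |E[\tilde u_{\group} | \filtrationcand]| \leq E[|u_{\group}| \one(|u_{\group}| > c_n) | \filtrationcand]$, so $n^{-1} \sum_{\group} E[\bar u_{\group} | \filtrationcand] = \op(1)$ by the same hypothesis. For the centered bounded part, use that $\groupset_n \in \filtrationcand$ and conditional independence to write
\[
\var\left(n^{-1} \sum_{\group \in \groupset_n} \bar u_{\group} \,\Big|\, \filtrationcand \right) = n^{-2} \sum_{\group \in \groupset_n} \var(\bar u_{\group} | \filtrationcand) \leq n^{-2} \sum_{\group \in \groupset_n} E[\bar u_{\group}^2 | \filtrationcand] \leq n^{-2} \cdot n \cdot c_n^2 = c_n^2/n,
\]
using $|\bar u_{\group}| \leq c_n$ and $|\groupset_n| \leq n$ (the groups are disjoint subsets of $[n]$). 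Since $c_n^2/n = o(1)$, Lemma \ref{lemma:conditional_markov}(iii) yields $n^{-1} \sum_{\group}(\bar u_{\group} - E[\bar u_{\group}|\filtrationcand]) = \op(1)$. Combining the three pieces gives $A_n = \op(1)$.

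There is no real obstacle here; the statement is essentially the $L^2$-truncation WLLN lifted to the conditional setting, and the crux is simply that disjointness of the groups yields the cardinality bound $|\groupset_n| \leq n$, which is what makes the trivial variance bound $c_n^2/n$ suffice. The only point requiring any care is verifying that the centering error $n^{-1} \sum_{\group} E[\bar u_{\group}|\filtrationcand]$ is negligible, which uses the MDS-type centering $E[u_{\group}|\filtrationcand] = 0$ together with the same tail hypothesis.
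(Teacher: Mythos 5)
Your proposal is correct and follows essentially the same route as the paper: the identical three-term truncation decomposition (tail part, centered truncated part, centering error), with the tail and centering terms handled via the hypothesis and conditional Markov, and the centered truncated term via the conditional variance bound $c_n^2/n = o(1)$ using disjointness to bound $|\groupset_n| \leq n$. No substantive differences.
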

\begin{proof}
By disjointness $|\groupset_n| \leq n$. 
Fix an indexing $\groupset_n = \{\group_s: 1 \leq s \leq n\}$, possibly with $\group_s = \emptyset$ for some $s$.
Define $\bar u_{sn} = u_s \one(|u_s| \leq c_n)$ and $\bar \mu_{sn} = E[u_s \one(|u_s| \leq c_n) | \filtrationgeneric]$. 
Expand
\begin{align*}
\frac{1}{n} \sum_{\group \in \groupset_n} \ugroup = \frac{1}{n} \sum_{s=1}^n u_s = \frac{1}{n} \sum_{s=1}^n [(u_s - \bar u_{sn}) + (\bar u_{sn} - \bar \mu_{sn}) + \bar \mu_{sn}] = T_{n1} + T_{n2} + T_{n3}
\end{align*}
Observe that $E[|T_{n1}| | \filtrationgeneric] \leq (1/n) \sum_{s=1}^n E[|u_s| \one(|u_s| > c_n) | \filtrationgeneric] = \op(1)$ by assumption.
Then $T_{n1} = \op(1)$ by conditional Markov (Lemma \ref{lemma:conditional_markov}). 
Next consider $T_{n2}$.
Note that by definition $E[\bar u_{sn} - \bar \mu_{sn} | \filtrationgeneric] = 0$ for each $1 \leq s \leq n$. 
Note that for $s \not = s'$ $ \cov(\bar u_{sn}, \bar u_{s'n} | \filtrationgeneric) = 0$ by the conditional independence assumption. 
Then $\var(T_{n2} | \filtrationgeneric) = n^{-2} \sum_{s=1}^n \var(\bar u_{sn} | \filtrationgeneric) \leq n^{-2} \sum_{s=1}^n E[\bar u_{sn}^2 | \filtrationgeneric] \leq n \inv c_n^2 = o(1)$, so that $T_{n2} = \op(1)$ by conditional Chebyshev. 
Finally, since $E[u_s | \filtrationgeneric] = 0$, we have $\bar \mu_{sn} = -E[u_s \one(|u_s| > c_n) | \filtrationgeneric]$.
Then $E[|T_{n3}| | \filtrationgeneric] \leq (1/n) \sum_{s=1}^n E[|u_s| \one(|u_s| > c_n) | \filtrationgeneric] = \op(1)$, so that $T_{n3} = \op(1)$ by conditional Markov as before.   
This finishes the proof.
\end{proof}

\begin{lem} \label{lemma:maximal-inequality}
Suppose $E[|X|^\alpha] < \infty$ for $\alpha > 0$.
Then $\max_{i=1}^n |X_i| = \op(n^{1/\alpha})$.
\end{lem}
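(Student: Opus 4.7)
The plan is to combine a union bound over the sample with a sharpened Markov-type tail bound. Fixing $\epsilon > 0$, the goal reduces to showing $P(\max_{i=1}^n |X_i| > \epsilon n^{1/p}) \to 0$. Treating the $X_i$ as an iid copy of $X$ (consistent with how the lemma is invoked in the paper, e.g.\ in the proof of Lemma \ref{lemma:balance}), the union bound gives
\[
P\!\left(\max_{i=1}^n |X_i| > \epsilon n^{1/p}\right) \;\leq\; n \, P\!\left(|X|^p > \epsilon^p n\right).
\]

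The key observation is that plain Markov's inequality $P(|X|^p > c) \leq E|X|^p / c$ only yields an $O(1)$ bound on the right-hand side, which is insufficient. Instead, I would use the truncated form
\[
P(|X|^p > c) \;\leq\; c^{-1} \, E\!\left[|X|^p \mathbf{1}(|X|^p > c)\right],
\]
which is tight enough because the indicator makes the expectation vanish in the limit. Applied with $c = \epsilon^p n$, this produces
\[
n \, P\!\left(|X|^p > \epsilon^p n\right) \;\leq\; \epsilon^{-p} \, E\!\left[|X|^p \mathbf{1}(|X|^p > \epsilon^p n)\right].
\]

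Finally, since $|X|^p \in L^1$ by assumption and $|X|^p \mathbf{1}(|X|^p > \epsilon^p n) \to 0$ almost surely as $n \to \infty$, dominated convergence (with $|X|^p$ as the dominating integrable envelope) gives $E[|X|^p \mathbf{1}(|X|^p > \epsilon^p n)] \to 0$. Chaining the inequalities yields $P(\max_{i=1}^n |X_i| > \epsilon n^{1/p}) \to 0$ for every $\epsilon > 0$, which is exactly $\max_{i=1}^n |X_i| = o_p(n^{1/p})$. There is no real obstacle here; the only subtlety is recognizing that the naive Markov bound is one factor of $n$ too weak, so the truncated version plus dominated convergence is needed to extract the extra $o(1)$.
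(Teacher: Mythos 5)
Your proof is correct and follows essentially the same route as the paper's: union bound, the truncated Markov inequality $P(|X|^p > c) \leq c^{-1} E[|X|^p \one(|X|^p > c)]$, and dominated convergence to kill the tail expectation. No gaps.
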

\begin{proof}
For $\epsilon >0$ we have $P(\max_{i=1}^n |X_i| > \epsilon n^{1/\alpha}) \leq n P(|X_i| > \epsilon n^{1/\alpha}) = n P(|X_i|^\alpha > \epsilon^\alpha n) \leq n (\epsilon^\alpha n) \inv E[|X_i|^\alpha \one(|X_i|^\alpha > \epsilon^\alpha n)] \lesssim E[|X_i|^\alpha \one(|X_i|^\alpha > \epsilon^\alpha n)] \to 0$.
The first inequality by union bound, the equality by monotonicity of $x \to x^{\alpha}$.
The second inequality is Markov's, and the final statement by dominated convergence, since $E[|X_i|^\alpha] < \infty$.
\end{proof}

\begin{lem} \label{lemma:op-rate}
If $X_n \convp 0$, there is a deterministic sequence $\lambda_n \to \infty$ with $\lambda_n X_n \convp 0$.
\end{lem}
\begin{proof}
For each $k \geq 1$, choose $N_k$ strictly increasing such that $P(|X_n| > 2^{-k}) < 2^{-k}$ for all $n \geq N_k$.
Set $\lambda_n = 2^{k/2}$ for $N_k \leq n < N_{k+1}$, and $\lambda_n = 1$ for $n < N_1$.
Then $\lambda_n \to \infty$ since $N_k \to \infty$.
For $\epsilon > 0$ and $n \in [N_k, N_{k+1})$ with $k \geq 2 \log_2(1/\epsilon)$, $P(|\lambda_n X_n| > \epsilon) = P(|X_n| > \epsilon \cdot 2^{-k/2}) \leq P(|X_n| > 2^{-k}) < 2^{-k} \to 0$.
\end{proof}

In the following lemmas, $\filtrationgeneric$ denotes a generic $\sigma$-algebra with $\filtrationcandpsi \sub \filtrationgeneric$ and $\filtrationgeneric \indep \eta$.
In our applications, we take $\filtrationgeneric$ to be either $\filtrationcandpsi$ or $\filtrationhd$ depending on the context.

\begin{lem}[Group Aggregate Independence] \label{lemma:group_aggregate_independence}
Let $\Dn \sim \localdesigncond(\psi, \propfn(\psi))$ with associated partition $\groupset_n$ and randomness $\eta$ from Definition \ref{defn:design-construction}.
Let $\filtrationgeneric$ be a $\sigma$-algebra such that $\filtrationcandpsi \sub \filtrationgeneric$ and $\filtrationgeneric \indep \eta$.
Suppose $u_\group = \phi((\Di)_{i \in \group}, X, \group)$ for some $\filtrationgeneric$-measurable variable $X$ and deterministic measurable function $\phi(\cdot)$.
Then $(u_\group)_{\group \in \groupset_n}$ are jointly independent conditional on $\filtrationgeneric$.
\end{lem}
\begin{proof}
By Equation \eqref{equation:design-construction}, $(D_i)_{i \in \group} = f(\group, \eta_{s(\group)})$ for the truncation function $f(\cdot)$ and the $\filtrationcandpsi$-measurable injection $s: \groupset_n \to [L] \times [n+1]$ of Definition~\ref{defn:design-construction}.
Substituting, $u_\group = \phi(f(\group, \eta_{s(\group)}), X, \group) \equiv h(\group, X, \eta_{s(\group)})$ for the composition $h(g, X, \eta) = \phi(f(g, \eta), X, g)$.
By construction, $\groupsetn$ and $I = \{i: \Ti=1\}$ are $\filtrationcandpsi$ measurable, thus $\filtrationgeneric$ measurable.
The claim now follows from Lemma~\ref{lemma:cross_group_independence}, applied with finite index set $J = [L] \times [n+1]$, family $(\eta_{l,j})_{(l, j) \in J}$, injection $s(\cdot)$, and fixed function $h(\cdot)$ as above.
\end{proof}

\begin{lem}[Independence] \label{lemma:cross_group_independence}
Let $\groupsetn$ be a partition of $I \sub [n]$ with $\groupsetn, I$ both $\filtrationgeneric$-measurable.
Let $(\eta_t)_{t \in J}$ be jointly independent random variables with $J$ a finite set and $(\eta_t)_{t \in J} \indep \filtrationgeneric$, and let $s: \groupsetn \to J$ an $\filtrationgeneric$-measurable injection assigning $\group \in \groupsetn$ to $\eta_{s(\group)}$.
For each $\group \in \groupset_n$, let $Z_\group = h(\group, X, \eta_{s(\group)})$ for an $\filtrationgeneric$-measurable random vector $X$ and deterministic $h(\cdot)$.
Then $(Z_\group)_{\group \in \groupset_n}$ are jointly independent conditional on $\filtrationgeneric$.
\end{lem}
\begin{proof}
Let $A$ be an $\filtrationgeneric$-measurable event.
For each fixed partition $\mc{G}$ of $I$ and each injection $r : \mc{G} \to J$, define $A(r, \mc{G}) \equiv A \cap \{\groupset_n = \mc{G}\} \cap \{s(\cdot) = r(\cdot)\}$ and observe that $A(r, \mc G) \in \filtrationgeneric$.
Observe that the finite family $\{A(r, \mc{G})\}_{(r, \mc{G})}$ partitions $A$, that is $A = \cup_{(r, \mc{G})} A(r, \mc{G})$.
Then for any collection of bounded test functions $(\phi_\group)_{\group \sub I}$, we calculate
\begin{align*}
&E\bigl[\one_A \Pi_{\group \in \groupset_n} \phi_\group(Z_\group)\bigr] = E\bigl[\one_A \Pi_{\group \in \groupset_n} \phi_\group(h(\group, X, \eta_{s(\group)}))\bigr] \\
& = \sum_{(r, \mc{G})} E\bigl[\one_{A(r, \mc{G})} \Pi_{\group \in \mc{G}} \phi_\group(h(\group, X, \eta_{r(\group)}))\bigr] = \sum_{(r, \mc{G})} E\bigl[\one_{A(r, \mc{G})} E[\Pi_{\group \in \mc{G}} \phi_\group(h(\group, X, \eta_{r(\group)})) | \filtrationgeneric]\bigr] 
\end{align*}
The first equality is by definition of $Z_\group$.
The second since $A = \cup_{(r, \mc{G})} A(r, \mc{G})$ and $\groupset_n = \mc{G}$, $s = r$ on $A(r, \mc{G})$.
The third is by tower law and $\filtrationgeneric$-measurability of $A(r, \mc{G})$.
Continuing, 
\begin{align*}
&= \sum_{(r, \mc{G})} E\bigl[\one_{A(r, \mc{G})} \Pi_{\group \in \mc{G}} E[\phi_\group(h(\group, X, \eta_{r(\group)})) | \filtrationgeneric]\bigr] \\
&= \sum_{(r, \mc{G})} E\bigl[\one_{A(r, \mc{G})} \Pi_{\group \in \groupset_n} E[\phi_\group(Z_\group) | \filtrationgeneric]\bigr]
= E\bigl[\one_A \Pi_{\group \in \groupset_n} E[\phi_\group(Z_\group) | \filtrationgeneric]\bigr].
\end{align*}

The first equality is by conditional independence of $(\eta_{r(\group)})_{\group \in \mc{G}}$ given $\filtrationgeneric$ and since $X$ is $\filtrationgeneric$-measurable and $h(\cdot)$ is fixed. 
Conditional independence holds since indices $\{r(\group)\}_{\group \in \mc{G}}$ are distinct by injectivity of $r(\cdot)$ and $(\eta_t)_{t \in J} \indep \filtrationgeneric$ and are jointly independent.
The second equality uses $Z_\group = h(\group, X, \eta_{r(\group)})$ and $\groupset_n = \mc{G}$ on $A(r, \mc{G})$.
The final equality recombines the partition $A = \cup_{(r, \mc{G})} A(r, \mc{G})$.
Then we have shown $E\bigl[\one_A \Pi_{\group \in \groupset_n} \phi_\group(Z_\group)\bigr] = E\bigl[\one_A \Pi_{\group \in \groupset_n} E[\phi_\group(Z_\group) | \filtrationgeneric]\bigr]$ for any $\filtrationgeneric$-measurable event $A$.
By definition of conditional expectation, this implies $E\bigl[\Pi_{\group \in \groupset_n} \phi_\group(Z_\group) \bigm| \filtrationgeneric\bigr] = \Pi_{\group \in \groupset_n} E[\phi_\group(Z_\group) | \filtrationgeneric]$.
Since the test function collection $(\phi_\group)_{\group \sub I}$ was arbitrary, this shows the claim.
\end{proof}

\begin{lem}[Design Properties] \label{lemma:design_properties}
Let $\Dn \sim \localdesigncond(\psi, \propfn(\psi))$ with partition $\groupset_n$, group propensity $\propfng$ for $\group \in \groupset_n$, and randomness $\eta$ from Definition~\ref{defn:design-construction}.
Let $\filtrationgeneric$ be a $\sigma$-algebra such that $\filtrationcandpsi \sub \filtrationgeneric$ and $\filtrationgeneric \indep \eta$.
For $i \in \group$, $E[\Di | \filtrationgeneric] = \propfng$ and $\var(\Di | \filtrationgeneric) = \propfng(1-\propfng)$.
For $i \ne j$,
\[
\cov(\Di, \Dj | \filtrationgeneric) = \begin{cases} -\frac{\propfng(1-\propfng)}{|\group|-1} & i, j \in \group \text{ interior} \\ 0 & \text{otherwise.} \end{cases}
\]
\end{lem}
\begin{proof}
By Definition~\ref{defn:design-construction} applied to $\Dn$, conditional on $\filtrationgeneric$ the within-group vector $(\Di)_{i \in \group}$ is distributed as $\crdist(\propfng)$ for interior groups $\group$ and as iid Bernoulli$(\propfng)$ of length $|\group|$ for the remainder.
For $i \in \group$, $\Di$ is a single coordinate of $(\Di)_{i \in \group}$, marginally Bernoulli$(\propfng)$ in both cases.
Hence $E[\Di | \filtrationgeneric] = \propfng$ and $\var(\Di | \filtrationgeneric) = \propfng(1-\propfng)$.

For distinct $i, j$ with associated groups $\group(i) \neq \group(j)$, Lemma~\ref{lemma:group_aggregate_independence} applied with $u_\group = (\Di)_{i \in \group}$ implies the within-group vectors are jointly independent conditional on $\filtrationgeneric$, so $\cov(\Di, \Dj | \filtrationgeneric) = 0$.
For $i \neq j$ in the same remainder group, the iid Bernoulli structure of $(\Di)_{i \in \group}$ conditional on $\filtrationgeneric$ gives $\cov(\Di, \Dj | \filtrationgeneric) = 0$.
For $i \neq j$ in the same interior group $\group$ with $|\group| = k$ and $\propfng = a/k$, $(\Di, \Dj)$ has the joint marginal of $\crdist(a/k)$ on two distinct coordinates conditional on $\filtrationgeneric$:
\begin{align*}
\cov(\Di, \Dj | \filtrationgeneric) &= P(\Di = \Dj = 1 | \filtrationgeneric) - \propfng^2 = \binom{k}{a}\inv \binom{k-2}{a-2} - \propfng^2 \\
&= \frac{a(a-1)}{k(k-1)} - \frac{a^2}{k^2} = -\frac{(a/k)(1 - a/k)}{k-1} = -\frac{\propfng(1-\propfng)}{k-1}.
\end{align*}
\end{proof}

\begin{lem}[Stratified WLLN] \label{lemma:stratified_wlln}
Let $\Dn \sim \localdesigncond(\psi, \propfn(\psi))$.
Let $\filtrationgeneric$ be a $\sigma$-algebra such that $\filtrationcandpsi \sub \filtrationgeneric$ and $\filtrationgeneric \indep \eta$ in Definition \ref{defn:design-construction}.
Let $(h_n(W_i))_{i=1}^n$ be $\filtrationgeneric$-measurable.
\begin{enumerate}[noitemsep,topsep=0pt]
\item If $\sup_{n \geq 1} E[h_n(W)^2] < \infty$, then $\en[\Ti (\Di - \propfn(\psii)) h_n(W_i)] = \Op(\negrootn)$.
\item If $(h_n(W))_{n \geq 1}$ is uniformly integrable, then $\en[\Ti (\Di - \propfn(\psii)) h_n(W_i)] = \op(1)$.
\item Let $h$ be a fixed measurable function with $E[|h(W)|] < \infty$. Then $\en[\Ti h(W_i)] = E[\propselect(\psii) h(W_i)] + \op(1)$ and $\en[\Ti \Di h(W_i)] = E[\propselect(\psii) \propfn(\psii) h(W_i)] + \op(1)$.
\end{enumerate}
\end{lem}
\begin{proof}
Denote $a_i = \Ti (\Di - \propfn(\psii)) h_n(W_i)$.
Note $\Ti, \propfn(\psii), h_n(W_i) \in \filtrationgeneric$ by $\filtrationcandpsi \sub \filtrationgeneric$ and the hypothesis on $h_n$.
We first show $E[\en[a_i] | \filtrationgeneric] = 0$.
For $i$ with $\Ti = 1$, by the partition property $i$ lies in some group $\group(i) \in \groupset_n$ with $\propfng = \propfn(\psii)$, and Lemma~\ref{lemma:design_properties} gives $E[\Di | \filtrationgeneric] = \propfn(\psii)$.
For $i$ with $\Ti = 0$ we have $a_i = 0$.
In either case, we have $E[a_i | \filtrationgeneric] = \Ti h_n(W_i) (E[\Di | \filtrationgeneric] - \propfn(\psii)) = 0$.
and summing gives $E[\en[a_i] | \filtrationgeneric] = 0$.

Next, we bound $\var(\en[a_i] | \filtrationgeneric) = n^{-2} \sum_{i,j} \cov(a_i, a_j | \filtrationgeneric)$.
By $\filtrationgeneric$-measurability of the prefactors, $\cov(a_i, a_j | \filtrationgeneric) = \Ti \Tj h_n(W_i) h_n(W_j) \cov(\Di, \Dj | \filtrationgeneric)$.
Lemma~\ref{lemma:design_properties} gives $\var(\Di | \filtrationgeneric) = \propfng(1-\propfng) \leq 1/4$ for $i \in \group$, $|\cov(\Di, \Dj | \filtrationgeneric)| = \propfng(1-\propfng)/(|\group|-1) \leq |\group|\inv$ for distinct $i, j$ in an interior $\group$. 
To see the inequality, note $\propfng(1-\propfng) \leq 1/4 \leq (k-1)/k$ for $k = |\group| \geq 2$.
Also $\cov(\Di, \Dj | \filtrationgeneric) = 0$ for all other pairs $i, j$ by Lemma \ref{lemma:design_properties}.  
Then
\begin{align*}
\var(\en[a_i] | \filtrationgeneric) &\leq n^{-2} \sum_i \Ti h_n(W_i)^2 + n^{-2} \sum_{\group \text{ interior}} |\group|\inv \sum_{i \neq j \in \group} \Ti \Tj |h_n(W_i)| |h_n(W_j)| \\
&\leq n\inv \en[\Ti h_n(W_i)^2] + n^{-2} \sum_{\group \text{ interior}} |\group|\inv \biggl(\sum_{i \in \group} |h_n(W_i)|\biggr)^2 \\
&\leq n\inv \en[\Ti h_n(W_i)^2] + n^{-2} \sum_{\group \text{ interior}} \sum_{i \in \group} h_n(W_i)^2 \leq 2 n\inv \en[\Ti h_n(W_i)^2].
\end{align*}
The first inequality uses $\var(\Di | \filtrationgeneric) \leq 1$ and $|\cov(\Di, \Dj | \filtrationgeneric)| \leq |\group|\inv$, together with the triangle inequality.
The second uses $0 \leq \Ti \leq 1$ to drop the indicators and extends $\sum_{i \neq j \in \group}$ to $\sum_{i, j \in \group}$.
The third uses Cauchy-Schwarz: $(\sum_{i \in \group} |h_n(W_i)|)^2 \leq |\group| \sum_{i \in \group} h_n(W_i)^2$.

Suppose $\sup_{n \geq 1} E[h_n(W)^2] < \infty$.
By the law of total variance and work above we have $E[(\en[a_i])^2] = E[\var(\en[a_i] | \filtrationgeneric)] \leq 2 n\inv E[h_n(W)^2] \leq 2 n\inv \sup_n E[h_n(W)^2]$.  
The final quantity is $O(n\inv)$, so Markov inequality gives $\en[a_i] = \Op(\negrootn)$.

Suppose instead $(h_n(W))_{n \geq 1}$ is UI.
Let $\hinbar = h_n(W_i) \one(|h_n(W_i)| \leq n^{1/4})$.
Decompose $\en[a_i] = A_n + B_n$ where $A_n = \en[\Ti (\Di - \propfn(\psii))(h_n(W_i) - \hinbar)]$ and $B_n = \en[\Ti (\Di - \propfn(\psii)) \hinbar]$.
For $A_n$, we have the bound $|A_n| \leq \en[|h_n(W_i) - \hinbar|] = \en[|h_n(W_i)| \one(|h_n(W_i)| > n^{1/4})]$ using $|\Ti(\Di - \propfn(\psii))| \leq 1$.
By uniform integrability, $E[|h_n(W)| \one(|h_n(W)| > n^{1/4})] \to 0$, so $E[|A_n|] \to 0$ and $A_n = \op(1)$ by Markov.
For $B_n$, we have $E[B_n | \filtrationgeneric] = 0$ by the work above and the variance bound applied with $\hinbar$ in place of $h_n(W_i)$ gives $\var(B_n | \filtrationgeneric) \leq 2 n\inv \en[\hinbar^2] \leq 2 n^{-1/2} = o(1)$ since $|\hinbar| \leq n^{1/4}$.
Then $B_n = \op(1)$ by Chebyshev (Lemma~\ref{lemma:conditional_markov}).
Then $\en[a_i] = \op(1)$.

For (3), decompose $\en[\Ti h(W_i)] = \en[(\Ti - \propselect(\psii)) h(W_i)] + \en[\propselect(\psii) h(W_i)]$.
The first term is $\op(1)$ by (2) applied to the sampling design, via the substitution $\Dn \to \Tn$ and $\Tn \to 1$. 
Note the constant sequence $h_n = h$ is uniformly integrable since $E[|h(W)|] < \infty$.
The second term equals $E[\propselect(\psii) h(W_i)] + \op(1)$ by vanilla WLLN for iid data, using $|\propselect h| \leq |h|$ integrable.
This gives the first identity.
For the second identity, decompose $\en[\Ti \Di h(W_i)] = \en[\Ti (\Di - \propfn(\psii)) h(W_i)] + \en[\Ti \propfn(\psii) h(W_i)]$.
The first term is $\op(1)$ by (2) applied to the assignment design with constant sequence $h_n = h$ uniformly integrable since $E[|h(W)|] < \infty$.
The second term equals $E[\propselect(\psii) \propfn(\psii) h(W_i)] + \op(1)$ by the first identity applied with $\propfn h$ in place of $h$, integrable since $|\propfn h| \leq |h|$.
\end{proof}

\begin{lem}[Variance Identity] \label{lemma:variance_identity}
Let $\Dn \sim \localdesigncond(\psi, \propfn(\psi))$ with partition $\groupsetn$.
Let $\filtrationgeneric$ be a $\sigma$-algebra such that $\filtrationcandpsi \sub \filtrationgeneric$ and $\filtrationgeneric \indep \eta$ in Definition \ref{defn:design-construction}, and let $(h_n(W_i))_{i=1}^n$ be $\filtrationgeneric$-measurable.
Then $\var\bigl(\rootn \en[\Ti (\Di - \propfn(\psii)) h_n(W_i)] \bigm| \filtrationgeneric\bigr)$ is bounded above by
\begin{equation} \label{equation:variance-identity-bound}
n\inv \sum_{\group \in \groupset_n} |\group|\inv \sum_{i,j \in \group} (h_n(W_i) - h_n(W_j))^2 + n\inv \kboundn \nlevels \cdot \max_{i=1}^n h_n(W_i)^2.
\end{equation}
\end{lem}
\begin{proof}
Define $a_i = \Ti (\Di - \propfn(\psii)) h_n(W_i)$.
As in the proof of Lemma~\ref{lemma:stratified_wlln}, we have $\cov(a_i, a_j | \filtrationgeneric) = \Ti \Tj h_n(W_i) h_n(W_j) \cov(\Di, \Dj | \filtrationgeneric)$.
By Lemma~\ref{lemma:design_properties}, the covariance $\cov(\Di, \Dj | \filtrationgeneric) = 0$ unless $i, j$ lie in the same group, so
\begin{align*}
\var(\rootn \en[a_i] | \filtrationgeneric) &= n\inv \sum_{i, j} \Ti \Tj h_n(W_i) h_n(W_j) \cov(\Di, \Dj | \filtrationgeneric) \\
&= n\inv \sum_{\group \in \groupset_n} \sum_{i,j \in \group} h_n(W_i) h_n(W_j) \cov(\Di, \Dj | \filtrationgeneric).
\end{align*}
The second equality uses $\Ti = \Tj = 1$ for all $i, j \in \group$ since $\groupsetn$ partitions $\{i: \Ti=1\}$.
For interior $\group$ with $|\group| = k$ and $\propfng = a/k$, Lemma~\ref{lemma:design_properties} gives $\cov(\Di, \Di | \filtrationgeneric) = a(k-a)/k^2$ and $\cov(\Di, \Dj | \filtrationgeneric) = -a(k-a)/[k^2(k-1)]$ for $i \neq j$.
Then
\begin{align*}
\sum_{i,j \in \group} h_n(W_i) h_n(W_j) \cov(\Di, \Dj | \filtrationgeneric) &= \frac{a(k-a)}{k^2} \sum_{i \in \group} h_n(W_i)^2 - \frac{a(k-a)}{k^2(k-1)} \sum_{i \neq j \in \group} h_n(W_i) h_n(W_j) \\
&= \frac{a(k-a)}{k^2(k-1)} \biggl[(k-1) \sum_i h_n(W_i)^2 - \sum_{i \neq j} h_n(W_i) h_n(W_j)\biggr].
\end{align*}
Using $\sum_{i \neq j \in \group} h_n(W_i) h_n(W_j) = (\sum_i h_n(W_i))^2 - \sum_i h_n(W_i)^2$, the bracket equals $k \sum_i h_n(W_i)^2 - (\sum_i h_n(W_i))^2$, which equals $\tfrac{1}{2} \sum_{i,j \in \group}(h_n(W_i) - h_n(W_j))^2$ by the identity $\sum_{i,j}(h_n(W_i) - h_n(W_j))^2 = 2 [|\group| \sum h_n(W_i)^2 - (\sum h_n(W_i))^2]$.
Hence
\begin{align*}
\sum_{i,j \in \group} h_n(W_i) h_n(W_j) \cov(\Di, \Dj | \filtrationgeneric) &= \frac{a(k-a)}{2 k^2(k-1)} \sum_{i,j \in \group}(h_n(W_i) - h_n(W_j))^2 \\
&\leq \frac{1}{|\group|} \sum_{i,j \in \group}(h_n(W_i) - h_n(W_j))^2.
\end{align*}
Note we used $a(k-a)/[2 k^2(k-1)] \leq 1/k$ via $a(k-a) \leq k^2/4$ and $k(k-1) \geq k^2/2$ for $k \geq 2$.
For remainder groups $\group$, $(\Di)_{i \in \group}$ has independent components conditional on $\filtrationgeneric$, so $\cov(\Di, \Dj | \filtrationgeneric) = 0$ for $i \neq j$ and $\var(\Di | \filtrationgeneric) = \propfng(1 - \propfng) \leq 1$.
Hence $\sum_{i,j \in \group} h_n(W_i) h_n(W_j) \cov(\Di, \Dj | \filtrationgeneric) = \sum_{i \in \group} h_n(W_i)^2 \var(\Di | \filtrationgeneric) \leq \sum_{i \in \group} h_n(W_i)^2$.
Aggregating over groups, $\var(\rootn \en[a_i] | \filtrationgeneric)$ is bounded above by
\begin{align*}
n\inv \sum_{\group \text{ interior}} |\group|\inv \sum_{i,j \in \group}(h_n(W_i) - h_n(W_j))^2 + n\inv \sum_{\group \text{ remainder}} \sum_{i \in \group} h_n(W_i)^2.
\end{align*}
The second term is bounded by $\sum_{\group \text{ remainder}} \sum_{i \in \group} h_n(W_i)^2 \leq \max_{i=1}^n h_n(W_i)^2 \cdot \sum_l |\group_{l, \mathrm{rem}}| \leq \kboundn \nlevels \cdot \max_{i=1}^n h_n(W_i)^2$, since there are at most $\nlevels$ remainder groups each of size at most $\kboundn$.
Extending the interior sum to include remainder groups gives the claimed bound.
\end{proof}

\begin{lem}[Random Partitions] \label{lemma:partitions}
Let $\Wn$ be iid random variables, $h_i = h(W_i)$ for a measurable function $h$, and $\kappa$ a random element with $\kappa \indep \Wn | h_{1:n}$.
Define $\filtrationgeneric = \sigma(h_{1:n}, \kappa)$, and let $\groupset_n$ be an $\filtrationgeneric$-measurable partition of a subset of $[n]$.
For each $\group \in \groupset_n$, let $Z_\group = F(\group, (W_i)_{i \in \group})$ for a deterministic measurable function $F$.
Then $(Z_\group)_{\group \in \groupset_n}$ are jointly conditionally independent given $\filtrationgeneric$.
\end{lem}
\begin{proof}
Since $\kappa \indep \Wn | h_{1:n}$, we have $\Wn | h_{1:n}, \kappa \eqdist \Wn | h_{1:n}$.
Moreover, by iid sampling of $\Wn$ and $h_i = h(W_i)$, this factorizes as $\Wn | h_{1:n} \eqdist \prod_{i=1}^n (W_i | h_i)$.
Hence $(W_i)_{i=1}^n$ are jointly conditionally independent given $\filtrationgeneric$, with conditional marginal $W_i | h_i$.

Let $A \in \filtrationgeneric$ and $(\phi_\group)_{\group \sub [n]}$ a finite collection of bounded measurable test functions.
For each partition $\groupset$ of a subset of $[n]$, define $A(\groupset) \equiv A \cap \{\groupset_n = \groupset\} \in \filtrationgeneric$ by assumption, so the finite family $\{A(\groupset)\}_\groupset$ partitions $A = \cup_\groupset A(\groupset)$.
Then
\begin{align*}
E\bigl[\one_A \Pi_\group \phi_\group(Z_\group)\bigr] &= \sum_\groupset E\bigl[\one_{A(\groupset)} \Pi_{\group \in \groupset} \phi_\group(F(\group, (W_i)_{i \in \group}))\bigr] \\
&= \sum_\groupset E\bigl[\one_{A(\groupset)} E[\Pi_{\group \in \groupset} \phi_\group(F(\group, (W_i)_{i \in \group})) | \filtrationgeneric]\bigr] \\
&= \sum_\groupset E\bigl[\one_{A(\groupset)} \Pi_{\group \in \groupset} E[\phi_\group(F(\group, (W_i)_{i \in \group})) | \filtrationgeneric]\bigr] \\
&= \sum_\groupset E\bigl[\one_{A(\groupset)} \Pi_{\group \in \groupset_n} E[\phi_\group(Z_\group) | \filtrationgeneric]\bigr] = E\bigl[\one_A \Pi_\group E[\phi_\group(Z_\group) | \filtrationgeneric]\bigr].
\end{align*}
The first equality decomposes $A$ along $\groupset_n = \groupset$ and uses $Z_\group = F(\group, (W_i)_{i \in \group})$ with $\group$ ranging in $\groupset$ on $A(\groupset)$.
The second equality is by tower law and $\filtrationgeneric$-measurability of $A(\groupset)$.
The key third equality uses the joint conditional independence of $(W_i)_{i=1}^n$ given $\filtrationgeneric$ established above and the fact that $\groupset$ is a disjoint partition.
The fourth equality uses $\groupset_n = \groupset$ and $Z_\group = F(\group, (W_i)_{i \in \group})$ on $A(\groupset)$.
The fifth re-aggregates the partition $A = \cup_\groupset A(\groupset)$.
Since $A \in \filtrationgeneric$ and $(\phi_\group)$ were arbitrary, this gives $E[\Pi_\group \phi_\group(Z_\group) | \filtrationgeneric] = \Pi_\group E[\phi_\group(Z_\group) | \filtrationgeneric]$, hence joint conditional independence.
\end{proof}

\medskip

Theorem~\ref{thm:clt_fixed} describes the limiting variance of $\est$. The next result complements it by quantifying the finite-sample variance, in the simplest setting of representative sampling $\propselect \equiv 1$ and constant assignment propensity $\propfn$. It makes precise the rate at which $n \var(\est)$ approaches the asymptotic variance $\varlocal$, and isolates the gap as a nonnegative term governed by the matching objective $F(\groupsetn)$.

\begin{prop}[Finite-Sample Variance] \label{prop:finite-sample-variance}
Suppose $\propselect(\psi) = 1$ and $\propfn(\psi) = \propfn = a/k$ is constant, with $\Dn \sim \localdesigncond(\psi, \propfn)$ and assignment partition $\groupsetn$. Suppose also that $|Y(d)| \leq M$ almost surely and that $\psi \mapsto E[Y(d) | \psi]$ is $B$-Lipschitz, for $d \in \{0, 1\}$. Then the estimator $\est$ of \eqref{equation:estimator:varying} satisfies
\begin{equation} \label{equation:finite-sample-variance}
n \var(\est) = \varlocal + \Delta_n, \qquad 0 \leq \Delta_n \leq \frac{B^2}{2 \propfn^2 (1-\propfn)^2} E[F(\groupsetn)] + \frac{(k-1) M^2}{n \, \propfn(1-\propfn)}.
\end{equation}
\end{prop}

\begin{proof}
Write $v = \propfn(1-\propfn)$ and let $b(\psi) = E[\ylevel | \psi]$ for the outcome level $\ylevel = (1-\propfn) Y(1) + \propfn Y(0)$. Since $\propselect = 1$ we have $\Ti = 1$ and $\nsampled = n$. Since $\Tn$ is deterministic, $\filtrationhd$ reduces to $\sigma(\Wn, \permn)$ and $\filtrationcandpsi$ reduces to $\sigma(\psin, \permn)$, with $\filtrationcandpsi \subseteq \filtrationhd$ and $\filtrationhd \indep \eta^D$. The partition $\groupsetn$ is $\filtrationcandpsi$-measurable by Definition~\ref{defn:local_randomization}. The outcomes are bounded, so $\est$, $\te$, and $\ylevel$ are bounded and every variance below is finite.

We first reduce to a conditional variance. Recall from the start of Section~\ref{proofs:asymptotics} the decomposition $\est - \ate = A_n + B_n + C_n$. With $\propselect = 1$ the term $B_n = \en[\te_i(\Ti - \propselect)/\propselect]$ vanishes. The remaining terms are $A_n = \en[\te_i - \ate]$ and $C_n = \en[\hti \yleveli]$, with $\hti = (\Di - \propfn)/v$, so $\est - \ate = A_n + C_n$. The term $A_n$ is $\sigma(\Wn)$-measurable. By Lemma~\ref{lemma:design_properties}, $E[\Di | \filtrationhd] = \propfn$, hence $E[C_n | \filtrationhd] = 0$. Since $\sigma(\Wn) \subseteq \filtrationhd$ this gives $E[C_n] = 0$ and $\cov(A_n, C_n) = E[A_n E[C_n | \filtrationhd]] = 0$. Therefore $n \var(\est) = n \var(A_n) + n \var(C_n)$. The summands of $A_n$ are iid and mean zero, so $n \var(A_n) = \var(\te)$. The conditional mean $E[C_n | \filtrationhd] = 0$ and the law of total variance give $\var(C_n) = E[\var(C_n | \filtrationhd)]$.

Next we record the exact conditional variance.
For a $\filtrationhd$-measurable array $h = (h_i)_{i=1}^n$ define $S_n(h) = \negrootn \sum_{i=1}^n (\Di - \propfn) h_i$ with variance $\Gamma_n(h) = \var(S_n(h) | \filtrationhd)$.
Then $\rootn C_n = v\inv S_n(\ylevel)$, so $n \var(C_n) = v^{-2} E[\Gamma_n(\ylevel)]$.
By Lemma~\ref{lemma:design_properties}, $\var(\Di | \filtrationhd) = v$, and $\cov(\Di, \Dj | \filtrationhd) = -v/(k-1)$ for $i \neq j$ in a common interior group and is zero for all other pairs.
Since $E[S_n(h) | \filtrationhd] = 0$, expanding the square gives
\begin{equation} \label{equation:fsv-gamma}
\Gamma_n(h) = n\inv \sum_{i,j} h_i h_j \cov(\Di, \Dj | \filtrationhd) = \frac{v}{n} \sum_{i=1}^n h_i^2 - \frac{v}{(k-1) n} \sum_{\group \text{ interior}} \sum_{i \neq j \in \group} h_i h_j.
\end{equation}
Centering within each interior group, with $\bar h_\group = |\group|\inv \sum_{i \in \group} h_i$ and remainder group $r$, this is 
\begin{equation} \label{equation:fsv-gamma-centered}
\Gamma_n(h) = \frac{v k}{(k-1) n} \sum_{\group \text{ interior}} \sum_{i \in \group} (h_i - \bar h_\group)^2 + \frac{v}{n} \sum_{i \in r} h_i^2.
\end{equation}

We now compute the term $E[\Gamma_n(\ylevel)]$ in $n \var(C_n) = v^{-2} E[\Gamma_n(\ylevel)]$ above using the decomposition in \eqref{equation:fsv-gamma}.
Consider the off-diagonal sum first.
The partition $\groupsetn$ is $\filtrationcandpsi$-measurable, so the indicator $\chi_{ij} = \one(i, j \text{ in a common interior group})$ is $\filtrationcandpsi$-measurable, and the off-diagonal sum in \eqref{equation:fsv-gamma} can be written $\sum_{i \neq j} \chi_{ij} h_i h_j$.
 By the same conditioning reductions used in the proof of Theorem~\ref{thm:assignment-clt-feasible}, for $i \neq j$ we have
\[
E[\yleveli \ylevel_j | \filtrationcandpsi] = E[\yleveli \ylevel_j | \psin, \permn] = E[\yleveli | \psii] \, E[\ylevel_j | \psij] = b(\psii) b(\psij).
\]
Since each $\chi_{ij}$ is $\filtrationcandpsi$-measurable, the tower law gives
\[
E\Bigl[\sum_{i \neq j} \chi_{ij} \yleveli \ylevel_j\Bigr] = E\Bigl[\sum_{i \neq j} \chi_{ij} \, E[\yleveli \ylevel_j | \filtrationcandpsi]\Bigr] = E\Bigl[\sum_{i \neq j} \chi_{ij} \, b(\psii) b(\psij)\Bigr].
\]
So the off-diagonal sum in \eqref{equation:fsv-gamma} has the same expectation for $h = \ylevel$ as for $h = b$.

Consider the diagonal terms next. 
The diagonal sum in \eqref{equation:fsv-gamma} is $v  n\inv \sum_{i=1}^n h_i^2$, which we evaluate for $h = \ylevel$ and for $h = b$. By definition of conditional variance, $E[\yleveli^2 | \psii] = b(\psii)^2 + \var(\ylevel | \psii)$, so $E[\yleveli^2] = E[b(\psi)^2] + E[\var(\ylevel | \psi)]$, while $E[b(\psii)^2] = E[b(\psi)^2]$. Hence the diagonal sum for $h = \ylevel$ exceeds the diagonal sum for $h = b$ in expectation by
\[
v  n\inv \sum_{i=1}^n \bigl(E[\yleveli^2] - E[b(\psii)^2]\bigr) = v  E[\var(\ylevel | \psi)].
\]

Subtracting the instances of \eqref{equation:fsv-gamma} for $h = \ylevel$ and $h = b$ and taking expectations, the off-diagonal contributions cancel and the diagonal contributions give
\begin{equation} \label{equation:fsv-residual}
E[\Gamma_n(\ylevel)] = E[\Gamma_n(b)] + v \, E[\var(\ylevel | \psi)].
\end{equation}
It remains to bound the term $E[\Gamma_n(b)]$. The function $b = (1-\propfn) E[Y(1) | \psi] + \propfn E[Y(0) | \psi]$ is a convex combination of $B$-Lipschitz functions, hence $B$-Lipschitz. For any group $\group$,
\[
\sum_{i \in \group} (b(\psii) - \bar b_\group)^2 = \frac{1}{2 |\group|} \sum_{i,j \in \group} (b(\psii) - b(\psij))^2 \leq \frac{B^2}{2 |\group|} \sum_{i,j \in \group} |\psii - \psij|_2^2 = B^2 \sum_{i \in \group} |\psii - \bar\psi_\group|_2^2.
\]
Summing over interior groups and using $F(\groupsetn) = n\inv \sum_\group \sum_{i \in \group} |\psii - \bar\psi_\group|_2^2$ of \eqref{equation:homogeneity},
\[
\frac{v k}{(k-1) n} \sum_{\group \text{ interior}} \sum_{i \in \group} (b(\psii) - \bar b_\group)^2 \leq \frac{v k}{k-1} B^2 \, F(\groupsetn) \leq \tfrac12 B^2 \, F(\groupsetn).
\]
The last step uses $v \leq 1/4$ and $k/(k-1) \leq 2$ for $k \geq 2$. The remainder group has $|r| \leq k - 1$, and $|b| \leq M$ since $b(\psi) = E[\ylevel | \psi]$ and $|\ylevel| \leq M$. So the remainder term in \eqref{equation:fsv-gamma-centered} satisfies $v n\inv \sum_{i \in r} b(\psii)^2 \leq v (k-1) M^2 / n$. Combining the two bounds with \eqref{equation:fsv-gamma-centered},
\begin{equation} \label{equation:fsv-balance-bound}
0 \leq \Gamma_n(b) \leq \tfrac12 B^2 \, F(\groupsetn) + v (k-1) M^2 / n.
\end{equation}

Finally we assemble the pieces. Combining $n \var(C_n) = v^{-2} E[\Gamma_n(\ylevel)]$ with \eqref{equation:fsv-residual},
\[
n \var(\est) = \var(\te) + v^{-2} E[\Gamma_n(\ylevel)] = \var(\te) + v\inv E[\var(\ylevel | \psi)] + v^{-2} E[\Gamma_n(b)].
\]
The pointwise identity \eqref{equation:ylevel-variance-identity} gives $v\inv \var(\ylevel | \psi) = \hk_1(\psi)/\propfn + \hk_0(\psi)/(1-\propfn) - \var(\te | \psi)$. The law of total variance gives $\var(\te) = \var(\catefn(\psi)) + E[\var(\te | \psi)]$. Adding these,
\[
\var(\te) + v\inv E[\var(\ylevel | \psi)] = \var(\catefn(\psi)) + E\bigl[\hk_1(\psi)/\propfn + \hk_0(\psi)/(1-\propfn)\bigr].
\]
This equals $\varlocal$ by \eqref{equation:variance:representative-sampling} with $\propselect = 1$. Set $\Delta_n = v^{-2} E[\Gamma_n(b)]$. Then $n \var(\est) = \varlocal + \Delta_n$, with $\Delta_n \geq 0$ by \eqref{equation:fsv-gamma-centered}. The bound \eqref{equation:fsv-balance-bound} gives $\Delta_n \leq v^{-2}(\tfrac12 B^2 E[F(\groupsetn)] + v(k-1) M^2 / n)$, which is the inequality in \eqref{equation:finite-sample-variance}.
\end{proof}

\begin{proof}[Proof of Theorem~\ref{thm:uniform-convergence}]
Fix $P \in \mathcal{P}$. The design has $\propselect \equiv 1$ and constant $\propfn = a/k$, and under $P$ the potential outcomes satisfy $|Y(d)| \leq M$ almost surely with $E[Y(d) | \psi]$ Lipschitz of constant at most $B$, for $d \in \{0,1\}$. So Proposition~\ref{prop:finite-sample-variance} applies under $P$, giving $n \var_P(\est) = V(P) + \Delta_n(P)$, where $V(P)$ is the variance \eqref{equation:variance:representative-sampling} under $P$ and, with $v = \propfn(1-\propfn)$,
\[
0 \leq \Delta_n(P) \leq \frac{B^2}{2 v^2} E_P[F(\groupsetn)] + \frac{(k-1) M^2}{v n}.
\]
Hence $|n \var_P(\est) - V(P)| = \Delta_n(P)$. The moment bound $E_P[|\psi|_2^\alpha] \leq K$ in the definition of $\mathcal{P}$ lets us apply Corollary~\ref{cor:expected-matching} under $P$, with $\kboundn = k$ and $\nlevels = 1$, giving
\[
E_P[F(\groupsetn)] \leq C(\alpha, d) \, K^{2/\alpha} \, n^{2/\alpha - 2/(\dim(\psi)+1)},
\]
where $C(\alpha, d)$ depends only on $\alpha$ and $d$ and, in particular, not on $P$. Taking the supremum over $P \in \mathcal{P}$,
\[
\sup_{P \in \mathcal{P}} |n \var_P(\est) - V(P)| \leq \frac{B^2 \, C(\alpha, d) \, K^{2/\alpha}}{2 v^2} \, n^{2/\alpha - 2/(\dim(\psi)+1)} + \frac{(k-1) M^2}{v n}.
\]
Since $v = \propfn(1-\propfn)$, $k$, $\alpha$, and $K$ are fixed, this is $\lesssim B^2 \, n^{2/\alpha - 2/(\dim(\psi)+1)} + M^2 / n$.
\end{proof}

\begin{lem}[Propensity Convergence] \label{lemma:propensity-convergence}
Suppose $|\hkest_d - \hk_d|_{2, \psi}^2 = \op(1)$ for $d = 0, 1$, with $\inf_\psi \hk_d(\psi) > 0$ and $\propfn(\psi) \in (\propbound, 1-\propbound)$, and discretization $\en[(\propselectestn(\psii) - \propselectest(\psii))^2] = \op(1)$.
Then $\en[(\propselectestn(\psii) - \propselectopt(\psii))^2] = \op(1)$.
\end{lem}

\begin{proof}
By the triangle inequality, $\en[(\propselectestn - \propselectopt)^2] \leq 2\en[(\propselectestn - \propselectest)^2] + 2\en[(\propselectest - \propselectopt)^2]$, with the first term $\op(1)$ by hypothesis.
It remains to bound $\en[(\propselectest - \propselectopt)^2]$.
Recall that 
\[
\propselectest(\psi) = \budget \cdot \frac{\sdavgest(\psi)\, \cost(\psi)^{-1/2}}{\en[\sdavgest(\psii)\, \cost(\psii)^{1/2}]}, \qquad \propselectopt(\psi) = \budget \cdot \frac{\sdavg(\psi)\, \cost(\psi)^{-1/2}}{E[\sdavg(\psi)\, \cost(\psi)^{1/2}]}.
\]
Our approach is to first establish $|\sdavgest - \sdavg|_{2,\psi}^2 = \op(1)$, then use it to control both the pointwise numerator difference $\sdavgest - \sdavg$ and the gap between the sample and population denominator normalizations.

Write $\hkavg(\psi) = \hk_1(\psi)/\propfn(\psi) + \hk_0(\psi)/(1-\propfn(\psi))$ and $\sdavg = \sqrt{\hkavg}$, with $\hkavgest, \sdavgest$ defined analogously from $\hkest_d$.
By hypothesis $\hk_d(\psi) \geq c_l > 0$ pointwise and $\propfn(\psi), 1-\propfn(\psi) \leq 1-\propbound$, so $\hkavg(\psi) \geq c_l/(1-\propbound) =: \tilde c > 0$, hence $\sdavgest + \sdavg \geq \sdavg \geq \sqrt{\tilde c}$.
Using the identity $(\sdavgest - \sdavg)^2 = (\hkavgest - \hkavg)^2 / (\sdavgest + \sdavg)^2$ and the pointwise bound $(\hkavgest - \hkavg)^2 \leq 2\propbound^{-2}[(\hkest_1 - \hk_1)^2 + (\hkest_0 - \hk_0)^2]$ (using $\propfn(\psi), 1-\propfn(\psi) \geq \propbound$),
\begin{equation} \label{equation:sdavg-rate}
|\sdavgest - \sdavg|_{2,\psi}^2 \lesssim \sum_{d=0,1} |\hkest_d - \hk_d|_{2,\psi}^2 = \op(1).
\end{equation}
Conditional on $\proprand$, $(\psii)_{i=1}^n$ are iid and $\sdavgest$ is fixed, so $E[\en[(\sdavgest(\psii) - \sdavg(\psii))^2] | \proprand] = |\sdavgest - \sdavg|_{2,\psi}^2 = \op(1)$ by \eqref{equation:sdavg-rate}.
By conditional Markov (Lemma~\ref{lemma:conditional_markov}), $\en[(\sdavgest(\psii) - \sdavg(\psii))^2] = \op(1)$.
Also $E[\hkavg(\psi)] \leq \propbound\inv E[\hk_1 + \hk_0] \leq 2\propbound\inv E[Y(1)^2 + Y(0)^2] < \infty$ by Young's inequality, $\propfn(\psi), 1-\propfn(\psi) \geq \propbound$, and Assumption~\ref{assumption:pilot-clt}(ii).

Define denominators $\mu = E[\sdavg(\psi) \cost(\psi)\half]$ and $\mu_n = \en[\sdavgest(\psii) \cost(\psii)\half]$.
Cauchy-Schwarz gives $\mu \leq E[\sdavg^2]\half E[\cost]\half < \infty$, and pointwise $\sdavg \cost\half \geq \tilde c\half C_l\half$ gives $\mu \geq \tilde c\half C_l\half > 0$.
Conditional on $\proprand$ the units $(\psii)_{i=1}^n$ are iid and $\sdavgest$ is a fixed function, so conditional Chebyshev gives $\en[\sdavgest \cost\half] = E[\sdavgest \cost\half |\proprand] + \Op(\negrootn)$.
By Cauchy-Schwarz, $(E[\sdavgest \cost\half |\proprand] - E[\sdavg \cost\half])^2 \leq C_u \cdot E_\psi[(\sdavgest - \sdavg)^2 |\proprand] = \op(1)$ using \eqref{equation:sdavg-rate}.
Combining and applying Young's inequality,
\begin{equation} \label{equation:mu-rate}
(\mu_n - \mu)^2 = \Op(n\inv) + \op(1) = \op(1).
\end{equation}
Then $\mu_n \geq \mu/2$ on an event of probability tending to one.
Finally, decompose
\[
\propselectest(\psi) - \propselectopt(\psi) = \frac{\budget \, \cost(\psi)^{-1/2}}{\mu_n \mu}\bigl[\mu \,(\sdavgest(\psi) - \sdavg(\psi)) - \sdavg(\psi)\, (\mu_n - \mu)\bigr].
\]
On the event $\{\mu_n \geq \mu/2\}$, using $\cost\inv \leq C_l\inv$ and $\mu \geq \tilde c\half C_l\half$,
\begin{align*}
\en[(\propselectest - \propselectopt)^2] &\lesssim \en[(\sdavgest(\psii) - \sdavg(\psii))^2] + (\mu_n - \mu)^2 \cdot \en[\sdavg(\psii)^2].
\end{align*}
The first term is $\op(1) = \op(1)$ by work above; the second is $\op(1) \cdot \Op(1) = \op(1)$ using \eqref{equation:mu-rate} and $\en[\sdavg^2] \convp E[\sdavg^2] < \infty$ by the iid weak law.
\end{proof}

\medskip

We turn to consistency of the $\sate$ variance estimator $\varest_{\sate}$ of \eqref{equation:variance-estimator-varying}.

\begin{lem}[Conservative $\sate$ Inference] \label{lemma:sate-inference-consistency}
In the setting of Lemma~\ref{lemma:new-inference-consistency-conditioning},
\begin{equation} \label{equation:sate-varest-limit}
\varest_{\sate} \convp V_{\sate} + E[\propselect(\psi)] E[\hkte(\psi)].
\end{equation}
\end{lem}

\begin{proof}[Proof of Lemma \ref{lemma:sate-inference-consistency}]
Let $\filtrationhd = \sigma(\Wn, \Tn, \permn)$ as in the proof of Lemma~\ref{lemma:new-inference-consistency-conditioning}.
The within-stratum bracket in \eqref{equation:variance-estimator-varying} is $(k/\propselectl)\wh P_l^2 = (k/(\propselectl |\groupset_{nl}|)) \sum_{u \in \mathcal{G}_{nl}^{\nu}} (\estgone{u} - \estgtwo{u})^2$.
By sampling-subordinate matching every union $u \in \groupsetnu$ lies in a single propensity stratum with $\propselect_u = \propselectl$, and $|\groupset_{nl}| = \nl \propselectl / k$, so $k / (\propselectl |\groupset_{nl}|) = k^2 / (\nl \propselectl^2)$.
Summing the within-stratum bracket over $l$ gives the aggregated form $\varest_{\sate} = (\nsampled / n) \wh Q$ with
\[
\wh Q = \frac{1}{n} \sum_{u \in \groupsetnu} \frac{k^2}{\propselect_u^2} \bigl(\estgone{u} - \estgtwo{u}\bigr)^2.
\]
Expand the square.
Each $\group \in \groupset_n$ appears exactly once across the unions with $\propselect_u = \propselect_\group$, so the group-square terms collect into $\wh v_1$ of \eqref{equation:v1-v2}.
Comparing the resulting cross-product coefficient $2 k^2 / \propselect_u^2$ with that of $\wh v_2$ in \eqref{equation:v1-v2} gives $2 k^2 / \propselect_u^2 - 2 k (k - \propselect_u) / \propselect_u^2 = 2 k / \propselect_u$, hence
\[
\wh Q = \wh v_1 - \wh v_2 - T_n, \qquad T_n = \frac{2k}{n} \sum_{u \in \groupsetnu} \frac{1}{\propselect_u} \estgone{u} \estgtwo{u}.
\]
By \eqref{equation:v1-v2-est2-limit} and $\est \convp \ate$, we have $\wh v_1 - \wh v_2 \convp \varlocal / E[\propselect(\psi)] + \ate^2$.

Next consider $T_n$.
Each summand of $T_n$ is a union-level cross product with bounded $\filtrationhd$-measurable weight $2 k / \propselect_u \leq 2 k / \propbound$, so the argument of Lemma~\ref{lemma:coupling-v1-v2} gives $T_n - E[T_n | \filtrationhd] = \op(1)$.
By Lemma~\ref{lemma:group_aggregate_independence}, $\estgone{u} \indep \estgtwo{u} | \filtrationhd$, and $E[\estg | \filtrationhd] = \theta_\group$, so $E[T_n | \filtrationhd] = \frac{2k}{n} \sum_u \propselect_u\inv \theta_{\group_1(u)} \theta_{\group_2(u)}$.
Define $A_i = \tau_i / \propselect(\psii)\half$, which is $\sigma(W_i)$-measurable with $E[A_i^2] \leq E[\tau_i^2] / \propbound < \infty$.
By sampling-subordinate matching $\propselect_u = \propselect(\psii) = \propselect(\psij)$ for $i, j \in u$, so $A_i A_j = \tau_i \tau_j / \propselect_u$, and $\theta_\group = k\inv \sum_{i \in \group} \tau_i$ gives $\propselect_u\inv \theta_{\group_1(u)} \theta_{\group_2(u)} = k^{-2} \sum_{i \in \group_1(u), j \in \group_2(u)} A_i A_j$.
Using $u = \group_1(u) \cup \group_2(u)$ and that each $\group \in \groupset_n$ appears once across the unions,
\[
E[T_n | \filtrationhd] = \frac{2}{nk} \sum_{u \in \groupsetnu} \sum_{\substack{i \in \group_1(u) \\ j \in \group_2(u)}} A_i A_j = \frac{1}{nk} \biggl( \sum_{u \in \groupsetnu} \sum_{i \neq j \in u} A_i A_j - \sum_{\group \in \groupset_n} \sum_{i \neq j \in \group} A_i A_j \biggr).
\]
By Lemma~\ref{lemma:matched-union-tight-matching} the unions $\groupsetnu$ inherit tight matching from $\groupset_n$, so Lemma~\ref{lemma:bilinear-form} with $B_i = A_i$ at $K = 2k$ on $\groupsetnu$ and at $K = k$ on $\groupset_n$ gives the two sums the limits $(2k - 1) E[\catefn(\psi)^2]$ and $(k - 1) E[\catefn(\psi)^2]$, using $E[A_i | \psii] = \catefn(\psii) / \propselect(\psii)\half$ and hence $E[\propselect(\psii) E[A_i | \psii]^2] = E[\catefn(\psii)^2]$.
Subtracting and dividing by $k$ gives $E[T_n | \filtrationhd] \convp E[\catefn(\psi)^2]$, hence $T_n \convp E[\catefn(\psi)^2]$.

Combining, $\wh Q \convp \varlocal / E[\propselect(\psi)] + \ate^2 - E[\catefn(\psi)^2] = \varlocal / E[\propselect(\psi)] - \var(\catefn(\psi))$, using $E[\catefn(\psi)^2] - \ate^2 = \var(\catefn(\psi))$.
By the variance formula \eqref{equation:asymptotic-theorem}, $\varlocal / E[\propselect(\psi)] - \var(\catefn(\psi)) = E\bigl[\propselect(\psi)\inv \bigl(\hk_1(\psi) / \propfn + \hk_0(\psi) / (1 - \propfn)\bigr)\bigr]$.
Since $\nsampled / n \convp E[\propselect(\psi)]$ by Lemma~\ref{lemma:stratified_wlln} part (3), Slutsky gives
\[
\varest_{\sate} = (\nsampled / n) \wh Q \convp E[\propselect(\psi)] E\bigl[\propselect(\psi)\inv \bigl(\hk_1(\psi) / \propfn + \hk_0(\psi) / (1 - \propfn)\bigr)\bigr].
\]
It remains to identify this limit.
With $\psisamp = \psiassign = \psi$ and $\propfn$ constant, $\var(\te | \psi) = \hkte(\psi)$ and the identity \eqref{equation:ylevel-variance-identity} give the pointwise equality $\frac{1 - \propselect(\psi)}{\propselect(\psi)} \hkte(\psi) + \frac{\var(\ylevel | \psi)}{\propselect(\psi) \propfn (1 - \propfn)} = \propselect(\psi)\inv \bigl(\hk_1(\psi) / \propfn + \hk_0(\psi) / (1 - \propfn)\bigr) - \hkte(\psi)$.
Substituting into the variance of Corollary~\ref{cor:clt-sate-extended} gives $V_{\sate} = E[\propselect(\psi)] \bigl( E[\propselect(\psi)\inv (\hk_1(\psi) / \propfn + \hk_0(\psi) / (1 - \propfn))] - E[\hkte(\psi)] \bigr)$.
Hence $\varest_{\sate} \convp V_{\sate} + E[\propselect(\psi)] E[\hkte(\psi)]$.
\end{proof}

The within-group estimator \eqref{equation:P2-within} admits an exact, design-based characterization of its conditional bias. We state it for $\propselect = 1$ without stratified sampling, where $\filtrationhd = \sigma(\Wn, \permn)$ carries the data and the matching randomness, $E[\est | \filtrationhd] = \sate$, and only the assignment $\Dn$ is random.

\begin{lem}[Within-group Variance Estimator] \label{lemma:within-group-unbiased}
Let $\propselect = 1$, $\propfn = a/k$, and $\min(a, k-a) \geq 2$. 
Then for $\bar S^2_\tau = \frac{1}{|\groupset_n|} \sum_{\group \in \groupset_n} \frac{1}{k-1}\sum_{i \in \group}(\te_i - \theta_\group)^2$, the within-group estimator has 
\[
E[\varest_{\sate} | \filtrationhd] = n \var(\est | \filtrationhd) + \bar S^2_\tau.
\]
The bias $\bar S^2_\tau$ obeys the bound $0 \leq \bar S^2_\tau \leq \frac{k}{k-1}\, S^2_\tau \le 2 S^2_{\tau}$ for $S^2_\tau = \frac{1}{n-1}\sum_{i=1}^n (\te_i - \sate)^2$.
\end{lem}
\begin{proof}
Write $\hti = (\Di - \propfn)/(\propfn(1-\propfn))$ and $\yleveli = (1-\propfn)Y_i(1) + \propfn Y_i(0)$, so that $\hti Y_i = \te_i + \hti \yleveli$ and, since $\propselect = 1$, $\estg = k\inv \sum_{i \in \group}\hti Y_i = \theta_\group + k\inv\sum_{i \in \group}\hti\yleveli$ with $\est = |\groupset_n|\inv\sum_\group\estg$.

\emph{Unbiasedness of $\wh P^2_N$.} For an interior group $\group$, treated units reveal $\Di Y_i = \Di Y_i(1)$, so $(a-1)\, s^2_{1,\group} = \frac{a-1}{a}\sum_{i \in \group}\Di Y_i(1)^2 - \frac{1}{a}\sum_{i \neq j \in \group}\Di\Dj\, Y_i(1)Y_j(1)$. By Lemma~\ref{lemma:design_properties} with $\filtrationgeneric = \filtrationhd$, $E[\Di | \filtrationhd] = a/k$ and $E[\Di\Dj | \filtrationhd] = a(a-1)/(k(k-1))$ for $i \neq j \in \group$, and substituting $\sum_{i \neq j}Y_i(1)Y_j(1) = k^2\mu_\group(1)^2 - \sum_i Y_i(1)^2$ with $\mu_\group(d) = k\inv\sum_{i \in \group}Y_i(d)$,
\[
E[s^2_{1,\group} | \filtrationhd] = \frac{1}{k-1}\sum_{i \in \group}\bigl(Y_i(1) - \mu_\group(1)\bigr)^2 =: S^2_{1,\group},
\]
Likewise, $E[s^2_{0,\group} | \filtrationhd] = S^2_{0,\group}$. Hence $E[\varest_{\sate} | \filtrationhd] = k\, E[\wh P^2_N | \filtrationhd] = k\bar P^2_N$ for $\bar P^2_N = |\groupset_n|\inv\sum_\group(S^2_{1,\group}/a + S^2_{0,\group}/(k-a))$.

\emph{Design variance.} Since $E[\hti | \filtrationhd] = 0$, $E[\est | \filtrationhd] = |\groupset_n|\inv\sum_\group\theta_\group = \sate$. The groups are conditionally independent given $\filtrationhd$, and Lemma~\ref{lemma:design_properties} gives $\var(\hti | \filtrationhd) = [\propfn(1-\propfn)]\inv$ and $\cov(\hti, H_j | \filtrationhd) = -[(k-1)\propfn(1-\propfn)]\inv$ for $i \neq j \in \group$, so for variance $S^2_{\ylevel,\group} = \frac{1}{k-1}\sum_{i \in \group}\bigl(\yleveli - \tfrac1k\textstyle\sum_{j \in \group}\ylevelj\bigr)^2$, $\var(\estg | \filtrationhd) = k^{-2}\sum_{i,j \in \group}\yleveli\ylevelj\cov(\hti, H_j | \filtrationhd) = \frac{S^2_{\ylevel,\group}}{k\,\propfn(1-\propfn)}$.
The within-group identity $\yleveli - k\inv\sum_{j}\ylevelj = (1-\propfn)(Y_i(1) - \mu_\group(1)) + \propfn(Y_i(0) - \mu_\group(0))$, combined with $S^2_{\tau,\group} = S^2_{1,\group} + S^2_{0,\group} - 2S^2_{10,\group}$ for within-group covariance $S^2_{10,\group}$, expands to
\[
\frac{S^2_{\ylevel,\group}}{\propfn(1-\propfn)} = \frac{S^2_{1,\group}}{\propfn} + \frac{S^2_{0,\group}}{1-\propfn} - S^2_{\tau,\group} = k\bigl(\frac{S^2_{1,\group}}{a} + \frac{S^2_{0,\group}}{k-a}\bigr) - S^2_{\tau,\group},
\]
where $S^2_{\tau,\group} = (k-1)\inv\sum_{i \in \group}(\te_i - \theta_\group)^2$. Using $\var(\est | \filtrationhd) = |\groupset_n|^{-2}\sum_\group\var(\estg | \filtrationhd)$ and $n = k|\groupset_n|$,
\[
n\var(\est | \filtrationhd) = \frac{1}{|\groupset_n|}\sum_\group\frac{S^2_{\ylevel,\group}}{\propfn(1-\propfn)} = k\bar P^2_N - \bar S^2_\tau.
\]
Combined with the previous display, $E[\varest_{\sate} | \filtrationhd] = k\bar P^2_N = n\var(\est | \filtrationhd) + \bar S^2_\tau$.

\emph{Bound.} The finite-population decomposition $\sum_{i=1}^n(\te_i - \sate)^2 = \sum_\group\sum_{i \in \group}(\te_i - \theta_\group)^2 + k\sum_\group(\theta_\group - \sate)^2$ and nonnegativity of the second term give $(k-1)|\groupset_n|\,\bar S^2_\tau \leq (n-1)S^2_\tau$, hence $\bar S^2_\tau \leq \frac{n-1}{(k-1)|\groupset_n|}S^2_\tau < \frac{k}{k-1}S^2_\tau \leq 2 S^2_\tau$ for $k \geq 2$.
\end{proof}

Under sampling-subordinate matching, $\groupset_n^D$ is constructed by running the matching algorithm with assignment group size $k$ separately in each sampling propensity stratum $S_l \cap \{i : T_i = 1\}$, $l = 1, \dots, L^T_n$, where $L^T_n$ counts sampling, not assignment, propensity levels.
The next lemma verifies tight matching $F(\groupset_n^D) = \op(1)$ in this setting, allowing $L^T_n$ to grow.

\begin{lem}[Subordinate Tight Matching] \label{lemma:tight-matching-ii}
Under sampling-subordinate matching with $k$ constant, suppose $E[|\psiassign|_2^{\alpha_2}] < \infty$ for some $\alpha_2 > d_2 + 1$ and $L^T_n = o(n^{1 - (d_2+1)/\alpha_2})$. Then $F(\groupset_n^D) = \op(1)$.
\end{lem}

\begin{proof}
Write $\groupset_n^D = \cup_{l=1}^{L^T_n} \groupset_n^{D,l}$, where $\groupset_n^{D,l}$ is the algorithm's output on the index set $S_l \cap \{i : T_i = 1\}$ with one propensity level and group size $k$. The per-stratum bound from the proof of Theorem \ref{thm:matching-per-stratum} gives $F(\groupset_n^{D,l}) \le M_n^2 \cdot n\inv C_{d_2} k^{2/(d_2+1)} (n_l^D)^{(d_2-1)/(d_2+1)}$, where $n_l^D = |S_l \cap \{i : T_i = 1\}|$. Summing over $l$ and applying Jensen to $x \mapsto x^{(d_2-1)/(d_2+1)}$ with $\sum_l n_l^D \le n$, as in the proof of Theorem \ref{thm:matching-per-stratum},
\[
F(\groupset_n^D) \le M_n^2 \cdot C_{d_2} k^{2/(d_2+1)} (L^T_n / n)^{2/(d_2+1)}.
\]
Since $k = O(1)$ and $M_n^2 = \op(n^{2/\alpha_2})$ by Lemma \ref{lem:coordinate-range-rate}, this is $\op(n^{2/\alpha_2} (L^T_n / n)^{2/(d_2+1)}) = \op(1)$ by the rate hypothesis $L^T_n = o(n^{1 - (d_2+1)/\alpha_2})$.
\end{proof}

The next lemma couples a generic AIPW summand to its influence-function linearization on a subsample $S \sub [n]$.
Applied to Theorem~\ref{prop:double-adjustment} below, the variable $\permn$ is auxiliary randomness used to split the data into folds, $S = I_k$ is one such random fold, and $\mathcal{A}$ is the $\sigma$-algebra generated by the out-of-fold data $(Z_j)_{j \notin S}$ used to fit $\ceffnest_d$.
The required independence $(Z_i)_{i \in S} \indep \mathcal{A} | \permn$ then says that the units in the fold $S$ are independent of the training set conditional on the fold partition randomness.

\begin{lem}[AIPW Linearization] \label{lemma:aipw-linearization}
Let $\Ti \simiid \bern(\propselecti)$ and $\Di \simiid \bern(\propfni)$, and suppose Assumption \ref{assumption:double-adjustment} holds.
Let $Z_i = (\psii, Y_i(0), Y_i(1), \Ti, \Di)$, let $\permn \indep Z_{1:n}$, and let $S \sub [n]$ be $\sigma(\permn)$-measurable.
Let $\mathcal{A}$ be a $\sigma$-algebra with $(Z_i)_{i \in S} \indep \mathcal{A} |\permn$, and let $\ceffnest_0, \ceffnest_1$ be regression estimators that are $\sigma(\mathcal{A}, \permn)$-measurable with $|\ceffnest_d - \ceffn_d|_{2, \psi} = \op(1)$ for $d \in \{0,1\}$.
Define the AIPW summand and the influence function
\begin{align*}
R_i &= \ceffnest_1(\psii) - \ceffnest_0(\psii) + \frac{\Ti \Di (Y_i - \ceffnest_1(\psii))}{\propselecti \propfni} - \frac{\Ti (1-\Di)(Y_i - \ceffnest_0(\psii))}{\propselecti(1-\propfni)}, \\
\phi_i &= \catefn(\psii) + \frac{\Ti \Di \residuali^1}{\propselecti \propfni} - \frac{\Ti (1-\Di) \residuali^0}{\propselecti(1-\propfni)}, \qquad \residuali^d = Y_i(d) - \ceffn_d(\psii).
\end{align*}
Then $n\inv \sum_{i \in S} R_i = n\inv \sum_{i \in S} \phi_i + \op(\negrootn)$.
\end{lem}

\begin{proof}
Write $\wh{\catefn} = \ceffnest_1 - \ceffnest_0$ for the estimated CATE and define $\balancefnest(\psi) = (1 - \propfn(\psi)) \ceffnest_1(\psi) + \propfn(\psi) \ceffnest_0(\psi)$ estimating the conditional mean $\balancefn(\psi) = E[\ylevel | \psi]$ for outcome level $\ylevel = (1-\propfn(\psi)) Y(1) + \propfn(\psi) Y(0)$.
Since $\propfn(\psi), 1-\propfn(\psi) \in (0, 1)$ pointwise, we have bounds $|\wh{\catefn} - \catefn|_{2,\psi} \leq \sum_d |\ceffnest_d - \ceffn_d|_{2,\psi} = \op(1)$ and $|\balancefnest - \balancefn|_{2,\psi} \leq \sum_d |\ceffnest_d - \ceffn_d|_{2,\psi} = \op(1)$ by Assumption~\ref{assumption:double-adjustment}.
We first rearrange the summand $R_i$.
Substituting $Y_i = \Di Y_i(1) + (1-\Di) Y_i(0)$ and $Y_i(d) = \ceffn_d(\psii) + \residuali^d$ into the inverse propensity terms gives
\begin{align*}
\frac{\Ti \Di Y_i}{\propselecti \propfni} - \frac{\Ti(1-\Di) Y_i}{\propselecti(1-\propfni)}
&= \frac{\Ti}{\propselecti}\Bigl[\frac{\Di \ceffn_1(\psii)}{\propfni} - \frac{(1-\Di) \ceffn_0(\psii)}{1-\propfni}\Bigr] + \frac{\Ti \Di \residuali^1}{\propselecti \propfni} - \frac{\Ti(1-\Di) \residuali^0}{\propselecti(1-\propfni)} \\
&= \frac{\Ti}{\propselecti}\catefn(\psii) + \frac{\Ti(\Di - \propfni)}{\propselecti \propfni(1-\propfni)}\balancefn(\psii) + \frac{\Ti \Di \residuali^1}{\propselecti \propfni} - \frac{\Ti(1-\Di) \residuali^0}{\propselecti(1-\propfni)}.
\end{align*}
The second equality applies the algebraic identity $\Di \alpha/\propfni - (1-\Di) \beta/(1-\propfni) = (\alpha - \beta) + (\Di - \propfni)(\alpha/\propfni + \beta/(1-\propfni))$ with $\alpha = \ceffn_1(\psii)$ and $\beta = \ceffn_0(\psii)$, together with $\balancefn(\psii)/(\propfni(1-\propfni)) = \ceffn_1(\psii)/\propfni + \ceffn_0(\psii)/(1-\propfni)$.
Applying the identity with $\alpha = \ceffnest_1(\psii)$ and $\beta = \ceffnest_0(\psii)$ to the regression terms of $R_i$,
\begin{align*}
&\ceffnest_1(\psii) - \ceffnest_0(\psii) - \frac{\Ti}{\propselecti}\Bigl(\frac{\Di \ceffnest_1(\psii)}{\propfni} - \frac{(1-\Di)\ceffnest_0(\psii)}{1-\propfni}\Bigr) \\
&= -\frac{\Ti - \propselecti}{\propselecti}\wh{\catefn}(\psii) - \frac{\Ti(\Di - \propfni)}{\propselecti \propfni(1-\propfni)}\balancefnest(\psii).
\end{align*}
Adding the displays, using $\frac{\Ti}{\propselecti}\catefn - \frac{\Ti - \propselecti}{\propselecti}\wh{\catefn} = \catefn + \frac{\Ti - \propselecti}{\propselecti}(\catefn - \wh{\catefn})$, gives $R_i = \phi_i + A_i + B_i$, where
\begin{align*}
A_i = \frac{\Ti - \propselecti}{\propselecti}(\catefn - \wh{\catefn})(\psii), \qquad B_i = \frac{\Ti(\Di - \propfni)}{\propselecti \propfni(1-\propfni)}(\balancefn - \balancefnest)(\psii).
\end{align*}
It remains to show $\rootn n\inv \sum_{i \in S} A_i = \op(1)$ and $\rootn n\inv \sum_{i \in S} B_i = \op(1)$.

Consider $T_n = n\inv \sum_{i \in S} A_i$. 
Write $\Delta = \catefn - \wh{\catefn}$, which is $\sigma(\mathcal{A}, \permn)$-measurable by assumption.
Since $Z_{1:n}$ are iid and $\permn \indep Z_{1:n}$, the $Z_{1:n}$ remain iid given $\permn$. 
Since $(Z_i)_{i \in S} \indep \mathcal{A} |\permn$, the variables $(Z_i)_{i \in S}$ are, conditionally on $\sigma(\mathcal{A}, \permn)$, also iid with their marginal law. 
For each $i \in S$, $E[A_i |\mathcal{A}, \permn] = 0$ because $\Delta$ is $\sigma(\mathcal{A}, \permn)$-measurable and $E[\Ti - \propselecti |\psii] = 0$, so $E[T_n |\mathcal{A}, \permn] = 0$.
Then
\begin{align*}
\var(\rootn T_n |\mathcal{A}, \permn) &= n\inv \sum_{i \in S} \var(A_i |\mathcal{A}, \permn) \\
&= n\inv \sum_{i \in S} E\bigl[\tfrac{1 - \propselecti}{\propselecti}\, \Delta(\psii)^2 |\mathcal{A}, \permn\bigr] \leq \propbound\inv |\Delta|_{2, \psi}^2 = \op(1).
\end{align*}
The first equality follows by conditional independence of $(A_i)_{i \in S}$ given $\sigma(\mathcal{A}, \permn)$.
The second equality follows since $\Delta$ is $\sigma(\mathcal{A}, \permn)$-measurable and $E[(\Ti - \propselecti)^2 |\psii] = \propselecti(1 - \propselecti)$.
The inequality uses $|S| \leq n$ and $\propselecti > \propbound$.
Then by the conditional Markov inequality (Lemma \ref{lemma:conditional_markov}), $\rootn T_n = \op(1)$.
The argument for $n\inv \sum_{i \in S} B_i$ is identical, conditioning in addition on $(\Ti)_{i \in S}$ and using $E[\Di - \propfni |\psii, \Ti] = 0$ together with $\propfni \in (\propbound, 1 - \propbound)$.
This gives $\rootn n\inv \sum_{i \in S}(A_i + B_i) = \op(1)$, which proves the claim.
\end{proof}

\begin{lem}[Matching for Unions] \label{lemma:matched-union-tight-matching}
Let $\groupset_n$ satisfy $n\inv \sum_{\group \in \groupset_n}\sum_{i \in \group}|\psii - \bar\psi_\group|_2^2 = \op(1)$ and $E[|\psi|_2^\alpha] < \infty$ for some $\alpha > \dimpsi + 1$.
Suppose $\groupsetnu$ matches group centroids $(\bar \psi_\group)_{\group \in \groupsetn}$ into pairs using either the algorithm of Section~\ref{section:algorithms} or \cite{derigs1988}.
Then
\[
n\inv \sum_{u \in \groupsetnu}\sum_{i \in u}|\psii - \bar\psi_u|_2^2 = \op(1).
\]
\end{lem}

\begin{proof}[Proof of Lemma \ref{lemma:matched-union-tight-matching}]
In what follows, let $k = \max_{g \in \groupsetn} |g|$. 
Note we have the identity $\sum_{i, j \in \group}|\psii - \psij|_2^2 = 2|\group|\sum_{i \in \group}|\psii - \bar\psi_\group|_2^2$.
Each $u \in \groupsetnu$ has the form $\group \cup \groupmatching(\group)$ for $\group \in \groupset_n$. 
Indexing by $\group$ double-counts each union, so  
\[
n\inv \sum_{u \in \groupsetnu}\sum_{i \in u}|\psii - \bar\psi_u|_2^2 \le \frac{1}{2n}\sum_{u \in \groupsetnu}\sum_{i, j \in u}|\psii - \psij|_2^2 = \frac{1}{4n}\sum_{\group \in \groupset_n}\sum_{i, j \in \group \cup \groupmatching(\group)}|\psii - \psij|_2^2.
\]
Splitting the inner sum into within-half and cross-half pairs and using the bijection $\group \mapsto \groupmatching(\group)$ to combine the two within-half contributions,
\[
\frac{1}{n}\sum_{u \in \groupsetnu}\sum_{i, j \in u}|\psii - \psij|_2^2 = \frac{1}{n}\sum_{\group \in \groupset_n}\sum_{i, j \in \group}|\psii - \psij|_2^2 + \frac{1}{n}\sum_{\group \in \groupset_n}\sum_{\substack{i \in \group \\ j \in \groupmatching(\group)}}|\psii - \psij|_2^2.
\]
The first term is $\op(1)$ by the tight matching assumption for $\groupset_n$.
For the cross term, by Jensen's inequality $|\psii - \psij|_2^2 \leq 3|\psii - \bar\psi_\group|_2^2 + 3|\bar\psi_\group - \bar\psi_{\groupmatching(\group)}|_2^2 + 3|\bar\psi_{\groupmatching(\group)} - \psij|_2^2$. 
Then 
\begin{align*}
\frac{1}{n}\sum_{\group \in \groupset_n}\sum_{\substack{i \in \group \\ j \in \groupmatching(\group)}}|\psii - \psij|_2^2 \leq{}& \frac{3k}{n}\sum_{\group \in \groupset_n}\sum_{i \in \group}|\psii - \bar\psi_\group|_2^2 + \frac{3k^2}{n}\sum_{\group \in \groupset_n}|\bar\psi_\group - \bar\psi_{\groupmatching(\group)}|_2^2 \\
&+ \frac{3k}{n}\sum_{\group \in \groupset_n}\sum_{j \in \groupmatching(\group)}|\psij - \bar\psi_{\groupmatching(\group)}|_2^2.
\end{align*}
The first and third terms are $\op(1)$ by the tight matching assumption for $\groupset_n$.
It remains to handle the second term.
Since $\groupmatching$ partitions the centroids $\{\bar\psi_\group\}_{\group \in \groupset_n}$ into matched pairs using either the algorithm in Section \ref{section:algorithms} or the optimal pairs, Theorem \ref{thm:algorithm-guarantee} guarantees that $n\inv \sum_{\group \in \groupset_n}|\bar\psi_\group - \bar\psi_{\groupmatching(\group)}|_2^2 \leq \mn^2 \cdot O(n^{-2/(\dimpsi + 1)})$ for $\mn^2 \equiv \max_{\group \in \groupset_n}|\bar\psi_\group|_2^2$.
For $\mn^2$, by Jensen $|\bar\psi_\group|_2 \leq |\group|\inv \sum_{i \in \group}|\psii|_2 \leq \max_{i \in \group}|\psii|_2$, so $\mn \leq \max_{i \in [n]}|\psii|_2 = \op(n^{1/\alpha})$ by Lemma \ref{lemma:maximal-inequality} under $E[|\psi|_2^\alpha] < \infty$.
Hence $\mn^2 = \op(n^{2/\alpha})$, and the second term is $\op(n^{2/\alpha - 2/(\dimpsi + 1)}) = \op(1)$ since $\alpha > \dimpsi + 1$.
\end{proof}

\begin{lem}[Coupling for Inference] \label{lemma:coupling-v1-v2}
Under the conditions of Lemma \ref{lemma:new-inference-consistency-conditioning}, we have $\wh v_1 - E[\wh v_1 | \filtrationhd] = \op(1)$ and $\wh v_2 - E[\wh v_2 | \filtrationhd] = \op(1)$ for $\filtrationhd = \sigma(\Wn, \Tn, \permn)$.
\end{lem}

\begin{proof}[Proof of Lemma \ref{lemma:coupling-v1-v2}]
First consider $\wh v_1$. 
Set $u_\group = (k^2/\propselect_\group^2)(\estg^2 - E[\estg^2 | \filtrationhd])$, so that $\wh v_1 - E[\wh v_1 | \filtrationhd] = n\inv \sum_{\group \in \groupset_n} u_\group$.
We verify the conditions of Lemma \ref{lemma:lln} with the filtration $\filtrationhd$.
The partition $\groupset_n$ is $\filtrationhd$-measurable, and $E[u_\group | \filtrationhd] = 0$ by construction.
As noted in the proof of Lemma~\ref{lemma:new-inference-consistency-conditioning}, $\filtrationcandpsi \sub \filtrationhd$ and $\filtrationhd \indep \eta$ for the assignment randomness $\eta$ of Definition~\ref{defn:design-construction}.
Since $\estg$ is a function of $(\Di)_{i \in \group}$ and the $\filtrationhd$-measurable potential outcomes $(Y_i(0), Y_i(1))_{i \in \group}$, the term $u_\group = (k^2/\propselect_\group^2)(\estg^2 - E[\estg^2 | \filtrationhd])$ has the form $\phi((\Di)_{i \in \group}, X, \group)$ for a deterministic function $\phi$ and the $\filtrationhd$-measurable variable $X = (W_{1:n}, (E[\wh\theta_{\group'}^2 | \filtrationhd])_{\group' \in \groupset_n})$.
By Lemma~\ref{lemma:group_aggregate_independence}, the $(u_\group)_{\group \in \groupset_n}$ are jointly independent conditional on $\filtrationhd$; in particular $u_\group \indep u_{\group'} | \filtrationhd$ for $\group \neq \group'$.

For $i \in \group$, $\Ti = 1$ and $Y_i = \Di Y_i(1) + (1 - \Di) Y_i(0)$, so $|\estg| = |a^{-1}\sum_{i \in \group}\Di Y_i(1) - (k-a)^{-1}\sum_{i \in \group}(1-\Di) Y_i(0)| \leq \max_{i \in \group}|Y_i(1)| + \max_{i \in \group}|Y_i(0)|$.
Squaring and using $(p + q)^2 \leq 2(p^2 + q^2)$ together with $\max_{i \in \group} Y_i(d)^2 \leq \sum_{i \in \group} Y_i(d)^2$ gives $\estg^2 \leq 2\sum_{i \in \group}(Y_i(1)^2 + Y_i(0)^2)$.
The right side is $\filtrationhd$-measurable, so the bound is preserved under conditional expectation: $E[\estg^2 | \filtrationhd] \leq 2\sum_{i \in \group}(Y_i(1)^2 + Y_i(0)^2)$.
Hence $|u_\group| \leq \tilde u_\group := (4k^2/\propselect_\group^2)\sum_{i \in \group}(Y_i(1)^2 + Y_i(0)^2)$, which is $\filtrationhd$-measurable, and since $\one(|u_\group| > c_n) \leq \one(\tilde u_\group > c_n)$, we have $E[|u_\group|\one(|u_\group| > c_n) | \filtrationhd] \leq \tilde u_\group \one(\tilde u_\group > c_n)$.

Note that for positive constants $(a_l)_{l=1}^m$, $\sum_l a_l \one(\sum_l a_l > c) \leq m \sum_l a_l \one(a_l > c/m)$.
Viewing $\tilde u_\group$ as a sum of $2k$ terms indexed by $(i, d) \in \group \times \{0, 1\}$, each equal to $4k^2 Y_i(d)^2/\propselect_\group^2$, and applying the inequality with $m = 2k$, we get 
\[
\tilde u_\group \one(\tilde u_\group > c_n) \leq (8k^3/\propselect_\group^2)\sum_{i \in \group, d}Y_i(d)^2 \one \big (Y_i(d)^2/\propselect_\group^2 > c_n/(8k^3) \big).
\]
Since $\groupset_n$ partitions $\{i : \Ti = 1\}$ and $\propselect_\group = \propselect(\psii)$ for $i \in \group$, setting $Z_i(d) = Y_i(d)^2/\propselect(\psii)^2$,
\begin{equation} \label{equation:tilde-u-sum-bound}
\frac{1}{n}\sum_{\group \in \groupset_n}\tilde u_\group \one(\tilde u_\group > c_n) \leq \frac{8 k^3}{n}\sum_{i = 1}^n \Ti \sum_{d \in \{0, 1\}} Z_i(d) \one\bigl(Z_i(d) > c_n/(8k^3)\bigr).
\end{equation}
By assumption $\propselect(\psii) \geq \propbound > 0$, so that $E[Z_i(d)] \leq E[Y_i(d)^2]/\propbound^2 < \infty$. 
Then by dominated convergence $E[Z_i(d)\one(Z_i(d) > c_n/(8k^3))] \to 0$ as $c_n \to \infty$.
Using $\Ti \leq 1$ and iid sampling of $W_i$, the right side of \eqref{equation:tilde-u-sum-bound} has expectation at most $8k^3 \sum_d E[Z_i(d)\one(Z_i(d) > c_n/(8k^3))] \to 0$, hence is $\op(1)$ by Markov for any $c_n \to \infty$.
Chaining through the conditional truncation bound $E[|u_\group|\one(|u_\group| > c_n) | \filtrationhd] \leq \tilde u_\group \one(\tilde u_\group > c_n)$ established above,
\[
\frac{1}{n}\sum_{\group \in \groupset_n} E\bigl[|u_\group|\one(|u_\group| > c_n) \bigm| \filtrationhd\bigr] \leq \frac{1}{n}\sum_{\group \in \groupset_n}\tilde u_\group \one(\tilde u_\group > c_n) = \op(1).
\]
Choosing $c_n = n^{1/4}$, Lemma \ref{lemma:lln} gives $n\inv \sum_{\group \in \groupset_n} u_\group = \op(1)$, i.e., $\wh v_1 - E[\wh v_1 | \filtrationhd] = \op(1)$.
The argument for $\wh v_2$ is similar, with $u_u = (2k(k-\propselect_u)/\propselect_u^2)(\estgone{u}\estgtwo{u} - E[\estgone{u}\estgtwo{u} | \filtrationhd])$ for $u \in \groupsetnu$ in place of $u_\group$.
Each $u_u$ is a function of $((\Di)_{i \in \group})_{\group \in \{\group_1(u), \group_2(u)\}}$ and $\filtrationhd$-measurable quantities, and since distinct unions are built from disjoint groups of $\groupset_n$, the joint conditional independence of $((\Di)_{i \in \group})_{\group \in \groupset_n}$ given $\filtrationhd$ from Lemma~\ref{lemma:group_aggregate_independence} yields $(u_u)_{u \in \groupsetnu}$ jointly conditionally independent given $\filtrationhd$.
The conditional truncation bound is verified as for $\wh v_1$, using $|\estgone{u}\estgtwo{u}| \leq \tfrac12(\estgone{u}^2 + \estgtwo{u}^2)$.
\end{proof}

\textbf{Rounding Procedure.}
Next, we discuss a specific rounding procedure that can be used to satisfy the assumptions on $\propselectestn(\psi)$ for the sampling design $\Tn \sim \localdesigncond(\psi, \propselectestn(\psi))$.
For an estimator $\propselectest(\psi)$ with $\propbound \leq \propselectest(\psi) \leq 1$, define the discretization $\propselectestn(\psi) = \lfloor k_n \propselectest(\psi) \rceil / k_n$, where $\lfloor x \rceil$ denotes the nearest integer to $x$.
Then $\propselectestn(\psi)$ takes values in the rational set $\{a/k_n : a = 0, 1, \dots, k_n\}$, with at most $\nlevels \leq k_n + 1$ distinct propensity levels and common group size $\kboundn = \max_{l \in [\nlevels]} \kl = k_n$.
The following lemma shows that this discretization scheme satisfies part (iv) of Assumption \ref{assumption:pilot-clt}.

\begin{lem}[Discretization by Rounding] \label{lemma:discretization-rounding}
Under the discretization scheme above, $\en[(\propselectestn(\psii) - \propselectest(\psii))^2] \leq 1/(4 k_n^2)$ and $\kboundn \nlevels = O(k_n^2)$.
In particular, if $k_n \to \infty$ and $k_n = o(n^{(1 - (\dimpsi+1)/\alpha)/2} \wedge n^{1/4})$, then part (iv) of Assumption \ref{assumption:pilot-clt} holds.
\end{lem}

\begin{proof}
By construction $\propselectestn(\psi) = a(\psi)/k_n$ with $a(\psi) = \lfloor k_n \propselectest(\psi) \rceil \in \{0, 1, \dots, k_n\}$, since $\propselectest(\psi) \in [\propbound, 1] \subseteq [0,1]$.
Hence $\nlevels \leq k_n + 1$ and $\kboundn = k_n$.
Rounding to the nearest integer satisfies $|\lfloor x \rceil - x| \leq 1/2$, so $\sup_\psi |\propselectestn(\psi) - \propselectest(\psi)| \leq 1/(2 k_n)$, and squaring gives $\en[(\propselectestn(\psii) - \propselectest(\psii))^2] \leq 1/(4 k_n^2)$.
Thus $\kboundn \nlevels \leq k_n(k_n + 1) = O(k_n^2)$.
For the final claim, $k_n = o(n^{(1 - (\dimpsi+1)/\alpha)/2})$ gives $\kboundn \nlevels = o(n^{1 - (\dimpsi+1)/\alpha})$, while $\kboundn = k_n = o(n^{1/4})$, so part (iv) of Assumption \ref{assumption:pilot-clt} holds.
\end{proof}

\subsection{Targeting the SATE} \label{appendix:sate-targeting}

Recall the motivating exampling in Section~\ref{section:method}, where \cite{chioda2026making} sample $\nsampled \approx 4{,}402$ Ugandan job applicants into an experiment from a pool of $n = 7{,}431$ applicants to a particular job posting.
The superpopulation interpretation embeds these $n$ applicants into a larger distribution $P$ of Ugandan job-seekers from which they are imagined to be drawn iid, with $\ate = E_P[Y(1)-Y(0)]$.

However, the sampling process into the eligible pool is not controlled by the researcher, and it's unclear exactly who the superpopulation $P$ represents in this case.
Because of this, the researcher may instead wish to target $\sate = \en[Y_i(1) - Y_i(0)]$ in the realized eligible population that we physically sample from.

What is the interpetation of the superpopulation model in this case? 
To explain this, we note that our work in Corollary \ref{cor:clt-sate-extended} in fact implies a stronger conditional weak convergence statement $\sqrt{\nsampled}(\est - \sate) | \Wn \convwprocess \normal(0, V_{\sate})$ in the sense that
\[
\sup_{x \in \mr} \bigl| P\bigl(\sqrt{\nsampled}(\est - \sate) \leq x \bigm| \Wn\bigr) - \Phi\bigl(x / \sqrt{V_{\sate}}\bigr) \bigr| \convp 0. 
\]

This is a quasi design-based result, since asymptotic normality is delivered by design randomness $\Tn \sim \localdesigncond(\psisamp, \propselect)$ and $\Dn \sim \localdesigncond(\psiassign, \propfn)$ alone.
The iid model for the data further constrains the form of the limiting variance, showing that it converges in probability to the simple analytical expression presented in the text.

Extension to a fully design-based model is beyond the scope of this paper.
See \cite{bai2026variance} for a more robust purely design-based model of finely stratified assignment and a related theoretical discussion.

\subsection{Empirical Application Details} \label{appendix:paper-details}

This section provides descriptions of each paper and implementation details for our empirical application in Section \ref{section:empirical}.
The stratification variables $\psi$ are used in their raw scale at the loader stage; the matching algorithm in Section~\ref{section:algorithms} applies per-coordinate normalization internally, and the imputation regressions (Appendix~\ref{appendix:imputation-details}) apply standardization where the chosen regression family requires it.

\begin{enumerate}[label={(\arabic*)}, itemindent=.5pt, itemsep=.4pt]
\item \cite{abebe2021} estimates the effect of an application incentive on the ability of applicants for clerical employment in Ethiopia.
We let $Y$ be the authors' index of cognitive ability, $D$ be the application incentive, and $\psi(X)$ be gender, age, work experience in years, self-reported gpa and previous wage, and indicators for being born in Addis Ababa, speaking Amharic, and studying engineering.
\item \cite{banerjee2021} estimates the effect of various strategies to promote child-vaccination on the number of children completing the full vaccination sequence.
We let $Y_i$ be the number of children receiving the measles shot over the full trial period in village $i$ and $D_i$ whether the village received the SMS reminder intervention.
We let $\psi$ include village population, the proportion of individuals in the baseline survey of that village who were in a ``scheduled caste'', a ``backward class'', who received nursery education or less, baseline proportions of vaccine completions in two age cohorts, and the proportion who had a vaccine card.
\item \cite{baysan2022} estimates the effect of political information campaigns about concentration of executive power in Turkey on voter polarization.
Data is at the ballot box level, while assignment to information campaigns is at the neighborhood level.
We let $Y_i$ be the vote share of ``No'' votes, averaged over ballot boxes in a neighborhood, in the 2017 referendum.
We let $D$ be whether the village was exposed to any information campaign, and $\psi$ include the village-level average vote share for the CHP in the 2015 election, a measure of turnout, and the number of ballots collected.
\item \cite{casey2021} estimates the effect of increased information given to political parties about voter preferences over candidates on whether the most voter-preferred candidate was selected to run by the party.
We let $D$ be assignment to the treatment package at the party-constituency level, $Y$ whether the most preferred candidate was selected, and let $\psi$ include competitiveness of the race, candidate professional qualifications index, and candidate public service motivation index.
\item \cite{dellavigna2022} estimates the effect of employer gifts and other ``social preference'' related interventions on worker productivity.
We study the first experiment in the paper and let $D=1$ if the worker received either a positive gift or an in-kind gift and $D=0$ for no gift or a negative gift.
We let $Y$ be worker output in the last working period.
We let $\psi$ be total productivity during the first 8 periods of the trial (excluding the final gift period), gender, and age.
\item \cite{domurat2021} estimates the effect of informational interventions about Covered California insurance policies on takeup of insurance.
We let $D=0$ for the control group and $D=1$ if assigned to any of the letter campaigns in arms 3, 4, or 5.
$Y$ is an indicator for insurance takeup.
$\psi$ includes a measure of household income, mean age, household subsidy size, and an indicator for being Latino.
\item \cite{finkelstein2012} reports the effect of winning the 2008 Oregon Medicaid lottery on various health and public service utilization outcomes.
We use data from wave one, restricting to single person households.
We estimate an ITT effect with $Y$ the number of emergency department (ED) visits in the post-period and $D$ an indicator for winning the lottery.
We let $\psi$ include gender, age, any visits to the ED in the pre-period, number of visits in the pre-period, total SNAP benefits in the pre-period, and indicators for ever being on SNAP or having a chronic condition.
\item \cite{hussam2022} estimates the value of employment on measures of psychosocial wellbeing.
Treatment assignment is at the block level in the refugee camp, with a sample of five individuals chosen in each block.
We aggregate by taking the mean of outcomes and covariates in each block.
We let $D=0$ if the block was randomized to cash only and $D=1$ if it was randomized to employment.
$Y$ is the endline mental health index, and $\psi$ includes the baseline mental health index, average gender (in $[0, 1]$), proportion who had a family member killed, and a measure of sociability.
\item \cite{lowe2021} estimates the effect of collaborative and adversarial intergroup contact on cross-caste friendships using randomization to different teams in a cricket league in India.
We let $Y$ be number of other caste friends at endline, $D$ be whether the person is assigned to a mixed caste team, and $\psi$ include number of other caste friends at baseline, age, and a measure of cricket ability.
\end{enumerate}

\subsection{Empirical Results for the SATE} \label{appendix:empirical-sate}

This section presents our empirical results for $\sate$ estimation and inference, using the variance estimator $\varest_{\sate}$ from Section~\ref{section:inference} on the same nine DGPs and seven designs as in Section~\ref{section:empirical}.
Table~\ref{table:empirical-sate} reports the standard deviation of $\est - \sate$, percent change in CI length relative to complete randomization, empirical coverage of $\sate$, and the percent change in confidence interval length relative to the corresponding $\ate$ confidence interval $\wh C$ in Table~\ref{table:empirical}.

\begin{table}[htbp]
\begin{adjustbox}{width=0.9\columnwidth,center}
  \centering
    \begin{tabular}{rrrrrrrrrrr}
\cmidrule{2-11}          & \multicolumn{1}{c}{Design, Paper} & \multicolumn{1}{c}{Abe.} & \multicolumn{1}{c}{Ban.} & \multicolumn{1}{c}{Bay.} & \multicolumn{1}{c}{Cas.} & \multicolumn{1}{c}{Del.} & \multicolumn{1}{c}{Dom.} & \multicolumn{1}{c}{Fin.} & \multicolumn{1}{c}{Hus.} & \multicolumn{1}{c}{Low.} \\
\cmidrule{2-11}    \multicolumn{1}{c}{\multirow{7}[2]{*}{$\%\Delta$SD}} & \multicolumn{1}{c|}{CR} & \multicolumn{1}{c}{0} & \multicolumn{1}{c}{0} & \multicolumn{1}{c}{0} & \multicolumn{1}{c}{0} & \multicolumn{1}{c}{0} & \multicolumn{1}{c}{0} & \multicolumn{1}{c}{0} & \multicolumn{1}{c}{0} & \multicolumn{1}{c}{0} \\
          & \multicolumn{1}{c|}{CR, Loc} & \multicolumn{1}{c}{-6} & \multicolumn{1}{c}{-45} & \multicolumn{1}{c}{-49} & \multicolumn{1}{c}{-2} & \multicolumn{1}{c}{-41} & \multicolumn{1}{c}{1} & \multicolumn{1}{c}{-9} & \multicolumn{1}{c}{-3} & \multicolumn{1}{c}{-12} \\
          & \multicolumn{1}{c|}{Loc} & \multicolumn{1}{c}{-7} & \multicolumn{1}{c}{-42} & \multicolumn{1}{c}{-49} & \multicolumn{1}{c}{3} & \multicolumn{1}{c}{-42} & \multicolumn{1}{c}{2} & \multicolumn{1}{c}{-7} & \multicolumn{1}{c}{-9} & \multicolumn{1}{c}{-9} \\
          & \multicolumn{1}{c|}{Hom.} & \multicolumn{1}{c}{-16} & \multicolumn{1}{c}{-43} & \multicolumn{1}{c}{-49} & \multicolumn{1}{c}{-8} & \multicolumn{1}{c}{-41} & \multicolumn{1}{c}{-4} & \multicolumn{1}{c}{-7} & \multicolumn{1}{c}{-14} & \multicolumn{1}{c}{-12} \\
          & \multicolumn{1}{c|}{Pilot S} & \multicolumn{1}{c}{-14} & \multicolumn{1}{c}{-51} & \multicolumn{1}{c}{-46} & \multicolumn{1}{c}{-8} & \multicolumn{1}{c}{-43} & \multicolumn{1}{c}{3} & \multicolumn{1}{c}{-22} & \multicolumn{1}{c}{-15} & \multicolumn{1}{c}{-15} \\
          & \multicolumn{1}{c|}{Pilot L} & \multicolumn{1}{c}{-18} & \multicolumn{1}{c}{-52} & \multicolumn{1}{c}{-47} & \multicolumn{1}{c}{-6} & \multicolumn{1}{c}{-43} & \multicolumn{1}{c}{-2} & \multicolumn{1}{c}{-24} & \multicolumn{1}{c}{-10} & \multicolumn{1}{c}{-19} \\
          & \multicolumn{1}{c|}{Obs} & \multicolumn{1}{c}{-16} & \multicolumn{1}{c}{-54} & \multicolumn{1}{c}{-48} & \multicolumn{1}{c}{-3} & \multicolumn{1}{c}{-44} & \multicolumn{1}{c}{-5} & \multicolumn{1}{c}{-25} & \multicolumn{1}{c}{-13} & \multicolumn{1}{c}{-18} \\
\cmidrule{2-11}    \multicolumn{1}{c}{\multirow{7}[2]{*}{$\%\Delta$CI}} & \multicolumn{1}{c|}{CR} & \multicolumn{1}{c}{0} & \multicolumn{1}{c}{0} & \multicolumn{1}{c}{0} & \multicolumn{1}{c}{0} & \multicolumn{1}{c}{0} & \multicolumn{1}{c}{0} & \multicolumn{1}{c}{0} & \multicolumn{1}{c}{0} & \multicolumn{1}{c}{0} \\
          & \multicolumn{1}{c|}{CR, Loc} & \multicolumn{1}{c}{-6} & \multicolumn{1}{c}{-43} & \multicolumn{1}{c}{-46} & \multicolumn{1}{c}{-1} & \multicolumn{1}{c}{-38} & \multicolumn{1}{c}{0} & \multicolumn{1}{c}{-8} & \multicolumn{1}{c}{-6} & \multicolumn{1}{c}{-10} \\
          & \multicolumn{1}{c|}{Loc} & \multicolumn{1}{c}{-6} & \multicolumn{1}{c}{-41} & \multicolumn{1}{c}{-46} & \multicolumn{1}{c}{0} & \multicolumn{1}{c}{-38} & \multicolumn{1}{c}{0} & \multicolumn{1}{c}{-8} & \multicolumn{1}{c}{-6} & \multicolumn{1}{c}{-9} \\
          & \multicolumn{1}{c|}{Hom.} & \multicolumn{1}{c}{-13} & \multicolumn{1}{c}{-41} & \multicolumn{1}{c}{-46} & \multicolumn{1}{c}{-6} & \multicolumn{1}{c}{-40} & \multicolumn{1}{c}{-3} & \multicolumn{1}{c}{-9} & \multicolumn{1}{c}{-9} & \multicolumn{1}{c}{-9} \\
          & \multicolumn{1}{c|}{Pilot S} & \multicolumn{1}{c}{-13} & \multicolumn{1}{c}{-49} & \multicolumn{1}{c}{-42} & \multicolumn{1}{c}{-7} & \multicolumn{1}{c}{-39} & \multicolumn{1}{c}{0} & \multicolumn{1}{c}{-21} & \multicolumn{1}{c}{-10} & \multicolumn{1}{c}{-16} \\
          & \multicolumn{1}{c|}{Pilot L} & \multicolumn{1}{c}{-13} & \multicolumn{1}{c}{-50} & \multicolumn{1}{c}{-44} & \multicolumn{1}{c}{-6} & \multicolumn{1}{c}{-40} & \multicolumn{1}{c}{-3} & \multicolumn{1}{c}{-21} & \multicolumn{1}{c}{-10} & \multicolumn{1}{c}{-18} \\
          & \multicolumn{1}{c|}{Obs} & \multicolumn{1}{c}{-14} & \multicolumn{1}{c}{-51} & \multicolumn{1}{c}{-44} & \multicolumn{1}{c}{-6} & \multicolumn{1}{c}{-41} & \multicolumn{1}{c}{-2} & \multicolumn{1}{c}{-21} & \multicolumn{1}{c}{-10} & \multicolumn{1}{c}{-17} \\
\cmidrule{2-11}    \multicolumn{1}{c}{\multirow{7}[2]{*}{Cover}} & \multicolumn{1}{c|}{CR} & \multicolumn{1}{c}{0.96} & \multicolumn{1}{c}{0.95} & \multicolumn{1}{c}{0.94} & \multicolumn{1}{c}{0.96} & \multicolumn{1}{c}{0.95} & \multicolumn{1}{c}{0.97} & \multicolumn{1}{c}{0.97} & \multicolumn{1}{c}{0.95} & \multicolumn{1}{c}{0.96} \\
          & \multicolumn{1}{c|}{CR, Loc} & \multicolumn{1}{c}{0.96} & \multicolumn{1}{c}{0.96} & \multicolumn{1}{c}{0.95} & \multicolumn{1}{c}{0.97} & \multicolumn{1}{c}{0.96} & \multicolumn{1}{c}{0.96} & \multicolumn{1}{c}{0.97} & \multicolumn{1}{c}{0.94} & \multicolumn{1}{c}{0.97} \\
          & \multicolumn{1}{c|}{Loc} & \multicolumn{1}{c}{0.97} & \multicolumn{1}{c}{0.96} & \multicolumn{1}{c}{0.95} & \multicolumn{1}{c}{0.95} & \multicolumn{1}{c}{0.96} & \multicolumn{1}{c}{0.96} & \multicolumn{1}{c}{0.96} & \multicolumn{1}{c}{0.96} & \multicolumn{1}{c}{0.96} \\
          & \multicolumn{1}{c|}{Hom.} & \multicolumn{1}{c}{0.96} & \multicolumn{1}{c}{0.96} & \multicolumn{1}{c}{0.95} & \multicolumn{1}{c}{0.96} & \multicolumn{1}{c}{0.95} & \multicolumn{1}{c}{0.96} & \multicolumn{1}{c}{0.96} & \multicolumn{1}{c}{0.96} & \multicolumn{1}{c}{0.96} \\
          & \multicolumn{1}{c|}{Pilot S} & \multicolumn{1}{c}{0.96} & \multicolumn{1}{c}{0.96} & \multicolumn{1}{c}{0.95} & \multicolumn{1}{c}{0.96} & \multicolumn{1}{c}{0.97} & \multicolumn{1}{c}{0.96} & \multicolumn{1}{c}{0.96} & \multicolumn{1}{c}{0.96} & \multicolumn{1}{c}{0.96} \\
          & \multicolumn{1}{c|}{Pilot L} & \multicolumn{1}{c}{0.97} & \multicolumn{1}{c}{0.96} & \multicolumn{1}{c}{0.95} & \multicolumn{1}{c}{0.95} & \multicolumn{1}{c}{0.97} & \multicolumn{1}{c}{0.96} & \multicolumn{1}{c}{0.97} & \multicolumn{1}{c}{0.95} & \multicolumn{1}{c}{0.97} \\
          & \multicolumn{1}{c|}{Obs} & \multicolumn{1}{c}{0.96} & \multicolumn{1}{c}{0.97} & \multicolumn{1}{c}{0.96} & \multicolumn{1}{c}{0.95} & \multicolumn{1}{c}{0.96} & \multicolumn{1}{c}{0.96} & \multicolumn{1}{c}{0.97} & \multicolumn{1}{c}{0.95} & \multicolumn{1}{c}{0.96} \\
\cmidrule{2-11}    \multicolumn{1}{c}{\multirow{7}[2]{*}{$\%\Delta$CI vs $\ate$}} & \multicolumn{1}{c|}{CR} & \multicolumn{1}{c}{0} & \multicolumn{1}{c}{0} & \multicolumn{1}{c}{0} & \multicolumn{1}{c}{0} & \multicolumn{1}{c}{0} & \multicolumn{1}{c}{0} & \multicolumn{1}{c}{0} & \multicolumn{1}{c}{0} & \multicolumn{1}{c}{0} \\
          & \multicolumn{1}{c|}{CR, Loc} & \multicolumn{1}{c}{0} & \multicolumn{1}{c}{0} & \multicolumn{1}{c}{0} & \multicolumn{1}{c}{0} & \multicolumn{1}{c}{0} & \multicolumn{1}{c}{0} & \multicolumn{1}{c}{0} & \multicolumn{1}{c}{-1} & \multicolumn{1}{c}{0} \\
          & \multicolumn{1}{c|}{Loc} & \multicolumn{1}{c}{0} & \multicolumn{1}{c}{0} & \multicolumn{1}{c}{0} & \multicolumn{1}{c}{0} & \multicolumn{1}{c}{0} & \multicolumn{1}{c}{0} & \multicolumn{1}{c}{0} & \multicolumn{1}{c}{-1} & \multicolumn{1}{c}{0} \\
          & \multicolumn{1}{c|}{Hom.} & \multicolumn{1}{c}{0} & \multicolumn{1}{c}{0} & \multicolumn{1}{c}{0} & \multicolumn{1}{c}{0} & \multicolumn{1}{c}{0} & \multicolumn{1}{c}{0} & \multicolumn{1}{c}{0} & \multicolumn{1}{c}{-1} & \multicolumn{1}{c}{0} \\
          & \multicolumn{1}{c|}{Pilot S} & \multicolumn{1}{c}{0} & \multicolumn{1}{c}{0} & \multicolumn{1}{c}{0} & \multicolumn{1}{c}{0} & \multicolumn{1}{c}{0} & \multicolumn{1}{c}{0} & \multicolumn{1}{c}{0} & \multicolumn{1}{c}{-1} & \multicolumn{1}{c}{0} \\
          & \multicolumn{1}{c|}{Pilot L} & \multicolumn{1}{c}{0} & \multicolumn{1}{c}{0} & \multicolumn{1}{c}{0} & \multicolumn{1}{c}{0} & \multicolumn{1}{c}{0} & \multicolumn{1}{c}{0} & \multicolumn{1}{c}{0} & \multicolumn{1}{c}{-1} & \multicolumn{1}{c}{0} \\
          & \multicolumn{1}{c|}{Obs} & \multicolumn{1}{c}{0} & \multicolumn{1}{c}{0} & \multicolumn{1}{c}{0} & \multicolumn{1}{c}{0} & \multicolumn{1}{c}{0} & \multicolumn{1}{c}{0} & \multicolumn{1}{c}{0} & \multicolumn{1}{c}{-1} & \multicolumn{1}{c}{0} \\
\cmidrule{2-11}          & \multicolumn{1}{c|}{$\nsampled$} & \multicolumn{1}{c}{1451} & \multicolumn{1}{c}{903} & \multicolumn{1}{c}{550} & \multicolumn{1}{c}{91} & \multicolumn{1}{c}{446} & \multicolumn{1}{c}{1000} & \multicolumn{1}{c}{1903} & \multicolumn{1}{c}{116} & \multicolumn{1}{c}{770} \\
          & \multicolumn{1}{c|}{$\dim(\psi)$} & \multicolumn{1}{c}{8} & \multicolumn{1}{c}{7} & \multicolumn{1}{c}{3} & \multicolumn{1}{c}{3} & \multicolumn{1}{c}{3} & \multicolumn{1}{c}{4} & \multicolumn{1}{c}{7} & \multicolumn{1}{c}{4} & \multicolumn{1}{c}{3} \\
\cmidrule{2-11}          &       &       &       &       &       &       &       &       &       &  \\
    \end{tabular}%
    \end{adjustbox}
    \caption{Empirical Results (SATE).}
    \label{table:empirical-sate}%
\end{table}%

Coverage of $\sate$ is close to nominal across all 63 (paper, design) cells, in line with the asymptotically valid inference guarantee from Theorem~\ref{thm:inference}.
The percent change in confidence interval length relative to the $\ate$ confidence interval is uniformly non-positive.
By Corollary~\ref{cor:clt4sate}, targeting the $\sate$ lowers the asymptotic variance to $V_{\sate} = \varlocal - \propselect \var(\te)$.
The realized reduction is smaller, since inference on the $\sate$ is conservative.
By Theorem~\ref{thm:inference}, the estimators satisfy $\varest_{\sate} \convp V_{\sate} + E[\propselect(\psi)] E[\var(\te | \psi)]$ and $\varest \convp \varlocal$, so their probability limits differ by $\propselect \var(\catefn(\psi))$ using the identity $\var(\te) = \var(\catefn(\psi)) + E[\var(\te | \psi)]$.
The gap is largest in papers with the most heterogeneity in $\catefn(\psi)$, where researchers stand to gain the most precision from targeting the $\sate$ over the $\ate$.

\end{document}